\newcommandx{\unsure}[2][1=]{\todo[linecolor=red,backgroundcolor=red!25,bordercolor=red,#1]{#2}}
\newcommandx{\change}[2][1=]{\todo[linecolor=blue,backgroundcolor=blue!25,bordercolor=blue,#1]{#2}}
\newcommandx{\info}[2][1=]{\todo[linecolor=OliveGreen,backgroundcolor=OliveGreen!25,bordercolor=OliveGreen,#1]{#2}}
\newcommandx{\improvement}[2][1=]{\todo[linecolor=Plum,backgroundcolor=Plum!25,bordercolor=Plum,#1]{#2}}
\newcommandx{\thiswillnotshow}[2][1=]{\todo[disable,#1]{#2}}
\theoremstyle{plain}
\newtheorem{definition}{Definition} 
\newtheorem{prop}{Proposition}
\newtheorem{lemma}{Lemma}
\newtheorem{cor}{Corollary}
\newtheorem{rmk}{Remark}
\declaretheorem[name=Theorem]{theorem}
\renewcommand{\d}{\mathrm{d}}
\newcommand{\Hp}{\mathcal{H}}
\newcommand{\Ho}{\mathcal{H}}
\newcommand{\Ch}{\mathcal{CH}}
\renewcommand{\paragraph}[1]{%
	\par 
	\addvspace{\medskipamount}
	\textit{#1\@addpunct{.}}\enspace\ignorespaces
}
\numberwithin{equation}{section}
\numberwithin{prop}{section}
\numberwithin{rmk}{section}
\numberwithin{lemma}{section}
\numberwithin{definition}{section}
\numberwithin{cor}{section}
\title{A scattering theory for linear waves \\ \makebox[0pt]{ on~the~interior~of~Reissner--Nordström~black~holes}}
\author[1]{Christoph Kehle\thanks{c.kehle@maths.cam.ac.uk}}
\author[2]{Yakov Shlapentokh-Rothman\thanks{yshlapen@math.princeton.edu}}
\affil[1]{\small  Department of Pure Mathematics and Mathematical Statistics, University~of~Cambridge,~Wilberforce~Road,~Cambridge CB3 0WB, United Kingdom \vskip.1pc \  }
\affil[2]{\small  Department of Mathematics, Princeton University, Washington~Road,~Princeton,~NJ~08544,~United~States~of~America \vskip.1pc \ }
\date{December 14, 2018}
\begin{document}
\maketitle
\begin{abstract}
	We develop a scattering theory for the  linear wave equation $\Box_g \psi = 0 $ on the interior of Reissner--Nordstr\"om black holes, connecting the fixed frequency picture to the physical space picture. 
	Our main result gives the existence, uniqueness and asymptotic completeness of finite energy scattering states. The past and future scattering states are represented as suitable traces of the solution $\psi$ on the bifurcate event and Cauchy horizons. The heart of the proof is to show that after separation of variables one has \emph{uniform boundedness} of the reflection and transmission coefficients of the resulting radial o.d.e.\ over all frequencies $\omega$ and~$\ell$. This is non-trivial because the natural $T$ conservation law is sign-indefinite in the black hole interior. In the physical space picture, our results imply that the Cauchy evolution from the event horizon to the Cauchy horizon is a Hilbert space isomorphism, where the past (resp.\ future) Hilbert space is defined by the finiteness of the degenerate $T$ energy fluxes on both components of the event (resp.\ Cauchy) horizon.
	Finally, we prove that, in contrast to the above, for a generic set of cosmological constants $\Lambda$,
	there is no analogous finite $T$ energy scattering theory for either the linear wave equation or the Klein--Gordon equation with conformal mass on the (anti-) de~Sitter--Reissner--Nordstr\"om  interior.
\end{abstract}

\thispagestyle{empty}
\newpage
\tableofcontents
\thispagestyle{empty}
\newpage
\section{Introduction}
\setcounter{page}{1}
One of the most stunning predictions of general relativity is the formation of \textbf{\textit{black holes}}, defined by the property that information cannot propagate from their interior region to outside far-away observers. 
Fortunately, we can count ourselves among the latter; nevertheless, if a group of physicists were so courageous as to cross the \emph{event horizon} and enter a black hole, they could still very well perform experiments and compare the outcomes amongst themselves. Indeed, the problem of determining the fate  of these black hole explorers (and their laboratories) has led to some of the most central conceptual puzzles in gravitational physics.

In view of the above, there has been 
 a lot of recent activity analyzing the Cauchy problem on black hole interiors, e.g.\ \cite{franzen,franzen2016boundedness,sbierski2014initial, luk2016instability,interiorschwarzschild}. However, for certain physical processes it is more natural to consider the \emph{\textbf{scattering problem}}
 (see \cite{futterman1988scattering} for scattering on the exterior of black holes).
 With this paper, we initiate the mathematical study of the finite energy scattering problem on black hole interiors.
 Specifically, we will consider solutions of the wave equation on what can be viewed as the most elementary interior, that of Reissner--Nordstr\"om. The Reissner--Nordstr\"om metrics constitute a family of spacetimes, parametrized by mass $M$ and charge $Q$, which satisfy the Einstein--Maxwell system in spherical symmetry \cite{reissner1916eigengravitation,nordstrom1918energy} and admit an additional Killing vector field $T$. For vanishing charge $Q=0$, the family reduces to Schwarzschild. We shall moreover restrict in the following to the subextremal case where $0<|Q|<M$.
 In addition to the bifurcate event horizon, these black hole interiors then admit an additional bifurcate inner horizon, the so-called \textit{Cauchy horizon}. Our past and future scattering states will be defined as suitable traces of the solution on the bifurcate event horizon and bifurcate Cauchy horizon, respectively, restricted to have finite $T$ energy flux on each component of the horizons.

In the rest of the introduction we will state our main results for the scattering problem on the interior of Reissner--Nordstr\"om  ({Theorems~\ref{thm:forwardevolution} -- \ref{thm:c1instab}}), 
relate them to existing literature in fixed frequency scattering, and draw links to various recent results in the interior and exterior of black holes. 
Finally, we will see that the existence of a bounded scattering map for the wave equation on Reissner--Nordstr\"om turns out to be a very fragile property; we shall show that there does \underline{not} exist an analogous scattering theory in the presence of a cosmological constant ({\cref{thm:cosmological}}) or Klein--Gordon mass ({\cref{thm:kleingordon}}). 
\paragraph{\textbf{The scattering problem on Reissner--Nordstr\"om interior}}
 In this paper, we will establish a scattering theory for \underline{finite energy} solutions of the linear wave equation, 
 \begin{align}\label{eq:linearwave}
\Box_g \psi =0,
\end{align} on the interior of a Reissner--Nordstr\"om black hole, from the bifurcate event horizon $\mathcal H = \Ho_A \cup \Ho_B\cup \mathcal{B}_-$  to the bifurcate Cauchy horizon $\mathcal{CH} = \Ch_A \cup \Ch_B\cup \mathcal{B}_+$, as depicted in \cref{fig:penroseinterior}.
 \begin{figure}[ht]\centering
 	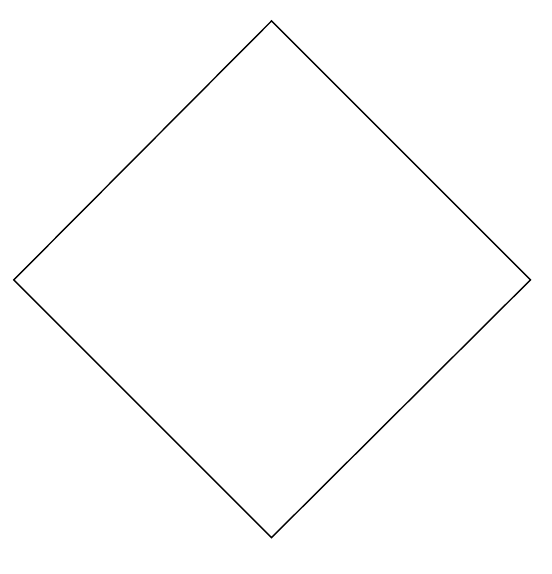
 	\caption{Penrose diagram of the interior of the Reissner--Nordstr\"om black hole and visualization of the scattering map.}
 	\label{fig:penroseinterior}
 \end{figure}  The first main result of our paper is \textbf{\cref{thm:forwardevolution}} (see \cref{sec:existencescatteringmap}) in which we will show \textit{existence}, \textit{uniqueness} and \textit{asymptotic completeness} of finite energy scattering states.  In this context, \textit{existence} and \textit{uniqueness} mean that for given finite energy data $\psi_0$ on the event horizon $\Hp$, there exist unique finite energy data on the Cauchy horizon $\Ch$ arising from $\psi_0$ as the evolution of \eqref{eq:linearwave}. With \textit{asymptotic completeness} we denote the property that all finite energy data on the Cauchy horizon $\Ch$ can indeed be achieved from finite energy data on the event horizon $\Hp$. This provides a way to construct solutions with desired asymptotic properties which is a necessary first step to properly understand quantum theories in the interior of a Reissner--Nordstr\"om black hole (cf.\ \cite{wald1994quantum,hafnerhawk,drouothawk}). The energy spaces on the event and Cauchy horizon are associated to the Killing field and generator of the time translation $T$. Indeed, $T$ is null on the horizons and, in particular, is the generator of the event and Cauchy horizon $\mathcal{H}$ and $\Ch$. Because of the sign-indefiniteness of the energy flux of the vector field $T$ on the bifurcate event (resp.\ Cauchy) horizon (see already \eqref{eq:t-r=1}), we define our energy space by requiring the finiteness of the $T$ energy on both components separately of the event (resp.\ Cauchy) horizon. These define Hilbert spaces with respect to which the scattering map is proven to be bounded.

Finally, it is instructive to draw a comparison between the interior of Reissner--Nordstr\"om and the interior of Schwarzschild ($Q=0$). As opposed to Reissner--Nordstr\"om discussed above, the Schwarzschild interior terminates at a singular boundary at which solutions to \eqref{eq:linearwave} generically blow-up (see\ \cite{interiorschwarzschild}). In contrast, the non-singular and, moreover, Killing, Cauchy horizons (see\ \cref{fig:penroseinterior}) of Reissner--Nordstr\"om immediately yield natural Hilbert spaces of finite energy data to consider. In view of this, Reissner--Nordstr\"om with $Q\neq 0$ can be considered the most elementary interior on which to study the scattering problem. Furthermore, in view of the recent work \cite{dafermos2017interior}, we have that  the causal structure of Reissner--Nordstr\"om is stable in a weak sense (see the discussion below about related works in the interior).

\paragraph{\textbf{Fixed frequency scattering}}
It is well known that the wave equation \eqref{eq:linearwave} on Reissner--Nordstr\"om spacetime allows separation of variables which reduces it to the radial o.d.e. 
\begin{align}\label{eq:radialode1}
	u^{\prime \prime} - V_\ell u + \omega^2 u =0,
\end{align} with potential $V_\ell$ (see already \eqref{eq:potential}), where $\omega\in \mathbb R $ is the time frequency and $\ell \in \mathbb{N}_0$ is the angular parameter.  Indeed, most of the existing literature concerning scattering of waves in the interior of Reissner--Nordstr\"om mainly considers fixed frequency solutions, e.g.\  \cite{mcnamara1978behaviour,mcnamara1978instability,hartle1982crossing,gursel1979evolution,matzner1979instability,gursel1979final,MR686719}. For a purely \emph{incoming} (i.e.\ supported only on $\mathcal{H}_A$) fixed frequency solution with parameters $(\omega,\ell)$, we can associate transmission and reflection coefficients $\mathfrak T(\omega,\ell)$ and $\mathfrak R(\omega,\ell)$. The transmission coefficient $\mathfrak T (\omega,\ell)$ measures what proportion of the incoming wave is transmitted to $\Ch_B$, whereas the reflection coefficient specifies the proportion reflected to  $\Ch_A$. An essential step to go from fixed frequency scattering to physical space scattering is to prove \textbf{\emph{uniform boundedness}} of $\mathfrak T(\omega,\ell)$ and $\mathfrak R(\omega,\ell)$. This is non-trivial in view of the discussion of the energy identity \eqref{eq:t-r=1} below.  
In this paper, we indeed obtain this uniform bound in \textbf{\cref{thm:boundednesstrans}}  (see \cref{subsec:scatteringcoefficients}).
 In particular, the regime $\omega \to 0, \ell \to \infty$ is the most involved frequency range to prove uniform boundedness. As we shall see, the proof relies on an explicit computation at $\omega =0$ (see \cite{gursel1979evolution}) together with a careful analysis of special functions and perturbations thereof.

The uniform boundedness of the scattering coefficients is the main ingredient to prove the boundedness of the scattering map in \cref{thm:forwardevolution}. Moreover, it allows us to connect the separated picture to the physical space picture by means of a Fourier representation formula. This is stated as \textbf{\cref{thm:fouriertophysical}} (see \cref{subsec:connfourierphysical}). 
A somewhat surprising, direct consequence of the Fourier representation of the scattered data on the Cauchy horizon is that purely incoming compactly supported data lead to a solution which vanishes at the future bifurcation sphere $\mathcal{B}_+$. This is moreover shown to be a necessary condition for the existence of a bounded scattering map (\textbf{\cref{cor:vanishingbifurcation}}).

\paragraph{\textbf{Comparison to scattering on the exterior of black holes}}On the exterior of black holes, the scattering problem has been studied more extensively; see the pioneering works  \cite{dimock1985scattering,dimock1987classical,dimock1986classical2,bachelot1991gravitational,bachelot1994asymptotic}, the book \cite{futterman1988scattering} and related results on conformal scattering in \cite{conf1,conf2,conf3,conf4}. Note that for the exterior of a Schwarzschild or Reissner--Nordstr\"om black hole, the uniform boundedness of the scattering coefficients or equivalently, the boundedness of the scattering map, can be viewed a posteriori\footnote{Note that proving \eqref{eq:Tid1} requires first establishing some form of qualitative decay towards $i^+$ and $i^-$.} as a consequence of the global $T$ energy identity
\begin{align}\label{eq:Tid1}
	\int_{\mathcal{H}^-} |T\psi|^2 + \int_{\mathcal{I}^-} |T\psi|^2 = \int_{\mathcal{H}^+} |T\psi|^2 + \int_{\mathcal{I}^+} |T\psi|^2.
\end{align} Considering only incoming radiation from $\mathcal{I}^-$, this statement translates into $|\mathfrak R|^2 + |\mathfrak T|^2 = 1$ for the reflection coefficient $ \mathfrak R$ and transmission coefficients $\mathfrak T$.  In the interior, however, due to the different orientations of the $T$ vector field on the horizons (cf. \cref{fig:compare}), boundedness of the scattering map is not at all manifest.  
 \begin{figure}[ht]
	\centering
	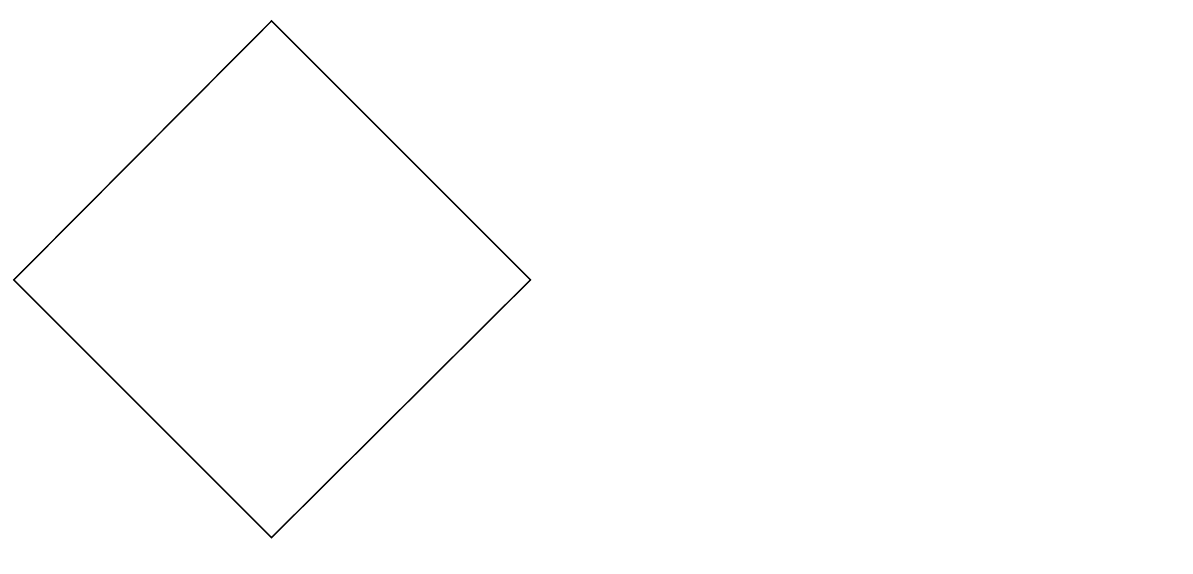
	\caption{Interior of Reissner--Nordstr\"om (left) and exterior of Schwarzschild or Reissner--Nordstr\"om (right).
		\newline In both diagrams the arrows denote the direction of the $T$ Killing vector field. Note that we have the identifications $\mathcal{H}_A = \mathcal{H}^+ $ and $ \mathcal{B}_- = \mathcal{B}$.}
	\label{fig:compare}
\end{figure}
In particular, the global $T$ energy identity on the interior of a Reissner--Nordstr\"om black hole reads
\begin{align}\label{eq:t-r=1}
	\int_{\Hp_A} |T\psi|^2 - \int_{\Hp_B} |T\psi|^2 = \int_{\Ch_B} |T\psi|^2 -\int_{\Ch_A} |T\psi|^2 
\end{align}
from which we cannot deduce boundedness of the scattering map even a posteriori. (Indeed, identity \eqref{eq:t-r=1} corresponds only to the ``pseudo-unitarity'' statement of \cref{thm:forwardevolution}.) Again, considering only ingoing radiation from $\Hp_A$, this translates to \begin{align}\label{eq:t-r=12}|\mathfrak T(\omega,\ell)|^2 - |\mathfrak R(\omega,\ell)|^2 = 1\end{align} for the reflection coefficient $\mathfrak R$ and the transmission coefficient $\mathfrak T$. Hence, while for fixed $|\omega| > 0$ and $\ell$, it is straightforward to show that $\mathfrak{T}$ and $\mathfrak{R}$ are finite, there is no a priori obvious obstruction from \eqref{eq:t-r=12} for these scattering coefficients to blow up in the limits $\omega \to 0,\pm\infty$ and $\ell \to \infty$. 

Moreover, note that in the exterior, the Killing field $T$ is timelike, so the radial o.d.e.\ \eqref{eq:radialode1} should be considered as an equation for a fixed time frequency wave on a constant time slice. In the interior, however, the Killing field $T$ is spacelike so the radial o.d.e.\ \eqref{eq:radialode1} is rather an evolution equation for a constant spatial frequency. 

The Schwarzschild family can be viewed as a special case ($a=0$) of the two parameter Kerr family, describing rotating black holes with mass parameter $M$ and rotation parameter $a$ \cite{MR0156674}.\footnote{Both Kerr and Reissner--Nordström can be viewed as special cases of the Kerr--Newman spacetime. For decay results on Kerr--Newman see \cite{civin2015stability}.} On the exterior of Kerr many other difficulties arise:~superradiance, intricate trapping, presence of ergoregion, etc.\ \cite{MR3488738}. Nevertheless, using the decay results in \cite{MR3488738}, a definitive physical space scattering theory for Kerr black holes has been established in \cite{dafermos2014scattering} (see also the earlier \cite{deSitterScatter}). The proof on the exterior of Kerr involved first establishing a scattering map from past null infinity $\mathcal{I}^-$ to a constant time slice $\Sigma$ and then concatenating that map with a second scattering map from $\Sigma$ to the future event horizon $\Hp^+$ and future null infinity $\mathcal{I}^+$. In the interior, however, we will directly show the existence of a ``global'' scattering map from the event horizon $\mathcal H$ to the Cauchy horizon $\Ch$. Indeed, due to blue-shift instabilities (see \cite{dafermos2017time}), we do not expect that the analogous concatenation of scattering maps (event horizon $\Hp$ to spacelike hypersurface $\Sigma$ and then from $\Sigma$ to the Cauchy horizon $\Ch$) as in the Kerr exterior, to be bounded in the interior of Reissner--Nordstr\"om.
 
 \paragraph{\textbf{Injectivity of the reflection map and blue-shift instabilities}} We can also conclude from our work that there is always non-vanishing reflection to the Cauchy horizon $\Ch_A$ arising from  non-vanishing purely ingoing radiation at $\Hp_A$. This follows from the fact that in the separated picture and for fixed $\ell$, the reflection coefficient $\mathfrak R (\omega,\ell)$ can be analytically continued to the strip $|\operatorname{Im}(\omega)| < \kappa_+$ and hence, only vanishes on a discrete set of points on the real axis. This is shown in \textbf{\cref{thm:nonvanishingreflection}} (see \cref{subsec:reflection}). 
   
 We will also deduce from the Fourier representation of the scattered data on the Cauchy horizon $\Ch$, and a suitable meromorphic continuation of the transmission coefficient, that there exist purely incoming compactly supported data on the event horizon $\mathcal{H}$ leading to solutions which fail to be $C^1$ on the Cauchy horizon $\mathcal{CH}$. This $C^1$-blow-up of linear waves puts on a completely rigorous footing the pioneering work of Chandrasekhar and Hartle \cite{hartle1982crossing}. We state this as \textbf{\cref{thm:c1instab}} (see \cref{sec:c1instab}).
 
For generic solutions arising from localized data on an asymptotically flat hypersurface, one expects a stronger instability, namely, non-degenerate energy blow-up at the Cauchy horizon $\Ch$. Such non-degenerate energy blow-up was proven in \cite{lukohchblowup} for generic compactly supported data on an asymptotically flat Cauchy hypersurface. Later, for the more difficult Kerr interior, non-degenerate energy blow-up was proven in \cite{luk2016instability} assuming certain polynomial lower bounds on the tail of incoming data on the event horizon $\Hp$ and in~\cite{dafermos2017time} for solutions arising from generic initial data along past null infinity $\mathcal{I}^-$ with polynomial tails.

  Finally, we mention the forthcoming work \cite{yakov} which studies the problem of non-degenerate energy blow-up from a scattering theory perspective and also uses the non-triviality of reflection to establish results related to mass inflation for the spherically symmetric Einstein--Maxwell--scalar field system (cf.\ \cite{luk2017strong,luk2017strong2}). 

  \paragraph{\textbf{Related results on the interior}}There has been a lot of recent progress studying the interior of black holes. In particular, new insights were gained concerning the stability of the Cauchy horizon and the strong cosmic censorship conjecture. 
 
 For the Cauchy problem for \eqref{eq:linearwave} on the interior of both a fixed Kerr and a Reissner--Nordstr\"om black hole, the works \cite{franzen2016boundedness,franzen,hintzinterior} establish uniform boundedness (in $L^\infty$) and continuity up to and including the Cauchy horizon for solutions arising from smooth and compactly supported data on an asymptotically flat spacelike hypersurface. Such data in particular give rise to solutions with polynomial decay along the event horizon.

   In contrast, for the scattering problem considered in the present paper, we are required to work with spaces which are symmetric with respect to the event and Cauchy horizons. This naturally leads to the rougher class of finite $T$ energy data in the statement of \cref{thm:forwardevolution}. Note that for such data on the Cauchy horizon, continuity or boundedness (in $L^\infty$) does \underline{not} necessarily hold true. 
 
 Turning finally to the full nonlinear dynamics of the Einstein equations, it is shown in \cite{dafermos2017interior} that the Kerr Cauchy horizon is $C^0$-stable. Thus, the existence of a Cauchy horizon, a very natural setting parameterizing scattering data in the interior, is not a pure artifact of symmetry but rather a stable property at least in a weak sense. On the other hand, in \cite{luk2017strong,luk2017strong2,VandeMoortel2018} it is proven that for a suitable Einstein--matter system under spherical symmetry, the Cauchy horizon, while $C^0$-stable, is generically $C^2$-unstable.
  Finally, we mention that for the Schwarzschild black hole ($Q=0$), which does not admit a Cauchy horizon, it is shown in \cite{interiorschwarzschild} that solutions to \eqref{eq:linearwave} generically blow up at the spacelike singularity $\{r=0\}$. 
 
 \paragraph{\textbf{Breakdown of $T$ energy scattering for $\Lambda \neq 0$ or $\mu\neq 0$}} If a cosmological constant $\Lambda \in \mathbb R$ is added to the Einstein--Maxwell system, we can consider the analogous (anti-) de Sitter--Reissner--Nordstr\"om family of solutions whose interiors have the same Penrose diagram as depicted in \cref{fig:penroseinterior}. For very slowly rotating Kerr--de Sitter and Reissner--Nordstr\"om--de Sitter spacetimes, boundedness, continuity, and regularity up to and including the Cauchy horizon has been shown for solutions to  \eqref{eq:linearwave} arising from smooth and compactly supported data on a Cauchy hypersurface \cite{hintzvasyinterior}. However, remarkably, there is no analogous $T$ energy scattering theory for either the linear wave equation \eqref{eq:linearwave} or the Klein--Gordon equation with conformal mass. This is the statement of~\textbf{\cref{thm:cosmological}} (see \cref{subsec:nonex}). The reason for this failure is the unboundedness of the transmission coefficient $\mathfrak T$ and reflection coefficients $\mathfrak R$ in the vanishing frequency limit $\omega \to 0$. To be more precise, we will prove that there does not exist a $T$ energy scattering theory of the wave or Klein--Gordon equation in the interior of a (anti-) de Sitter--Reissner--Nordstr\"om black hole for generic subextremal black hole parameters $(M,Q,\Lambda)$. In particular, for fixed $0<|Q|<M$, there is an $\epsilon >0$ such that there does not exist a $T$ energy scattering theory for all $0\neq |\Lambda| < \epsilon$.  
 
 Similarly, we prove in \textbf{\cref{thm:kleingordon}} (see \cref{subsec:notkleingordon}) that there does not exist a $T$ energy scattering theory for the Klein--Gordon equation $\Box_g \psi - \mu \psi =0$ on the Reissner--Nordstr\"om interior for a generic set of masses $\mu$. This is in contrast to the exterior, where $T$ energy scattering theories were established for massive fields in \cite{bachelot1994asymptotic,MR2016993}.
 
 It remains an open problem to formulate an appropriate scattering theory in the cosmological setting and for the Klein--Gordon equation as well as for the interior of Kerr.
 
\paragraph{\textbf{Outline}}
This paper is organized as follows. In \cref{prelims}, we shall set up the spacetime, introduce the relevant energy spaces, and define the scattering coefficients in the separated picture. In \cref{sec:mainthms} we state the main results of this paper, Theorems \ref{thm:forwardevolution} -- \ref{thm:kleingordon}.
Section~\ref{sec:radial} is devoted to the proof of \cref{thm:boundednesstrans}.
Then, the statement of \cref{thm:boundednesstrans} allows us to prove \cref{thm:forwardevolution} in \cref{sec:mainthm}.  
Finally, in the last two sections are show our non-existence results: In \cref{sec:cosmo}, we prove \cref{thm:cosmological} and in \cref{sec:kleingordonequation}, we give the proof of \cref{thm:kleingordon}.

\paragraph{\textbf{Acknowledgement.}} The authors would like express their gratitude to Mihalis Dafermos for many valuable discussions and  helpful remarks. The authors also thank Igor Rodnianski, Jonathan Luk, and Sung-Jin Oh for useful conversations. CK acknowledges support from the EPSRC and thanks Princeton University for hosting him as a VSRC. 
YS acknowledges support from the NSF Postdoctoral Research Fellowship under award no.\ 1502569.
\section{Preliminaries}
In this section we will define the background differentiable structure and metric for the Reissner--Nordström spacetime and introduce some convenient coordinate systems.
\label{prelims}
\subsection{Interior of the subextremal Reissner--Nordstr\"om black hole} 
The global geometry of Reissner--Nordström was first described in \cite{MR0128960}.  For completeness, we will explicitly construct in this section the coordinates for the region considered.
We start, in \cref{sec:interiorwoboundary}, by defining a Lorentzian manifold  corresponding to the interior of the Reissner--Nordström black hole without the horizons. Then, in \cref{sec:interiorwboundary}, we will attach the boundaries which will correspond to the event horizon and Cauchy horizon. 
\subsubsection{The interior without boundary}
\label{sec:interiorwoboundary}
We will now give an explicit description of the differential structure and metric. The Reissner--Nordström solutions \cite{reissner1916eigengravitation,nordstrom1918energy} are a  two-parameter family of spherically symmetric spacetimes with mass parameter $M\in \mathbb{R}$ and electric charge parameter $Q\in \mathbb R$ solving the Einstein--Maxwell system
\begin{align}
 	\label{eq:einsteinmaxwell}
	&{Ric}_{\mu\nu} - \frac{1}{2}g_{\mu\nu}{R} = 8 \pi T_{\mu\nu} := 8 \pi \left( \frac{1}{4\pi} \left( F_{\mu}^{~\lambda} F_{\lambda\nu}  - \frac{1}{4} g_{\mu\nu} F_{\lambda \rho} F^{\lambda \rho} \right)\right), 
	\\&	\nabla^\mu F_{\mu\nu} = 0 , \nabla_{[\mu} F_{\nu\lambda]} = 0.
\nonumber
\end{align}
 In this paper, we consider the subextremal family of black holes with parameter range $0<|Q|<M$. Define the manifold $\mathcal M$ by
\begin{align}
\mathcal M = \mathbb{R} \times (r_-,r_+) \times \mathbb{S}^2,
\end{align}
where $r_\pm = M \pm \sqrt{M^2 - Q^2} >0$. The manifold is parametrized by the standard coordinates $t \in \mathbb{R}$, $r\in (r_-, r_+)$,  and a choice of spherical coordinates $(\theta,\phi)$ on the sphere $\mathbb{S}^2$. We denote the global coordinate vector field $\partial_t$ by $T$:\begin{align}T := \frac{\partial}{\partial t}.\label{def:T}\end{align}
Using the above coordinates, we equip $\mathcal{M}$ with the Lorentzian metric
\begin{align}
g_{Q,M} = - \left(1-\frac{2M}{r} + \frac{Q^2}{r^2}\right) \d t \otimes \d t + \left(1-\frac{2M}{r} + \frac{Q^2}{r^2}\right)^{-1} \d r \otimes \d r + r^2 \slashed g_{\mathbb{S}^2},
\end{align}
where $\slashed g_{\mathbb{S}^2}$ is the round metric on the 2-sphere. We also define the quantities
\begin{align}
\Delta := r^2 - 2Mr + Q^2 = (r - r_+) (r-r_-)
\text{ and }h:= \frac{\Delta}{r^2}. \label{eq:h}
\end{align} 
Furthermore, define $r_\ast$ by
\begin{align}
\d r_\ast := \frac{r^2}{\Delta} \d r,
\end{align}
where we choose $r_\ast (\frac{r_+ + r_-}{2}) = 0$ for definiteness. Thus,
\begin{align}
r_\ast (r) = r + \frac{1}{2\kappa_+} \log|r - r_+| + \frac{1}{2\kappa_-} \log| r-r_-| + C \label{defn:rstar}
\end{align}
for a constant $C$ only depending on the black hole parameters and 
\begin{align}\label{eq:surface}
\kappa_\pm = \frac{r_\pm - r_\mp}{2r_\pm^2}.
\end{align}

We also introduce null coordinates
\begin{align}\label{eq:nullu}
v=  r_\ast + t \text{ and } u = r_\ast - t
\end{align}
on $\mathcal{M}$. The Penrose diagram of $\mathcal M$ is depicted in \cref{fig:interior_wo_boundary}.
\begin{figure}[ht]
	\centering
	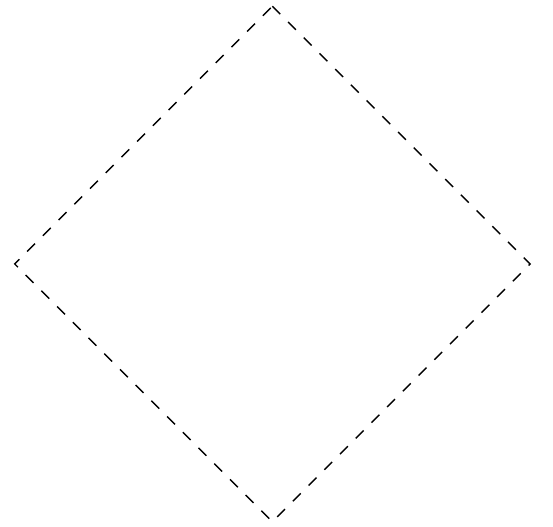
	\caption{Penrose diagram of $\mathcal{M}$; formally we have denoted the boundary (not part of the manifold) by $\mathcal{H} = \mathcal{H}_A \cup \mathcal{H}_B$ and $\mathcal{CH}=\Ch_A \cup \Ch_B$. }
	\label{fig:interior_wo_boundary}
\end{figure}
\subsubsection{Attaching the event and Cauchy horizon}
\label{sec:interiorwboundary}
Now, we will attach the Cauchy and event horizon to the manifold.  The Cauchy horizon characterizes the future boundary up to which the spacetime is uniquely determined as a solution to the Einstein--Maxwell system arising from data on the event horizon. Although the metric is smoothly extendible beyond the Cauchy horizon, any such extension fails to be uniquely determined from initial data, leading to a severe failure of determinism. 

Attaching the event and Cauchy horizon gives rise to a manifold with corners.
To do so, we first define the following two pairs of null coordinates. 

Let $U_{\Hp}\colon \mathbb{R} \to (0,\infty)$ and $V_{\Hp}\colon \mathbb{R} \to (0,\infty)$ be smooth and strictly increasing functions such that
\begin{itemize}
	\item $	U_{\Hp}(u) =u $ for $u\geq 1$, $V_\Hp (v) = v $ for $v\geq 1$,
	\item $ U_\Hp(u) \to 0$ as $u \to -\infty$ , $V_\Hp (v) \to 0 $ as $v \to -\infty$,
	\item there exists a $u_+ \leq 1$ such that $\frac{\d U_\Hp }{\d u} = \exp(\kappa_+u)	$ for $u\leq u_+$,
	\item there exists a $v_+ \leq 1$ such that $\frac{\d V_\Hp }{\d v} = \exp(\kappa_+v)	$ for $v\leq v_+$.
\end{itemize}
This defines -- in mild abuse of notation -- the null coordinates $U_\Hp \colon \mathcal{M} \to (0,\infty)$ via $U_\Hp (u)$ and $V_\Hp \colon \mathcal{M} \to (0,\infty)$ via $V_\Hp(v)$, where $u,v$ are the null coordinates defined in \eqref{eq:nullu}.

Similarly, let $U_{\Ch}\colon \mathbb{R}\to (-\infty,0)$ and $V_{\Ch}\colon \mathbb R \to (-\infty,0)$ be smooth and strictly increasing functions such that
\begin{itemize}
	\item $	U_{\Ch}(u) =u $ for $u\leq -1$, $V_\Ch (v) = v $ for $v\leq -1$,
	\item $ U_\Ch(u) \to 0$ as $u \to \infty$ , $V_\Ch (v) \to 0 $ as $v \to \infty$,
	\item there exists a $u_+ \geq -1$ such that $\frac{\d U_\Ch }{\d u} = \exp(\kappa_-u)	$ for $u\geq u_+$,
	\item there exists a $v_+ \geq -1$ such that $\frac{\d V_\Ch }{\d v} = \exp(\kappa_-v)	$ for $v\geq v_+$.
\end{itemize}
As above, this defines null coordinates $U_\Ch : \mathcal{M} \to (0,\infty)$ via $U_\Ch (u)$ and $V_\Ch\colon \mathcal{M} \to (0,\infty)$ via $V_\Ch(v)$, where $u,v$ are the null coordinates defined in \eqref{eq:nullu}.

To define the event horizon, we consider the coordinate chart $(U_\Hp, V_\Hp,\theta,\phi)$. We now define the event horizon without the bifurcation sphere as the union 
\begin{align}
\mathcal{H}_0 := \mathcal{H}_A \cup \mathcal{H}_B,
\end{align} 
where 
\begin{align}\mathcal{H}_A :=\{ U_\Hp =0 \}\times (0,\infty)\times \mathbb{S}^2 \text{ and }\mathcal{H}_B :=(0,\infty)\times \{V_\Hp = 0 \}\times \mathbb{S}^2.
\end{align}
Analogously, we also define the Cauchy horizon without the bifurcation sphere in the coordinate chart $(U_\Ch, V_\Ch, \theta, \phi)$ as the union
\begin{align}
\Ch_0 := \Ch_A \cup \Ch_B,
\end{align}
where 
\begin{align}\Ch_A :=(0,\infty)\times \{V_\Ch = 0 \}\times \mathbb{S}^2  \text{ and }\Ch_B :=\{ U_\Ch =0 \}\times (0,\infty)\times \mathbb{S}^2.
\end{align}

Then, we define the interior of the Reissner--Nordstr\"om spacetime without the bifurcation sphere as the manifold with boundary \begin{align}
\tilde{\mathcal{M}}:=\mathcal{M} \cup \mathcal{H} \cup \Ch.
\end{align}
The Lorentzian metric on $\mathcal{M}$ extends smoothly to $\tilde{\mathcal{M}}$. In particular, the boundary of $\tilde{\mathcal{M}}$ consists of four disconnected null hypersurfaces. In \cref{fig:penrosemrn} we have depicted its Penrose diagram.
\begin{figure}[ht]
	\centering
	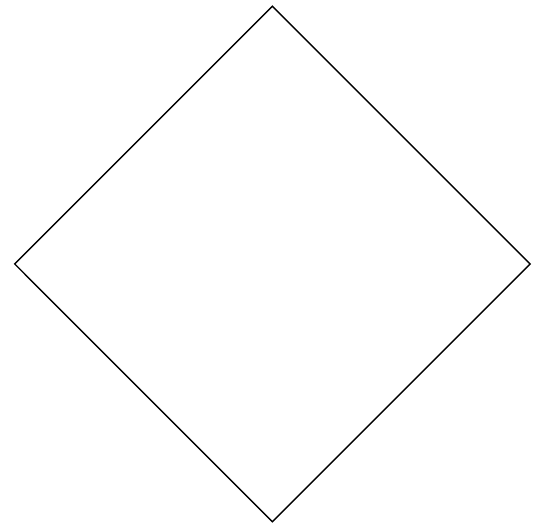
	\caption{Penrose diagram of $\tilde{\mathcal{M}}$.}
	\label{fig:penrosemrn}
\end{figure}
In mild abuse of notation we shall also use the coordinate systems
\begin{align}
&(U_\Hp, v, \theta,\phi) \text{ on }\mathcal M\cup \Hp_A\label{eq:ha},\\
&(u, V_\Hp, \theta,\phi) \text{ on }\mathcal M\cup \Hp_B\label{eq:hb},\\
&(u, V_\Ch, \theta,\phi) \text{ on }\mathcal M\cup \Ch_A\label{eq:ca},\\
&(U_\Ch, v, \theta,\phi) \text{ on }\mathcal M\cup \Ch_B.\label{eq:cb}
\end{align} In particular, we can write \begin{align}
&\Hp_A = \{ U_\Hp = 0 \}\times\{v \in \mathbb{R}\}\times \mathbb{S}^2,\\
&\Hp_B = \{ u \in \mathbb{R} \}\times\{V_\Hp = 0 \}\times \mathbb{S}^2,\\
&\Ch_A =  \{ u \in \mathbb{R} \}\times\{V_\Ch = 0 \}\times \mathbb{S}^2,\\
&\Ch_B =  \{ U_\Ch = 0 \}\times\{v \in \mathbb{R}\}\times \mathbb{S}^2.
\end{align}

Note also that the vector field $T$, initially defined on $\mathcal{M}$ in \eqref{def:T}, extends to a smooth vector field on  $\tilde{\mathcal{M}}$ with \begin{align}T\restriction_{\Hp_A} = \frac{\partial}{\partial v}\restriction_{\Hp_A},\end{align} where $\frac{\partial}{\partial v}$ is the coordinate derivative with respect to local chart defined in \eqref{eq:ha}.
Similarly, we have 
\begin{align}
&	T\restriction_{\Hp_B} = -\frac{\partial}{\partial u}\restriction_{\Hp_B} \text{ w.r.t. to the local chart \eqref{eq:hb}},\\
&T\restriction_{\Ch_A} = -\frac{\partial}{\partial u}\restriction_{\Ch_A} \text{ w.r.t. to the local chart \eqref{eq:ca}},\\
&T\restriction_{\Ch_B} = \frac{\partial}{\partial v}\restriction_{\Ch_B} \text{ w.r.t. to the local chart \eqref{eq:cb}}.
\end{align}
Note that $T$ is a Killing null generator of the Killing horizons $\Hp_A, \Hp_B, \Ch_A$, and $\Ch_B$. Recall also that $\nabla_T T \restriction_{\Ch} = \kappa_- T\restriction_{\Ch}$ and $\nabla_T T \restriction_{\Hp} = \kappa_+ T\restriction_{\Hp}$, where $\kappa_\pm$ is defined by~\eqref{eq:surface}.

At this point, we note that we can attach corners to $\mathcal{H}_0$ and $\mathcal{CH}_0$ to extend $\tilde{\mathcal{M}}$ smoothly to a Lorentzian manifold with corners. To be more precise, we attach the past bifurcation sphere $\mathcal{B}_-$ to $\mathcal{H}_0$ as the point $(U_\mathcal{H},V_\mathcal{H}) = (0,0)$. Then, define $\mathcal{H}:=\mathcal{H}_0 \cup \mathcal{B}_-$. Similarly, we can attach the future bifurcation sphere $\mathcal{B}_+$ to the Cauchy horizon which will be denoted by $\Ch$. We call the resulting manifold ${\mathcal{M}_{\mathrm{RN}}}$. 
Further details are not given since the precise construction is straightforward and the metric extends smoothly. Moreover, the $T$ vector field extends smoothly to $\mathcal{B}_+$ and $\mathcal{B}_-$ and vanishes there. Its Penrose diagram is depicted in \cref{fig:penrosembar}.
\begin{figure}[H]
	\centering
	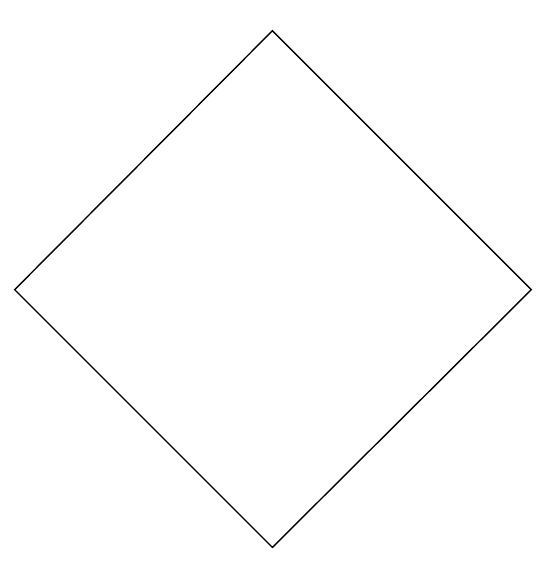
	\caption{Penrose diagram of ${\mathcal{M}_\text{RN}}$ which includes the bifurcate spheres $\mathcal{B}_+$ and $\mathcal{B}_-$.}
	\label{fig:penrosembar}
\end{figure}
Further details about the coordinate systems can be found in~\cite{o2014geometry}. From a dynamical point of view, we can also consider the spacetimes $(\mathcal M_\mathrm{RN} , g_{M,Q})$ as being contained in the Cauchy development of a spacelike hypersurface with two asymptotically flat ends solving the Einstein--Maxwell system in spherical symmetry. 

\subsection{The characteristic initial value problem for the wave equation}
In the context of scattering theory we will be interested in solutions to the wave equation \eqref{eq:linearwave} arising from suitable characteristic initial data. Recall the following well-posedness result for \eqref{eq:linearwave} with characteristic initial data.
\begin{prop}\label{thm:welldefined} Let  $\Psi \in  C_c^\infty(\Ho) $ be smooth compactly supported data on the event horizon $\Hp$. Then, there exists a unique smooth solution $\psi$ to \eqref{eq:linearwave} on $\mathcal{M}_{\mathrm{RN}}\setminus \Ch$ such that $\psi\restriction_{\mathcal{H}} = \Psi$. 
\end{prop}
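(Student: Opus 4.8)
The plan is to combine the classical local solvability of the characteristic initial value problem at the bifurcation sphere $\mathcal{B}_{-}$ with a domain-of-dependence continuation argument along a spacelike foliation of the interior. I would first record that $r$ is a time function in the interior: by \eqref{eq:h} one has $g(\nabla r,\nabla r)=g^{rr}=h=\Delta/r^{2}<0$ for $r\in(r_{-},r_{+})$, so each $\Sigma_{c}:=\{r=c\}$ is spacelike and $r$ decreases towards the future, from $r_{+}$ on $\mathcal{H}$ to $r_{-}$ on $\Ch$. Next, fix a smooth timelike vector field $N$ on $\tilde{\mathcal{M}}$ and let $J^{N}[\psi]$ be the energy current of \eqref{eq:linearwave} associated to $N$; the failure of $N$ to be Killing costs only a bulk term, pointwise bounded by a metric-dependent constant times the $\Sigma_{c}$-energy density of $\psi$. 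Since past-directed causal curves have $r$ non-decreasing and remain inside the rectangle $\{0\le U_{\Hp}\le U_{\Hp}(p)\}\times\{0\le V_{\Hp}\le V_{\Hp}(p)\}$, every $J^{-}(p)$ with $p$ in the interior is compact and meets $\mathcal{H}$ in a compact characteristic sub-rectangle; a Gr\"onwall estimate in $r$ on the lens $J^{-}(p)$ then bounds $\psi$ and its first derivatives at $p$ by the first jet of $\psi$ on $J^{-}(p)\cap\mathcal{H}$. This estimate already yields \emph{uniqueness}: the difference of two solutions with the same trace $\Psi$ has vanishing trace on $\mathcal{H}$, hence vanishing full first jet along $\mathcal{H}_{A}$ and $\mathcal{H}_{B}$ --- the tangential derivatives vanish because $\psi\restriction_{\mathcal{H}}=0$, and the transverse derivative along each null generator vanishes because it solves a linear transport equation with zero source and zero datum at $\mathcal{B}_{-}$ --- so the difference vanishes throughout $\mathcal{M}_{\mathrm{RN}}\setminus\Ch$.

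For existence I would proceed in two stages. Near $\mathcal{B}_{-}$, the datum $\Psi$ on $\mathcal{H}_{A}\cup\mathcal{H}_{B}$ constitutes smooth characteristic initial data posed on two transverse null hypersurfaces meeting in a round $2$-sphere, so the classical local existence theory for the characteristic initial value problem for linear wave equations provides, for some $\epsilon>0$, a unique smooth solution on the characteristic rectangle $\{0\le U_{\Hp}\le\epsilon\}\times\{0\le V_{\Hp}\le\epsilon\}\times\mathbb{S}^{2}$ realising the prescribed trace on its two null faces. To globalise, I would tile each characteristic rectangle $\mathcal{R}_{a,b}:=\{0\le U_{\Hp}\le a\}\times\{0\le V_{\Hp}\le b\}\times\mathbb{S}^{2}$ by small characteristic rectangles and solve the characteristic initial value problem in each in turn --- using $\Psi$ on the faces lying in $\mathcal{H}$ and the already-constructed solution on the remaining past faces --- the a priori estimate of the previous paragraph guaranteeing that this data stays controlled, so that the tiling can be completed; for a \emph{linear} equation there is no further obstruction to such a continuation. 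This produces a unique smooth solution on every $\mathcal{R}_{a,b}$. Since $\Ch_{A}=\{V_{\Hp}=\infty\}$ and $\Ch_{B}=\{U_{\Hp}=\infty\}$, the rectangles $\mathcal{R}_{a,b}$ exhaust $\mathcal{M}_{\mathrm{RN}}\setminus\Ch$, and by the uniqueness statement the corresponding solutions agree on overlaps and glue to the desired smooth $\psi$, which restricts to $\Psi$ on $\mathcal{H}$ by construction.

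I expect the only genuinely delicate point to be the corner $\mathcal{B}_{-}$, which forces the use of the characteristic --- rather than the ordinary Cauchy --- existence theory; the remaining ingredients (finite speed of propagation, the Gr\"onwall energy estimate, the continuity argument) are routine. The crucial simplification is that, because we only demand a solution on $\mathcal{M}_{\mathrm{RN}}\setminus\Ch$, we never need uniform control up to the Cauchy horizon but merely the compact-region estimates above; indeed such uniform control is false in general, this being precisely the $C^{1}$ instability established later in \cref{thm:c1instab}.
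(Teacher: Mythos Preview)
The paper does not supply its own proof of this proposition; it simply records it as well known and cites the characteristic initial value problem literature (in particular Rendall's reduction). Your sketch is correct and is essentially the standard argument underlying those references: local existence near the bifurcation sphere $\mathcal{B}_{-}$ from the characteristic IVP theory, followed by a domain-of-dependence/energy continuation along the spacelike foliation by $\{r=c\}$.

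One minor redundancy worth noting: in your uniqueness argument you go to some trouble to show that the \emph{full} first jet of the difference vanishes on $\mathcal{H}$, via a transport equation for the transverse derivative. This is not needed. If $\psi\restriction_{\mathcal{H}}=0$ then all tangential derivatives vanish, and a short computation shows that every component $\mathbf{T}[\psi]_{\mu\nu}n^{\nu}$ of the energy--momentum tensor contracted with the null normal $n$ already vanishes on $\mathcal{H}$ (the cross term $\partial_{u}\psi\,\partial_{v}\psi$ and $|\nabla\psi|^{2}$ both pick up a tangential factor). Hence the $N$-energy flux through $\mathcal{H}$ is zero regardless of the transverse derivative, and your Gr\"onwall estimate gives $\psi\equiv 0$ directly. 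Your transport argument is correct, just superfluous.
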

The previous proposition is well known, see \cite{characteristic,rendall1990reduction}.
Analogously, we have the following for the backward evolution.
\begin{prop}\label{thm:time}
 Let  $\Psi \in  C_c^\infty(\Ch) $ be smooth compactly supported data on the Cauchy horizon $\Ch$. Then, there exists a unique smooth solution $\psi$ to \eqref{eq:linearwave} on $\mathcal{M}_{\mathrm{RN}}\setminus \Hp$ such that $\psi\restriction_{\Ch} = \Psi$.
\end{prop}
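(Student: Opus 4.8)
The plan is to obtain Proposition~\ref{thm:time} as the exact time-reverse of Proposition~\ref{thm:welldefined}, using that the wave operator $\Box_g$ is a geometric second-order operator determined by the metric $g_{M,Q}$ alone and is in particular insensitive to the choice of time orientation. The first step is to record that the construction of $\mathcal{M}_{\mathrm{RN}}$ in \cref{sec:interiorwboundary} is manifestly symmetric under interchanging the event and Cauchy horizons: the null coordinates $(U_\Ch,V_\Ch)$ were defined by exact analogy with $(U_\Hp,V_\Hp)$ — with $\kappa_-$ in the role of $\kappa_+$ — and the metric extends smoothly across $\Ch$, which is a bifurcate Killing horizon, just as it does across $\Hp$. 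Moreover $r$ is (up to sign) a global time function on $\mathcal{M}_{\mathrm{RN}}$: since $h<0$ on the interior (see \eqref{eq:h}), $\partial_r$ is timelike while $T$ is spacelike, and matching to the exterior across $\Hp_A=\Hp^+$ fixes decreasing $r$ as the future direction, so that $\Hp=\{r=r_+\}$ lies to the past and $\Ch=\{r=r_-\}$ to the future.

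Given this, I would reverse the time orientation of $(\mathcal{M}_{\mathrm{RN}},g_{M,Q})$ (equivalently, pass to the time function $-r$). In the resulting Lorentzian manifold — same underlying manifold, same metric, hence the same wave operator — $\Ch$ becomes the past bifurcate null boundary and $\Hp$ the future (Cauchy) horizon, and near its past boundary this manifold-with-corners equipped with its double null foliation is of exactly the same type as $\mathcal{M}_{\mathrm{RN}}$ near $\Hp$. I then invoke the general well-posedness theory for the characteristic initial value problem for the wave equation underlying \cref{thm:welldefined} (as in \cite{characteristic,rendall1990reduction}, which directly handles the past-to-future evolution): smooth data $\Psi\in C_c^\infty(\Ch)$ on the past bifurcate null boundary $\Ch$ yield a unique smooth solution of $\Box_g\psi=0$ on the region swept out toward the future, which here is $\mathcal{M}_{\mathrm{RN}}\setminus\Hp$ (the complement of the Cauchy horizon of this time-reversed spacetime), with $\psi\restriction_\Ch=\Psi$. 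Since $\Box_g$ is unchanged by the time reversal, $\psi$ solves \eqref{eq:linearwave} on $\mathcal{M}_{\mathrm{RN}}\setminus\Hp$ in the original sense. Uniqueness follows the same way, or directly from finite speed of propagation and the domain-of-dependence property of $\Box_g$; compact support of $\Psi$ poses no difficulty because each compact subset of $\mathcal{M}_{\mathrm{RN}}\setminus\Hp$ has compact "shadow" on $\Ch$.

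I do not expect a genuine obstacle: the statement is precisely the time-reverse of \cref{thm:welldefined}, and the background geometry has been arranged in \cref{sec:interiorwboundary} so that the neighborhoods of $\Hp$ and of $\Ch$ are interchangeable. The only point deserving (routine) care is the verification just mentioned — that $r$ really is a global time function with $\Hp$ to the past and $\Ch$ to the future — so that "time reversal" accomplishes exactly the swap claimed; this, together with the symmetric role of $\kappa_\pm$ in the definitions of $U_\Hp,V_\Hp$ and $U_\Ch,V_\Ch$, is what makes the proof of \cref{thm:welldefined} apply verbatim after the reversal.
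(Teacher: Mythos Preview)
Your proposal is correct and matches the paper's approach: the paper gives no separate proof at all, simply stating ``Analogously, we have the following for the backward evolution'' immediately after citing the standard characteristic well-posedness theory for \cref{thm:welldefined}. Your time-reversal argument is precisely the content of that word ``analogously,'' spelled out in more detail than the paper bothers with.
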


\subsection{Hilbert spaces of finite \texorpdfstring{$T$}{T} energy on both horizon components} 
Now, we are in the position to define the Hilbert spaces on the event $\mathcal{H}= \mathcal{H}_A \cup \mathcal{H}_B \cup \mathcal{B}_-$ and Cauchy horizon $\mathcal{CH}= \mathcal{CH}_A \cup \mathcal{CH}_B \cup \mathcal{B}_+$, respectively. 

We will start with constructing the Hilbert space on the event horizon. Roughly speaking, it will be defined by requiring finiteness of the $T$ energy flux on $\mathcal{H}_A$ \underline{minus} the $T$ energy flux on $\mathcal{H}_B$. 
More precisely, let $C^\infty_c(\mathcal{H})$ be the vector space of smooth compactly supported functions on $\mathcal{H}$. 
Recall that the Killing vector field $T$ is also a null generator of $\mathcal{H}$ and vanishes at the past bifurcation sphere $\mathcal{B}_-$. 
This allows us to define the norm $\| \cdot \|^2_{\mathcal{E}^T_{\mathcal H}}$ on the vector space $C^\infty_c(\mathcal{H})$ as 
\begin{align}\label{eq:firstdefnflux}
		\| \psi \|^2_{\mathcal{E}^T_{\mathcal H}}:= \int_{\mathcal{H}_A} J_\mu^T[\psi] n_{\mathcal{H}_A}^\mu \, \d\mathrm{vol}_{n_{\mathcal{H}_A}} -\int_{\mathcal{H}_B} J_\mu^T[\psi] n_{\mathcal{H}_B}^\mu \, \d\mathrm{vol}_{n_{\mathcal{H}_B}},
\end{align}
where  $\psi \in C_c^\infty(\mathcal{H})$, $\mathbf T[\psi]$ is the energy momentum tensor
\begin{align}
\mathbf{T}[\psi]_{\mu\nu} := \mathrm{Re}(\partial_\mu\psi\overline{ \partial_\nu \psi}) - \frac{1}{2}g_{\mu\nu} \partial_\alpha\psi \overline{\partial^\alpha\psi}, 
\end{align} and $J^T[\psi] := \mathbf{T}[\psi](T,\cdot)$. 
In \eqref{eq:firstdefnflux}, the fluxes are defined with respect to future directed normal vector fields $n_{\mathcal{H}_A}$ and $n_{\mathcal{H}_B}$ on $\mathcal{H}_A$ and $\mathcal{H}_B$, respectively.\footnote{A choice of such normal vectors fixes the volume form. Also note that this is the natural setup for energy estimates. }  Moreover, recall from \cref{fig:compare} that $T$ is future (resp.\ past) directed on $\mathcal{H}_A$ (resp.\ $\mathcal{H}_B$). Thus, the terms arising in \eqref{eq:firstdefnflux} satisfy $\int_{\mathcal{H}_A} J_\mu^T[\psi] n_{\mathcal{H}_A}^\mu \, \d\mathrm{vol}\geq 0$ and $-\int_{\mathcal{H}_B} J_\mu^T[\psi] n_{\mathcal{H}_B}^\mu \, \d\mathrm{vol}\geq 0 $.  
Again, in view of the fact that on the component $\Hp_B$ the normal vector field $T$ is past directed, we can choose $n_{\Hp_A}: = T\restriction_{\Hp_A}$ and $n_{\Hp_B}:= -T\restriction_{\Hp_B}$ as the future directed normal vector fields on $\mathcal{H}_A$ and $\mathcal{H}_B$, respectively, such that we can realize the norm \eqref{eq:firstdefnflux} as
(using the coordinate charts \eqref{eq:ha} and \eqref{eq:hb}) 
\begin{align}
	\| \psi \|^2_{\mathcal{E}^T_{\mathcal H}} = \int_{\mathbb R \times \mathbb{S}^2} |\partial_v\psi\restriction_{\Hp_A}|^2 \d v \sin\theta\d\theta \d \varphi +  \int_{\mathbb R \times \mathbb{S}^2} |\partial_u\psi\restriction_{\Hp_B}|^2 \d u \sin\theta\d\theta \d \varphi .
\end{align}
The norm \eqref{eq:firstdefnflux} defines an inner product, hence its completion is a Hilbert space.
\begin{definition}
	We define the Hilbert space of finite $T$ energy $\mathcal{E}^T_{\Hp}$ on both components of the event horizon as the completion of smooth and compactly supported functions $C_c^\infty(\mathcal{H})$ on the event horizon $\mathcal{H}= \mathcal{H}_A \cup \mathcal{H}_B \cup \mathcal{B}_-$ with respect to the norm \eqref{eq:firstdefnflux}.
\end{definition}
Analogously, we can consider the vector space $C_c^\infty(\Ch)$ and define the norm $\| \cdot \|^2_{\mathcal{E}^T_{\mathcal{CH}}}$ as the $T$ energy flux on the component $\mathcal{CH}_B$ \underline{minus} the $T$ energy flux on the component $\mathcal{CH}_A$:
\begin{align}\label{eq:firstdefnfluxch}
\| \psi \|^2_{\mathcal{E}^T_{\mathcal{CH}}}:= \int_{\mathcal{CH}_B} J_\mu^T[\psi] n_{\mathcal{CH}_B}^\mu \, \d\mathrm{vol}_{n_{\mathcal{CH}_B}} -\int_{\mathcal{CH}_A} J_\mu^T[\psi] n_{\mathcal{CH}_A}^\mu \, \d\mathrm{vol}_{n_{\mathcal{CH}_A}}.
\end{align} Again, in view of the orientation of the $T$ vector field (cf.\ \cref{fig:compare}), this norm can be represented as 
(using the coordinate charts \eqref{eq:ca} and \eqref{eq:cb}) 
\begin{align}
	\| \psi \|^2_{\mathcal{E}^T_{\mathcal{CH}}} = \int_{\mathbb R \times \mathbb{S}^2} |\partial_v\psi\restriction_{\Ch_B}|^2 \d v \sin\theta\d\theta \d \varphi +  \int_{\mathbb R \times \mathbb{S}^2} |\partial_u\psi\restriction_{\Ch_A}|^2 \d u \sin\theta\d\theta \d \varphi .
\end{align}
\begin{definition}
		We define the Hilbert space of finite $T$ energy $ \mathcal{E}^T_{\Ch}$ on both components of the Cauchy horizon as the completion of smooth and compactly supported functions $C_c^\infty(\mathcal{CH})$ the Cauchy horizon $\mathcal{CH}= \mathcal{CH}_A \cup \mathcal{CH}_B \cup \mathcal{B}_+$ with respect to the norm \eqref{eq:firstdefnfluxch}.
\end{definition}

\subsection{Separation of variables}
In this section we show how the radial o.d.e.\ \eqref{eq:radialode1} arises from decomposing a general solution in spherical harmonics and Fourier modes. For concreteness, let $\psi$ be a smooth solution to $\Box_g \psi = 0$ such that on each $\{r=const.\}$ slice, $\psi$ is compactly supported in the $t$ variable.\footnote{Note that we will prove later that such solutions arise from data which are dense in $\mathcal{E}_\Hp^T$.}
Then, we can define its Fourier transform in the $t$ variable and also decompose $\psi$ in spherical harmonics to end up with
\begin{align}
\hat \psi_{m\ell} (  r,\omega) := \int_{\mathbb{R}\times \mathbb{S}^2} e^{-i\omega t} Y_{m\ell}(\theta,\phi) \psi(t,r,\theta,\phi) \sin\theta \d\theta\d\phi \frac{\d t}{\sqrt{2\pi}}.
\end{align} Due to the compact support on constant $r$ slices, the wave equation $\Box_g \psi = 0$ implies that  \begin{align}\hat \psi_{m\ell}(r, \omega) =: R_{m\ell}^{(\omega)}(r) =: R(r) \end{align} satisfies the radial o.d.e. 
\begin{align}\label{eq:radialode}
\Delta \frac{\d}{\d r}\left(\Delta \frac{\d}{\d r} R\right) - \Delta \ell(\ell+1) R + r^4 \omega^2 R =0.
\end{align}
In \cref{sec:radial} we will analyze solutions to \eqref{eq:radialode} and denote a solution thereof with $R(r)$. Moreover, it is useful to introduce the function $u$ defined as
\begin{align}
u(r) := r R(r)
\end{align}
and consider $u = u(r(r_\ast))$ as a function of $r_\ast$, which is defined in \eqref{defn:rstar}.
Using the $r_\ast$ variable, the o.d.e.~\eqref{eq:radialode} finally reduces to 
\begin{align}
u^{\prime \prime} + (\omega^2 - V_\ell ) u= 0 \label{ODE1}
\end{align}
on the real line with potential 
\begin{align}\label{eq:potential}
V=V_\ell =  \Delta\left( \frac{r(r_+ + r_-) - 2 r_+ r_-}{r^3} + \frac{\ell(\ell+1)}{r^4}\right).
\end{align} 
In \cref{lem:asymptoticspotential} in the appendix it is proven that, as a function of $r_*$, the potential $V_\ell$ decays exponentially as $ r_\ast \to \pm \infty$. In particular, this indicates that we have asymptotic free waves (asymptotic states) near the event and Cauchy horizon of the form  $e^{\pm i \omega r_\ast}$ as $|r_\ast | \to \infty$. In order to construct these solutions we use the following proposition for Volterra integral equations (see Lemma 2.4 of \cite{schlag2010decay}).
\begin{prop}\label{lem:volterra}
	Let $x_0\in \mathbb{R}\cup \{+\infty\}$ and $g\in L^\infty(-\infty,x_0)$. Suppose the integral kernel $K$ satisfies 
	\begin{align}
	\alpha := \int_{-\infty}^{x_0} \sup_{ \{x: y<x<x_0 \} } |K(x,y)| \d y < \infty.
	\end{align}
	Then, the Volterra integral equation
	\begin{align}\label{avolterraeqn}
	f(x) = g(x) + \int_{-\infty}^{x} K(x,y) f(y) \d y
	\end{align}
	has a unique solution $f$ satisfying
	\begin{align}
	\| f\|_{L^\infty(-\infty,x_0)}\leq e^{\alpha} \| g\|_{L^\infty(-\infty,x_0)}.
	\end{align}
	If in addition $K$ is smooth in both variables and 
	\begin{align}
	\int_{-\infty}^{x_0} \sup_{\{x: y<x<x_0\}} |\partial_x^k K(x,y)| \d y < \infty
	\end{align}
	for all $k\in \mathbb{N}$, then the solution $f$ is smooth on $(-\infty, x_0)$ and the derivatives can be computed by formal differentiation  of~\eqref{avolterraeqn}.
\end{prop}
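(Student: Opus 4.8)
\medskip

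\noindent\textbf{Proof sketch.} The plan is to solve \eqref{avolterraeqn} by the classical Picard/Neumann iteration, extracting the quantitative bound from a factorial gain in the successive iterates. Write $\bar K(y) := \sup_{\{x:\,y<x<x_0\}}|K(x,y)|$, so that by hypothesis $\alpha = \int_{-\infty}^{x_0}\bar K(y)\,\d y<\infty$, and set $A(y):=\int_{-\infty}^{y}\bar K(z)\,\d z$, an absolutely continuous nondecreasing function with $0\le A(y)\le\alpha$ for $y<x_0$ and $A\to 0$ as $y\to-\infty$. Define $f_0:=g$ and $f_{n+1}(x):=\int_{-\infty}^{x}K(x,y)f_n(y)\,\d y$ for $n\ge 0$, and set $f:=\sum_{n\ge 0}f_n$; the goal is to show this series converges in $L^\infty$ to the unique solution.

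First I would establish, by induction on $n$, the pointwise bound $|f_n(x)|\le \|g\|_{L^\infty(-\infty,x_0)}\,A(x)^n/n!$. The case $n=0$ is trivial, and for the inductive step one uses $|K(x,y)|\le\bar K(y)$ on the range of integration together with the telescoping identity $\bar K(y)A(y)^n=\frac{\d}{\d y}\!\big(A(y)^{n+1}/(n+1)\big)$ and $A(-\infty)=0$, which give $|f_{n+1}(x)|\le \frac{\|g\|_{L^\infty}}{n!}\int_{-\infty}^{x}\bar K(y)A(y)^n\,\d y=\frac{\|g\|_{L^\infty}}{(n+1)!}A(x)^{n+1}$. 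Since $A(x)\le\alpha$, summing yields $\sum_{n\ge 0}\|f_n\|_{L^\infty}\le e^{\alpha}\|g\|_{L^\infty}$, so the series defining $f$ converges absolutely and uniformly and satisfies the claimed estimate. To check that $f$ actually solves \eqref{avolterraeqn}, I would add the defining relations for $f_0,\dots,f_N$ to obtain $\sum_{n=0}^{N}f_n(x)=g(x)+\int_{-\infty}^{x}K(x,y)\big(\sum_{n=0}^{N-1}f_n(y)\big)\,\d y$ and let $N\to\infty$: the integrand is bounded by $\bar K(y)\,e^{\alpha}\|g\|_{L^\infty}$, which lies in $L^1(-\infty,x)$, so dominated convergence lets us pass the limit inside the integral. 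Uniqueness follows by the same estimate: the difference $d$ of two bounded solutions satisfies $d(x)=\int_{-\infty}^{x}K(x,y)d(y)\,\d y$, hence $|d(x)|\le \|d\|_{L^\infty}A(x)^n/n!\le \|d\|_{L^\infty}\alpha^n/n!$ for every $n$, forcing $d\equiv 0$.

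For the regularity assertion I would bootstrap, using that $g$ is smooth in all applications of interest. Differentiating \eqref{avolterraeqn} in $x$ — the variable upper endpoint contributing a boundary term — gives $f'(x)=g'(x)+K(x,x)f(x)+\int_{-\infty}^{x}\partial_x K(x,y)f(y)\,\d y$; the last integral converges absolutely since $\int_{-\infty}^{x_0}\sup_x|\partial_x K(x,y)|\,\d y<\infty$ and $f\in L^\infty$, and every term on the right is continuous, so $f\in C^1$. Iterating, each further differentiation produces boundary terms built from $f$ and its lower-order derivatives together with diagonal values of derivatives of $K$, plus an integral of $\partial_x^k K$ against $f$, all controlled by the assumed bounds on the $\partial_x^k K$; hence $f\in C^\infty(-\infty,x_0)$ with derivatives given by formal differentiation of \eqref{avolterraeqn}.

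The whole argument is routine; the only genuinely load-bearing point is structural rather than computational: the hypothesis involves the \emph{supremum in $x$} of $|K(x,y)|$, and it is precisely this that allows one to replace $|K(x,y)|$ inside the iteration by the $x$-independent majorant $\bar K(y)$ and thereby obtain the telescoping $\bar K A^n = (A^{n+1}/(n+1))'$ that produces the crucial factorial. A secondary point requiring mild care is the bookkeeping of the boundary terms generated whenever $\partial_x$ hits the moving upper limit of integration in the regularity step, and the inductive verification that all resulting integrals stay absolutely convergent; neither is a real obstacle.
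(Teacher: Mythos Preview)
Your argument is correct and is exactly the standard Picard--Neumann iteration with the factorial gain coming from the $x$-independent majorant $\bar K(y)$; this is the proof one finds in the reference the paper cites (Lemma~2.4 of \cite{schlag2010decay}), since the paper itself does not supply a proof but simply invokes that lemma. One minor point worth making explicit: the smoothness conclusion tacitly requires $g$ to be smooth as well (otherwise $f=g+\text{smooth}$ could not be smooth), which you correctly flag; in the paper's applications $g$ is always $e^{\pm i\omega r_\ast}$ or $r_\ast$, so this is harmless.
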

\begin{rmk}\label{rmk:volterra}
	Analogous results as in \cref{lem:volterra} also hold true for Volterra integral equations on intervals of the form $(x_0,x_1)$ or $(x_0,+\infty)$. 
\end{rmk}
This allows us to define the following fundamental pairs of solutions to the o.d.e.~\eqref{ODE1}. In view of the exponential decay of the potential, it is straightforward to check that the assumptions of \cref{lem:volterra} are satisfied.
\begin{definition}\label{defn:u1u2}
	Let $\omega \in \mathbb R$ and $\ell\in\mathbb N_0$ be fixed. Define asymptotic state solutions $u_1$ and $u_2$ of the radial o.d.e.\ \eqref{ODE1} as the unique solutions to the Volterra integral equations
	\begin{align}\label{eq:constructionu1}
		&u_1(\omega, r_\ast) = e^{i\omega r_\ast} + \int_{-\infty}^{r_\ast} \frac{\sin(\omega(r_\ast - y))}{\omega} V(y) u_1(\omega, y) \d y,
\\
&u_2(\omega, r_\ast) = e^{-i\omega r_\ast} + \int_{-\infty}^{r_\ast} \frac{\sin(\omega(r_\ast - y))}{\omega} V(y) u_2(\omega, y) \d y.	\end{align}
Analogously, define $v_1$ and $v_2$ as the unique solutions to the Volterra integral equations
\begin{align}
&v_1(\omega, r_\ast) = e^{i\omega r_\ast} - \int_{r_\ast}^{\infty} \frac{\sin(\omega(r_\ast - y))}{\omega} V(y) v_1(\omega, y) \d y,
\\
&v_2(\omega, r_\ast) = e^{-i\omega r_\ast} - \int_{r_\ast}^{\infty} \frac{\sin(\omega(r_\ast - y))}{\omega} V(y) v_2(\omega, y) \d y.	\end{align}

For $\omega =0$, we set $\frac{\sin(\omega(r_\ast - y))}{\omega}\restriction_{\omega =0} = r_\ast - y$ in the integral kernel in which case $u_1$ and $u_2$ coincide. We define 
\begin{align} \tilde u_1(r_\ast) := u_1(0,r_\ast) = u_2(0,r_\ast) \end{align} 
and similarly,
\begin{align}
	\tilde v_1 (r_\ast) := v_1(0,r_\ast) = v_2(0,r_\ast).
\end{align}
Since $u_1(0,r_\ast) = u_2(0,r_\ast)$ for $\omega=0$, there exists another linearly independent fundamental solution $\tilde u_2$ solving the Volterra integral equation
\begin{align}
	&\tilde u_2 (r_\ast) = r_\ast + \int_{-\infty}^{r_\ast} (r_\ast -y) V(y) \tilde u_2(y) \d y.
	\end{align}
	Similarly, we also have another fundamental solution, which is linearly independent from $\tilde v_1$, solving
	\begin{align}
	&\tilde v_2 (r_\ast) = r_\ast - \int_{r^\ast}^{\infty} (r_\ast -y) V(y) \tilde v_2(y) \d y.
\end{align} 
Since $r_*$ is not uniformly bounded, we cannot apply \cref{lem:volterra} to construct $\tilde u_2$ and $\tilde v_2$. Nevertheless, after switching to coordinates which are regular at $\mathcal{H}$ or $\mathcal{CH}$, the existence of the desired solutions follows immediately from the usual local theory of regular singularities (see~\cite{olver2014asymptotics}).
\end{definition}
\begin{rmk}
	\label{rmk:holo}
	Due to the exponential decay of the potential $V_\ell$ (see \cref{lem:asymptoticspotential} in the appendix), it follows from standard theory that the solutions $u_1(\omega, r_\ast), u_2(\omega, r_\ast)$, $v_1(\omega, r_\ast)$ and $v_2(\omega, r_\ast)$ can be continued to holomorphic functions of $\omega$ in the strip $|\operatorname{Im}(\omega)| <  \kappa_+$ for fixed $r_\ast \in \mathbb R$. Indeed, in \cite{hartle1982crossing} it is shown that $u_1(\omega,r_\ast)$ is analytic in $\mathbb C\setminus \{im\kappa_+\colon m\in \mathbb{N}\}$ with possible poles at $\{ i m \kappa_+ \colon m\in \mathbb N \}$ and similarly for $u_2, v_1$, and $v_2$. See also the proof of \cref{prop:rtboundedcomplex} in the appendix.
\end{rmk}
This allows us now to define the reflection and transmission coefficients $\mathfrak R$ and $\mathfrak T$.
\begin{definition} \label{defn:TandR} Let $\omega \neq 0$. Then we define the transmission coefficient $\mathfrak T(\omega, \ell)$ and reflection coefficient $\mathfrak R(\omega, \ell)$ as the unique coefficients such that
\begin{align}
u_1 = \mathfrak T v_1 + \mathfrak R v_2. \label{eq:defn1}
\end{align}

 Using the fact that the Wronskian 
 \begin{align}\mathfrak  W(f,g) := f g^\prime - f^\prime g\end{align} of two solutions $f$ and $g$ is independent of $r_\ast$, we can equivalently define the scattering coefficients as
\begin{align}
	\mathfrak T : = \frac{ \mathfrak W(u_1,v_2)}{ \mathfrak W(v_1,v_2)} = \frac{\mathfrak   W(u_1,v_2)}{-2i\omega}
\end{align} 
and 
\begin{align}
\mathfrak R : = \frac{\mathfrak W(u_1,v_1)}{\mathfrak W(v_2,v_1)} = \frac{ \mathfrak W(u_1,v_1)}{2i\omega}.
\end{align} 
\end{definition}
The transmission and reflection coefficients satisfy a pseudo-unitarity property proven in the following.
\begin{prop}[Pseudo-unitarity in the separated picture]\label{prop:pseudouni}
	The transmission and reflection coefficients satisfy\begin{align}\label{eq:pseudounitaryode}
	1 =	|\mathfrak T|^2  - |\mathfrak R|^2.
	\end{align}
	\begin{proof} First, note that any solution to the o.d.e.\ \eqref{ODE1} satisfies the identity \begin{align}
			\operatorname{Im}(\bar u u^\prime ) = const.\
		\end{align}
		Applying this to the solution $u_1 = \mathfrak T v_1 + \mathfrak R v_2$ shows the claim.
	\end{proof}
\end{prop}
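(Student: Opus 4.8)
The plan is to build a quantity that is bilinear in a solution of the radial o.d.e.~\eqref{ODE1}, is conserved in $r_\ast$, and then to read off the identity by evaluating it at $r_\ast\to-\infty$ and at $r_\ast\to+\infty$. Since the potential $V_\ell$ and the frequency $\omega$ are real, whenever $u$ solves \eqref{ODE1} so does $\bar u$; hence, by the fact (already used in \cref{defn:TandR}) that the Wronskian of two solutions of \eqref{ODE1} is independent of $r_\ast$, the quantity $\mathfrak W(u,\bar u)=u\bar u'-u'\bar u=-2i\operatorname{Im}(\bar u u')$ is constant, which is the identity $\operatorname{Im}(\bar u u')=\mathrm{const}$ quoted in the sketch. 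I would apply this to $u=u_1$.

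The only analytic input needed is the behaviour of the Volterra solutions of \cref{defn:u1u2} at the two ends, for fixed $\omega\neq0$. From the integral equation \eqref{eq:constructionu1}, the pointwise bound $|\sin(\omega(r_\ast-y))/\omega|\le 1/|\omega|$, the a priori estimate of \cref{lem:volterra}, and the exponential decay of $V_\ell$ (\cref{lem:asymptoticspotential}), one has $\int_{-\infty}^{r_\ast}|V_\ell|\to0$ as $r_\ast\to-\infty$, whence $u_1(\omega,r_\ast)=e^{i\omega r_\ast}+o(1)$; differentiating the Volterra equation gives likewise $u_1'(\omega,r_\ast)=i\omega e^{i\omega r_\ast}+o(1)$. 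The same estimate applied to $v_1,v_2$ near $r_\ast=+\infty$ yields $v_1=e^{i\omega r_\ast}+o(1)$, $v_1'=i\omega e^{i\omega r_\ast}+o(1)$ and $v_2=e^{-i\omega r_\ast}+o(1)$, $v_2'=-i\omega e^{-i\omega r_\ast}+o(1)$ (here $|e^{i\omega r_\ast}|=1$ since $\omega$ is real). Evaluating the conserved Wronskian at $r_\ast\to-\infty$ gives
\[ \mathfrak W(u_1,\bar u_1)=\mathfrak W\big(e^{i\omega r_\ast},e^{-i\omega r_\ast}\big)+o(1)=-2i\omega+o(1), \]
and since $\mathfrak W(u_1,\bar u_1)$ is exactly constant, this constant equals $-2i\omega$.

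For the other end I use the defining relation $u_1=\mathfrak T v_1+\mathfrak R v_2$, so that $\bar u_1=\bar{\mathfrak T}\,\bar v_1+\bar{\mathfrak R}\,\bar v_2=\bar{\mathfrak T}v_2+\bar{\mathfrak R}v_1$, where $\bar v_1=v_2$ is immediate from conjugating the Volterra equation for $v_1$ and invoking uniqueness in \cref{lem:volterra}. Bilinearity of $\mathfrak W$ together with $\mathfrak W(v_1,v_1)=\mathfrak W(v_2,v_2)=0$ then reduces the conserved quantity to
\[ \mathfrak W(u_1,\bar u_1)=|\mathfrak T|^2\,\mathfrak W(v_1,v_2)+|\mathfrak R|^2\,\mathfrak W(v_2,v_1)=-2i\omega\big(|\mathfrak T|^2-|\mathfrak R|^2\big), \]
using $\mathfrak W(v_1,v_2)=-2i\omega$ (consistent with \cref{defn:TandR}). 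Equating with the value $-2i\omega$ from the previous step and dividing by $-2i\omega\neq0$ gives $1=|\mathfrak T|^2-|\mathfrak R|^2$. One may equivalently track $\operatorname{Im}(\bar u_1 u_1')$ directly, as in the stated sketch: the oscillatory cross terms $e^{\pm2i\omega r_\ast}$ that appear there become real after multiplication by $i\omega$ and hence drop out of the imaginary part — the Wronskian computation above is merely a tidy way of exhibiting this cancellation. No step is genuinely hard; the only point requiring routine care is the justification that the Volterra corrections, and their first derivatives, vanish in the relevant limits, which is exactly where the exponential decay of the potential is used.
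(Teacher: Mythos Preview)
Your proof is correct and follows essentially the same approach as the paper: the conserved quantity $\operatorname{Im}(\bar u u')$ (equivalently $\mathfrak W(u,\bar u)$) is evaluated for $u=u_1$ at both ends using the asymptotics of $u_1,v_1,v_2$. You have simply spelled out in full the steps that the paper leaves to the reader, including the justification of the limits via the Volterra construction and the identification $\bar v_1=v_2$.
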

In the following we shall see that the reflection and transmission coefficients are regular at $\omega =0$. 
\begin{prop}\label{cor:rdoesntvanish}Let $\ell \in \mathbb N_0 $ be fixed. Then the scattering coefficients $\mathfrak R(\omega,\ell)$ and $\mathfrak T (\omega,\ell)$ are analytic functions of $\omega$ in the strip $
	\{\omega \in \mathbb C \colon |\operatorname{Im}(\omega)| < \kappa_+
	\}$  with values for $\omega =0$ given by
	\begin{align}\label{eq:r0}
	&\mathfrak R(0,\ell) = \frac{(-1)^\ell}{2} \left( \frac{r_-}{r_+} - \frac{r_+}{r_-}\right),\\
	&\label{eq:t0}	\mathfrak T(0,\ell) = \frac{(-1)^\ell}{2} \left(\frac{r_-}{r_+}+ \frac{r_+}{r_-}  \right).
	\end{align} In particular, the reflection coefficient $\mathfrak R(\omega,\ell)$ only vanishes on a discrete set of points $\omega$. 
	
	Moreover, the reflection and transmission coefficients $\mathfrak R(\omega,\ell)$ and $\mathfrak T(\omega,\ell)$ are analytic functions on $\mathbb C \setminus \mathbb P$ with possible poles at $\mathbb P = \{ i m \kappa_+\colon m \in \mathbb N\} \cup \{i k \kappa_- \colon k \in \mathbb Z \setminus \{ 0\} \}$.
	\begin{proof}
		From the analyticity of $u_1,u_2,v_1$, and $v_2$ in the strip $|\operatorname{Im}(\omega) |<  \kappa_+$ (cf.\ \cref{rmk:holo}), we conclude that $\mathfrak T$ and $\mathfrak R$ are holomorphic in $\{\omega\neq 0 \in \mathbb C : |\operatorname{Im}(\omega) |< \kappa_+\}$ with a possible pole at $\omega =0$. In the following we shall show that $\{\omega =0\}$ is a removable singularity. Indeed, we will compute the explicit value of the reflection and transmission coefficient at $\omega=0$ and deduce that for fixed $\ell\in\mathbb{N}_0$, the transmission coefficient $\mathfrak T(\omega,\ell)$ and the reflection coefficient $\mathfrak R(\omega,\ell)$ are analytic functions on the strip $\{\omega \in \mathbb C \colon \operatorname{Im}(\omega) |< \kappa_+ \}$ (cf.\ unpublished work of McNamara cited in \cite{gursel1979evolution}). To do so, note that from \cref{prop:intermediate} in \cref{sec:boundedrefltransmi} we conclude the pointwise limits
		\begin{align}
		&u_1(\omega,r_\ast) \to \tilde u_1(r_\ast),\\
		&v_1(\omega,r_\ast) \to \tilde v_1 (r_\ast) = (-1)^\ell \frac{r_+}{r_-}\tilde u_1(r_\ast),\\
		&v_2(\omega,r_\ast) \to \tilde v_1 (r_\ast) =  (-1)^\ell \frac{r_+}{r_-}\tilde u_1(r_\ast)
		\end{align}
		as $|\omega|\to 0$.  Using the definition in \eqref{eq:defn1} of $\mathfrak T(\omega,\ell)$, $\mathfrak R(\omega,\ell)$, and the condition $1+|\mathfrak R|^2 = |\mathfrak T|^2$ (cf.~\cref{prop:pseudouni}), we deduce that the limits $\lim_{\omega\to 0} \mathfrak R(\omega, \ell)$ and  $\lim_{\omega\to 0} \mathfrak T(\omega, \ell)$ exist and moreover can be computed to be \eqref{eq:r0} and \eqref{eq:t0}.		
		Note that \eqref{eq:r0} and \eqref{eq:t0} have been established in \cite{gursel1979final}.
		Also note that in view of the analyticity properties of $u_1$, $v_1$, and $v_2$, the  $\mathfrak R(\omega,\ell)$ and $\mathfrak T(\omega,\ell)$ are analytic functions on $\mathbb C \setminus \mathbb P$ with possible poles at $\mathbb P = \{ i m \kappa_+\colon m \in \mathbb N\} \cup \{i k \kappa_- \colon k \in \mathbb Z \setminus \{ 0\} \}$.
	\end{proof}
\end{prop}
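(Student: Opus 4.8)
The plan is to reduce everything to the holomorphy and meromorphic continuation of the Jost‑type solutions $u_1,v_1,v_2$ recorded in \cref{rmk:holo}, together with a first‑order expansion of these solutions at $\omega=0$; throughout write $c:=(-1)^\ell\frac{r_+}{r_-}$. First, analyticity off $\omega=0$: by \cref{defn:TandR} we have $\mathfrak T(\omega,\ell)=\mathfrak W(u_1,v_2)/(-2i\omega)$ and $\mathfrak R(\omega,\ell)=\mathfrak W(u_1,v_1)/(2i\omega)$, and each Wronskian is $r_\ast$-independent, so it may be evaluated at any fixed $r_\ast$. By \cref{rmk:holo}, $u_1,v_1,v_2$ and their $r_\ast$-derivatives are holomorphic in $\omega$ on $\{|\operatorname{Im}\omega|<\kappa_+\}$ and, by \cref{prop:rtboundedcomplex}, extend meromorphically to $\mathbb C$ with poles of $u_1$ confined to $\{im\kappa_+:m\in\mathbb N\}$ and poles of $v_1,v_2$ confined to $\{ik\kappa_-:k\in\mathbb Z\setminus\{0\}\}$. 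Hence both Wronskians are holomorphic on the strip and meromorphic on $\mathbb C$ with possible poles only in $\mathbb P$, and dividing by $\pm2i\omega$ shows $\mathfrak T,\mathfrak R$ are holomorphic on $\{0\neq\omega,\,|\operatorname{Im}\omega|<\kappa_+\}$ with at worst a simple pole at $\omega=0$, and meromorphic on $\mathbb C\setminus\mathbb P$ once removability at $\omega=0$ is established (note $0\notin\mathbb P$).

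For removability at $\omega=0$ I would use that as $\omega\to0$ one has the pointwise limits $u_1(\omega,r_\ast)\to\tilde u_1(r_\ast)$ and $v_1(\omega,r_\ast),v_2(\omega,r_\ast)\to\tilde v_1(r_\ast)=c\,\tilde u_1(r_\ast)$, which is \cref{prop:intermediate}; the connection constant $c$ is obtained from the explicit $\omega=0$ solution of \eqref{ODE1}, which after the substitutions $u=rR$, $x=\tfrac{2r-r_+-r_-}{r_+-r_-}$ becomes Legendre's equation, whose solution bounded at both regular singular points is the polynomial $P_\ell$, so that $P_\ell(1)=1$, $P_\ell(-1)=(-1)^\ell$ and $u=rR$ give $c=(-1)^\ell r_+/r_-$ (cf.\ \cite{gursel1979evolution,gursel1979final}). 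In particular $\mathfrak W(u_1,v_2)$ and $\mathfrak W(u_1,v_1)$ vanish at $\omega=0$, since there they equal $\mathfrak W(\tilde u_1,c\,\tilde u_1)=0$; a holomorphic function vanishing at $\omega=0$ stays holomorphic after division by $\omega$, so $\mathfrak T(\cdot,\ell)$ and $\mathfrak R(\cdot,\ell)$ extend holomorphically across $\omega=0$, hence are holomorphic on the entire strip $\{|\operatorname{Im}\omega|<\kappa_+\}$ and meromorphic on $\mathbb C\setminus\mathbb P$.

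For the values at $\omega=0$: since $\mathfrak W(u_1,v_2)$ has at worst a simple zero at $\omega=0$, $\mathfrak T(0,\ell)=\tfrac{1}{-2i}\,\partial_\omega\mathfrak W(u_1,v_2)\big|_{\omega=0}$ and, likewise, $\mathfrak R(0,\ell)=\tfrac{1}{2i}\,\partial_\omega\mathfrak W(u_1,v_1)\big|_{\omega=0}$. Using bilinearity of $\mathfrak W$ and $\partial_\omega\partial_{r_\ast}=\partial_{r_\ast}\partial_\omega$, one has $\partial_\omega\mathfrak W(u_1,v_2)=\mathfrak W(\partial_\omega u_1,v_2)+\mathfrak W(u_1,\partial_\omega v_2)$; differentiating the Volterra equations of \cref{defn:u1u2} in $\omega$ at $\omega=0$ and using $\partial_\omega\tfrac{\sin(\omega(r_\ast-y))}{\omega}\big|_{\omega=0}=0$ identifies $\partial_\omega u_1|_0=i\tilde u_2$, $\partial_\omega v_1|_0=i\tilde v_2$ and $\partial_\omega v_2|_0=-i\tilde v_2$. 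Combined with the normalizations $\mathfrak W(\tilde u_1,\tilde u_2)=\mathfrak W(\tilde v_1,\tilde v_2)=1$ (read off from the asymptotics of \cref{defn:u1u2} at $r_\ast\to-\infty$, resp.\ $r_\ast\to+\infty$) and $\tilde v_1=c\,\tilde u_1$, this yields $\mathfrak W(\tilde u_2,\tilde v_1)=-c$, $\mathfrak W(\tilde u_1,\tilde v_2)=c^{-1}$, and therefore $\mathfrak T(0,\ell)=\tfrac12(c+c^{-1})$, $\mathfrak R(0,\ell)=\tfrac12(c^{-1}-c)$, i.e.\ exactly \eqref{eq:t0} and \eqref{eq:r0} (using $((-1)^\ell)^{-1}=(-1)^\ell$). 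As consistency checks, $|\mathfrak T(0,\ell)|^2-|\mathfrak R(0,\ell)|^2=1$ agrees with \cref{prop:pseudouni} and $\mathfrak T(0,\ell)+\mathfrak R(0,\ell)=c^{-1}$ is what letting $\omega\to0$ in \eqref{eq:defn1} forces; alternatively one may cite \cite{gursel1979final} for the values and pin them down with pseudo-unitarity once existence of the limits is known. Finally, in the subextremal range $0<r_-<r_+$ we have $\mathfrak R(0,\ell)\neq0$, so $\mathfrak R(\cdot,\ell)$ is a holomorphic, not identically zero, function on the connected strip; its zero set there is discrete, and in particular $\mathfrak R(\cdot,\ell)$ has only discretely many real zeros.

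The delicate points are confined to the $\omega=0$ analysis: establishing the connection identity $\tilde v_1=(-1)^\ell\tfrac{r_+}{r_-}\tilde u_1$ — i.e.\ that the $\omega=0$ solution bounded at one horizon is genuinely bounded (not merely logarithmically divergent) at the other, with the stated constant — and justifying the termwise $\omega$-differentiation of the Volterra equations at $\omega=0$ that identifies $\partial_\omega u_1|_0=i\tilde u_2$ and its analogues; the latter rests on the uniform‑in‑$\omega$ bounds of \cref{lem:volterra} near $\omega=0$ together with dominated convergence. The remaining steps are bookkeeping on top of \cref{rmk:holo}.
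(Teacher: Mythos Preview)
Your argument is correct and in fact more self-contained than the paper's. Both proofs share the same skeleton---analyticity of $u_1,v_1,v_2$ on the strip from \cref{rmk:holo}, hence holomorphy of the Wronskians, with $\omega=0$ the only point requiring attention---but they diverge at the crucial step of computing the values at $\omega=0$.

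The paper argues indirectly: from the pointwise limits $u_1\to\tilde u_1$, $v_1,v_2\to c\,\tilde u_1$ and the defining relation $u_1=\mathfrak T v_1+\mathfrak R v_2$ it extracts $\mathfrak T+\mathfrak R\to c^{-1}$, and then invokes the pseudo-unitarity constraint $|\mathfrak T|^2-|\mathfrak R|^2=1$ to pin down both limits separately (with a citation to \cite{gursel1979final} for the explicit values). Your route is direct: you observe that both Wronskians $\mathfrak W(u_1,v_1)$ and $\mathfrak W(u_1,v_2)$ vanish at $\omega=0$ because the arguments become collinear, which immediately gives removability; then you compute the values by $\partial_\omega$-differentiating the Wronskians and identifying $\partial_\omega u_1|_0=i\tilde u_2$, $\partial_\omega v_1|_0=i\tilde v_2$, $\partial_\omega v_2|_0=-i\tilde v_2$ from the Volterra equations (using that the kernel $\sin(\omega z)/\omega$ is even in $\omega$). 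The Wronskian normalizations $\mathfrak W(\tilde u_1,\tilde u_2)=\mathfrak W(\tilde v_1,\tilde v_2)=1$ then give $\mathfrak W(\tilde u_1,\tilde v_2)=c^{-1}$ by a two-line linear algebra computation, and the formulae drop out. This avoids any appeal to pseudo-unitarity (which you correctly relegate to a consistency check) and makes the argument purely algebraic once the $\omega$-derivatives are identified. The only place requiring genuine analysis---justifying the differentiation of the Volterra equations at $\omega=0$---you flag correctly; it follows from the holomorphy in \cref{rmk:holo} together with the exponential decay of $V$ from \cref{lem:asymptoticspotential}.
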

\subsection{Conventions} Let $X$ be a point set with a limit point $c$ (e.g.\ $X = \mathbb R, [a,b], \mathbb C$). Throughout this paper we will use the symbols $\lesssim$ and $\gtrsim$, where the implicit constants might depend on the black hole parameters $M$ and $Q$. In particular, for functions (or constants) $a(x),b(x) >0$ the notation $a\lesssim b$ means that there is a constant $C = C(M,Q) > 0 $ such that $a(x)\leq Cb(x) $ for all $x\in X$. We will also make use of the notation $\lesssim_\ell$ or $\gtrsim_\ell$ which means that the constant may additionally also depend on $\ell$. We also write $a\sim b$ if there are constants $C(M,Q),\tilde C(M,Q) >0$ such that $C a(x) \leq b(x) \leq \tilde Ca(x)$ for all $x\in X$. 

We shall also make use of the standard Landau notation $O$ and $o$ \cite{NIST:DLMF,olver2014asymptotics}.  To be more precise, as $x\to c$ in $X$
	\begin{align}
	f(x) = O(g(x)) &\text{ means } \left|\frac{f(x)}{g(x)}\right| \leq C(M,Q)\label{eq:O}
	\end{align}
	and
\begin{align}
	f(x) = o(g(x)) &\text{ means }  \frac{f(x)}{g(x)} \to 0\label{{eq:O1}}.
	\end{align}
We will also use the notation $O_\ell$ if the constant $C$ in \eqref{eq:O} may additionally depend on $\ell$. 
\section{Main theorems}
\label{sec:mainthms}
In this section we will  formulate our  main theorems.

\cref{thm:forwardevolution}, which we state in \cref{sec:existencescatteringmap}, establishes the existence of a scattering map $S^T$ of the form
\begin{align}
	&S^T: \mathcal{E}^T_{\Ho}\to\mathcal{E}^T_{\Ch},
\end{align}
which is a Hilbert space isomorphism, i.e.\ a bounded and invertible map with bounded inverse. \cref{thm:forwardevolution} will be proven in \cref{sec:mainthm}. In the separated picture, the boundedness of $S^T$ corresponds to the uniform boundedness of the transmission and reflection coefficients which is stated as \cref{thm:boundednesstrans} in \cref{subsec:scatteringcoefficients}.  \cref{thm:boundednesstrans} will be proven in \cref{sec:radial} (and later used in the proof of \cref{thm:forwardevolution}).

  \cref{subsec:connfourierphysical} is then devoted to  \cref{thm:fouriertophysical}, which
  connects our physical space scattering theory to the fixed frequency scattering theory. (We will infer \cref{thm:fouriertophysical} as a consequence of \cref{thm:forwardevolution}.) In \cref{subsec:reflection}, this connection allows us to prove that the reflection map is injective, which is the content of \cref{thm:nonvanishingreflection}. In \cref{thm:c1instab}, which is stated and proven in \cref{sec:c1instab}, we construct data which are incoming and compactly supported but nevertheless, lead to a solution which fails to be in $C^1$ on the Cauchy horizon. 

We end this section with the statement of our two non-existence results. In \cref{subsec:nonex} we formulate \cref{thm:cosmological}, the non-existence of the $T$ energy scattering theory for the Klein--Gordon equation with conformal mass on the interior of (anti-) de~Sitter--Reissner--Nordstr\"om black holes. The proof of \cref{thm:cosmological} is given in \cref{sec:cosmo}. Finally, in \cref{thm:kleingordon}, stated in \cref{subsec:notkleingordon}, we show the non-existence of the $T$ energy scattering map for the Klein--Gordon equation on the interior of Reissner--Nordstr\"om. The proof of \cref{thm:kleingordon} is given in \cref{sec:kleingordonequation}.

\subsection{Existence and boundedness of the \texorpdfstring{$T$}{T} energy scattering map}
\label{sec:existencescatteringmap}
 First, we define the forward (resp.\ backward) evolution on a dense domain. 
\begin{definition}\label{defn:domainforward}
	The domains of the forward and backward evolution are defined as \begin{align}\nonumber
	\mathcal{D}^T_{\mathcal{H}}:= \{ \psi \in C_c^\infty(\mathcal H) &\subset \mathcal{E}_{\Ho}^T \text{ s.t. the Cauchy evolution of $\psi$ has}\\ &\text{ compact support on  constant $r=const.$ hypersurfaces} \} \end{align}
	and
	\begin{align}\nonumber
	\mathcal{D}^T_{\mathcal{CH}}:= \{ \psi \in C_c^\infty(\mathcal{CH})& \subset \mathcal{E}_{\Ch}^T \text{ s.t. the backward evolution of $\psi$ has}\\ & \text{ compact support on  constant $r=const.$ hypersurfaces} \}, \end{align}
	respectively. Here, we consider $r_-< r < r_+$ and note that if $\psi$ is compactly supported on one $\{r=const.\}$ slice, then, as a direct consequence of the domain of dependence, its evolution will be compactly supported on all other $\{r=const.\}$ hypersurfaces for $r_-< r < r_+$. 
	
We will prove in \cref{lem:lemmadense} in \cref{sec:mainthm} that $\mathcal{D}^T_{\mathcal{H}} \subset  \mathcal{E}_{\Hp}^T$ and $ \mathcal{D}^T_{\mathcal{CH}} \subset \mathcal{E}_{\Ch}^T $ are dense domains.
\end{definition}
These definitions of the domains are motivated by the following observation.
\begin{rmk}\label{rmk:domain}
 Suppose we are given data in $\mathcal{D}_\Ho^T$ on the event horizon $\Hp$. Consider now the unique Cauchy development (cf.\ \cref{thm:welldefined}) and observe that its restriction to the Cauchy horizon $\Ch$ will lie in $\mathcal{D}_{\Ch}^T$. This holds true since we can first smoothly extend the metric beyond the Cauchy horizon $\Ch$ and then use the compact support on a constant $r_\ast$ hypersurface to solve an equivalent Cauchy problem in an appropriate region which extends the Cauchy horizon $\Ch$, includes the support of the solution, but does not include $i^+$. The smoothness of the solution up to and including the Cauchy horizon $\Ch$ follows now from Cauchy stability.
\end{rmk}

In view of \cref{rmk:domain} we can define the forward and backward map on the domains $\mathcal{D}^T_\Hp$ and $\mathcal{D}^T_\Ch$, respectively.
\begin{definition}
	Define the forward map $S_0^T\colon \mathcal{D}^T_{\mathcal{H}} \subset  \mathcal{E}_{\Hp}^T \to \mathcal{D}^T_{\mathcal{CH}} \subset \mathcal{E}_{\Ch}^T $ as the unique forward evolution from data on the event horizon to data on the Cauchy horizon. More precisely, let  $\psi$ be the solution to \eqref{eq:linearwave} arising from initial data $\Psi \in \mathcal{D}^T_{\mathcal{H}} \subset  \mathcal{E}_{\Hp}^T$. Then, define $S_0^T(\Psi)$ as the restriction of $\psi$ to the Cauchy horizon, i.e.\ $S_0^T(\Psi) := \psi\restriction_{\Ch} \in \mathcal{D}^T_{\Ch}$. 
	
	Similarly, let $\phi$ be the unique backward evolution of \eqref{eq:linearwave} arising from $\Phi \in \mathcal{D}^T_{\Ch}$. Then, define the backward map  by $B_0^T (\Phi):= \phi\restriction_{\Hp} \in \mathcal{D}^T_\Hp$.
\end{definition}

\begin{rmk}\label{rmk:injectiv}
		Note that by the uniqueness of the Cauchy evolution we have that  $S_0^T$ and $B_0^T$ are inverses of each other, i.e.\  $B_0^T\circ	S_0^T = \mathrm{Id}_{ \mathcal{D}^T_{\Ho} }, \; S_0^T \circ B_0^T = \mathrm{Id}_{ \mathcal{D}^T_{\Ch}}. $
\end{rmk}
Now, we are in the position to state our main theorem. 
\begin{theorem}\label{thm:forwardevolution}
The map $S_0^T\colon \mathcal{D}^T_\Hp \subset \mathcal{E}^T_\Hp \to \mathcal{D}^T_\Ch \subset \mathcal{E}^T_\Ch $ is bounded and uniquely extends to 
 \begin{align}
 &	S^T\colon \mathcal{E}^T_{\Ho} \to \mathcal{E}^T_{\Ch},
 \end{align}
called the ``scattering map''. The scattering map $S^T$ is a Hilbert space isomorphism, i.e.\ a \underline{bounded} and \underline{invertible} linear map with \underline{bounded inverse} $B^T\colon \mathcal{E}^T_{\Ch} \to  \mathcal{E}^T_{\Ho}$ satisfying
 \begin{align}
& \label{eq:inverses}B^T\circ	S^T = \mathrm{Id}_{ \mathcal{E}^T_{\Ho} }, \; S^T \circ B^T = \mathrm{Id}_{ \mathcal{E}^T_{\Ch}}.
 \end{align}
 Here,  $B^T\colon \mathcal{E}^T_{\Ch} \to \mathcal{E}^T_{\Ho}$ is the ``backward map'', which is the unique bounded extension of $B_0^T$. 
 
 In addition, the scattering map $S^T$ is pseudo-unitary, meaning that  for $\psi\in \mathcal{E}^T_{\mathcal H}$, we have
 \begin{align}\label{eq:pseudounitary}
	\int_{\Hp_A} |T\psi|^2 - \int_{\Hp_B} |T\psi|^2 = \int_{\Ch_B} |TS^T\psi|^2 -\int_{\Ch_A} |T S^T\psi|^2.
\end{align}
\end{theorem}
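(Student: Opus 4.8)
The plan is to transport \cref{thm:forwardevolution} to the separated picture and then feed in the uniform bounds on the scattering coefficients provided by \cref{thm:boundednesstrans}; the remainder is functional analysis.

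\emph{Step 1 (Fourier representation of the $T$-energies).} Let $\psi\in\mathcal{D}^T_{\mathcal{H}}$, and — abusing notation — also write $\psi$ for its Cauchy development, which by \cref{thm:welldefined} and \cref{rmk:domain} is smooth on $\mathcal{M}_{\mathrm{RN}}\setminus\mathcal{CH}$, extends smoothly to $\mathcal{CH}$ with trace in $\mathcal{D}^T_{\mathcal{CH}}$, and is compactly supported on every $\{r=\mathrm{const}\}$ hypersurface. Decomposing in spherical harmonics $Y_{m\ell}$ and Fourier-transforming in $t$ — legitimate by the smoothness and compact $t$-support — the wave equation forces $r\,\hat\psi_{m\ell}(r,\omega)$ to solve \eqref{ODE1}, so off the null set $\{\omega=0\}$ we may write $r\,\hat\psi_{m\ell}(r,\omega)=a_{m\ell}(\omega)u_1(\omega,r_*)+b_{m\ell}(\omega)u_2(\omega,r_*)$ in the fundamental system of \cref{defn:u1u2}. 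Using $u_1\sim e^{i\omega r_*}$, $u_2\sim e^{-i\omega r_*}$ as $r_*\to-\infty$ together with the exponential decay of $V_\ell$, one checks that in the limit to the event horizon the $u_1$-part becomes a function of $v$ alone and the $u_2$-part a function of $u$ alone, the cross-oscillatory remainders vanishing pointwise (Riemann--Lebesgue / compact $t$-support). Hence $\psi\restriction_{\mathcal{H}_A}$ has $Y_{m\ell}$-coefficient $\propto a_{m\ell}$ and $\psi\restriction_{\mathcal{H}_B}$ has $Y_{m\ell}$-coefficient $\propto b_{m\ell}$, the constants depending only on $M,Q$; making the analogous expansion $r\,\hat\psi_{m\ell}=\tilde a_{m\ell}v_1+\tilde b_{m\ell}v_2$ near the Cauchy horizon $r_*\to+\infty$, the traces on $\mathcal{CH}_B$ and $\mathcal{CH}_A$ are governed by $\tilde a_{m\ell}$ and $\tilde b_{m\ell}$. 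By Plancherel in $t$, orthonormality of $\{Y_{m\ell}\}$, and Tonelli,
\begin{gather*}
\|\psi\|^2_{\mathcal{E}^T_{\mathcal{H}}}\ \sim\ \sum_{m,\ell}\int_{\mathbb{R}}\omega^2\big(|a_{m\ell}(\omega)|^2+|b_{m\ell}(\omega)|^2\big)\,\d\omega, \\
\|\psi\restriction_{\mathcal{CH}}\|^2_{\mathcal{E}^T_{\mathcal{CH}}}\ \sim\ \sum_{m,\ell}\int_{\mathbb{R}}\omega^2\big(|\tilde a_{m\ell}(\omega)|^2+|\tilde b_{m\ell}(\omega)|^2\big)\,\d\omega ,
\end{gather*}
with comparison constants depending only on $M,Q$ and \emph{the same} for both horizons.

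\emph{Step 2 (scattering matrix and the uniform bound).} From \cref{defn:TandR}, $u_1=\mathfrak T v_1+\mathfrak R v_2$; conjugating (the o.d.e.\ is real and, for real $\omega$, conjugation swaps $u_1\leftrightarrow u_2$ and $v_1\leftrightarrow v_2$) gives $u_2=\overline{\mathfrak R}\,v_1+\overline{\mathfrak T}\,v_2$. Therefore $(\tilde a_{m\ell},\tilde b_{m\ell})=M(\omega,\ell)\,(a_{m\ell},b_{m\ell})$ for a $2\times2$ matrix $M(\omega,\ell)$ whose entries lie in $\{\mathfrak T,\mathfrak R,\overline{\mathfrak T},\overline{\mathfrak R}\}$ and which, by \cref{prop:pseudouni}, is pseudo-unitary for $\operatorname{diag}(1,-1)$ (so $\det M(\omega,\ell)=|\mathfrak T|^2-|\mathfrak R|^2=1$ and $M(\omega,\ell)^{-1}$ again has entries among $\pm\{\mathfrak T,\mathfrak R,\overline{\mathfrak T},\overline{\mathfrak R}\}$). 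By \cref{thm:boundednesstrans} there is $C=C(M,Q)$ with $|\mathfrak T(\omega,\ell)|+|\mathfrak R(\omega,\ell)|\le C$ for \underline{all} $\omega\in\mathbb{R}$, $\ell\in\mathbb{N}_0$ — this uniformity, in particular through $\omega\to0$ with $\ell\to\infty$, being exactly where the sign-indefinite identity \eqref{eq:t-r=12} is useless — so $\|M(\omega,\ell)\|+\|M(\omega,\ell)^{-1}\|\lesssim1$. Substituting into Step 1 gives $\|\psi\restriction_{\mathcal{CH}}\|^2_{\mathcal{E}^T_{\mathcal{CH}}}\sim\|\psi\|^2_{\mathcal{E}^T_{\mathcal{H}}}$ for all $\psi\in\mathcal{D}^T_{\mathcal{H}}$; hence $S_0^T$ is bounded with bounded inverse $B_0^T$ on the respective domains.

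\emph{Step 3 (extension, isomorphism, pseudo-unitarity).} By \cref{lem:lemmadense}, $\mathcal{D}^T_{\mathcal{H}}$ and $\mathcal{D}^T_{\mathcal{CH}}$ are dense, so $S_0^T$ and $B_0^T$ extend uniquely to bounded operators $S^T\colon\mathcal{E}^T_{\mathcal{H}}\to\mathcal{E}^T_{\mathcal{CH}}$ and $B^T\colon\mathcal{E}^T_{\mathcal{CH}}\to\mathcal{E}^T_{\mathcal{H}}$; the relations $B_0^T\circ S_0^T=\mathrm{Id}$, $S_0^T\circ B_0^T=\mathrm{Id}$ of \cref{rmk:injectiv} pass to the limit, so \eqref{eq:inverses} holds and $S^T$ is a Hilbert space isomorphism with inverse $B^T$, its surjectivity being the asserted asymptotic completeness. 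For \eqref{eq:pseudounitary}: on $\mathcal{D}^T_{\mathcal{H}}$ it reads, mode- and frequency-wise, $|a_{m\ell}(\omega)|^2-|b_{m\ell}(\omega)|^2=|\tilde a_{m\ell}(\omega)|^2-|\tilde b_{m\ell}(\omega)|^2$, which is $\operatorname{Im}(\bar u u')=\mathrm{const}$ applied to $u=a_{m\ell}u_1+b_{m\ell}u_2$ (equivalently the pseudo-unitarity of $M$), i.e.\ \cref{prop:pseudouni}; summing with Plancherel and the matched normalizations from Step 1 yields \eqref{eq:pseudounitary} on $\mathcal{D}^T_{\mathcal{H}}$. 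Since each of the four flux integrals in \eqref{eq:pseudounitary} is dominated by the relevant $\mathcal{E}^T$-norm it is continuous, as is $S^T$, so \eqref{eq:pseudounitary} extends to all of $\mathcal{E}^T_{\mathcal{H}}$.

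\emph{Main obstacle.} Essentially all the genuine difficulty is outsourced to \cref{thm:boundednesstrans} — the uniform control of $\mathfrak T,\mathfrak R$ over all $\omega,\ell$, notably through the $\omega\to0$, $\ell\to\infty$ corner, where \eqref{eq:t-r=12} is sign-indefinite and gives nothing — which is established separately; granting it, the argument above is soft. Within this proof the delicate points are (i) \cref{lem:lemmadense}, the density of $\mathcal{D}^T_{\mathcal{H}}$ and $\mathcal{D}^T_{\mathcal{CH}}$, which I would prove by truncating the trace of a general finite-energy solution on a spacelike $\{r=r_0\}$ slice and re-solving the equation; and (ii) the bookkeeping of Step 1 matching the four degenerate $T$-energy fluxes with the $\{u_1,u_2\}$- and $\{v_1,v_2\}$-coefficients, where one must keep track that $\mathcal{H}_A,\mathcal{H}_B$ (resp.\ $\mathcal{CH}_A,\mathcal{CH}_B$) carry the two \emph{independent} pieces of characteristic data and that $T$ is null there, so that the Parseval identities on the two horizons come with matching constants.
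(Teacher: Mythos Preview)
Your proposal is correct and follows essentially the same route as the paper: reduce to the separated picture on the dense domains $\mathcal{D}^T_{\mathcal{H}}$, $\mathcal{D}^T_{\mathcal{CH}}$, identify the horizon traces with the $\{u_1,u_2\}$ and $\{v_1,v_2\}$ coefficients via a Riemann--Lebesgue/dominated convergence limit, feed in the uniform bound from \cref{thm:boundednesstrans}, and extend by density (\cref{lem:lemmadense}). The paper's \cref{thm:thmbounds} carries out your Steps~1--2 in detail (working with $\phi=T\psi$ to land directly in $L^2$), and its final paragraph is your Step~3.

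One small point of difference worth flagging: for the pseudo-unitarity \eqref{eq:pseudounitary} the paper does \emph{not} argue via the mode identity $|a|^2-|b|^2=|\tilde a|^2-|\tilde b|^2$ as you do, but simply invokes the physical-space $T$ energy identity (divergence theorem for $J^T$) on $\mathcal{D}^T_{\mathcal{H}}$ and then passes to the limit. Your Fourier argument is also valid, but note that it requires the Plancherel constants in Step~1 to be tracked exactly (they involve $r_+$ versus $r_-$, cf.\ the factors in \eqref{eq:defnF1}--\eqref{eq:defnF2}), not merely up to $\sim$; your phrase ``matched normalizations'' suggests you are aware of this, but the $\sim$ in Step~1 would need to be upgraded to an equality with the correct weights for the pseudo-unitarity step to go through cleanly. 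The paper's physical-space route sidesteps this bookkeeping entirely.
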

In more traditional language, \cref{thm:forwardevolution} yields existence, uniqueness, and asymptotic completeness of scattering states. 

The proof of \cref{thm:forwardevolution} is given in \cref{sec:mainthm}. Let us note already that \cref{thm:forwardevolution} is a posteriori the physical space equivalent of the uniform boundedness of the scattering coefficients proven in \cref{thm:boundednesstrans} (see \cref{subsec:scatteringcoefficients}). This equivalence is made precise in \cref{thm:fouriertophysical} (see \cref{subsec:connfourierphysical}). 

\begin{rmk}
	Note that in general, neither initial data nor scattered data have to be bounded in $L^\infty$ or continuous. Indeed, we have that $\Phi_A(u,\theta,\varphi) = \log(u)\chi_{u\geq 1} \in \mathcal{E}^T_{\Ch_A}$, where $\chi_{u\geq 1}$ is a smooth cutoff. Thus, there exist initial data $B^T(\Phi_A)\in \mathcal{E}^T_{\Hp}$ such that its image under the scattering map is not in $L^\infty$ and not continuous. We emphasize the contrast with the estimates from \cite{franzen2016boundedness} for which more regularity and decay along the event horizon $\mathcal{H}$ are necessary. 
\end{rmk}
\subsection{Uniform boundedness of the transmission and reflection coefficients}
\label{subsec:scatteringcoefficients}
On the level of the o.d.e.~\eqref{ODE1} in the separated picture, the problem of boundedness of the scattering map reduces to proving that the transmission coefficient $\mathfrak T$ and the reflection coefficient $\mathfrak R$ are uniformly bounded over all parameter ranges of $\omega\in\mathbb R$ and $\ell\in\mathbb{N}_0$. This is stated as \cref{thm:boundednesstrans} below.  
\begin{restatable}{theorem}{boundedness}
	\label{thm:boundednesstrans}The reflection and transmission coefficients $\mathfrak R(\omega,\ell)$ and $\mathfrak T(\omega, \ell)$ are uniformly bounded, i.e.\ they satisfy
	\begin{align}
	\sup_{\omega \in \mathbb R , \ell \in \mathbb N_0 }(	|\mathfrak R(\omega, \ell) | + |\mathfrak T (\omega, \ell)| )  \lesssim 1.
	\end{align}
\end{restatable}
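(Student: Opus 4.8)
The strategy is to prove uniform boundedness of $\mathfrak{R}$ and $\mathfrak{T}$ by splitting the frequency--angular parameter space $\{(\omega,\ell)\}$ into several regimes and treating each by different techniques, the unifying principle being that everywhere except near $\omega = 0, \ell \to \infty$ one has enough control to estimate Wronskians directly. Recall from \cref{defn:TandR} that $\mathfrak{T} = \mathfrak{W}(u_1,v_2)/(-2i\omega)$ and $\mathfrak{R} = \mathfrak{W}(u_1,v_1)/(2i\omega)$, and from \cref{prop:pseudouni} the identity $|\mathfrak{T}|^2 - |\mathfrak{R}|^2 = 1$; the latter reduces the task to bounding just \emph{one} of the two, say $\mathfrak{T}$ (or equivalently to bounding $|\mathfrak{T}| + |\mathfrak{R}|$ once we control either one). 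I would first dispose of the compact region: for $(\omega,\ell)$ in a fixed compact subset of $(\mathbb{R}\setminus\{0\}) \times \mathbb{N}_0$, joint continuity of $u_1, v_1, v_2$ in $\omega$ (from the Volterra construction, \cref{lem:volterra} and \cref{rmk:holo}) and of the Wronskians gives local boundedness automatically, and the removable-singularity analysis at $\omega = 0$ for each fixed $\ell$ (\cref{cor:rdoesntvanish}) handles a neighbourhood of $\omega = 0$ at \emph{bounded} $\ell$. So the real content is the two non-compact directions: (i) $|\omega| \to \infty$, uniformly in $\ell$; and (ii) $\ell \to \infty$, split according to the size of $\omega$, with the subregime $\omega \to 0,\ \ell \to \infty$ flagged in the introduction as the crux.

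For the high-frequency regime $|\omega| \to \infty$, the plan is a standard WKB / Picard-iteration estimate on the Volterra integral equations of \cref{defn:u1u2}: since the kernel $\frac{\sin(\omega(r_\ast - y))}{\omega}V(y)$ carries a factor $\omega^{-1}$ and $V = V_\ell$ has $L^1$ norm growing only polynomially in $\ell$, one gets $u_1 = e^{i\omega r_\ast}(1 + O(\langle\ell\rangle^2/|\omega|))$ with matching control on the derivative, and similarly for $v_1, v_2$; feeding this into the Wronskian formulae yields $\mathfrak{T} \to 1$, $\mathfrak{R}\to 0$ as $|\omega| \to \infty$, uniformly once $|\omega| \gtrsim \langle\ell\rangle^2$. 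Combined with the pseudo-unitarity identity this needs care only to make the error estimate uniform in $\ell$ in the range $|\omega| \gtrsim \langle\ell\rangle^2$; the complementary range $|\omega| \lesssim \langle\ell\rangle^2$ (which includes $\omega$ bounded and $\ell$ large) is where the genuinely hard work lies. In that range I would exploit the structure of $V_\ell$: for large $\ell$ the potential is well-approximated by $\ell(\ell+1)\,\Delta/r^4 = \ell(\ell+1)\,h/r^2$, which is a positive potential, and one expects the transmitted wave to be exponentially suppressed (a barrier-penetration picture) \emph{except} near $\omega = 0$. The key auxiliary input, as the introduction signals, is the \emph{explicit} $\omega = 0$ solution: at $\omega = 0$ the equation $\Delta(\Delta R')' - \Delta\ell(\ell+1)R = 0$ has the elementary closed-form solutions found by G\"ursel--Sandberg--Novikov--Starobinskii (cited as \cite{gursel1979evolution}), from which \cref{cor:rdoesntvanish} already extracts $\mathfrak{R}(0,\ell), \mathfrak{T}(0,\ell)$, which are \emph{manifestly $\ell$-independent in size} ($\tfrac12|r_-/r_+ \pm r_+/r_-|$). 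The plan is to bootstrap off this: show $\tilde u_1, \tilde v_1$ (the $\omega = 0$ solutions) and their first derivatives in $r_\ast$ are controlled uniformly in $\ell$ in the relevant region, then treat the $\omega^2$ term in \eqref{ODE1} as a perturbation, re-deriving $u_1(\omega,\cdot), v_1(\omega,\cdot), v_2(\omega,\cdot)$ via Volterra equations \emph{based at the $\omega=0$ solutions} rather than at free waves, and tracking the $\omega$- and $\ell$-dependence of the corrections.

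\textbf{Expected main obstacle.} The hard part is unquestionably the simultaneous limit $\omega \to 0$, $\ell \to \infty$. There the naive free-wave Volterra iteration fails (the small parameter $\omega^{-1}$ in the kernel is gone; worse, $\omega^{-1}\sin(\omega(r_\ast-y)) \to r_\ast - y$ is unbounded on $\mathbb{R}$), and the $\omega = 0$ solutions $\tilde u_2, \tilde v_2$ grow like $r_\ast$, so one must work with the distinguished \emph{bounded} solutions $\tilde u_1, \tilde v_1$ and understand precisely how $\mathfrak{R}$, $\mathfrak{T}$ are built from the $\mathfrak{W}$-ratios when the two near-$\omega=0$ solutions become nearly proportional (as \cref{cor:rdoesntvanish} shows, $v_1, v_2 \to (-1)^\ell (r_+/r_-)\tilde u_1$, so $\mathfrak{W}(v_1,v_2) = -2i\omega \to 0$ and numerator and denominator both degenerate). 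I anticipate the technical engine will be a careful asymptotic analysis of the $\omega = 0$ solutions expressed via special functions — the relevant equation, after substitution, should reduce to a hypergeometric or associated Legendre type equation in a variable like $(r - r_-)/(r_+ - r_-)$ with parameters depending on $\ell$ through $\ell(\ell+1)$ and on $\kappa_\pm$ — together with a perturbation analysis in $\omega$ controlled uniformly as $\ell \to \infty$ via estimates on these special functions and their $\ell$-derivatives (Bessel-function asymptotics in the matching region, Legendre-function uniform asymptotics). Making all constants in \cref{lem:volterra}-type bounds uniform in $\ell$ in this degenerate regime, and correctly isolating the leading behaviour of the ratio $\mathfrak{W}(u_1,v_1)/\mathfrak{W}(v_2,v_1)$ as a $0/0$ limit, is where the bulk of \cref{sec:radial} will be spent.
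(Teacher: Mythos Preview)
Your plan for the genuinely hard regime $\omega \to 0$, $\ell \to \infty$ is essentially the paper's: perturb the explicit $\omega=0$ Legendre solutions by $\omega^2$, with uniform-in-$\ell$ control coming from Bessel asymptotics for the Legendre functions. That part is sound. But your decomposition of the remaining parameter space leaves a real gap, and the sign of the potential is wrong in a way that matters for closing it.

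In the interior $\Delta<0$, so the dominant term $\ell(\ell+1)\Delta/r^4$ in $V_\ell$ is \emph{negative}, not positive; there is no barrier and no exponential suppression of transmission (indeed $|\mathfrak{T}(0,\ell)|=\tfrac12(r_+/r_-+r_-/r_+)$ is bounded below uniformly in $\ell$). More consequentially: your free-wave Picard iteration on the kernel $\omega^{-1}\sin(\omega(r_\ast-y))V_\ell(y)$ gives control only for $|\omega|\gtrsim\langle\ell\rangle^2$, since $\|V_\ell\|_{L^1}\sim\ell^2$; while your $\omega=0$-based perturbation, once the estimates are actually carried out, requires $|\omega|\le\omega_0$ for a \emph{fixed} small $\omega_0$, because the Volterra error parameter is $\sim\omega^2$ times quantities that are at best polylogarithmic in $1/|\omega|$. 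The strip $\omega_0\le|\omega|\lesssim\ell^2$, $\ell$ large, is covered by neither of your methods. The paper closes this gap with a genuine WKB argument (Olver's uniform error bounds, their \cref{thm:wkb}): writing $p:=\omega^2-V_\ell$, one has $p>0$ everywhere for $\ell\ge\ell_0$ precisely because $V_\ell<0$, so there are no turning points, and one verifies $\int p^{-1/4}\bigl|(p^{-1/4})''\bigr|\,dr_\ast\lesssim 1$ uniformly over all $|\omega|\ge\omega_0$, $\ell\ge\ell_0$. This handles that entire region at once, not just $|\omega|\gtrsim\ell^2$. What you call ``WKB / Picard-iteration'' conflates two different tools; the iteration you describe is not WKB, and the actual WKB is exactly the missing ingredient.
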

\cref{thm:boundednesstrans} is proved in \cref{sec:radial}. As discussed in the introduction, the proof relies on an explicit calculation for $\omega =0$ together with a careful analysis of the radial o.d.e.~\eqref{ODE1}, involving properties of special functions and perturbations thereof.

Let us note that, \emph{given} \cref{thm:forwardevolution}, we could infer \cref{thm:boundednesstrans} as a corollary (using the theory to be described in \cref{subsec:connfourierphysical}). We emphasize, however, that in the present paper we \emph{use} \cref{thm:boundednesstrans} to prove \cref{thm:forwardevolution} in \cref{sec:mainthm}.
\subsection{Connection between the separated and the physical space picture}
In this section, we will make the connection of the separated and physical space picture precise. 

\label{subsec:connfourierphysical}
First, let us note that we have natural Hilbert space decompositions $\mathcal{E}^T_{\Ho} \cong \mathcal{E}^T_{\Ho_A} \oplus \mathcal{E}^T_{\Ho_B}$ and $ \mathcal{E}^T_{\Ch}\cong \mathcal{E}^T_{\Ch_B} \oplus \mathcal{E}^T_{\Ch_A}$. 
	\begin{prop}\label{cor:decomposition}
		The Hilbert spaces $\mathcal{E}^T_{\Hp}$ and $\mathcal{E}^T_{\Ch}$ of finite $T$ energy on the event horizon $\mathcal H$ and on the Cauchy horizon $\Ch$ admit the orthogonal decomposition
		\begin{align}
		\mathcal{E}^T_{\Hp} \cong \mathcal{E}^T_{\Ho_A} \oplus \mathcal{E}^T_{\Ho_B}\, \text{  and  }\; \mathcal{E}^T_{\Ch} \cong \mathcal{E}^T_{\Ch_A} \oplus \mathcal{E}^T_{\Ch_B}.
		\end{align}
		\begin{proof}
			Clearly, the embedding $i\colon \mathcal{E}^T_{\Ho_A} \oplus \mathcal{E}^T_{\Ho_B} \hookrightarrow 	\mathcal{E}^T_{\Hp}$ is well-defined and isometric. It remains to show that $i$ is surjective. Let $\psi \in C_c^\infty(\mathcal{H})$. First, we show that we can approximate (in $T$-energy) $\psi\restriction_{\Ho_A}$ on $\mathcal{H}_A$ with functions $\psi_\epsilon \in C_c^\infty(\mathcal{H}_A)$ which are supported away from the past bifurcation sphere. On $\mathcal{H}_A $ choose non-degenerate coordinates $(V,\theta,\varphi) := (V_\Ho, \theta,\varphi)$ as in \cref{sec:interiorwboundary} and recall that the past bifurcation sphere is $\{V=0\}$. Then, for small $\epsilon >0$, set 
			\begin{align}\label{eq:logcut1}
			\psi_\epsilon(V,\theta, \varphi) := \psi(U=0,V,\theta,\varphi) \chi(-\epsilon \log(V)),
			\end{align}
			where $\chi\colon \mathbb{R} \to [0,1]$ is smooth and such that $\operatorname{supp} (\chi) \subseteq (-\infty,2]$ and $\chi\restriction_{(-\infty,1]} = 1$. 
			Then, it is straightforward to check that $\psi_\epsilon \in C_c^\infty(\Ho_A)$ and 
			\begin{align}\label{eq:logcut2}
			\int_{\Ho_A} J^T[\psi-\psi_\epsilon]_\mu n^\mu \d\mathrm{vol} \lesssim \int_{\mathbb{S}^2} \int_{0}^{\infty}  V (\partial_V (\psi - \psi_\epsilon) )^2 \d{V} \sin\theta \d \theta \d\varphi \to 0 
			\end{align}
			as $\epsilon \to 0$. 
			Analogously, we can do this for $\mathcal{H}_B$ from which the claim follows. \end{proof}
	\end{prop}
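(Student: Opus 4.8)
The plan is to establish the orthogonal decomposition $\mathcal{E}^T_{\Hp} \cong \mathcal{E}^T_{\Ho_A} \oplus \mathcal{E}^T_{\Ho_B}$ (and the analogous statement for $\Ch$) by exhibiting an isometric embedding and then proving surjectivity via a density/approximation argument. First I would observe that, since the norm $\|\cdot\|^2_{\mathcal{E}^T_\Hp}$ in \eqref{eq:firstdefnflux} splits as a sum of a contribution supported on $\Hp_A$ and one supported on $\Hp_B$ (each individually non-negative, as already noted after \eqref{eq:firstdefnflux}), the natural inclusion $i\colon \mathcal{E}^T_{\Ho_A} \oplus \mathcal{E}^T_{\Ho_B} \hookrightarrow \mathcal{E}^T_{\Hp}$ obtained by extending a pair $(\psi_A, \psi_B)$ by zero across the opposite component and the bifurcation sphere is well-defined and isometric. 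The content of the proposition is therefore surjectivity of $i$, i.e.\ that an arbitrary element of $\mathcal{E}^T_\Hp$ — which by definition is a $\|\cdot\|_{\mathcal{E}^T_\Hp}$-limit of functions in $C_c^\infty(\mathcal{H})$ — can be approximated by functions that are sums of a piece supported on $\Hp_A$ away from $\mathcal{B}_-$ and a piece supported on $\Hp_B$ away from $\mathcal{B}_-$.

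The key step is a logarithmic cutoff near the bifurcation sphere. Given $\psi \in C_c^\infty(\mathcal{H})$, work on $\Hp_A$ in the non-degenerate coordinates $(V,\theta,\varphi) = (V_\Ho,\theta,\varphi)$ from \cref{sec:interiorwboundary}, in which $\mathcal{B}_- = \{V=0\}$ and the $T$-energy flux through $\Hp_A$ is $\int_{\mathbb{S}^2}\int_0^\infty V(\partial_V\psi)^2\,\d V\sin\theta\,\d\theta\,\d\varphi$ (the weight $V$ arising from $T = V^{-1}\partial_{V_\Ho}\cdot(\text{something})$, equivalently from $\d v = V^{-1}\d V$ near $\mathcal{B}_-$ together with $T\restriction_{\Hp_A} = \partial_v$). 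Define $\psi_\epsilon(V,\theta,\varphi) := \psi(U{=}0,V,\theta,\varphi)\,\chi(-\epsilon\log V)$ with $\chi$ a fixed smooth cutoff equal to $1$ on $(-\infty,1]$ and supported in $(-\infty,2]$, so that $\psi_\epsilon$ vanishes for $V$ small (hence is smooth and compactly supported away from $\mathcal{B}_-$ on $\Hp_A$). Expanding $\partial_V(\psi-\psi_\epsilon)$ by the product rule produces two error terms: one proportional to $(1-\chi)\partial_V\psi$, which is supported in $\{V\le e^{-1/\epsilon}\}$ and vanishes in the energy norm by dominated convergence since $V(\partial_V\psi)^2$ is integrable; and one proportional to $\psi\cdot \epsilon V^{-1}\chi'(-\epsilon\log V)$, whose energy-norm contribution is controlled by $\epsilon^2\int V\cdot V^{-2}\,\d V$ over the region $V\in[e^{-2/\epsilon},e^{-1/\epsilon}]$, i.e.\ by $\epsilon^2\cdot(\text{length of }[\,-2/\epsilon,-1/\epsilon\,]\text{ in }\log V) \sim \epsilon^2\cdot\epsilon^{-1} = \epsilon \to 0$. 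This is exactly the estimate \eqref{eq:logcut2}. Doing the same on $\Hp_B$ and adding, we get $\psi_\epsilon^{(A)} + \psi_\epsilon^{(B)} \in C_c^\infty(\Hp_A)\oplus C_c^\infty(\Hp_B) \subset \operatorname{ran}(i)$ converging to $\psi$; since such $\psi$ are dense in $\mathcal{E}^T_\Hp$ and $\operatorname{ran}(i)$ is closed (being the image of an isometry on a Hilbert space), surjectivity follows. The argument on $\Ch$ is verbatim the same after replacing $(\Hp_A,\Hp_B,\mathcal{B}_-)$ by $(\Ch_A,\Ch_B,\mathcal{B}_+)$ and using the charts \eqref{eq:ca}, \eqref{eq:cb} and the flux formula \eqref{eq:firstdefnfluxch}.

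The only genuinely delicate point — and the one I would write out carefully — is the borderline nature of the logarithmic cutoff: a polynomially decaying cutoff in $V$ would fail because the cross term would contribute $\int_{\delta}^{2\delta} V\cdot(V^{-1})^2\,\d V \sim \log 2$, which does not go to zero; it is precisely the fact that the region where $\chi'(-\epsilon\log V)\ne 0$ is of $\log$-length $O(1/\epsilon)$ while the prefactor is $\epsilon^2$ that saves the argument, and this is why the cutoff must be in $\log V$ rather than in $V$. (This is the same mechanism familiar from capacity arguments showing a point has zero $H^1$-capacity in two dimensions.) Everything else is routine: the weight computation near the bifurcation sphere, the dominated-convergence vanishing of the first error term, and the standard fact that an isometric embedding of Hilbert spaces has closed range. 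I would also remark that the same orthogonal splitting is what makes the Hilbert spaces $\mathcal{E}^T_\Hp$, $\mathcal{E}^T_\Ch$ amenable to the separation-of-variables analysis, since the forward map can then be studied component-by-component.
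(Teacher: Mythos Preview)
Your proposal is correct and follows exactly the same approach as the paper's proof: the isometric embedding is immediate from the additive structure of the norm, and surjectivity is obtained via the logarithmic cutoff $\psi_\epsilon = \psi\,\chi(-\epsilon\log V)$ near the bifurcation sphere, with the energy of the difference shown to vanish as $\epsilon\to 0$. Your write-up in fact supplies more detail than the paper (the explicit product-rule splitting into the two error terms and the $\epsilon^2\cdot\epsilon^{-1}$ bookkeeping), and your remark on why a non-logarithmic cutoff would fail is a helpful addition.
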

	We will use this identification to represent the scattering map also in the Fourier picture and show how these pictures connect. To do so we define the following. 
\begin{definition}
	For $(\Psi_A, \Psi_B) \in \mathcal{E}^T_{\Hp_A} \oplus \mathcal{E}^T_{\Hp_B}$ note that $\partial_v\Psi_A (v,\theta,\phi) \in L^2(\mathbb R \times \mathbb{S}^2; \mathbb C)$ and analogously for $\Psi_B$. Hence, in mild abuse of notation, we can define the Fourier and spherical harmonics coefficients $ \mathcal{F}_{\Ho_A}(\Psi_A)$ and $\mathcal{F}_{\Ho_B}(\Psi_B)$ as 
	\begin{align}\label{eq:defnF1}
	i \omega \mathcal{F}_{\Ho_A}(\Psi_A) (\omega,m,\ell):= r_+ \int_{\mathbb R} \int_{\mathbb{S}^2} \partial_v\Psi_A(v,\theta,\varphi) e^{-i\omega v} Y_{\ell m}(\theta,\varphi) \sin\theta \d\theta \d\varphi \frac{\d v}{\sqrt{2\pi}}
	\end{align}
	and 
	\begin{align}
-	i \omega \mathcal{F}_{\Ho_B}(\Psi_B)(\omega, m ,\ell) := r_+ \int_{\mathbb R} \int_{\mathbb{S}^2} \partial_u\Psi_B(u,\theta,\varphi) e^{i\omega u} Y_{\ell m}(\theta,\varphi)\sin\theta \d\theta \d\varphi \frac{\d u}{\sqrt{2\pi}}.
	\end{align}
	Similarly, for $(\Phi_A, \Phi_B) \in  \mathcal{E}^T_{\Ch_A} \oplus \mathcal{E}^T_{\Ch_B}$ set
	\begin{align}
	-i \omega \mathcal{F}_{\Ch_A}(\Phi_A) (\omega, m, \ell ):= r_-\int_{\mathbb R} \int_{\mathbb{S}^2} \partial_u\Phi_A(u,\theta,\varphi) e^{i\omega u} Y_{\ell m}(\theta,\varphi) \sin\theta \d\theta \d\varphi \frac{\d u}{\sqrt{2\pi}}
	\end{align}
	and 
	\begin{align}\label{eq:defnF2}
	i \omega \mathcal{F}_{\Ch_B}(\Phi_B) (\omega, m , \ell ):= r_- \int_{\mathbb R} \int_{\mathbb{S}^2} \partial_v\Phi_B(v,\theta,\varphi) e^{-i\omega v} Y_{\ell m}(\theta,\varphi)\sin\theta \d\theta \d\varphi \frac{\d v}{\sqrt{2\pi}}.
	\end{align}
	 \end{definition}
	 Also, recall the Hilbert space decomposition $\mathcal{E}^T_{\Ho} \cong \mathcal{E}^T_{\Ho_A} \oplus \mathcal{E}^T_{\Ho_B}$ and $ \mathcal{E}^T_{\Ch}\cong \mathcal{E}^T_{\Ch_B} \oplus \mathcal{E}^T_{\Ch_A}$. Thus, the scattering matrix can be also decomposed as 
	 \begin{align}
	 S^T = \begin{pmatrix}
	 S^T_{BA}& S^T_{BB} \\
	 S^T_{AA} & S^T_{AB}
	 \end{pmatrix},
	 \end{align}   
	 where \begin{align}S^T_{ij}\colon \mathcal{E}^T_{\Ho_j} \to \mathcal{E}^T_{\Ch_i}\end{align} is a bounded linear map for $i,j \in \{ A,B\}$.\footnote{Note that $T$ does not denote the transpose but the fact that it is the scattering map associated with the $T$ vector field.}
	  \begin{definition}\label{def:hilbertspacesfourier}
Define the Hilbert spaces \begin{align*} &\hat{\mathcal{E}}^T_{\Hp_A} :=  \ell^2(Z;L^2(r_+^{-2} {\omega^2} \d \omega)),\, \; \hat{\mathcal{E}}^T_{\Hp_B} :=  \ell^2(Z;L^2(r_+^{-2} {\omega^2} \d \omega)),\\ &\hat{\mathcal{E}}^T_{\Ch_A} :=  \ell^2(Z;L^2(r_-^{-2} {\omega^2} \d \omega)),\,\; \hat{\mathcal{E}}^T_{\Ch_B}  :=  \ell^2(Z;L^2(r_-^{-2} {\omega^2} \d \omega)),\end{align*}
 where $Z = \{(m,\ell) \in \mathbb Z \times \mathbb N_0: |m|\leq \ell \}$.
\end{definition}
The Hilbert spaces defined in \cref{def:hilbertspacesfourier} are unitary isomorphic to their corresponding physical energy spaces. This is captured in
\begin{prop}
The linear maps defined in \eqref{eq:defnF1}--\eqref{eq:defnF2} \begin{align}&\mathcal{F}_{\Ho_A}\oplus \mathcal{F}_{\Ho_B}\colon  \mathcal{E}^T_{\Hp_A} \oplus \mathcal{E}^T_{\Hp_B} \to \hat{\mathcal{E}}^T_{\Hp_A} \oplus \hat{\mathcal{E}}^T_{\Hp_B} \\ &\mathcal{F}_{\Ch_B}\oplus \mathcal{F}_{\Ch_A} \colon \mathcal{E}^T_{\Ch_B} \oplus \mathcal{E}^T_{\Ch_A} \to \hat{\mathcal{E}}^T_{\Ch_B} \oplus \hat{\mathcal{E}}^T_{\Ch_A}  \end{align} are unitary.
\begin{proof}
	This follows from the fact that the Fourier transform and the decomposition into spherical harmonics are unitary maps. 
\end{proof}
\end{prop}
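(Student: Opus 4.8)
The plan is to reduce the statement to the unitarity of four structurally identical ``single-component'' maps and then to realize each of them as a composition of three elementary unitary isomorphisms. Since a finite direct sum of unitaries between Hilbert spaces is again unitary, and since the decomposition of \cref{cor:decomposition} and the definition of the hatted spaces (\cref{def:hilbertspacesfourier}) are compatible with the splitting into $A$- and $B$-components, it suffices to prove that each of $\mathcal{F}_{\Ho_A}$, $\mathcal{F}_{\Ho_B}$, $\mathcal{F}_{\Ch_A}$, $\mathcal{F}_{\Ch_B}$ is a unitary isomorphism onto its target. By the evident symmetry (replacing $r_+$ by $r_-$, and $v$ by $u$, the latter only conjugating the Fourier kernel and hence still unitary) I would carry out the argument only for $\mathcal{F}_{\Ho_A}\colon\mathcal{E}^T_{\Hp_A}\to\hat{\mathcal{E}}^T_{\Hp_A}$.

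First I would record, as already used in the definition of $\mathcal{F}_{\Ho_A}$, that by the very definition of the norm on $\mathcal{E}^T_{\Hp_A}$ (see \eqref{eq:firstdefnflux} and \cref{cor:decomposition}) the map $\Psi_A\mapsto\partial_v\Psi_A$ is an isometry of $C_c^\infty(\Ho_A)$ into $L^2(\mathbb R\times\mathbb S^2;\mathbb C)$ whose image consists of the test functions with vanishing $v$-integral at every angle; the latter is dense in $L^2(\mathbb R\times\mathbb S^2)$, so $\Psi_A\mapsto\partial_v\Psi_A$ extends to an isometric \emph{isomorphism} $\mathcal{E}^T_{\Hp_A}\cong L^2(\mathbb R\times\mathbb S^2;\mathbb C)$. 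Next, the decomposition into spherical harmonics furnishes a unitary $L^2(\mathbb R\times\mathbb S^2;\mathbb C)\cong\ell^2(Z;L^2(\mathbb R,\d v))$; write $g_{\ell m}(v)$ for the $(\ell,m)$-coefficient of $\partial_v\Psi_A(v,\cdot)$, up to the harmless relabelling $m\mapsto-m$ forced by the appearance of $Y_{\ell m}$ rather than $\overline{Y_{\ell m}}$ in \eqref{eq:defnF1}, itself a unitary of $\ell^2(Z)$. Finally, the Plancherel theorem, with the $\tfrac1{\sqrt{2\pi}}$-normalization of \eqref{eq:defnF1}--\eqref{eq:defnF2}, identifies $L^2(\mathbb R,\d v)$ unitarily with $L^2(\mathbb R,\d\omega)$ via $g_{\ell m}\mapsto\widehat{g_{\ell m}}$.

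It then remains to match these identifications with the definition of $\mathcal{F}_{\Ho_A}$: unwinding \eqref{eq:defnF1} gives $i\omega\,\mathcal{F}_{\Ho_A}(\Psi_A)(\omega,m,\ell)=r_+\,\widehat{g_{\ell m}}(\omega)$, hence $\mathcal{F}_{\Ho_A}(\Psi_A)(\omega,m,\ell)=\tfrac{r_+}{i\omega}\,\widehat{g_{\ell m}}(\omega)$. Since $\bigl|\tfrac{r_+}{i\omega}\bigr|^2\,r_+^{-2}\omega^2=1$, multiplication by $\tfrac{r_+}{i\omega}$ is a unitary isomorphism $L^2(\mathbb R,\d\omega)\to L^2(\mathbb R,r_+^{-2}\omega^2\,\d\omega)$; assembled over $(\ell,m)\in Z$ in $\ell^2$ this produces exactly $\hat{\mathcal{E}}^T_{\Hp_A}$ of \cref{def:hilbertspacesfourier}. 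Composing the three unitaries above shows $\mathcal{F}_{\Ho_A}$ is unitary, and the three remaining maps are handled identically; surjectivity in each case is automatic, as one may read off a preimage by inverting the chain of unitaries.

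There is no serious obstacle here; the only point that needs a little care is the bookkeeping of the factor $\tfrac1{i\omega}$ — one must check that the weight $\omega^2$ built into the hatted spaces precisely cancels it, and that the single point $\omega=0$ is irrelevant (it has Lebesgue measure zero, and on the dense class one even has $\widehat{g_{\ell m}}(0)=0$ automatically, since $g_{\ell m}$ is then the $v$-derivative of a compactly supported function). One should likewise keep in mind that the density step uses precisely that the mean-zero test functions are dense in $L^2$, which is what upgrades $\Psi_A\mapsto\partial_v\Psi_A$ from an isometry into a closed subspace to an isometric isomorphism onto all of $L^2$.
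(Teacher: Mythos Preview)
Your proposal is correct and follows exactly the same route as the paper, which dispatches the proof in a single sentence (``This follows from the fact that the Fourier transform and the decomposition into spherical harmonics are unitary maps''). You have simply unpacked that sentence carefully --- factoring through $\Psi_A\mapsto\partial_v\Psi_A$, the spherical-harmonic decomposition, Plancherel, and the weight-matching multiplication by $r_+/(i\omega)$ --- and in doing so you have also addressed the minor bookkeeping points (density of mean-zero test functions, the measure-zero issue at $\omega=0$, the $Y_{\ell m}$ versus $\overline{Y_{\ell m}}$ relabelling) that the paper leaves implicit.
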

Now, we will define the scattering map in the separated picture and show that it is bounded.
\begin{prop}
	 The scattering map in the separated picture \begin{align}\hat{S^T} \colon  \hat{\mathcal{E}}^T_{\mathcal{H}_A} \oplus  \hat{\mathcal{E}}^T_{\mathcal{H}_B} \to  \hat{\mathcal{E}}^T_{\mathcal{CH}_B} \oplus  \hat{\mathcal{E}}^T_{\mathcal{CH}_A},\end{align}
	 defined as the multiplication operator
	 \begin{align}\label{eq:scatteringfourier}
	 \hat{S^T} = \begin{pmatrix}
	 \hat{S^T_{BA}}& \hat{S^T_{BB}} \\
	 \hat{S^T_{AA}} & \hat{S^T_{AB}}\end{pmatrix} := \begin{pmatrix}
	 \mathfrak T(\omega,  \ell )& \bar{\mathfrak R}  (\omega,  \ell ) \\
	 \mathfrak R(\omega,  \ell ) & \bar{\mathfrak T}  (\omega,  \ell ) 
	 \end{pmatrix},
	 \end{align}
	 is bounded.  Moreover, the map $\hat S^T$ is invertible with bounded inverse given by 
\begin{align}\label{eq:inverse}
	{{}{\hat{S^{T}}}}^{-1}  =  \begin{pmatrix}
\bar{\mathfrak T}(\omega,  \ell )& -\bar{\mathfrak R}  (\omega,  \ell ) \\
	-\mathfrak R(\omega,  \ell ) & {\mathfrak T}  (\omega, \ell ) 
	\end{pmatrix}.
\end{align}
\begin{proof}
	Indeed, $\hat{S}^T$ is bounded in view of the uniform boundedness of the transmission and reflection coefficients $\mathfrak T$ and $\mathfrak R$ (cf. \cref{thm:boundednesstrans}). Also note that
	 $	 |\mathfrak T|^2 = 1 + |\mathfrak R|^2 $
	 implies that \begin{align}\det \left( \hat S^T \right) = 1\end{align}
	 which shows \eqref{eq:inverse}. The boundedness of ${{}{\hat{S^{T}}}}^{-1}$ is again immediate since the scattering coefficients are uniformly bounded.
\end{proof}
\end{prop}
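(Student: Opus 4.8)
The plan is to reduce the statement entirely to two facts already in hand: the uniform bound of \cref{thm:boundednesstrans} and the pointwise pseudo-unitarity identity \eqref{eq:pseudounitaryode} of \cref{prop:pseudouni}.

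First I would observe that $\hat{S^T}$ is, by construction, a Fourier-multiplier (i.e.\ multiplication) operator: acting on a pair $(f_A,f_B)\in\hat{\mathcal E}^T_{\Hp_A}\oplus\hat{\mathcal E}^T_{\Hp_B}$ --- that is, on sequences indexed by $(m,\ell)\in Z$ of $L^2(r_+^{-2}\omega^2\,\d\omega)$ functions of $\omega$ --- it simply left-multiplies, pointwise in $(\omega,\ell)$, by the $2\times 2$ matrix with entries $\mathfrak T(\omega,\ell),\bar{\mathfrak R}(\omega,\ell),\mathfrak R(\omega,\ell),\bar{\mathfrak T}(\omega,\ell)$, the symbol being independent of the azimuthal index $m$. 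Hence, pointwise in $(\omega,\ell)$, the Euclidean norm of the output $2$-vector is bounded by the matrix operator norm times the Euclidean norm of the input vector, and that operator norm is controlled, up to an absolute constant, by $|\mathfrak T(\omega,\ell)|+|\mathfrak R(\omega,\ell)|$. By \cref{thm:boundednesstrans} the latter is $\lesssim 1$ uniformly over all $\omega\in\mathbb R$ and $\ell\in\mathbb N_0$. Integrating against $\omega^2\,\d\omega$, summing over $(m,\ell)\in Z$, and absorbing the fixed ratio $(r_+/r_-)^2$ relating the weights of the domain spaces to those of the target spaces $\hat{\mathcal E}^T_{\Ch_B}\oplus\hat{\mathcal E}^T_{\Ch_A}$, one obtains $\|\hat{S^T}\|\lesssim 1$.

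Next, for invertibility I would compute the determinant of the symbol matrix pointwise,
\[
\det\begin{pmatrix}\mathfrak T & \bar{\mathfrak R}\\ \mathfrak R & \bar{\mathfrak T}\end{pmatrix}=|\mathfrak T|^2-|\mathfrak R|^2=1,
\]
the last equality being exactly \eqref{eq:pseudounitaryode}. Since the symbol has determinant identically equal to $1$, its pointwise inverse equals its adjugate,
\[
\begin{pmatrix}\mathfrak T & \bar{\mathfrak R}\\ \mathfrak R & \bar{\mathfrak T}\end{pmatrix}^{-1}=\begin{pmatrix}\bar{\mathfrak T} & -\bar{\mathfrak R}\\ -\mathfrak R & \mathfrak T\end{pmatrix},
\]
which is therefore a two-sided inverse of $\hat{S^T}$ at the level of symbols, hence at the level of operators, establishing \eqref{eq:inverse}. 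Boundedness of this inverse operator then follows verbatim from the argument for $\hat{S^T}$, once more invoking the uniform boundedness of $\mathfrak T$ and $\mathfrak R$ from \cref{thm:boundednesstrans}.

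There is no substantive obstacle remaining at this stage: all the analytic difficulty has already been discharged in \cref{thm:boundednesstrans} (uniform boundedness of the scattering coefficients) and in \cref{prop:pseudouni} (the algebraic identity $|\mathfrak T|^2-|\mathfrak R|^2=1$). The present statement is essentially bookkeeping --- translating an $L^\infty_{\omega,\ell}$ bound on the symbol into operator boundedness on the weighted $\ell^2$-spaces of \cref{def:hilbertspacesfourier}, noting that the symbol does not depend on $m$, observing that the constant ratio $(r_+/r_-)^2$ between the domain and target measures is harmless, and using that $\det\equiv 1$ makes the inverse symbol as bounded as the symbol itself. The only point meriting a line of care is that the operator norm of a $2\times 2$ multiplier matrix is dominated by the supremum of its entries, which is immediate.
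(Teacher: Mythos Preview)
Your proposal is correct and follows essentially the same approach as the paper: boundedness from the uniform bound of \cref{thm:boundednesstrans}, invertibility from the determinant computation $|\mathfrak T|^2-|\mathfrak R|^2=1$ via \cref{prop:pseudouni}, and boundedness of the inverse again from \cref{thm:boundednesstrans}. You have simply spelled out a few details (the role of the weight ratio $(r_+/r_-)^2$, the independence of the symbol from $m$) that the paper leaves implicit.
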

Using the previous definitions, we obtain the following connection for the scattering map between the physical space and the separated picture. 
\begin{theorem}\label{thm:fouriertophysical} The following diagram commutes and each arrow is a Hilbert space isomorphism:

\[
 \begin{tikzcd}
  \mathcal{E}^T_{\Hp_A} \oplus \mathcal{E}^T_{\Hp_B} \arrow{r}{S^T} \arrow[swap]{d}{\mathcal{F}_{\Ho_A}\oplus \mathcal{F}_{\Ho_B}} & \mathcal{E}^T_{\Ch_B} \oplus \mathcal{E}^T_{\Ch_A} \arrow{d}{\mathcal{F}_{\Ch_B}\oplus \mathcal{F}_{\Ch_A}} \\%
\hat{\mathcal{E}}^T_{\Hp_A} \oplus \hat{\mathcal{E}}^T_{\Hp_B}   \arrow{r}{\hat{S^T}} &   \hat{ \mathcal{E}}^T_{\Ch_B} \oplus \hat{ \mathcal{E}}^T_{\Ch_A}.
\end{tikzcd}
\]
Moreover, the maps $S^T$ and $\hat S^T$ are pseudo-unitary satisfying \eqref{eq:pseudounitary} and \eqref{eq:pseudounitaryode}, respectively. 
More concretely, for $(\Psi_A, \Psi_B) \in \mathcal{E}^T_{\Ho_A}\oplus \mathcal{E}^T_{\Ho_B}$, we can write
\begin{align}
	\begin{pmatrix}
	\Phi_B \\ \Phi_A
	\end{pmatrix} =  S^T 	\begin{pmatrix}
	\Psi_A \\\Psi_B
	\end{pmatrix}, \end{align}
	where $ \partial_u \Phi_A \in L^2(\mathcal{CH}_A)$ and $\partial_v \Phi_B\in L^2(\mathcal{CH}_B) $ can be represented by
	\begin{align}\nonumber 
	\partial_u	\Phi_A (u,\theta,\varphi )= & \frac{1}{\sqrt {2\pi}r_-} \int_{\mathbb R} \sum_{|m|\leq\ell}-i\omega  \mathfrak R(\omega,\ell) \,\mathcal{F}_{\mathcal{H}_A}(\Psi_A) (\omega,m,\ell) Y_{m\ell}(\theta,\varphi) e^{-i\omega u } \d\omega \\
		&+\frac{1}{\sqrt {2\pi} r_-} \int_{\mathbb R} \sum_{|m|\leq \ell} -i\omega \bar{\mathfrak T}(\omega,\ell) \,\mathcal{F}_{\mathcal{H}_B}(\Psi_B) (\omega,m,\ell) Y_{m\ell}(\theta,\varphi) e^{-i\omega u } \d\omega \label{eq:fourierrep1}
	\end{align}
	and 
	\begin{align}\nonumber\partial_v \Phi_B (v,\theta,\varphi )=& \frac{1}{\sqrt {2\pi} r_-}  \int_{\mathbb R} \sum_{|m|\leq \ell} i \omega  \mathfrak T(\omega,\ell) \,\mathcal{F}_{\mathcal{H}_A}(\Psi_A) (\omega,m,\ell) Y_{m\ell}(\theta,\varphi) e^{i\omega v} \d\omega \\
	&+\frac{1}{\sqrt {2\pi} r_-} \int_{\mathbb R} \sum_{|m|\leq \ell} i \omega \bar{\mathfrak R}(\omega,\ell) \,\mathcal{F}_{\mathcal{H}_B}(\Psi_B) (\omega,m,\ell) Y_{m\ell}(\theta,\varphi) e^{i\omega v } \d\omega\label{eq:fourierrep2}
	\end{align}
	as well as $ \Phi_A \in \mathcal{E}^T_{\mathcal{CH}_A} \cong \dot{H}^1(\mathbb R; L^2(\mathbb S^2)), \Phi_B \in \mathcal{E}^T_{\mathcal{CH}_B}\cong \dot{H}^1(\mathbb R; L^2(\mathbb S^2))$ can be represented by regular distributions as 
		\begin{align}\nonumber
			\Phi_A (u,\theta,\varphi )= & \frac{1}{\sqrt {2\pi}r_-} \operatorname{p.v.}\int_{\mathbb R} \sum_{|m|\leq \ell} \mathfrak R(\omega,\ell) \,\mathcal{F}_{\mathcal{H}_A}(\Psi_A) (\omega,m,\ell) Y_{m\ell}(\theta,\varphi) e^{-i\omega u } \d\omega \\
		&+\frac{1}{\sqrt {2\pi} r_-} \operatorname{p.v.}\int_{\mathbb R} \sum_{|m|\leq \ell}  \bar{\mathfrak T}(\omega,\ell) \,\mathcal{F}_{\mathcal{H}_B}(\Psi_B) (\omega,m,\ell) Y_{m\ell}(\theta,\varphi) e^{-i\omega u } \d\omega 
		\end{align}
		and 
		\begin{align}\nonumber \Phi_B (v,\theta,\varphi )=& \frac{1}{\sqrt {2\pi} r_-}  \operatorname{p.v.}\int_{\mathbb R} \sum_{|m|\leq \ell}  \mathfrak T(\omega,\ell) \,\mathcal{F}_{\mathcal{H}_A}(\Psi_A) (\omega,m,\ell) Y_{m\ell}(\theta,\varphi) e^{i\omega v} \d\omega \\
		&+\frac{1}{\sqrt {2\pi} r_-} \operatorname{p.v.}\int_{\mathbb R} \sum_{|m|\leq \ell}  \bar{\mathfrak R}(\omega,\ell) \,\mathcal{F}_{\mathcal{H}_B}(\Psi_B) (\omega,m,\ell) Y_{m\ell}(\theta,\varphi) e^{i\omega v } \d\omega.
\label{eq:scatteredcauchyb}
		\end{align}
		\begin{proof}
			This is a direct consequence of \cref{thm:forwardevolution}, \cref{thm:boundednesstrans} and \eqref{eq:scatteringcoef1}, \eqref{eq:scatteringcoef2} in the proof of \cref{thm:thmbounds}. 
		\end{proof}
\end{theorem}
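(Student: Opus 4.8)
The plan is to prove \cref{thm:fouriertophysical} by assembling pieces that are already in hand: \cref{thm:forwardevolution} (the scattering map $S^T$ exists, is a Hilbert space isomorphism, and is pseudo-unitary), \cref{thm:boundednesstrans} (uniform boundedness of $\mathfrak R, \mathfrak T$), the unitarity of the Fourier/spherical-harmonic transforms $\mathcal F_{\Ho_A}\oplus \mathcal F_{\Ho_B}$ and $\mathcal F_{\Ch_B}\oplus \mathcal F_{\Ch_A}$, and the explicit form of $\hat S^T$ together with $\det(\hat S^T)=1$. Since every individual arrow has already been shown (or is trivially) a Hilbert space isomorphism, the only genuinely new content is \emph{commutativity} of the square, i.e.\ that conjugating $S^T$ by the two Fourier transforms gives exactly the multiplication operator \eqref{eq:scatteringfourier}; the pseudo-unitarity claim for $S^T$ and $\hat S^T$ is then immediate from \cref{thm:forwardevolution} and \cref{prop:pseudouni} respectively, and the concrete Fourier-representation formulas \eqref{eq:fourierrep1}--\eqref{eq:scatteredcauchyb} are just the commutativity statement written out componentwise and unwound through the definitions \eqref{eq:defnF1}--\eqref{eq:defnF2}.

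First I would reduce to the dense domain. It suffices to verify commutativity on $\mathcal D^T_\Hp$ (equivalently, on the dense set of $(\Psi_A,\Psi_B)$ whose Cauchy evolution has compact $r$-support), since all four maps in the square are bounded and the bounded extension is unique. For such data, let $\psi$ be the solution of \eqref{eq:linearwave}; by the compact-support assumption it admits a Fourier--spherical-harmonic decomposition, and each mode $\hat\psi_{m\ell}(r_*,\omega)=u(r_*)/r$ solves the radial o.d.e.\ \eqref{ODE1}. The point is that $u$ is, up to normalization, a superposition of the asymptotic-state solutions: near $\Ho$ the boundary data on $\Ho_A$ (resp.\ $\Ho_B$) picks out the $e^{i\omega r_*}$ (resp.\ $e^{-i\omega r_*}$) component, so $u$ is a linear combination of $u_1$ and $u_2$ with coefficients read off from $\mathcal F_{\Ho_A}(\Psi_A)$ and $\mathcal F_{\Ho_B}(\Psi_B)$ (the $r_+$ and $i\omega$ factors in \eqref{eq:defnF1} are exactly the Jacobian relating $\partial_v\psi\restriction_{\Ho_A}$ to the mode amplitude, using $\psi = u/r$ and $r\to r_+$ on $\Ho$). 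Then rewriting $u_1 = \mathfrak T v_1 + \mathfrak R v_2$ (and the analogous identity for $u_2$, whose coefficients are $\bar{\mathfrak T},\bar{\mathfrak R}$ by reality of the potential, giving the second column of \eqref{eq:scatteringfourier}) expresses $u$ near $\Ch$ in terms of $v_1\sim e^{i\omega r_*}$ and $v_2\sim e^{-i\omega r_*}$, whose coefficients are precisely $\mathcal F_{\Ch_B}(\Phi_B)$ and $\mathcal F_{\Ch_A}(\Phi_A)$ with $r_-$ replacing $r_+$. This is exactly the matrix identity $\mathcal F_{\Ch} \circ S^T = \hat S^T \circ \mathcal F_\Ho$ on the dense domain.

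The representation formulas then follow by inverting the Fourier transform: \eqref{eq:fourierrep1}--\eqref{eq:fourierrep2} are the statement that $\partial_u\Phi_A, \partial_v\Phi_B \in L^2$ are the inverse transforms of $-i\omega\mathfrak R \,\mathcal F_{\Ho_A}(\Psi_A) -i\omega\bar{\mathfrak T}\,\mathcal F_{\Ho_B}(\Psi_B)$ etc., which is legitimate because the multipliers are bounded (\cref{thm:boundednesstrans}) and the $\omega^2\d\omega$ weights match. The potential-value representations of $\Phi_A$ and $\Phi_B$ themselves (as opposed to their derivatives) require care at $\omega = 0$: dividing the $L^2(\omega^2\d\omega)$ data by $i\omega$ produces a function that is $L^2_{\mathrm{loc}}$ but whose Fourier integral must be interpreted as a principal value (hence the $\operatorname{p.v.}$), and one checks the result is a genuine $\dot H^1$ function — this uses that $\mathfrak R(0,\ell),\mathfrak T(0,\ell)$ are finite (\cref{cor:rdoesntvanish}) so no extra singularity is introduced at $\omega=0$ beyond the generic $1/\omega$. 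I expect the main obstacle to be bookkeeping the precise normalizations — tracking the $r_\pm$, the $i\omega$ versus $-i\omega$, the $\sqrt{2\pi}$, and the conjugations on the second column of $\hat S^T$ — so that the abstract commutativity and the explicit formulas agree on the nose; the conceptual content is light once \cref{thm:forwardevolution} and \cref{thm:boundednesstrans} are granted, and indeed the proof as written simply cites those together with the scattering-coefficient identities \eqref{eq:scatteringcoef1}--\eqref{eq:scatteringcoef2} established inside the proof of \cref{thm:thmbounds}.
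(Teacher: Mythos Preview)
Your proposal is correct and follows essentially the same approach as the paper: the paper's proof is a one-line citation of \cref{thm:forwardevolution}, \cref{thm:boundednesstrans}, and the identities \eqref{eq:scatteringcoef1}--\eqref{eq:scatteringcoef2} from the proof of \cref{thm:thmbounds}, and your write-up is precisely an unpacking of what those citations mean---reduce to the dense domain $\mathcal D^T_\Hp$, identify the mode amplitudes with $\mathcal F_{\Ho_A}(\Psi_A),\mathcal F_{\Ho_B}(\Psi_B)$ and $\mathcal F_{\Ch_A}(\Phi_A),\mathcal F_{\Ch_B}(\Phi_B)$, and read off the matrix relation \eqref{eq:scatteringcoef1} together with its inverse. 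Your remarks about the $\operatorname{p.v.}$ interpretation and the normalization bookkeeping are accurate but go slightly beyond what the paper spells out.
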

From the previous representation of the scattered solution we can draw a link between the boundedness of the scattering map and the fact that compactly supported incoming data will lead to solutions which vanish on the future bifurcation sphere $\mathcal{B}_+$. This is the content of the following
\begin{cor}\label{cor:vanishingbifurcation}
	Let $\Psi= (\Psi_A,0)\in \mathcal{E}^T_{\Hp_A} \oplus \mathcal{E}^T_{\Hp_B}$ be purely incoming smooth data. Assume further that $\Psi_A$ is supported away from the past bifurcation sphere $\mathcal{B}_-$ and future timelike infinity $i^+$. 
	
	Then, the Cauchy evolution $\psi$ arising from $\Psi_A$ vanishes at the future bifurcation sphere $\mathcal{B}_+$. 
	
	On the other hand, if $\Psi$, as above, led to a solution which does not vanish at the future bifurcation sphere $\mathcal{B}_+$, then the scattering map $S^T: \mathcal{E}^T_{\Hp} \to \mathcal{E}^T_{\Ch}$ could not be bounded. 
	\begin{proof}
		The first claim is a direct consequence of \eqref{eq:scatteredcauchyb} in \cref{thm:fouriertophysical}.
	
	For the second statement let $\Psi_A$ be compactly supported data on the event horizon and assume that its Cauchy evolution $\psi$ does not vanish at the future bifurcation sphere $\mathcal{B}_+$. Now take data $\tilde \Psi_A$ which is supported away from the past bifurcation sphere $\mathcal{B}_-$ and satisfies $T \tilde \Psi_A = \Psi_A$. Then, $\tilde \Psi_A \in \mathcal{E}^T$ but its Cauchy evolution $\tilde \psi $ satisfies $\tilde \psi\restriction_{\Ch}\notin \mathcal{E}^T_{\Ch}$ since 
	\begin{align}
		\| \tilde \psi\restriction_{\Ch_B} \|^2_{\mathcal{E}^T_{\Ch_B}}  =  \int_{\mathbb R \times \mathbb S^2} |\psi\restriction_{\Ch_B} (v,\theta,\varphi)|^2 \d v \sin\theta \d \theta \d \varphi = \infty,
	\end{align}
	as $\psi\restriction_{\Ch_B} = T \tilde \psi \restriction_{\Ch_B}$ does not vanish at the future bifurcation sphere $\mathcal{B}_+$. By cutting off smoothly, one can construct normalized (in $\mathcal{E}^T_{\Hp}$-norm) smooth compactly supported initial data on $\mathcal{E}^T_{\Hp}$ such that its Cauchy evolution has arbitrary large norm $\mathcal{E}^T_{\Ch}$-norm at the Cauchy horizon. 
\end{proof}
\end{cor}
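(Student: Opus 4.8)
\textit{Overall plan.} I would deduce both halves of the corollary from the Fourier-space representation \eqref{eq:scatteredcauchyb} supplied by \cref{thm:fouriertophysical}, together with the uniform boundedness of $\mathfrak R,\mathfrak T$ from \cref{thm:boundednesstrans}.

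\textit{First assertion.} I would specialize \eqref{eq:scatteredcauchyb} to $\Psi=(\Psi_A,0)$, so that $\psi\restriction_{\Ch_B}$ and $\psi\restriction_{\Ch_A}$ are given --- modulo an additive function of $(\theta,\varphi)$ coming from the $\dot H^1$ identification --- by the (principal-value) $\omega$-integrals of $\mathfrak T(\omega,\ell)\,\mathcal F_{\Ho_A}(\Psi_A)(\omega,m,\ell)\,Y_{m\ell}\,e^{i\omega v}$ and $\mathfrak R(\omega,\ell)\,\mathcal F_{\Ho_A}(\Psi_A)(\omega,m,\ell)\,Y_{m\ell}\,e^{-i\omega u}$, summed over $(m,\ell)$. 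The key remark is that, since $\Psi_A\in C_c^\infty(\Ho_A)$, integration by parts in $v$ gives $i\omega\,\mathcal F_{\Ho_A}(\Psi_A)(\omega,m,\ell)=r_+\,i\omega\,\widehat{\Psi_A}(\omega,m,\ell)$, so $\mathcal F_{\Ho_A}(\Psi_A)(\cdot,m,\ell)=r_+\widehat{\Psi_A}(\cdot,m,\ell)$ is a genuine Schwartz function of $\omega$ with no singularity at $\omega=0$ and rapidly decaying coefficients in $(m,\ell)$; in particular the apparent $\omega^{-1}$-weight of \eqref{eq:defnF1} is harmless and the principal value is an ordinary, absolutely convergent integral. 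Multiplying by the bounded factor $\mathfrak T(\cdot,\ell)$ (\cref{thm:boundednesstrans}) keeps the $\omega$-integrand in $L^1$ and the $(m,\ell)$-sum absolutely and uniformly convergent on $\mathbb S^2$, so the Riemann--Lebesgue lemma yields $\psi\restriction_{\Ch_B}(v,\cdot)\to 0$ in $L^\infty(\mathbb S^2)$ as $v\to\pm\infty$, and similarly $\psi\restriction_{\Ch_A}(u,\cdot)\to 0$ as $u\to\pm\infty$. Since $\mathcal B_+$ is the $v\to+\infty$ end of $\Ch_B$ (resp. the $u\to+\infty$ end of $\Ch_A$) in the charts \eqref{eq:cb}, \eqref{eq:ca}, it only remains to pin down the additive constant: because $\Psi_A$ is supported away from $i^+$, finite speed of propagation forces the true trace $\psi\restriction_{\Ch_B}$ to vanish for $v$ sufficiently negative --- which is exactly the $i^+$ end --- and the Fourier representative vanishes there too, so the two coincide and $\psi\restriction_{\mathcal B_+}=0$. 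That this pointwise limit really is the value at the corner is guaranteed by the smoothness of $\psi$ up to $\mathcal B_+$ (\cref{rmk:domain} and Cauchy stability).

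\textit{Second assertion.} Here I would argue by contradiction, essentially running the above backwards after integrating once along the generator $T$. Suppose $\Psi_A\in C_c^\infty(\Ho_A)$ is supported away from $\mathcal B_-$ but its evolution $\psi$ satisfies $\psi\restriction_{\mathcal B_+}\neq 0$. Using $T\restriction_{\Ho_A}=\partial_v$, set $\tilde\Psi_A(v,\theta,\varphi):=\int_{-\infty}^v\Psi_A(v',\theta,\varphi)\,\d v'$; then $T\tilde\Psi_A=\Psi_A$, $\tilde\Psi_A$ is smooth and vanishes near $\mathcal B_-$, is eventually constant in $v$, and $\|\tilde\Psi_A\|_{\mathcal E^T_\Ho}^2=\int_{\mathbb R\times\mathbb S^2}|\Psi_A|^2\,\d v\sin\theta\,\d\theta\,\d\varphi<\infty$, so $\tilde\Psi_A\in\mathcal E^T_\Ho$. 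Since $T$ is Killing, the evolution $\tilde\psi$ obeys $T\tilde\psi=\psi$, hence on $\Ch_B$ (where $T=\partial_v$ is tangential) $\partial_v(\tilde\psi\restriction_{\Ch_B})=\psi\restriction_{\Ch_B}$, and therefore $\|\tilde\psi\restriction_{\Ch}\|^2_{\mathcal E^T_{\Ch}}\geq\int_{\mathbb R\times\mathbb S^2}|\psi\restriction_{\Ch_B}(v,\theta,\varphi)|^2\,\d v\sin\theta\,\d\theta\,\d\varphi=+\infty$, because $\|\psi\restriction_{\Ch_B}(v,\cdot)\|_{L^2(\mathbb S^2)}^2\to\|\psi\restriction_{\mathcal B_+}\|_{L^2(\mathbb S^2)}^2>0$ as $v\to+\infty$. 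Thus $\tilde\psi\restriction_{\Ch}\notin\mathcal E^T_{\Ch}$. To upgrade this to an actual failure of boundedness I would truncate: with $\eta$ a cutoff equal to $1$ near $0$, set $\tilde\Psi_A^{(n)}:=\tilde\Psi_A\,\eta(v/n)\in\mathcal D^T_\Ho$; these converge to $\tilde\Psi_A$ in $\mathcal E^T_\Ho$ (so form a bounded sequence), while $T\tilde\Psi_A^{(n)}$ equals $\Psi_A$ plus a term supported in $v\sim n$, so finite speed of propagation makes $(T\tilde\psi^{(n)})\restriction_{\Ch_B}$ agree with $\psi\restriction_{\Ch_B}$ on an interval $(-\infty,c_n]$ with $c_n\to+\infty$. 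A monotone-convergence argument then gives $\|S_0^T\tilde\Psi_A^{(n)}\|_{\mathcal E^T_{\Ch}}^2\geq\int_{(-\infty,c_n]\times\mathbb S^2}|\psi\restriction_{\Ch_B}|^2\to+\infty$, so after normalizing we obtain unit-norm data in $\mathcal D^T_\Ho$ with arbitrarily large scattered energy, contradicting boundedness of $S^T$.

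\textit{Main obstacle.} I expect the only genuinely delicate point to lie in the first assertion, namely the passage from the Fourier superposition to the value of the solution at the corner $\mathcal B_+$: one must both (i) control the $\omega^{-1}$-weights in \eqref{eq:defnF1} near $\omega=0$ --- handled by the vanishing of $\int\partial_v\Psi_A\,\d v$ for compactly supported $\Psi_A$ --- and (ii) correctly fix the additive constant intrinsic to the $\dot H^1$-representation of the scattered data, which is where supporting $\Psi_A$ away from $i^+$ enters. Everything else --- the uniform bound on $\mathfrak R,\mathfrak T$, commuting $T$ through the Cauchy evolution, and the truncation bookkeeping in the converse --- should be routine once these two items are settled.
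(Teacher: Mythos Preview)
Your proposal is correct and follows essentially the same route as the paper. For the first assertion the paper simply invokes \eqref{eq:scatteredcauchyb}; your Riemann--Lebesgue argument together with the domain-of-dependence identification of the additive $\dot H^1$-constant is precisely the content the paper leaves implicit. For the second assertion your construction $\tilde\Psi_A(v)=\int_{-\infty}^v\Psi_A$ and the subsequent cutoff sequence $\tilde\Psi_A^{(n)}=\tilde\Psi_A\,\eta(v/n)$ are exactly the $T$-primitive and ``cutting off smoothly'' steps the paper sketches, with the bookkeeping made explicit.
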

\begin{rmk} 
	For convenience we have stated the second statement of \cref{cor:vanishingbifurcation} only for the interior of Reissner--Nordström. However, note that it holds true for more general black hole interiors (e.g.\ subextremal (anti-) de~Sitter--Reissner--Nordström) with Penrose diagram as depicted in \cref{fig:penrosembar}. 
\end{rmk}
\subsection{Injectivity of the reflection map}
\label{subsec:reflection}
In this section, we define the reflection operator of purely incoming radiation (cf.\ \cref{fig:reflection}) and prove that it is has trivial kernel as an operator from $\mathcal{E}^T_{\Hp_A} \to \mathcal{E}^T_{\Ch_A}$.   \begin{figure}[ht]\centering
	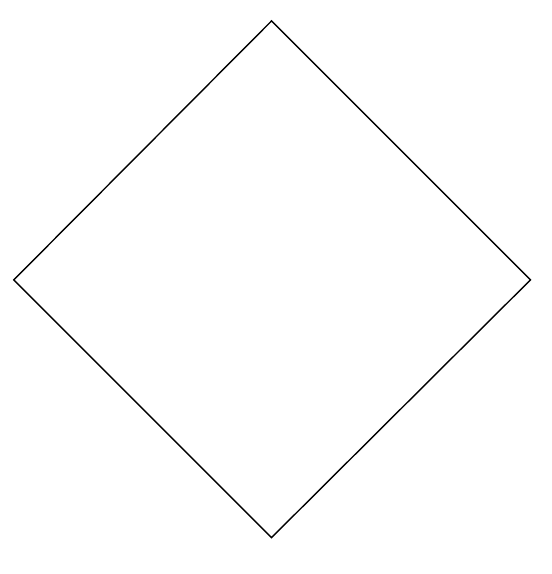
	\caption{Reflection $\mathscr{R}$ of purely incoming radiation.}
	\label{fig:reflection}
\end{figure} 
\begin{definition}[Reflection operator]\label{defn:reflectionoperator}
	For purely incoming radiation $(\Psi_A,0)\in \mathcal{E}^T_{\Hp_A} \oplus \mathcal{E}^T_{\Hp_B}$, define the reflection operator 
	\begin{align}
		\mathscr{R}: \mathcal{E}^T_{\Hp_A} \to \mathcal{E}^T_{\Ch_A}
	\end{align}
	as
	\begin{align}
		\mathscr{R}(\Psi_A) = \Phi_A := \operatorname{pr}_{A} \left( S^T  \begin{pmatrix}\Psi_A \\ 0
		\end{pmatrix}\right),
	\end{align}
	where $\operatorname{pr}_{A}\colon \mathcal{E}^T_{\Ch_B} \oplus \mathcal{E}^T_{\Ch_A} \to \mathcal{E}^T_{\Ch_A}  $ is the orthogonal projection.
\end{definition}
\begin{theorem}\label{thm:nonvanishingreflection}
The reflection operator $\mathscr{R}$ defined in \cref{defn:reflectionoperator} has trivial kernel. 
\begin{proof}
	Assume $\mathscr{R}(\Psi_A) =0$ for some $\Psi_A \in \mathcal{E}^T_{\Hp_A}$. Then, in view of \cref{thm:fouriertophysical}, \begin{align}\mathfrak R(\omega,\ell) \mathcal{F}_{\Hp_A} (\Psi_A) (\omega,m,\ell) =0\end{align} for all $m,\ell$, and a.e.\ $\omega \in \mathbb {R} $. Moreover, since $\mathfrak R(\omega,\ell)$ only vanishes on a discrete set (cf.\ \cref{cor:rdoesntvanish}), we obtain that $\mathcal{F}_{\Hp_A} (\Psi_A) (\omega,m,\ell) =0$ for all $m,\ell$, and a.e.\ $\omega \in \mathbb {R} $. Again, in view of \cref{thm:fouriertophysical}, we conclude $\Psi_A=0$ as an element of $\mathcal{E}^T_{\Hp_A}$.
\end{proof}
\end{theorem}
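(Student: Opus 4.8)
The plan is to transport the question to the separated picture, where the reflection operator becomes multiplication by the fixed-frequency coefficient $\mathfrak R(\omega,\ell)$, and then to use the fact that this multiplier is not identically zero. Concretely, I would start from the commuting diagram of \cref{thm:fouriertophysical}. Restricting the scattering map to purely incoming data $(\Psi_A,0)$ and projecting onto $\mathcal{E}^T_{\Ch_A}$, the $\Psi_B$-contribution $\bar{\mathfrak T}\,\mathcal{F}_{\Hp_B}(\Psi_B)$ in \eqref{eq:fourierrep1} drops out, so the diagram specializes to the identity
\begin{align}
\mathcal{F}_{\Ch_A}\bigl(\mathscr R(\Psi_A)\bigr)(\omega,m,\ell) = \mathfrak R(\omega,\ell)\,\mathcal{F}_{\Hp_A}(\Psi_A)(\omega,m,\ell),
\end{align}
valid for all $(m,\ell)\in Z$ and a.e.\ $\omega\in\mathbb R$. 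Since $\mathcal{F}_{\Ch_A}$ is a unitary isomorphism, the hypothesis $\mathscr R(\Psi_A)=0$ is equivalent to $\mathfrak R(\omega,\ell)\,\mathcal{F}_{\Hp_A}(\Psi_A)(\omega,m,\ell)=0$ for all $(m,\ell)$ and a.e.\ $\omega$.

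Next I would rule out that $\mathfrak R(\cdot,\ell)$ can annihilate a positive-measure set. By \cref{cor:rdoesntvanish}, for each fixed $\ell$ the coefficient $\mathfrak R(\cdot,\ell)$ extends to a holomorphic function on the strip $\{|\operatorname{Im}\omega|<\kappa_+\}$, with $\mathfrak R(0,\ell)=\tfrac{(-1)^\ell}{2}\bigl(\tfrac{r_-}{r_+}-\tfrac{r_+}{r_-}\bigr)$. Because $0<|Q|<M$ forces $r_-<r_+$, this value is nonzero, so $\mathfrak R(\cdot,\ell)$ is a nonzero holomorphic function on the strip; by the identity theorem its zeros are isolated, and in particular its real zero set has Lebesgue measure zero. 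Combining with the previous paragraph, we get $\mathcal{F}_{\Hp_A}(\Psi_A)(\omega,m,\ell)=0$ for all $(m,\ell)\in Z$ and a.e.\ $\omega\in\mathbb R$.

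Finally, since $\mathcal{F}_{\Hp_A}\colon\mathcal{E}^T_{\Hp_A}\to\hat{\mathcal{E}}^T_{\Hp_A}$ is unitary (again by \cref{thm:fouriertophysical}, or directly because the $t$-Fourier transform and the spherical-harmonic decomposition are unitary), vanishing of all the coefficients forces $\Psi_A=0$ in $\mathcal{E}^T_{\Hp_A}$, which is the assertion.

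The proof is essentially soft once the prior results are in hand: the only substantive input is the non-triviality of $\mathfrak R(\cdot,\ell)$, i.e.\ the explicit evaluation $\mathfrak R(0,\ell)\neq 0$ from \cref{cor:rdoesntvanish} (itself resting on the $\omega=0$ computation of Gürsel et al.\ and the analytic continuation of $u_1,v_1,v_2$). Accordingly, I do not expect a genuine obstacle here; the main point is simply to assemble \cref{thm:fouriertophysical} and \cref{cor:rdoesntvanish} correctly, and to note that the $\Psi_B$-term of the scattering map is inactive for purely incoming data.
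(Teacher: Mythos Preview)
Your proposal is correct and follows essentially the same approach as the paper: pass to the separated picture via \cref{thm:fouriertophysical}, use that $\mathfrak R(\cdot,\ell)$ vanishes only on a discrete (hence measure-zero) set by \cref{cor:rdoesntvanish}, and conclude by unitarity of $\mathcal{F}_{\Hp_A}$. Your write-up is slightly more explicit about why the zero set of $\mathfrak R(\cdot,\ell)$ has measure zero (via the nonzero value at $\omega=0$ and the identity theorem), but the argument is the same.
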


\subsection{\texorpdfstring{$C^1$}{C1}-blow-up on the Cauchy horizon}\label{sec:c1instab}
In this section, we shall revisit and discuss the seminal work  \cite{hartle1982crossing} of Chandrasekhar and Hartle. The Fourier representation of the scattered data on the Cauchy horizon in \cref{thm:fouriertophysical} serves as a good framework to provide a completely rigorous framework for the $C^1$-blow-up at the Cauchy horizon studied in~\cite{hartle1982crossing}.
\begin{theorem}[$C^1$-blow-up on the Cauchy horizon \cite{hartle1982crossing}]\label{thm:c1instab}
	There exist smooth, compactly supported and purely incoming data $\Psi_A$ on the event horizon $\mathcal{H}_A$ for which the Cauchy evolution of \eqref{eq:linearwave} fails to be $C^1$ at the Cauchy horizon $\mathcal{CH}$. More precisely, the solution $\psi$ arising from $\Psi_A$ fails to be $C^1$ at every point on the Cauchy horizon $\Ch_A\cup \mathcal{B}_+$. Moreover, the incoming radiation can be chosen to be only supported on any angular parameter $\ell_0$ which satisfies $\ell_0(\ell_0+1) \neq r_+^2 ( r_+ - 3 r_-)$. 
	\begin{proof}
		Let $\ell_0$ be fixed and satisfy $\ell_0(\ell_0+1) \neq r_+^2 ( r_+ - 3 r_-)$. Define purely incoming smooth data $\Psi_A(v,\theta,\varphi) = f(v) Y_{\ell_0 0}(\theta,\varphi)$ on $\mathcal{H}_A$, where $f(v)$ is smooth and compactly supported. Moreover, assume that the entire function $\hat f$ satisfies $\hat f(i\kappa_+) \neq 0$. In view of the representation formula \eqref{eq:scatteredcauchyb} from \cref{thm:fouriertophysical}, the degenerate derivative of its Cauchy evolution $\Phi_B$ on the Cauchy horizon $\Ch_B$  reads
		\begin{align}\label{eq:contour}
		\partial_v \Phi_B(v,\theta,\varphi) = \frac{r_+}{\sqrt{2\pi}r_-} \int_{\mathbb R} i \omega \mathfrak T(\omega,\ell_0) \hat f(\omega)e^{i\omega v} \d \omega Y_{\ell_0 0}(\theta,\varphi).
		\end{align}
		Since $\mathfrak T(\omega,\ell)$ has a simple pole at $\omega = i\kappa_+$ (cf.\ \cref{prop:rtboundedcomplex} in the appendix), we pick up the residue at $i\kappa_+$ when we deform the contour of integration in \eqref{eq:contour} from the real line to the line $\operatorname{Im}(\omega) = \kappa_+ + \delta$ for some $\kappa_+ >\delta >0$. Here we use that the compact support of $f(v)$ implies the bound $|\hat f(\omega)| \leq e^{|\operatorname{Im}(\omega)|\sup| \operatorname{supp}(f)|} \hat f(\operatorname{Re}(\omega))$ and that, in addition, by \cref{prop:rtboundedcomplex}, the transmission coefficient $\mathfrak T$ remains bounded as $|\operatorname{Re}(\omega)| \to \infty$. This ensures that the deformation of the integration contour is valid. Hence, 
		\begin{align}\nonumber
		\partial_v \Phi_B(v,\theta,\varphi) =& \frac{i r_+}{\sqrt{2\pi}r_-} 2 \pi i (i\kappa_+) \hat f(i\kappa_+) e^{-\kappa_+ v } Y_{\ell_0 0} (\theta,\varphi)  \operatorname{Res}(\mathfrak T(\omega,\ell_0), i\kappa_+)  \\\nonumber& + i
		\frac{r_+ e^{-(\kappa_+ +\delta )v}}{\sqrt{2\pi}r_-} \int_{\mathbb R}\Big[  (\omega_R + i(\kappa_+ + \delta)) \mathfrak T(\omega_R + i (\kappa_+ + \delta))
		\\\nonumber& \hat f(\omega_R + i (\kappa_+ + \delta)) e^{i \omega_R v} Y_{\ell_0 0} (\theta,\varphi) \Big] \d \omega_R\\ 
		&= C e^{-\kappa_+ v} Y_{\ell_0 0 }(\theta,\varphi) + o\left(e^{-(\kappa_+ + \delta) v}\right)
		\end{align}
		as $v\to\infty$,
		where 
		\begin{align}
		C = -i \kappa_+ \frac{r_+}{r_-}\sqrt{2\pi} \hat f(i\kappa_+) \operatorname{Res}(\mathfrak T(\omega,\ell_0), \omega = i\kappa_+)\neq 0
		\end{align} 
		by construction. 
		Thus, $\Phi_B$ is not in $C^1$ at the future bifurcation sphere as the non-degenerate derivative diverges as $v \to \infty$:
		\begin{align}
		\frac{\partial}{\partial V_{\Ch}} \Phi_B  = e^{-\kappa_- v} \partial_v \Psi_B(v,\theta,\varphi) = C e^{-(\kappa_+ + \kappa_-) v}  (1 + o(1))  ,
		\end{align}
		where we recall that $\kappa_- < - \kappa_+<0$. Finally, propagation of regularity gives that the solution is not in $C^1$ at each point on the Cauchy horizon $\Ch_A$. More precisely, expressing \eqref{eq:linearwave} is $(u,v)$ coordinates gives
		\begin{align}\label{eq:nullcords}
			\partial_u \partial_v \psi = \frac{-\Delta}{2r^3} (\partial_v \psi + \partial_u \psi) + \frac{\Delta}{4r^4}\ell_0(\ell_0 +1) \psi,
		\end{align}
		where $\Delta$ is as in \eqref{eq:h} and where we have used that $\Delta_{\mathbb S^2} \psi = -\ell_0(\ell_0 +1) \psi$. Now, note that $|\psi|, |\partial_u\psi|$ and $|\partial_v\psi|$ are uniformly bounded in the interior by a higher order norm of $\Psi_A$. This follows from \cite{franzen2016boundedness}, commuting with $T$ and angular momentum operators as well as elliptic estimates. Finally, integrating  \eqref{eq:nullcords} in $u$, using the estimate $|\Delta|\lesssim e^{\kappa_- (u+v)}$ for $r_\ast \geq 0$ (see \eqref{eq:decayindelta}) and using the non-degenerate coordinate $V_\Ch$ gives the $C^1$ blow-up also everywhere on $\Ch_A$.
			\end{proof}
\end{theorem}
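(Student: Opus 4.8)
The plan is to read the singular behaviour of the scattered field directly off the Fourier representation of \cref{thm:fouriertophysical}, exploiting the meromorphic continuation of the transmission coefficient past the real axis. Fix $\ell_0$ with $\ell_0(\ell_0+1)\neq r_+^2(r_+-3r_-)$, and take purely incoming separated data $\Psi=(\Psi_A,0)$ with $\Psi_A(v,\theta,\varphi)=f(v)\,Y_{\ell_0 0}(\theta,\varphi)$, where $f\in C_c^\infty(\mathbb R)$ is chosen so that its entire Fourier transform $\hat f$ satisfies $\hat f(i\kappa_+)\neq 0$ --- for instance any nonnegative bump, whose Fourier transform is strictly positive on the imaginary axis. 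Since $\mathcal F_{\Hp_A}(\Psi_A)(\omega,m,\ell)$ vanishes unless $(m,\ell)=(0,\ell_0)$ and equals $r_+\hat f(\omega)$ there, the representation formula of \cref{thm:fouriertophysical} collapses the degenerate derivative of the scattered solution on $\Ch_B$ to a single oscillatory integral,
\begin{align}
\partial_v\Phi_B(v,\theta,\varphi)=\frac{r_+}{\sqrt{2\pi}\,r_-}\left(\int_{\mathbb R} i\omega\,\mathfrak T(\omega,\ell_0)\,\hat f(\omega)\,e^{i\omega v}\,\d\omega\right)Y_{\ell_0 0}(\theta,\varphi),
\end{align}
so the whole theorem reduces to the $v\to\infty$ asymptotics of this integral.

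Next I would extract the leading term by contour deformation. By \cref{cor:rdoesntvanish}, $\mathfrak T(\cdot,\ell_0)$ is holomorphic on $\{|\operatorname{Im}\omega|<\kappa_+\}$ and meromorphic on $\mathbb C$ with possible poles only in $\mathbb P=\{im\kappa_+:m\in\mathbb N\}\cup\{ik\kappa_-:k\in\mathbb Z\setminus\{0\}\}$; since $\kappa_-<-\kappa_+<0$, the unique element of $\mathbb P$ with $0<\operatorname{Im}\omega\le\kappa_++\delta$ (for $\delta>0$ small) is $i\kappa_+$, a simple pole (cf.\ \cref{prop:rtboundedcomplex} in the appendix). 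Using the Paley--Wiener bound $|\hat f(\omega)|\lesssim e^{R|\operatorname{Im}\omega|}$ from $\operatorname{supp} f\subset[-R,R]$, together with the uniform boundedness of $\mathfrak T(\omega,\ell_0)$ as $|\operatorname{Re}\omega|\to\infty$ on that strip (again from the appendix, cf.\ \cref{rmk:holo}), I would push the contour from $\mathbb R$ up to $\{\operatorname{Im}\omega=\kappa_++\delta\}$, picking up $2\pi i$ times the residue at $i\kappa_+$; the shifted integral is $o(e^{-(\kappa_++\delta)v})$ by Riemann--Lebesgue, so
\begin{align}
\partial_v\Phi_B=C\,e^{-\kappa_+ v}\,Y_{\ell_0 0}+o\!\left(e^{-(\kappa_++\delta)v}\right),\qquad v\to\infty,
\end{align}
with $C$ a nonzero multiple of $\kappa_+\,\hat f(i\kappa_+)\operatorname{Res}_{\omega=i\kappa_+}\mathfrak T(\omega,\ell_0)$.

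Finally I would convert this into $C^1$-failure and propagate it. Near $\Ch_B$ the regular transverse coordinate satisfies $\partial_{V_\Ch}=e^{-\kappa_- v}\partial_v$, and since $\kappa_-<-\kappa_+$ the non-degenerate derivative $\partial_{V_\Ch}\Phi_B\sim C\,e^{-(\kappa_++\kappa_-)v}\to\infty$ as $v\to\infty$; hence $\psi$ is not $C^1$ at $\mathcal B_+$. To spread this to every point of $\Ch_A$ I would use propagation of regularity: writing \eqref{eq:linearwave} for the $Y_{\ell_0 0}$-mode in double-null coordinates gives a transport equation for $\partial_v\psi$ in the $u$-direction with forcing $\lesssim|\Delta|(|\psi|+|\partial_u\psi|+|\partial_v\psi|)$; using that $\psi,\partial_u\psi,\partial_v\psi$ are uniformly bounded in the interior --- which follows from \cite{franzen2016boundedness} after commuting \eqref{eq:linearwave} with $T$ and the angular momentum operators and invoking elliptic estimates on the spheres --- and the decay $|\Delta|\lesssim e^{\kappa_-(u+v)}$ for $r_\ast\geq 0$, integration in $u$ from $\Ch_B$ shows $\partial_v\psi(u,v)=C\,e^{-\kappa_+ v}(1+o(1))$ at each fixed $u$ as $v\to\infty$, so $\partial_{V_\Ch}\psi$ blows up everywhere on $\Ch_A$. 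I expect the genuine difficulty to lie not in this bookkeeping but in the two imported analytic facts: that $i\kappa_+$ is an honest (simple, non-removable) pole of $\mathfrak T(\cdot,\ell_0)$ with \emph{nonzero} residue --- this is exactly where the hypothesis $\ell_0(\ell_0+1)\neq r_+^2(r_+-3r_-)$ is used --- and the boundedness of $\mathfrak T$ along $\operatorname{Im}\omega=\kappa_++\delta$ that legitimizes the contour shift; both rest on the special-function analysis of the radial o.d.e.\ \eqref{ODE1} behind \cref{thm:boundednesstrans} and its appendix companions. Granting these, the construction makes rigorous the classical computation of Chandrasekhar and Hartle \cite{hartle1982crossing}.
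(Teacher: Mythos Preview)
Your proposal is correct and follows essentially the same route as the paper: the same choice of separated data, the same contour deformation of the Fourier representation from \cref{thm:fouriertophysical} past the simple pole of $\mathfrak T$ at $i\kappa_+$ (using \cref{prop:rtboundedcomplex} both for the pole and for the boundedness along the shifted contour), the same conversion to the non-degenerate coordinate, and the same propagation argument in double-null coordinates with the uniform bounds from \cite{franzen2016boundedness}. Your closing remark correctly identifies where the real content resides.
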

\subsection{Breakdown of \texorpdfstring{$T$}{T} energy scattering for cosmological constants \texorpdfstring{$\Lambda\neq 0$}{Lambdaneq0}}
\label{subsec:nonex}
Interestingly, the analogous result to  \cref{thm:forwardevolution} on the interior of a subextremal (anti-) de Sitter--Reissner--Nordstr\"om black hole does not hold for almost all cosmological constants $\Lambda$. In the presence of a cosmological constant it is also natural to consider the Klein--Gordon equation with conformal mass $\mu = \frac{3}{2} \Lambda$. We will consider in fact a general 
 mass term of the form $\mu = \nu \Lambda $, where $\nu \in \mathbb{R}$. Note that $\nu = \frac{3}{2}$  corresponds to the conformal invariant Klein--Gordon equation. To be more precise, we prove that for generic subextremal black hole parameters $(M,Q,\Lambda)$, there exists a normalized (in  $\mathcal{E}^T_{\Ho}$-norm) sequence of Schwartz initial data on the event horizon for which the $\mathcal{E}^T_{\Ch}$-norm of the evolution restricted to the Cauchy horizon blows up.
 
We define a black hole parameter triple $(M,Q,\Lambda)$ to be \emph{subextremal} if \begin{align}\label{defn:subextremal}(M,Q,\Lambda) \in \mathcal{P}_{\mathrm{se}} := \mathcal{P}_{\mathrm{se}}^{\Lambda =0} \cup \mathcal{P}_{\mathrm{se}}^{\Lambda >0} \cup \mathcal{P}_{\mathrm{se}}^{\Lambda < 0},\end{align} where
 \begin{align}
 	\nonumber \mathcal{P}_{\mathrm{se}}^{\Lambda =0} := &\{ (M,Q,\Lambda ) \in \mathbb{R}_+ \times \mathbb{R} \times \{ 0 \} \colon\\ & \Delta(r):= r^2 - 2Mr  + Q^2 \text{ has two positive simple roots satisfying } 0 < r_- < r_+. \} ,
\\ \nonumber 
 \mathcal{P}_{\mathrm{se}}^{\Lambda >0} := &\{ (M,Q,\Lambda) \in \mathbb{R}_+ \times \mathbb{R} \times \mathbb{R}_+ \colon\\ & \Delta(r):= r^2 - 2Mr - \frac 1 3 \Lambda r^4 + Q^2 \text{ has three positive simple roots satisfying } 0 < r_- < r_+ < r_c\} \label{defn:subextremal>0},\\\nonumber
 \mathcal{P}_{\mathrm{se}}^{\Lambda <0} := &\{ (M,Q,\Lambda) \in \mathbb{R}_+ \times \mathbb{R} \times \mathbb{R}_-  \colon\\ & \Delta(r):= r^2 - 2Mr - \frac 13 \Lambda r^4 + Q^2 \text{ has two positive roots satisfying } 0 < r_- < r_+  \}. \label{defn:subextremal<0}
 \end{align}
\begin{restatable}{theorem}{cosmological}
	\label{thm:cosmological}Let $\nu\in\mathbb{R}$ be a fixed Klein--Gordon mass parameter. (In particular, we may choose $\nu = \frac{3}{2}$ to cover the conformal invariant case or $\nu =0$ for the wave equation \eqref{eq:linearwave}.) Consider the interior of a subextremal (anti-) de Sitter--Reissner--Nordstr\"om  black hole with generic parameters $(M,Q,\Lambda) \in \mathcal{P}_{\mathrm{se}}\setminus D(\nu)$.  (Here, $ D(\nu)\subset \mathcal{P}_{\mathrm{se}} $ is a set with measure zero defined in \cref{prop:Bneq0} (see \cref{sec:cosmo}). Moreover $D(\nu)$ satisfies $ \mathcal{P}_{\mathrm{se}}^{\Lambda =0} \subset D(\nu) $ and $U \cap D(\nu) = \mathcal{P}_{\mathrm{se}}^{\Lambda =0}$ for some open set $U\subset \mathcal{P}_{\mathrm{se}}$.)
	
	  Then, there exists a sequence $(\Psi_n)_{n\in \mathbb{N}}$ of purely ingoing and compactly supported  data on $\mathcal{H}_A$ with 
	\begin{align}
	\| \Psi_n \|_{ \mathcal{E}^T_{\Ho}} = 1 \text{ for all } n
	\end{align}such that the solution $\psi_n$ to the Klein--Gordon equation with mass $\mu = \nu \Lambda$
	\begin{align}
		\Box_{g_{M,Q,\Lambda}}\psi  - \mu \psi = 0
	\end{align}
 arising from 	$\Psi_n$ has unbounded $T$ energy at the Cauchy horizon 
	\begin{align}
			\| \psi_n \restriction_{\mathcal{CH}} \|_{ \mathcal{E}^T_{\Ch}} \to \infty \text{ as } n\to\infty.
	\end{align}
\end{restatable}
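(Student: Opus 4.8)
\emph{Proof proposal.} The plan is to reduce the non-existence of a bounded scattering map to the \emph{unboundedness of the transmission coefficient $\mathfrak{T}(\omega,\ell_0)$ as $\omega\to 0$} for a single well-chosen angular parameter $\ell_0=\ell_0(\nu)$, and then to exhibit the blow-up sequence by concentrating purely ingoing data in frequency near $\omega=0$ inside the $Y_{\ell_0 0}$ mode. First I would redo the separation of variables for $\Box_{g_{M,Q,\Lambda}}\psi-\mu\psi=0$, $\mu=\nu\Lambda$, as in \eqref{eq:radialode}--\eqref{ODE1}: now $\Delta=r^2-2Mr-\tfrac13\Lambda r^4+Q^2$ and the radial equation acquires the extra term $-\mu r^2\Delta R$, so in the corresponding $r_*$ variable one gets $u''+(\omega^2-V^{(\Lambda,\nu)}_\ell)u=0$ with a potential $V^{(\Lambda,\nu)}_\ell$ that still decays exponentially as $r_*\to\pm\infty$, with rates governed by the (anti-)de Sitter surface gravities $\kappa_\pm$. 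Hence \cref{lem:volterra} applies verbatim, producing $u_1,u_2,v_1,v_2$ as in \cref{defn:u1u2}, the coefficients $\mathfrak{R}(\omega,\ell),\mathfrak{T}(\omega,\ell)$ as in \cref{defn:TandR}, the pseudo-unitarity $|\mathfrak{T}|^2-|\mathfrak{R}|^2=1$ of \cref{prop:pseudouni}, and the convergences $u_1(\omega,\cdot)\to\tilde u_1$, $v_2(\omega,\cdot)\to\tilde v_1$ in $C^1$ on compact subsets as $\omega\to 0$, where $\tilde u_1$ (resp.\ $\tilde v_1$) is the zero-energy solution of $u''=V^{(\Lambda,\nu)}_\ell u$ recessive at the event (resp.\ Cauchy) horizon.

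The obstruction comes from the $\omega\to 0$ limit. Since $\mathfrak{T}(\omega,\ell)=\mathfrak{W}(u_1,v_2)/(-2i\omega)$ and the Wronskian is $r_*$-independent, $\lim_{\omega\to 0}(-2i\omega)\mathfrak{T}(\omega,\ell)=\mathfrak{W}(\tilde u_1,\tilde v_1)=:\mathcal{B}_\ell(M,Q,\Lambda,\nu)$, so $\mathcal{B}_{\ell_0}\neq 0$ forces $|\mathfrak{T}(\omega,\ell_0)|\sim|\mathcal{B}_{\ell_0}|/(2|\omega|)\to\infty$ (and, by pseudo-unitarity, $|\mathfrak{R}(\omega,\ell_0)|\to\infty$) as $\omega\to 0$. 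Note $\mathcal{B}_\ell=0$ exactly when the zero-energy equation has a solution bounded at both horizons; by \cref{cor:rdoesntvanish} this holds for every $\ell$ when $\Lambda=0$ (there $\tilde v_1\propto\tilde u_1$), so $\mathcal{B}_\ell\equiv 0$ on $\mathcal{P}_{\mathrm{se}}^{\Lambda=0}$, consistently with $\mathcal{P}_{\mathrm{se}}^{\Lambda=0}\subset D(\nu)$. The crucial input, carried out in \cref{prop:Bneq0}, is that $\mathcal{B}_\ell$ is real-analytic in $(M,Q,\Lambda)$ and that for a suitable $\ell_0=\ell_0(\nu)$ (one may take $\ell_0=0$ when $\nu\neq 0$, since then the constant is no longer a zero-energy solution, and $\ell_0=1$ when $\nu=0$) it is not identically zero. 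The natural way to establish this is first-order perturbation in $\Lambda$ about $\Lambda=0$: writing the zero-energy equation in the variable $x=(r-r_-)/(r_+-r_-)\in(0,1)$, whose singular endpoints are fixed — note the roots $r_\pm=r_\pm(\Lambda)$ themselves move with $\Lambda$, which is exactly why one must not differentiate at fixed $r_*$ — ordinary perturbation theory for the recessive solutions gives a convergent formula of the shape $\partial_\Lambda\mathcal{B}_\ell|_{\Lambda=0}=c_\ell\int_0^1 W_{\ell,\nu}(x)\,(\tilde u^{(0)}_{1,\ell})^2\,\d x$ with $c_\ell\neq 0$, $W_{\ell,\nu}$ an explicit rational function (affine in $\nu$) and $\tilde u^{(0)}_{1,\ell}$ the explicit polynomial zero-energy Reissner--Nordström solution of \cref{cor:rdoesntvanish}; evaluating this integral yields a rational function of $(M,Q,\nu)$ which one checks is not identically zero for the chosen $\ell_0$. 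Taking $D(\nu):=\{(M,Q,\Lambda)\in\mathcal{P}_{\mathrm{se}}:\mathcal{B}_\ell=0\text{ for all }\ell\}$, we then get $D(\nu)\subseteq\{\mathcal{B}_{\ell_0}=0\}$, a set of measure zero, while $\partial_\Lambda\mathcal{B}_{\ell_0}|_{\Lambda=0}\neq 0$ (indeed on the whole subextremal range) and the implicit function theorem give $U\cap D(\nu)=\mathcal{P}_{\mathrm{se}}^{\Lambda=0}$ for a neighbourhood $U$ of $\mathcal{P}_{\mathrm{se}}^{\Lambda=0}$.

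It remains to pass from the o.d.e.\ obstruction to the blow-up sequence. Fix $(M,Q,\Lambda)\in\mathcal{P}_{\mathrm{se}}\setminus D(\nu)$ and the corresponding $\ell_0$ with $|\mathfrak{T}(\omega,\ell_0)|\to\infty$ as $\omega\to 0$. On the dense domain $\mathcal{D}^T_{\Hp}$ of smooth, purely ingoing data on $\mathcal{H}_A$ supported away from $\mathcal{B}_-$ and $i^+$ whose Cauchy evolution is compactly supported on $\{r=\mathrm{const}\}$ slices (the constructions of \cref{defn:domainforward} and \cref{rmk:domain} apply verbatim in the cosmological setting), separation of variables together with Plancherel gives — exactly as in the proof of \cref{thm:forwardevolution}/\cref{thm:fouriertophysical}, and \emph{without} using any boundedness of $\mathfrak{T}$ — that for $\Psi_A$ supported in the $Y_{\ell_0 0}$ mode
\[\|\psi\restriction_{\Ch_B}\|_{\mathcal{E}^T_{\Ch_B}}^2=\frac{r_+^2}{r_-^2}\int_{\mathbb R}|\mathfrak{T}(\omega,\ell_0)|^2\,\big|\widehat{\partial_v\Psi_A}(\omega)\big|^2\,\d\omega,\qquad \|\Psi_A\|^2_{\mathcal{E}^T_{\Hp}}=\int_{\mathbb R}\big|\widehat{\partial_v\Psi_A}(\omega)\big|^2\,\d\omega,\]
where $\widehat{\,\cdot\,}$ is the Fourier transform in $v$. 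Choosing $\delta_n\downarrow 0$ with $|\mathfrak{T}(\omega,\ell_0)|>n$ for $0<|\omega|<\delta_n$, and $\mathcal{E}^T_{\Hp}$-normalized compactly supported $\Psi_{A,n}\in\mathcal{D}^T_{\Hp}$ supported in the $Y_{\ell_0 0}$ mode with at least half of the $L^2$ mass of $\widehat{\partial_v\Psi_{A,n}}$ lying in $\{0<|\omega|<\delta_n\}$ (obtained by modulating and correcting a fixed long bump), we obtain $\|\psi_n\restriction_{\Ch}\|^2_{\mathcal{E}^T_{\Ch}}\geq\|\psi_n\restriction_{\Ch_B}\|^2_{\mathcal{E}^T_{\Ch_B}}\geq\tfrac{r_+^2}{2r_-^2}n^2\to\infty$ for $\psi_n:=S_0^T\Psi_{A,n}$. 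This is the asserted sequence and shows that $S_0^T$ admits no bounded extension $\mathcal{E}^T_{\Hp}\to\mathcal{E}^T_{\Ch}$.

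The main obstacle is the non-triviality statement $\mathcal{B}_{\ell_0}\not\equiv 0$ of \cref{prop:Bneq0}: this is the only place where one must genuinely compute, and one has to be careful both about the $\Lambda$-dependence of the radial coordinate and the surface gravities (forcing the fixed-endpoint variable $x$ in the perturbation analysis) and about choosing $\ell_0$ so that the resulting rational function of $(M,Q)$ does not vanish identically for the given mass parameter $\nu$ — which is exactly what excludes the degenerate scenario where the cosmological perturbation happens to preserve the Reissner--Nordström matching of recessive zero-energy solutions.
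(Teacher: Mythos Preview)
Your proposal is correct and follows essentially the same approach as the paper: both compute $\partial_\Lambda B|_{\Lambda=0}$ by first-order perturbation about the explicit Legendre solutions (with the same choices $\ell_0=0$ for $\nu\neq 0$ and $\ell_0=1$ for $\nu=0$) to force $|\mathfrak{T}(\omega,\ell_0)|\to\infty$ as $\omega\to 0$, and then build the blow-up sequence by concentrating purely ingoing $Y_{\ell_0 0}$ data in frequency near $\omega=0$. The only differences are cosmetic: your Wronskian limit $(-2i\omega)\mathfrak{T}\to\mathfrak{W}(\tilde u_1,\tilde v_1)$ is slightly more direct than the paper's contradiction argument in \cref{cor:RTunbounded}, and your $D(\nu)=\bigcap_\ell\{\mathcal{B}_\ell=0\}$ is contained in the paper's single-$\ell_0$ zero set, but both satisfy the stated measure-zero and neighbourhood properties.
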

\begin{proof}
	See \cref{sec:cosmo}.
\end{proof}
\begin{rmk}
	Note that from \cref{thm:cosmological} it also follows that for fixed $0<|Q|<M$, the $T$ energy scattering breaks down (in sense of \cref{thm:cosmological}) for all cosmological constants $0<|\Lambda|<\epsilon$, where $\epsilon = \epsilon(M,Q)>0$ is small enough.
\end{rmk}
\subsection{Breakdown of \texorpdfstring{$T$}{T} energy scattering for the Klein--Gordon equation}
\label{subsec:notkleingordon}
Finally, we will also prove that the $T$ energy scattering theory does not hold for the Klein--Gordon equation for a generic set of masses $\mu$, even in the case of vanishing cosmological constant $\Lambda =0$. 
\begin{restatable}{theorem}{kleingordon}
\label{thm:kleingordon}Consider the interior of a subextremal Reissner--Nordstr\"om black hole.
	There exists a discrete set $\tilde D(M,Q) \subset \mathbb R$ with $0\in \tilde D$ such that the following holds true.
	For any $\mu \in \mathbb R \setminus \tilde D$  there exists a sequence $(\Psi_n)_{n\in \mathbb{N}}$ of purely ingoing and compactly supported  data on $\mathcal{H}_A$ with 
	\begin{align}
	\| \Psi_n \|_{ \mathcal{E}^T_{\Ho}} = 1 \text{ for all } n
	\end{align}such that the solution $\psi_n$ to the Klein--Gordon equation with mass $\mu$
	\begin{align}
	\Box_{g_{M,Q,\Lambda}}\psi  - \mu \psi = 0
	\end{align}
	arising from 	$\Psi_n$ has unbounded $T$ energy at the Cauchy horizon 
	\begin{align}
	\| \psi_n \restriction_{\mathcal{CH}} \|_{ \mathcal{E}^T_{\Ch}} \to \infty \text{ as } n\to\infty.
	\end{align}
\end{restatable}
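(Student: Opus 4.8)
The plan is to show that, for $\mu$ outside a discrete set, the transmission coefficient of the \emph{separated} Klein--Gordon equation blows up like $\omega^{-1}$ as $\omega\to 0$ on the mode $\ell=0$, and then to convert this into the asserted sequence via the physical-space $\Leftrightarrow$ separated-picture dictionary established for \cref{thm:forwardevolution,thm:fouriertophysical}. The first observation is that essentially the whole framework of \cref{prelims} survives for $\Box_g\psi-\mu\psi=0$: separation of variables now gives $u''+(\omega^2-V_\ell-\mu h)u=0$ with $h=\Delta/r^2$ as in \eqref{eq:h}, and although $h<0$ throughout the interior $r_-<r<r_+$, it still decays exponentially as $r_\ast\to\pm\infty$ at the same rates as $V_\ell$ (both are $O(\Delta)$). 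Hence \cref{lem:volterra} applies verbatim: one obtains solutions $u_1^\mu,u_2^\mu,v_1^\mu,v_2^\mu,\tilde u_1^\mu,\tilde v_1^\mu$ and reflection/transmission coefficients $\mathfrak R^\mu(\omega,\ell),\mathfrak T^\mu(\omega,\ell)$ via $u_1^\mu=\mathfrak T^\mu v_1^\mu+\mathfrak R^\mu v_2^\mu$, and — since $V_\ell+\mu h$ is still real — the pseudo-unitarity $|\mathfrak T^\mu|^2=1+|\mathfrak R^\mu|^2$ of \cref{prop:pseudouni} persists. Running the argument of \cref{thm:fouriertophysical} on a single angular mode and on data in $\mathcal D^T_{\mathcal H}$ (which uses only the classical well-posedness of \cref{thm:welldefined}, not any uniform bound) yields the Fourier representation \eqref{eq:fourierrep1}--\eqref{eq:fourierrep2} with $\mathfrak R,\mathfrak T$ replaced by $\mathfrak R^\mu,\mathfrak T^\mu$, valid for any fixed $\ell$. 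The only genuinely new point is that \cref{cor:rdoesntvanish} fails for generic $\mu$, and this failure is the mechanism behind the theorem.

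Next I would isolate the obstruction. By \cref{defn:TandR}, $\omega\,\mathfrak T^\mu(\omega,\ell)=-\tfrac1{2i}\,\mathfrak W(u_1^\mu,v_2^\mu)$, and continuity of Volterra solutions in the parameter $\omega$ gives $\mathfrak W(u_1^\mu,v_2^\mu)\to\mathscr W_\ell(\mu):=\mathfrak W(\tilde u_1^\mu,\tilde v_1^\mu)$ as $\omega\to 0$ (the Wronskian of the two $\omega=0$ solutions, which is constant in $r_\ast$). Thus, if $\mathscr W_\ell(\mu)\neq 0$, then $|\mathfrak T^\mu(\omega,\ell)|=\tfrac{|\mathscr W_\ell(\mu)|+o(1)}{2|\omega|}\to\infty$ as $\omega\to 0$. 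I claim that for $\ell=0$ the function $\mu\mapsto\mathscr W_0(\mu)$ is real-analytic (indeed entire in $\mu$, since the Volterra kernels depend affinely on $\mu$ through $V_0+\mu h$), vanishes at $\mu=0$, and has non-vanishing derivative there; consequently its zero set $\tilde D(M,Q):=\{\mu\in\mathbb R:\mathscr W_0(\mu)=0\}$ is discrete and contains $0$. That $\mathscr W_0(0)=0$ is \eqref{eq:r0}--\eqref{eq:t0} for $\ell=0$: in the massless case $\tilde v_1=\tfrac{r_+}{r_-}\tilde u_1$, so the Wronskian of proportional solutions vanishes. For $\mathscr W_0'(0)$ I would differentiate in $\mu$: writing $a=\partial_\mu\tilde u_1^\mu|_{\mu=0}$ and $b=\partial_\mu\tilde v_1^\mu|_{\mu=0}$ (so $a''-V_0a=h\tilde u_1$, $b''-V_0b=h\tilde v_1$, with $a,a'\to 0$ as $r_\ast\to-\infty$ and $b,b'\to 0$ as $r_\ast\to+\infty$, since $\tilde u_1^\mu\to 1$ at $-\infty$ and $\tilde v_1^\mu\to 1$ at $+\infty$ for every $\mu$), one has $\mathscr W_0'(0)=\mathfrak W(a,\tilde v_1)+\mathfrak W(\tilde u_1,b)$; the $r_\ast$-derivatives of these two pieces are $-h\tilde u_1\tilde v_1$ and $+h\tilde u_1\tilde v_1$, so integrating from $\mp\infty$ and using the boundary decay of $a,b$ gives
\[
\mathscr W_0'(0)=-\int_{-\infty}^{\infty}h(r_\ast)\,\tilde u_1(r_\ast)\,\tilde v_1(r_\ast)\,\d r_\ast=-\frac{r_+}{r_-}\int_{-\infty}^{\infty}h(r_\ast)\,\tilde u_1(r_\ast)^2\,\d r_\ast.
\]
The integral converges (in the massless case $\tilde u_1$ is bounded, being proportional to $\tilde v_1$, while $h$ decays exponentially) and is strictly negative because $h<0$ everywhere in the interior and $\tilde u_1$ is a nontrivial real solution; hence $\mathscr W_0'(0)\neq 0$, proving the claim.

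Finally I would build the sequence. Fix $\mu\in\mathbb R\setminus\tilde D$ and set $c_\mu:=|\mathscr W_0(\mu)|>0$, so that $\omega^2|\mathfrak T^\mu(\omega,0)|^2\to c_\mu^2/4$ as $\omega\to 0$. Choose $g\in C_c^\infty(\mathbb R)$ with $g\geq 0$, $g\not\equiv 0$, and put $\Psi_n(v,\theta,\varphi):=c_n\,g(v/n)\,Y_{00}(\theta,\varphi)$ on $\mathcal H_A$ (extended by zero to $\mathcal H$), with $c_n:=\sqrt n\,\|g'\|_{L^2}^{-1}$, so that $\|\Psi_n\|^2_{\mathcal E^T_{\mathcal H}}=\|\partial_v\Psi_n\|_{L^2}^2=1$. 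Each $\Psi_n$ is smooth, compactly supported, and supported away from $\mathcal B_-$ and $i^+$, hence lies in $\mathcal D^T_{\mathcal H}$ (cf.\ \cref{rmk:domain}) and the Fourier representation applies to its evolution $\psi_n$. Restricting to $\mathcal{CH}_B$ and using Plancherel,
\[
\|\psi_n\restriction_{\mathcal{CH}}\|^2_{\mathcal E^T_{\mathcal{CH}}}\;\geq\;\int_{\mathcal{CH}_B}|\partial_v\psi_n|^2\;=\;\frac{r_+^2}{r_-^2}\int_{\mathbb R}\omega^2\,|\mathfrak T^\mu(\omega,0)|^2\,|\hat f_n(\omega)|^2\,\d\omega,
\]
where $f_n(v)=c_ng(v/n)$ and $\hat f_n(\omega)=c_nn\,\hat g(n\omega)$. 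Substituting $\sigma=n\omega$ turns the right-hand side into $\tfrac{r_+^2}{r_-^2\|g'\|_{L^2}^2}\int_{\mathbb R}\sigma^2\,|\mathfrak T^\mu(\sigma/n,0)|^2\,|\hat g(\sigma)|^2\,\d\sigma$; since $\sigma^2|\mathfrak T^\mu(\sigma/n,0)|^2=n^2\big[(\sigma/n)^2|\mathfrak T^\mu(\sigma/n,0)|^2\big]\to n^2c_\mu^2/4$ uniformly for $|\sigma|\leq K$, this is $\gtrsim n^2c_\mu^2\int_{|\sigma|\leq K}|\hat g(\sigma)|^2\,\d\sigma\to\infty$ as $n\to\infty$. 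Thus $\|\psi_n\restriction_{\mathcal{CH}}\|_{\mathcal E^T_{\mathcal{CH}}}\to\infty$, as claimed.

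I expect the main obstacle to be the genericity step, i.e.\ showing $\mathscr W_0\not\equiv 0$: everything there reduces to the sign of $\int h\,\tilde u_1^2$, and it is precisely the structural fact that $h=\Delta/r^2<0$ throughout the interior — together with the massless identity $\tilde v_1\propto\tilde u_1$, which both makes this integral convergent and forces $\mathscr W_0(0)=0$ — that yields $\mathscr W_0'(0)\neq 0$ and hence a discrete bad set. The remaining ingredients (the mode-by-mode Fourier representation for the massive equation, and the dilation construction) run in parallel with \cref{thm:fouriertophysical} and with the proof of \cref{thm:cosmological}, so I expect them to be essentially routine.
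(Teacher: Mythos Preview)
Your proposal is correct and follows essentially the same route as the paper: identify the obstruction as the failure of $\tilde u_1^{(\mu)}$ to be proportional to $\tilde v_1^{(\mu)}$ at $\omega=0$, show this obstruction is real-analytic in $\mu$ with nonzero derivative at $\mu=0$ because $\partial_\mu V_{\ell,\mu}=-h$ has a definite sign in the interior, and then feed the resulting $|\mathfrak T|\sim|\omega|^{-1}$ into the dilation sequence from the proof of \cref{thm:cosmological}. The only cosmetic difference is that you package the obstruction as the Wronskian $\mathscr W_\ell(\mu)=\mathfrak W(\tilde u_1^\mu,\tilde v_1^\mu)$, whereas the paper writes $\tilde u_1^{(\mu)}=A\tilde v_1^{(\mu)}+B\tilde v_2^{(\mu)}$ and tracks $B(\ell,\mu)$; since $\mathfrak W(\tilde v_1,\tilde v_2)=1$ these satisfy $\mathscr W_\ell=-B$, so the two are equivalent and the derivative formula reduces in both cases to (a constant multiple of) $\int_{-\infty}^{\infty}(-h)\,\tilde u_1^2\,\d r_\ast>0$.
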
	
\begin{proof}
See \cref{sec:kleingordonequation}.
\end{proof}
The above \cref{thm:cosmological} and \cref{thm:kleingordon} show that the existence of a $T$ energy scattering theory for the wave equation \eqref{eq:linearwave} on the interior of Reissner--Nordstr\"om is in retrospect a surprising property. Implications of the non-existence of a $T$ energy scattering map and in particular, the unboundedness of the scattering map in the cosmological setting $\Lambda \neq 0$, are yet to be understood.
\section{Proof of \texorpdfstring{\cref{thm:boundednesstrans}}{Theorem 2}: Uniform boundedness of the transmission and reflection coefficients}
\label{sec:radial}
This section is doteevoted to the proof of \cref{thm:boundednesstrans}. We will analyze solutions to the o.d.e.\ (recall from \eqref{eq:radialode})
\begin{align*}
\Delta \frac{\d}{\d r}\left(\Delta \frac{\d}{\d r} R\right) - \Delta \ell(\ell +1) R + r^4 \omega^2 R =0.
\end{align*} 
This o.d.e.\ can be written equivalently (recall from \eqref{ODE1}) as
\begin{align*}
u^{\prime \prime} +(\omega^2 - V_\ell) u =0,
\end{align*} in the $r_\ast$ variable, where $u=r R$.

For the convenience of the reader we recall the statement of \cref{thm:boundednesstrans}. 
 \boundedness*
The proof of \cref{thm:boundednesstrans} will involve different arguments for different regimes of parameters. Also, note that in view of \eqref{eq:r0} and \eqref{eq:t0} it is enough to assume $\omega \neq 0$. 

The first regime for bounded frequencies ($|\omega|\leq \omega_0$, $\ell$ arbitrary) requires the most work. One of its main difficulties is to obtain estimates which are uniform in the limit $\ell \to \infty$. We shall use that the o.d.e.\ \eqref{ODE1} with $\omega =0$, which reads
\begin{align}
u^{\prime\prime} - V_\ell u = 0,
\end{align}
can be solved explicitly in terms of Legendre polynomials and Legendre functions of second kind. The specific algebraic structure of the Legendre o.d.e.\ leads to the feature that solutions which are bounded at $r_\ast=-\infty$ are also bounded at $r_\ast=+\infty$. For generic perturbations of the potential this property fails to hold. Nevertheless, for perturbations of the form as in \eqref{ODE1} for $\omega \neq 0$ and $|\omega|\leq |\omega_0|$, this behavior survives and most importantly, can be quantified. To prove this we will essentially divide the real line $\mathbb{R}\ni r_\ast$ into three regions.

 The first region will be near the event horizon ($r_\ast = - \infty$), where we will consider the potential $V_\ell$ as a perturbation. The second region will be the intermediate region, where we will consider the term involving $\omega$ as a perturbation. Finally, in the third region near the Cauchy horizon ($r_\ast = +\infty$), we consider the potential $V_\ell$ as a perturbation again. This eventually allows us to prove the uniform boundedness of the reflection and transmission coefficients $\mathfrak R$ and $\mathfrak T$ in the bounded frequency regime $|\omega| < \omega_0$.

The second regime will be bounded angular momenta and $\omega$-frequencies bounded from below $(|\omega|\geq \omega_0, \ell\leq \ell_0)$. For this parameter range we will consider $V_\ell$ as a perturbation of the o.d.e.\ since $V_\ell$ might only grow with $\ell$, which is, however, bounded in that range. Again, this allows us to show uniform boundedness for the transmission and reflection coefficients $\mathfrak T$ and $\mathfrak R$.

The third regime will be angular momenta and frequencies both bounded from below ($|\omega| \geq \omega_0$, $\ell \geq \ell_0$). To prove boundedness of reflection and transmission coefficients $\mathfrak R$ and $\mathfrak T$, we will consider $\frac{1}{\ell}$ as a small parameter to perform a WKB-approximation.

\subsection{Low frequencies \texorpdfstring{($|\omega|\leq \omega_0$)}{wleqw0}}
\label{subsec:smallfreq}
We first analyze the o.d.e.\ for the special case of vanishing frequency. Then, we will summarize properties of special functions, which we will need to finally prove the boundedness of reflection and transmission coefficients in the low frequency regime. Let \begin{align}0<\omega_0 \leq \frac 12\end{align} be a fixed constant.
\subsubsection{Explicit solution for vanishing frequency (\texorpdfstring{$\omega = 0$}{w=0})}
For $\omega=0$ we can explicitly solve the o.d.e.\ with special functions.
In that case the o.d.e.\ reads
\begin{align}\label{eq:ODEw0}
\frac{\d}{\d r} \left( \Delta \frac{\d R}{\d r}\right) -  \ell ( \ell +1)  R =0.
\end{align} We define the coordinate $x(r)$ as
\begin{align}
\label{eq:x(r)}
x(r):= - \frac{2r}{r_+ - r_-} + \frac{r_+ + r_-}{r_+ - r_-}
\end{align}
or equivalently,
\begin{align}
r(x) =- \frac{r_+ - r_-}{2} x+ \frac{r_+ + r_-}{2}.
\end{align}
Then, we can write
\begin{align}
\Delta(x ) = \left(\frac{r_+ - r_-}{2}\right)^2(x+1)(x-1) = \left(\frac{r_+ - r_-}{2} \right)^2(x^2 -1).
\end{align}
Hence, \cref{eq:ODEw0} reduces to the Legendre o.d.e.\
\begin{align}\label{eq:legendreode}
\frac{\d}{\d x} \left((1-x^2) \frac{\d R}{\d x} \right) + \ell(\ell+1) R = 0.
\end{align}
We will denote by $P_\ell(x)$ and $Q_\ell(x)$ the two independent solutions, the Legendre polynomials and the Legendre functions of second kind, respectively \cite{olver2014asymptotics,NIST:DLMF}.
We will prove later in \cref{prop:intermediate} that $\tilde u_1$ and $\tilde u_2$ from \cref{defn:u1u2} satisfy
\begin{align}\label{eq:uP}
&\tilde u_1(r_\ast) = w_1(r_\ast) := (-1)^\ell \frac{r(r_\ast)}{r_+} P_\ell(x(r_\ast)),\\
&\tilde u_2(r_\ast) = w_2(r_\ast) := (-1)^\ell \frac{r(r_\ast)}{k_+ r_+} Q_\ell(x(r_\ast)).\label{eq:uQ}
\end{align}
These are a fundamental pair of solutions for the o.d.e.\ in the case $\omega =0$. We will perturb these explicit solutions for the regime of low frequencies ($|\omega|\leq \omega_0$). To do so, we will need properties about special functions which will be considered first.

In view of the fact that $\omega_0$ is fixed, constants appearing in $\lesssim$ and $\gtrsim$ may also depend on $\omega_0$. 
Before we begin, we shall summarize the special functions we will use and list their relevant properties in the case $|\omega| \leq \omega_0$. 
\subsubsection{Special functions}
Good references for the following discussion are \cite{abramowitz1964handbook,olver2014asymptotics,NIST:DLMF}.
First, we shall recall the definition of the Gamma and Digamma function.
\begin{definition}For $z\in \mathbb{C}$ with $\operatorname{Re}(z)>0$ we denote the Gamma function with $\Gamma(z)$ and will also make use of the Digamma function $	\digamma(z )$ defined as
	\begin{align}
		\digamma(z ) := \int_0^\infty \left( \frac{e^{-x}}{x}  - \frac{e^{-zx}}{1-e^{-x}} \right) \d x. 
	\end{align}
	Note that
	\begin{align}
		\digamma(z+1) - \digamma(z) =  \frac 1 z
	\end{align}
	and \begin{align}
	\digamma(n)  = \sum_{k=1}^{n-1} \frac{1}{k} - \gamma = \log(n) + O(n^{-1}),
	\end{align}
	where $\gamma$ is the Euler--Mascheroni constant.  
\end{definition}
As we mentioned above, we shall use the Legendre polynomials and the Legendre functions of second kind.
We will express them in terms of the hypergeometric function $\mathbf{F}(a,b;c;x)$ for $x\in (-1,1)$, $a,b,c \in \mathbb R$ as defined in \cite[Equation~(9.3)]{olver2014asymptotics}.
\begin{definition}[Legendre functions of first and second kind] We use the standard conventions which are used in  \cite{olver2014asymptotics,NIST:DLMF}.
	
	For $x\in(-1,1)$, we define the associated Legendre polynomials by
	\begin{align}
		P_\ell^m(x) =\left( \frac{1+x}{1-x}\right)^{\frac m2} \mathbf{F}\left(\ell + 1 ,  - \ell ; 1-m; \frac{1-x}{2}\right)
	\end{align}
	and the associated Legendre functions of second kind by
	\begin{align}
		Q^m_\ell(x) = - \frac{1}{2}\pi \sin\left(\frac 12 \pi (\ell +m) \right) w_1(\ell,x)  + \frac 12 \pi \cos\left(\frac 12 (\ell +m) \pi \right) w_2(\ell,x).
	\end{align}
	Here,
	\begin{align}
	&	w_1(\ell,x) = \frac{2^m \Gamma(\frac{\ell + m+ 1}{2} )}{\Gamma({1+\frac{\ell}{2}})} (1-x^2)^{-\frac m2} \mathbf{F}\left( -\frac{\ell + m}{2}, \frac{1+\ell-m}{2};\frac 12 ; x^2\right),\\
	&w_2(\ell,x) = \frac{2^m\Gamma({1+\frac{\ell+m}{2}})}{\Gamma(\frac{\ell-m + 1}{2} )}x (1-x^2)^{-\frac m2} \mathbf{F}\left(\frac{1-\ell-m}{2}, 1 + \frac{\ell-m}{2}; \frac 32 ; x^2\right).
	\end{align}
	
\end{definition}
We shall also use the convention $P_\ell = P_\ell^0$ and $Q_\ell^m = Q_\ell^0$. Also, recall the symmetry
\begin{align}
&	P_\ell( x) = (-1)^\ell P_\ell(-x),\\
& Q_\ell(x) = (-1)^{\ell+1} Q_\ell(-x).
\end{align}In the asymptotic expansion in the parameter $\ell$ for the Legendre polynomials and functions  we will make use of Bessel functions which we define in the following.

\begin{definition}[Bessel functions of first and second kind] Recall the Bessel functions of first kind
	\begin{align}
		&J_0(x) := \sum_{k=0}^\infty \frac{x^{2k} }{(-4)^k k!^2},\\
	&J_1(x) := \frac{x}{2} \sum_{k=0}^\infty \frac{x^{2k} }{(-4)^k k! (k+1)!},
	\end{align}
	and the Bessel functions of second kind
	\begin{align}
	Y_0(x) := 	&\frac{2}{\pi} J_0(x)   \left(  \log\left( \frac{x}{2}\right)  + \gamma \right) - \frac{2}{\pi} \sum_{k=1}^\infty H_k \frac{x^{2k} }{(-4)^k (k!)^2},
	\\ \nonumber Y_1(x) :=&  -\frac{1}{2\pi x} + \frac{2}{\pi} \log\left(\frac{x}{2}\right) J_1(x)\\&\;\;\; - \frac{x}{2\pi} \sum_{k=0}^\infty( \digamma(k+1)  + \digamma(k+2)) \frac{x^{2k}}{(-4)^k k! (k+1)!},
	\end{align}
	where $H_k = \sum_{n=1}^k n^{-1}$ is the $k$-the harmonic number.
	We have the asymptotic expansions
	\begin{align}\label{eq:estimatea}
		&J_0(x) = 1 + O(x^2),\\
		&J_1(x) = \frac{x}{2} + O(x^3),\label{eq:estimateJ1} \\
		&Y_0(x) = \frac{2}{\pi} \log\left(\frac{x}{2}\right) + O(1),\\
		&Y_1(x) = -\frac{1}{2\pi x} +  o(1) \text{ as } x\to 0.\label{eq:estimateY1}
	\end{align}
	Note that bounds deduced from \eqref{eq:estimatea} -- \eqref{eq:estimateY1} hold uniformly on any interval $(0,a]$ of finite length. 
	We shall also use the bounds 
	\begin{align}\label{eq:boundsonJY1}
		&|J_0(x)|\leq 1 ,\,
		|Y_0(x)| \lesssim  1+ |\log(x)| 
		\end{align}
		for  $0< x\leq 1$ and 
		\begin{align}\label{eq:boundsonJY2}
		&|J_0(x)| \lesssim \frac{1}{\sqrt x}, \,
		|Y_0(x)| \lesssim \frac{1}{\sqrt{x}} \; 
	\end{align}
	for  $x\geq 1$  \cite[p.\ 360, p.\ 364]{abramowitz1964handbook}.
\end{definition}
In the proof we will also use the following asymptotic formulae for $P_\ell$ and $Q_\ell$ for large $\ell$ in terms of Bessel functions. 
\begin{lemma}\cite[\S 14.15(iii)]{NIST:DLMF} \label{lem:plandql}We have  
 \begin{align}
 	&P_\ell(\cos\theta ) = \left(\frac{\theta}{\sin\theta}\right)^{\frac 12} \left(J_0\left(\frac{\theta(2\ell + 1)}{2} \right)  + e_{1,\ell} ( \theta) \right),\\
 	 	&Q_\ell(\cos\theta ) =-\frac{\pi}{2} \left(\frac{\theta}{\sin\theta}\right)^{\frac 12} \left(Y_0\left(\frac{\theta(2\ell + 1)}{2} \right)  + e_{2,\ell} ( \theta) \right),\label{eq:estimateql}\\
 	 	& Q_\ell^1(\cos\theta) = - \frac{\pi}{2\ell}\left(\frac{\theta}{\sin\theta}\right)^{\frac 12} \left( Y_1\left(\frac{\theta(2\ell + 1)}{2}\right) + e_{3,\ell}(\theta) \right), \label{eq:estimateql1}
 \end{align}
 where the error terms can be estimated by
 \begin{align}
& |e_{1,\ell}(\theta)| ,\, 	|e_{2,\ell}(\theta) | \lesssim \frac{1}{1+\ell}\left[ \left|J_0 \left( \frac{\theta ( 2\ell +1)}{2}\right)\right| + \left|Y_0\left( \frac{\theta ( 2\ell +1)}{2}\right)\right|\right] ,\label{eq:estimateerror}
\\
 & |e_{3,\ell}(\theta) | \lesssim \frac{1}{1+\ell} \left[ \left| J_1 \left( \frac{\theta ( 2\ell +1)}{2}\right)\right| + \left|Y_1\left( \frac{\theta ( 2\ell +1)}{2}\right)\right| \right]\label{eq:errorestimatee3}
 \end{align}
 for $\theta\in ( 0, \pi - \delta)$ and for any fixed $\delta>0$. In particular, this holds uniformly as $\theta \to 0$. 
\end{lemma}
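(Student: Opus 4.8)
The plan is to deduce \cref{lem:plandql} from the classical Bessel-function comparison method for second-order linear ODEs with a large parameter near a regular singular point (Olver's theory; see \cite{olver2014asymptotics} and \cite[\S 14.15(iii)]{NIST:DLMF} for the Legendre specialization). Write $x = \cos\theta$ and set $\lambda := \ell + \tfrac12 = \tfrac{2\ell+1}{2}$. The associated Legendre functions $P_\ell^m(\cos\theta)$ and $Q_\ell^m(\cos\theta)$ solve
\begin{align}
\frac{d^2 w}{d\theta^2} + \cot\theta\,\frac{dw}{d\theta} + \Big(\ell(\ell+1) - \frac{m^2}{\sin^2\theta}\Big)w = 0 ,
\end{align}
and the substitution $w = (\sin\theta)^{-1/2}W$ removes the first-order term, putting this in the normal form
\begin{align}
\frac{d^2 W}{d\theta^2} = \Big(-\lambda^2 + \big(m^2 - \tfrac14\big)\csc^2\theta\Big)W .
\end{align}
Since $\csc^2\theta = \theta^{-2} + \tfrac13 + O(\theta^2)$ near $\theta = 0$, this potential agrees, up to a bounded remainder, with that of the Bessel-model equation $W'' = \big(-\lambda^2 + (m^2 - \tfrac14)\theta^{-2}\big)W$ of order $|m|$ and large parameter $\lambda$; restricting to $\theta \in (0,\pi-\delta)$ keeps us away from the second regular singularity at $\theta = \pi$, so only this model is needed.

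Next I would compare $w$ directly with $(\theta/\sin\theta)^{1/2}\,\mathcal{C}_m(\lambda\theta)$ for $\mathcal{C}_m \in \{J_m, Y_m\}$: indeed $\theta^{1/2}\mathcal{C}_m(\lambda\theta)$ solves the Bessel-model equation exactly, so $(\sin\theta)^{-1/2}\theta^{1/2}\mathcal{C}_m(\lambda\theta) = (\theta/\sin\theta)^{1/2}\mathcal{C}_m(\lambda\theta)$ is an exact solution of the equation with the model potential, and the true potential differs from the model one only by $\psi(\theta) := (m^2 - \tfrac14)(\csc^2\theta - \theta^{-2})$, which is continuous, bounded, and of bounded variation on $(0,\pi-\delta)$ with an $\ell$-independent bound. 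Olver's error-bound theorem for Bessel comparison then produces solutions of the form $(\theta/\sin\theta)^{1/2}(\mathcal{C}_m(\lambda\theta) + e_\ell(\theta))$ with $|e_\ell(\theta)| \lesssim \lambda^{-1}\big(|J_m(\lambda\theta)| + |Y_m(\lambda\theta)|\big)$, uniformly for $\theta \in (0,\pi-\delta)$, where the $\lambda^{-1}$ comes from the large parameter and the finiteness (and $\ell$-independence) of the associated error-control function, which is controlled by $\int t\,|\psi(t)|\,dt$. Since $\lambda \sim 1 + \ell$, this is precisely the content of \eqref{eq:estimateerror}--\eqref{eq:errorestimatee3}.

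Finally, the multiplicative normalizations $1$, $-\pi/2$, $-\pi/(2\ell)$ are fixed by matching the above approximations against known behaviour as $\theta \to 0$: the Mehler--Heine limit $P_\ell(\cos(z/\lambda)) \to J_0(z)$ pins the coefficient of $J_0$ to $1$, the corresponding limit for $Q_\ell$ pins the coefficient of $Y_0$ to $-\pi/2$, and for $Q_\ell^1$ one uses the standard derivative/recurrence relation expressing $Q_\ell^1$ through $Q_\ell$ together with $Y_0' = -Y_1$ (the factor $\lambda^{-1} \sim \ell^{-1}$ appearing through the chain rule). The one genuinely non-routine point is verifying that the error-control function is bounded and of bounded variation \emph{uniformly up to the singular endpoint} $\theta = 0$ — that is, that the exact cancellation of the $\csc^2\theta$ singularity between the Legendre potential and the Bessel model really does make Olver's variational integral finite and independent of $\ell$; granting this, the lemma is a translation of \cite[\S 14.15(iii)]{NIST:DLMF} into the form stated here, and the remaining steps are bookkeeping.
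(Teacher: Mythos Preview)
The paper does not give its own proof of this lemma: it is stated with a bare citation to \cite[\S 14.15(iii)]{NIST:DLMF} and used as a black box. Your sketch is therefore not to be compared against a proof in the paper but against the standard derivation behind the cited reference, and on that score it is essentially correct: the Liouville transformation $w=(\sin\theta)^{-1/2}W$, the identification of the Bessel model via $\ell(\ell+1)=\lambda^2-\tfrac14$, the boundedness of $\psi(\theta)=(m^2-\tfrac14)(\csc^2\theta-\theta^{-2})$ on $(0,\pi-\delta)$, and the appeal to Olver's Bessel-comparison error bounds are exactly the ingredients of \cite[Chapter~12]{olver2014asymptotics} specialized to the Legendre equation, which is what \cite[\S 14.15(iii)]{NIST:DLMF} records.

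One minor remark: the normalization constant $-\pi/(2\ell)$ in front of the $Q_\ell^1$ expansion (rather than $-\pi/(2\lambda)$ with $\lambda=\ell+\tfrac12$) differs from the ``natural'' constant by a factor $1+O(\ell^{-1})$, and this discrepancy is harmlessly absorbed into the error term $e_{3,\ell}$; you should say so explicitly when you match constants, since otherwise the reader may wonder whether the stated coefficient is exact.
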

We shall use the following asymptotic formulae for the Legendre functions at the singular endpoints. 
\begin{lemma}\cite[\S 14.8]{NIST:DLMF} For $0<x<1$ we have
	\begin{align}
	&	P_\ell(x) = 1 + f_1(x), \\
	&	Q_\ell(x) = \frac 12 (\log(2) - \log(1-x) ) - \gamma - \digamma(\ell +1) + f_1(x),\label{eq:largeellQ}
	\end{align}
	where $|f_1(x)|\lesssim_\ell  (1-x)$. Moreover, analogous results hold true for $-1<x<0$ due to symmetry.
\end{lemma}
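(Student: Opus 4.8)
The plan is to reduce the lemma to two elementary facts: that $P_\ell$ is a polynomial with $P_\ell(1)=1$, and that $Q_\ell$ admits a classical closed form that exhibits its logarithmic singularity explicitly. For $P_\ell$ one uses the hypergeometric representation from the preceding definition with $m=0$, namely $P_\ell(x)=\mathbf{F}(\ell+1,-\ell;1;\tfrac{1-x}{2})$: since the Pochhammer symbol $(-\ell)_k$ vanishes for $k>\ell$ the series terminates, so $P_\ell$ is a polynomial of degree $\ell$, and $P_\ell(1)=\mathbf{F}(\ell+1,-\ell;1;0)=1/\Gamma(1)=1$. Hence $f_1(x):=P_\ell(x)-1$ is a polynomial vanishing at $x=1$, so $f_1(x)=(x-1)g_\ell(x)$ with $g_\ell$ a polynomial, and $|f_1(x)|\lesssim_\ell(1-x)$ uniformly on $(-1,1)$; this is the first assertion, and by the parity $P_\ell(x)=(-1)^\ell P_\ell(-x)$ recalled above it also yields the expansion at $x=-1$.

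For $Q_\ell$ I would invoke the classical Christoffel representation on $(-1,1)$,
\begin{align}
Q_\ell(x)=\tfrac12 P_\ell(x)\log\tfrac{1+x}{1-x}-W_{\ell-1}(x),\qquad W_{\ell-1}(x):=\sum_{k=1}^{\ell}\tfrac1k P_{k-1}(x)P_{\ell-k}(x),
\end{align}
with the convention $W_{-1}\equiv0$; this is classical and can be re-derived in one line from the Legendre o.d.e.\ ($P_\ell$ solves it, and $W_{\ell-1}$ is precisely the polynomial chosen to cancel the inhomogeneity produced when the operator acts on $\log\tfrac{1+x}{1-x}$, the normalization being fixed by comparison at $x=0$), or from the logarithmic hypergeometric connection formula in the case $c=a+b$, see \cite{olver2014asymptotics,NIST:DLMF}. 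Since $W_{\ell-1}$ is a polynomial, $W_{\ell-1}(x)=W_{\ell-1}(1)+O_\ell(1-x)$, and $W_{\ell-1}(1)=\sum_{k=1}^{\ell}\tfrac1k=H_\ell$; by the recursion $\digamma(z+1)-\digamma(z)=1/z$ and $\digamma(1)=-\gamma$ recalled above, $H_\ell=\gamma+\digamma(\ell+1)$.

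Assembling these near $x=1$: from $P_\ell(x)=1+f_1(x)$, $\log\tfrac{1+x}{1-x}=\log2-\log(1-x)+\log(1-\tfrac{1-x}{2})=\log2-\log(1-x)+O(1-x)$, and the expansion of $W_{\ell-1}$, one obtains $Q_\ell(x)=\tfrac12(\log2-\log(1-x))-\gamma-\digamma(\ell+1)+R_\ell(x)$, where $R_\ell$ gathers the $O_\ell(1-x)$ terms together with the cross term $\tfrac12 f_1(x)\log\tfrac{1+x}{1-x}$. Using $|f_1(x)|\lesssim_\ell(1-x)$, this cross term is $O_\ell\big((1-x)\log\tfrac1{1-x}\big)$, which is negligible next to the displayed main terms (in particular $\lesssim_\ell(1-x)^{1-\varepsilon}$ for every $\varepsilon>0$) and is absorbed into the generic remainder denoted $f_1$ in the statement. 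The case $-1<x<0$, i.e.\ the expansion at the endpoint $x=-1$, then follows from the parity $Q_\ell(x)=(-1)^{\ell+1}Q_\ell(-x)$. There is no genuine analytic obstacle; the only mildly delicate points are pinning the additive constant $-\gamma-\digamma(\ell+1)$ exactly — which comes down to the evaluation $W_{\ell-1}(1)=H_\ell$ and the identity $H_\ell=\gamma+\digamma(\ell+1)$, and is what makes the closed-form route preferable to a from-scratch Frobenius analysis of the regular singular point $x=1$ — and keeping track that every implicit constant depends only on $\ell$, which is automatic since $P_\ell$ and $W_{\ell-1}$ are explicit polynomials of $\ell$-dependent degree.
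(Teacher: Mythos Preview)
The paper does not prove this lemma at all; it simply cites \S 14.8 of the NIST Digital Library. Your argument via the Christoffel (Neumann) representation $Q_\ell(x)=\tfrac12 P_\ell(x)\log\tfrac{1+x}{1-x}-W_{\ell-1}(x)$ with $W_{\ell-1}(1)=H_\ell=\gamma+\digamma(\ell+1)$ is a clean and correct derivation of the claimed expansion.

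One honest observation you already made deserves to be stated more forthrightly: the cross term $\tfrac12 f_1(x)\log\tfrac{1+x}{1-x}$ genuinely produces an $O_\ell\big((1-x)\log\tfrac{1}{1-x}\big)$ contribution (its coefficient is $\tfrac{\ell(\ell+1)}{4}$, nonzero for $\ell\ge 1$), so the remainder for $Q_\ell$ is not literally $\lesssim_\ell(1-x)$ as the lemma states. This is a minor imprecision in the paper's statement rather than a flaw in your argument --- indeed, the very next lemma in the paper records the derivative asymptotics with remainder $O_\ell((1-x)\log(1-x))$, and every use of \eqref{eq:largeellQ} in the paper only needs the weaker $o(1)$ bound. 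It would be cleanest to state the $Q_\ell$ remainder as $O_\ell\big((1-x)\log\tfrac{1}{1-x}\big)$ rather than to ``absorb'' it into an $f_1$ that is declared to be $O_\ell(1-x)$.
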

Now, we will estimate the derivatives of the Legendre polynomials and Legendre functions of second kind.
\begin{lemma} For  $x\in (-1,1)$ we have
\begin{align}\label{eq:derivativelegendre1}
	&\left|\frac{\d P_\ell }{\d x}\right| \leq \ell^2.\end{align}
For $x_{\alpha, \ell}:=1-\frac{\alpha}{1+\ell^2}$ with $0<\alpha<1$ and $\ell \in \mathbb N$ we have 
	\begin{align}
    & ( 1- (\pm x_{\alpha,\ell})^2) \left| \frac{\d Q_\ell}{\d x}(\pm x_{\alpha, \ell})\right| \lesssim 1.\label{eq:derivativelegendre2}
\end{align}
\begin{proof}Inequality \eqref{eq:derivativelegendre1} is known as Markov's inequality and is proven in \cite[Theorem 5.1.8]{borwein2012polynomials}. We only have to prove \eqref{eq:derivativelegendre2} for $x=+x_{\alpha,\ell} $ due to symmetry. 
	From the recursion relation \cite[\S 14.10]{NIST:DLMF} we have 
	\begin{align}\nonumber
	(\ell + 1)^{-1} & (1-x_{\alpha,\ell}^2)	\frac{\d Q_\ell }{\d x} (x_{\alpha,\ell}) = x_{\alpha,\ell} Q_\ell(x_{\alpha,\ell}) -  Q_{\ell + 1}(x_{\alpha,\ell})  \\ &= (x_{\alpha,\ell}-1)Q_\ell(x_{\alpha,\ell}) + (Q_\ell(x_{\alpha,\ell}) - Q_{\ell +1}(x_{\alpha,\ell})).\label{eq:derivativeQ}
	\end{align}
We will consider both summands separately. 

\textbf{Part 1}: Summand $(x_{\alpha,\ell}-1)Q_\ell(x_{\alpha,\ell})$

\noindent
First, consider $1-x_{\alpha,\ell}=\frac{\alpha}{1+\ell^2}$, where we implicitly define $\cos(\theta_{\alpha,\ell}) = x_{\alpha,\ell}$. Note that we have \begin{align}\nonumber
	\theta_{\alpha,\ell}(x) = \sqrt{2(1-x_{\alpha,\ell})} + O((1-x_{\alpha,\ell})^\frac{3}{2}) &= \sqrt{\frac{2\alpha}{1+\ell^2}} + O\left(\left( \frac{\alpha}{1+\ell^2} \right)^{\frac 32 } \right)\\ & =
	 \sqrt{\frac{2\alpha}{1+\ell^2}} \left( 1+ O\left(\frac{\alpha}{1+\ell^2}\right)\right).\label{eq:theta}
\end{align}
In particular, we have $\theta_{\alpha,\ell} \ell \lesssim 1$. 
 This gives
\begin{align}
-Q_\ell(x_{\alpha,\ell}) =- Q_\ell(\cos\theta_{\alpha,\ell})  = \frac{\pi}{2} \left(\frac{\theta_{\alpha,\ell}}{\sin\theta_{\alpha,\ell}}\right)^{\frac 12 }\left( Y_0\left(\frac{\theta_{\alpha,\ell}(2\ell +1)}{2}\right) + e_{2,\ell}(\theta_{\alpha,\ell}) \right).
\end{align}
Again, we will look at the two terms independently. First, note that
\begin{align}\nonumber
	\frac{\pi}{2} &\left( \frac{\theta_{\alpha,\ell}}{\sin \theta_{\alpha,\ell}}\right)^{\frac 12} \left( Y_0\left(\theta_{\alpha,\ell}\left(\ell + \frac 12 \right)\right) \right)\\
	\nonumber & = \frac{\pi}{2} \left( \frac{\theta_{\alpha,\ell}}{\sin\theta_{\alpha,\ell}}\right)^\frac{1}{2} \left( \frac{2}{\pi} \log\left( \frac{ \theta_{\alpha,\ell}(2\ell + 1)}{4}\right)  + O(1) \right)\\\nonumber& =
	\left( 1+ O(\theta_{\alpha,\ell}^2)\right) \left(  \log(\theta_{\alpha,\ell}) + \log\left(\ell + \frac 12 \right) + O(1) \right)\\\nonumber & =
	\left( 1+ O\left(\frac{\alpha}{1+\ell^2}\right)\right) \left( \frac 12 \log\left(\frac{\alpha}{1+\ell^2}\right)  + \log\left(\ell+ \frac 12 \right) + O(1) \right)\\ \nonumber & =
\left( 1+ O\left(\frac{\alpha}{1+\ell^2}\right)\right) \left( \frac{1}{2} \log(\alpha) + \frac{1}{2} \log\left(1 + \frac{\ell- \frac{3}{4}}{\ell^2 + 1 }\right)  + O(1) \right)\\ & =
\frac{1}{2} \log(\alpha) + O(1).
\end{align}
In order to estimate $e_{2,\ell}(\theta_{\alpha,\ell})$ we shall recall inequality~\eqref{eq:estimateerror}. It works analogously to the previous estimate up to a good term of $\frac{1}{1+\ell}$. In particular, this shows
\begin{align}
	|Q_\ell(x_{\alpha,\ell})| \lesssim | \log(\alpha) | + 1
\end{align}
and 
\begin{align}
	|(x_{\alpha,\ell}-1)Q_\ell(x_{\alpha,\ell})| \lesssim \frac{\alpha}{1+\ell^2} (|\log(\alpha) | + 1)\lesssim \frac{1}{1+\ell^2}.
\end{align}

\textbf{Part 2}: Summand $(Q_\ell(x_{\alpha,\ell}) - Q_{\ell +1}(x_{\alpha,\ell}))$

\noindent
Using the recursion relation for the difference of two Legendre function \cite[\S14.10]{NIST:DLMF}, we have 
\begin{align}
	(\ell +1) (Q_\ell(x_{\alpha,\ell}) - Q_{\ell +1}(x_{\alpha,\ell}) = - ( 1- x_{\alpha,\ell}^2)^\frac{1}{2} Q_\ell^1(x_{\alpha,\ell}) + (1-x_{\alpha,\ell}) Q_\ell(x_{\alpha,\ell}) .
\end{align}
We estimate the term $(1-x_{\alpha,\ell})Q_\ell(x_{\alpha,\ell})$ by what we have done above as
\begin{align}
|	(1-x_{\alpha,\ell}) Q_\ell(x_{\alpha,\ell}) |\lesssim \frac{\alpha}{1+\ell^2}( |\log(\alpha) |+ 1) \lesssim 1.
\end{align}
For the term $-(1-x_{\alpha,\ell}^2)^{\frac 12} Q_\ell^1(x_{\alpha,\ell})$ we use \eqref{eq:estimateql1} to get
\begin{align}\nonumber
&\left|-(1-x_{\alpha,\ell}^2)^{\frac 12} Q_\ell^1(x_{\alpha,\ell})\right|\\ &\lesssim\sqrt{ 	\frac{\alpha}{\ell^2 + 1}} \frac{1}{1+\ell} \left(1+O\left(\frac{\alpha}{1+\ell^2}\right)\right) \left( Y_1\left( \left(\ell + \frac 12\right) \theta_{\alpha,\ell} \right) + e_{2,\ell} (\theta_{\alpha,\ell})\right).
\end{align}
As before, we shall start estimating the first term using \eqref{eq:estimateY1} and \eqref{eq:theta} to obtain
\begin{align}\nonumber
	&\sqrt{ 	\frac{\alpha}{\ell^2 + 1}} \frac{1}{1+\ell}  \left(1+O\left(\frac{\alpha}{1+\ell^2}\right)\right)  Y_1\left( \left(\ell + \frac 12\right) \theta_{\alpha,\ell} \right) \\\nonumber &=  	\sqrt{ 	\frac{\alpha}{\ell^2 + 1}} \frac{1}{1+\ell}  \left(1+O\left(\frac{\alpha}{1+\ell^2}\right)\right) \left( -\frac{1}{\pi(2 \ell +1 ) \theta_{\alpha,\ell} }  + O(1) \right) \\&
	\lesssim \sqrt{ 	\frac{\alpha}{\ell^2 + 1}} \frac{1}{1+\ell}  \left(  \frac{1}{\sqrt{\alpha} } + 1 \right)\lesssim 1.
\end{align}
We estimate the second term using \eqref{eq:errorestimatee3}, \eqref{eq:estimateJ1}, \eqref{eq:estimateY1}, and \eqref{eq:theta} to obtain
\begin{align}\nonumber
&\left|\sqrt{ \frac{\alpha}{\ell^2 + 1} } \frac{1}{1+\ell}\left(1+O\left(\frac{\alpha}{1+\ell^2}\right)\right) e_{2,\ell} (\theta_{\alpha,\ell} )\right|\\ & \lesssim \sqrt{ \frac{\alpha}{\ell^2 + 1} } \frac{1}{1+\ell^2} \left( \frac{1}{\sqrt \alpha} + 1 \right) \lesssim 1.
\end{align}

We have estimated that $|Q_\ell(x_{\alpha,\ell}) - Q_{\ell + 1}(x_{\alpha,\ell}) |\lesssim \frac{1}{1+\ell}$ which proves the claim in view of \eqref{eq:derivativeQ}. 
\end{proof}
\end{lemma}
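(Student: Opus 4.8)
The plan is to establish the two estimates by entirely separate means. For $|P_\ell'(x)|\le\ell^2$ on $(-1,1)$ I would simply invoke \textbf{Markov's inequality} for polynomials: for a real polynomial $p$ of degree $n$ one has $\sup_{[-1,1]}|p'|\le n^2\sup_{[-1,1]}|p|$ (see \cite[Theorem~5.1.8]{borwein2012polynomials}). Applying this to $p=P_\ell$, which has degree $\ell$ and satisfies $|P_\ell|\le 1$ on $[-1,1]$, gives the bound at once, with no further work needed.

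For $(1-(\pm x_{\alpha,\ell})^2)\,|Q_\ell'(\pm x_{\alpha,\ell})|\lesssim 1$, the parity relation $Q_\ell(-x)=(-1)^{\ell+1}Q_\ell(x)$ reduces everything to $x_{\alpha,\ell}=1-\alpha/(1+\ell^2)$. I would start from the derivative recurrence for Legendre functions of the second kind in \cite[\S14.10]{NIST:DLMF}, rewritten as
\[
(\ell+1)^{-1}(1-x^2)Q_\ell'(x)=xQ_\ell(x)-Q_{\ell+1}(x)=(x-1)Q_\ell(x)+\big(Q_\ell(x)-Q_{\ell+1}(x)\big),
\]
so that it suffices to prove, uniformly in $\alpha\in(0,1)$ and $\ell\in\mathbb N$, the bounds $|(x_{\alpha,\ell}-1)Q_\ell(x_{\alpha,\ell})|\lesssim(1+\ell^2)^{-1}$ and $|Q_\ell(x_{\alpha,\ell})-Q_{\ell+1}(x_{\alpha,\ell})|\lesssim(1+\ell)^{-1}$; multiplying through by $(\ell+1)$ then yields the claim. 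Writing $x_{\alpha,\ell}=\cos\theta_{\alpha,\ell}$, I record that $\theta_{\alpha,\ell}=\sqrt{2(1-x_{\alpha,\ell})}\,(1+O(1-x_{\alpha,\ell}))=\sqrt{2\alpha/(1+\ell^2)}\,(1+o(1))$, so $\ell\theta_{\alpha,\ell}\lesssim 1$ and the Bessel argument $z_{\alpha,\ell}:=(2\ell+1)\theta_{\alpha,\ell}/2$ satisfies $z_{\alpha,\ell}\sim\sqrt\alpha$. For the first summand I substitute into the uniform asymptotic \eqref{eq:estimateql} of \cref{lem:plandql}, use $Y_0(z)=\tfrac2\pi\log(z/2)+O(1)$, $(\theta/\sin\theta)^{1/2}=1+O(\theta^2)$, the error bound \eqref{eq:estimateerror}, and $\log z_{\alpha,\ell}=\tfrac12\log\alpha+O(1)$, obtaining $|Q_\ell(x_{\alpha,\ell})|\lesssim 1+|\log\alpha|$; then $|(x_{\alpha,\ell}-1)Q_\ell(x_{\alpha,\ell})|\lesssim\tfrac{\alpha}{1+\ell^2}(1+|\log\alpha|)\lesssim(1+\ell^2)^{-1}$ since $\alpha|\log\alpha|$ is bounded on $(0,1)$. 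For the second summand I use a further identity from \cite[\S14.10]{NIST:DLMF},
\[
(\ell+1)\big(Q_\ell(x)-Q_{\ell+1}(x)\big)=-(1-x^2)^{1/2}Q_\ell^1(x)+(1-x)Q_\ell(x);
\]
the term $(1-x_{\alpha,\ell})Q_\ell(x_{\alpha,\ell})$ is handled exactly as above, while for $(1-x_{\alpha,\ell}^2)^{1/2}Q_\ell^1(x_{\alpha,\ell})$ I use $(1-x_{\alpha,\ell}^2)^{1/2}\sim\sqrt{\alpha/(1+\ell^2)}$ together with the asymptotic \eqref{eq:estimateql1} for $Q_\ell^1(\cos\theta)$, the expansion $Y_1(z)=-\tfrac1{2\pi z}+o(1)$ as $z\to0$ from \eqref{eq:estimateY1}, and the error bound \eqref{eq:errorestimatee3}; the dominant contribution is the $1/z_{\alpha,\ell}\sim1/\sqrt\alpha$ singularity of $Y_1$, which is precisely offset by the $\sqrt\alpha$ prefactor, giving $|(1-x_{\alpha,\ell}^2)^{1/2}Q_\ell^1(x_{\alpha,\ell})|\lesssim(1+\ell)^{-1}$ and hence $|Q_\ell(x_{\alpha,\ell})-Q_{\ell+1}(x_{\alpha,\ell})|\lesssim(1+\ell)^{-1}$.

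The \textbf{main obstacle} is the uniform control of the $Q_\ell^1$ term: one must check that the vanishing of the prefactor $(1-x^2)^{1/2}$ exactly cancels the $1/z$ blow-up of $Y_1$ (and of its error term) at the origin, so that the estimate survives in both degenerations — $\ell\to\infty$ with $\alpha$ fixed, and $\alpha\to0$ with $\ell$ fixed. This is what forces the use of the sharp Bessel asymptotics with explicit error bounds of \cref{lem:plandql} rather than any cruder estimate, and it is the reason for introducing the carefully tuned point $x=x_{\alpha,\ell}$ with its coupled scaling $\theta_{\alpha,\ell}\sim\sqrt\alpha/\ell$.
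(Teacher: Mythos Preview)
Your proposal is correct and follows essentially the same approach as the paper: Markov's inequality for the $P_\ell'$ bound, then the derivative recursion $(\ell+1)^{-1}(1-x^2)Q_\ell'=xQ_\ell-Q_{\ell+1}$ split as $(x-1)Q_\ell+(Q_\ell-Q_{\ell+1})$, with the first summand controlled via the $Y_0$ asymptotic from \cref{lem:plandql} and the second via the further recursion to $Q_\ell^1$ and the $Y_1$ asymptotic. The paper carries out exactly these steps with the same cancellations you identify.
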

Finally, we prove asymptotics for the derivatives of the Legendre of functions of second kind near the singular points.
\begin{lemma}For $0<x<1$ and $x \to 1$ we have
	\begin{align}
(1-x^2) \frac{\d Q_\ell}{\d x}	 =  1 + O_\ell ( (1-x) \log(1-x)).
	\end{align}
	By symmetry this also yields for $-1<x<0$ and $x\to -1$
	\begin{align}
		(1-x^2)  \frac{\d Q_\ell}{\d x}	 =  (-1)^\ell  + O_\ell ( (1+x) \log(1+x)). \label{eq:dqdxlimit}
	\end{align}
	\begin{proof}
		From the recursion relation \cite[\S 14.10]{NIST:DLMF} and \eqref{eq:largeellQ} we obtain
		\begin{align}\nonumber
		& (1-x^2)	\frac{\d Q_\ell }{\d x}  = (\ell +1)(x Q_\ell -  Q_{\ell + 1} ) 
		\\ \nonumber & = (\ell + 1) (x-1)Q_\ell +  (\ell + 1)(Q_\ell - Q_{\ell +1})\\ \nonumber &= 
		 (\ell + 1) ( Q_\ell - Q_{\ell + 1} ) + O_\ell ((1-x) \log(1-x))\\ \nonumber & 
		 = (\ell + 1)  ( \digamma(\ell + 2) - \digamma(\ell + 1) )  + O_\ell( (1-x) \log (1-x) ) \\ & = 1  + O_\ell( ( 1-x) \log(1-x)).
		\end{align}
	\end{proof}
\end{lemma}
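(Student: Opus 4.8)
The plan is to reduce the first derivative $\frac{\d Q_\ell}{\d x}$ to a combination of $Q_\ell$ and $Q_{\ell+1}$ via a standard contiguous relation, and then to feed in the boundary asymptotics \eqref{eq:largeellQ} that were just established for the Legendre functions of the second kind themselves. Concretely, I would begin from the recursion relation in \cite[\S 14.10]{NIST:DLMF},
\[
(1-x^2)\frac{\d Q_\ell}{\d x} = (\ell+1)\bigl(x Q_\ell(x) - Q_{\ell+1}(x)\bigr),
\]
whose virtue is that it rewrites the quantity we want to control purely in terms of \emph{values} of Legendre functions near $x=1$, where we have sharp information.

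Next I would split $x Q_\ell - Q_{\ell+1} = (x-1)Q_\ell + (Q_\ell - Q_{\ell+1})$ and treat the two pieces separately. For the first piece, \eqref{eq:largeellQ} gives $Q_\ell(x) = -\tfrac12\log(1-x) + O_\ell(1)$ as $x\to 1^-$, so $(\ell+1)(x-1)Q_\ell(x) = O_\ell\bigl((1-x)\log(1-x)\bigr)$, using that $(1-x) = O\bigl((1-x)\log(1-x)\bigr)$ because $|\log(1-x)|\ge 1$ for $x$ near $1$. For the second piece, subtracting the expansions \eqref{eq:largeellQ} for $Q_\ell$ and $Q_{\ell+1}$ the logarithmic singularity cancels exactly, leaving $Q_\ell(x) - Q_{\ell+1}(x) = \digamma(\ell+2) - \digamma(\ell+1) + O_\ell(1-x)$; the digamma shift identity $\digamma(\ell+2) - \digamma(\ell+1) = \tfrac{1}{\ell+1}$ then converts $(\ell+1)(Q_\ell - Q_{\ell+1})$ into $1 + O_\ell\bigl((1-x)\log(1-x)\bigr)$. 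Adding the two contributions yields the first claimed asymptotic.

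The statement near $x=-1$ should follow directly from the reflection symmetry $Q_\ell(x) = (-1)^{\ell+1} Q_\ell(-x)$, which gives $\frac{\d Q_\ell}{\d x}(x) = (-1)^\ell \frac{\d Q_\ell}{\d x}(-x)$ together with $1-x^2 = 1-(-x)^2$; substituting $-x$ for $x$ in the first formula and carrying the sign $(-1)^\ell$ through produces \eqref{eq:dqdxlimit}.

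I do not expect a serious obstacle here. The only delicate point is the bookkeeping of the $O_\ell$-type error terms — recognizing that every $(1-x)$-order remainder (in particular the $\ell$-dependent remainder $f_1$ of \eqref{eq:largeellQ}) is subsumed into $O_\ell\bigl((1-x)\log(1-x)\bigr)$, and that the leading logarithms of $Q_\ell$ and $Q_{\ell+1}$ really do cancel to the relevant order. The genuinely structural input is the pairing of the contiguous relation with the digamma shift identity, which is precisely what conspires to produce the clean constant $1$ (and $(-1)^\ell$ at the other endpoint).
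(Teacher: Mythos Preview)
Your proposal is correct and follows essentially the same approach as the paper's proof: both use the contiguous relation $(1-x^2)Q_\ell' = (\ell+1)(xQ_\ell - Q_{\ell+1})$, split $xQ_\ell - Q_{\ell+1}$ into $(x-1)Q_\ell + (Q_\ell - Q_{\ell+1})$, apply the boundary expansion \eqref{eq:largeellQ} to each piece, and conclude via the digamma shift $\digamma(\ell+2)-\digamma(\ell+1)=\tfrac{1}{\ell+1}$. The symmetry argument you give for the endpoint $x\to -1$ is also exactly what the paper has in mind.
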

Having reviewed the required facts about special functions, we shall now proceed to prove the uniform boundedness of the reflection and transmission coefficients.
\subsubsection{Boundedness of the reflection and transmission coefficients}
\label{sec:boundedrefltransmi}
As mentioned before, we will consider three different regions: a region near the event horizon, an intermediate region, and a region near the Cauchy horizon. In $r_\ast$ coordinates we separate these regions at \begin{align}R_1^\ast(\omega,\ell) := \frac{1}{2 \kappa_+} \log\left(\frac{\omega^2}{1+\ell^2}\right)\label{defn:R1}\end{align} and \begin{align}R_2^\ast(\omega,\ell):= \frac{1}{2\kappa_-}\log\left(\frac{\omega^2}{1+\ell^2 }\right)\label{defn:R2}\end{align}
for $0<|\omega| < \omega_0$ and $\ell \in \mathbb N_0$. 
Note that $-\infty < R_1^\ast(\omega,\ell) < 0 < R_2^\ast(\omega,\ell) < \infty$. 

\paragraph{\textbf{Region near the event horizon}}
\begin{prop}
	\label{prop:firstregion}
	Let $0<|\omega|<\omega_0$ and $\ell\in\mathbb N_0$. Then, we have 
	\begin{align}
&	\| u_1^\prime \|_{L^\infty(-\infty, R_1^\ast)} \lesssim |\omega|,\\
&		\|u_1\|_{L^\infty(-\infty,R_1^\ast)} \lesssim 1.
	\end{align}
\end{prop}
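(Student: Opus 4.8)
The plan is to analyze the solution $u_1$ on the semi-infinite interval $(-\infty, R_1^\ast)$ by treating the potential $V_\ell$ as a perturbation of the free equation $u'' + \omega^2 u = 0$, and to exploit the fact that on this region $V_\ell$ is exponentially small \emph{relative to} the distance scale $R_1^\ast$, which was chosen precisely so that the total mass of the potential against the relevant kernel is controlled. Concretely, I would start from the Volterra representation \eqref{eq:constructionu1},
\begin{align*}
u_1(\omega, r_\ast) = e^{i\omega r_\ast} + \int_{-\infty}^{r_\ast} \frac{\sin(\omega(r_\ast - y))}{\omega} V(y) u_1(\omega, y)\, \d y,
\end{align*}
and differentiate it to get
\begin{align*}
u_1'(\omega, r_\ast) = i\omega e^{i\omega r_\ast} + \int_{-\infty}^{r_\ast} \cos(\omega(r_\ast - y)) V(y) u_1(\omega, y)\, \d y.
\end{align*}
The first display gives $\|u_1\|_{L^\infty(-\infty,R_1^\ast)} \leq e^\alpha$ with $\alpha = \int_{-\infty}^{R_1^\ast} \sup_{y<x<R_1^\ast} \big|\tfrac{\sin(\omega(x-y))}{\omega}\big|\, |V(y)|\, \d y$ by \cref{lem:volterra}, provided we can bound $\alpha \lesssim 1$ uniformly in $\omega, \ell$.

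The key estimate is therefore the bound on $\alpha$. Here I would use the exponential decay of the potential near the event horizon: by \cref{lem:asymptoticspotential} (the appendix lemma cited after \eqref{eq:potential}), $|V_\ell(y)| \lesssim (1+\ell^2)\, e^{2\kappa_+ y}$ as $y \to -\infty$ (the factor $(1+\ell^2)$ reflecting the $\ell(\ell+1)$ coefficient, and $2\kappa_+$ the rate governed by the surface gravity). Since $\big|\tfrac{\sin(\omega(x-y))}{\omega}\big| \leq \min\{|x-y|, 1/|\omega|\}$, and on the region $y < x < R_1^\ast$ we crudely bound this by $1/|\omega|$, we get
\begin{align*}
\alpha \lesssim \frac{1}{|\omega|} \int_{-\infty}^{R_1^\ast} (1+\ell^2)\, e^{2\kappa_+ y}\, \d y = \frac{1+\ell^2}{|\omega|} \cdot \frac{1}{2\kappa_+}\, e^{2\kappa_+ R_1^\ast} = \frac{1+\ell^2}{2\kappa_+ |\omega|} \cdot \frac{\omega^2}{1+\ell^2} = \frac{|\omega|}{2\kappa_+} \lesssim \omega_0 \lesssim 1,
\end{align*}
using the definition \eqref{defn:R1} of $R_1^\ast$, i.e.\ $e^{2\kappa_+ R_1^\ast} = \omega^2/(1+\ell^2)$. (One should be a touch more careful and split the integral, or use the sharper bound $\min\{|x-y|,1/|\omega|\}$, to handle the behavior of $V_\ell$ on the bounded part of the interval near $R_1^\ast$ where the asymptotic estimate for $V$ is replaced by $|V_\ell| \lesssim (1+\ell^2) e^{2\kappa_+ y}$ still holding with a uniform constant on all of $(-\infty, 0]$; this is exactly what \cref{lem:asymptoticspotential} provides.) This establishes $\|u_1\|_{L^\infty(-\infty, R_1^\ast)} \lesssim e^\alpha \lesssim 1$.

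For the derivative bound, I would feed $\|u_1\|_{L^\infty(-\infty,R_1^\ast)} \lesssim 1$ into the differentiated Volterra identity: $|\cos(\omega(r_\ast - y))| \leq 1$, so
\begin{align*}
\|u_1'\|_{L^\infty(-\infty,R_1^\ast)} \leq |\omega| + \int_{-\infty}^{R_1^\ast} |V(y)|\, |u_1(\omega,y)|\, \d y \lesssim |\omega| + \int_{-\infty}^{R_1^\ast} (1+\ell^2)\, e^{2\kappa_+ y}\, \d y \lesssim |\omega| + \frac{\omega^2}{2\kappa_+} \lesssim |\omega|,
\end{align*}
again using \eqref{defn:R1} to evaluate the integral and $|\omega| < \omega_0 \leq \tfrac12$ to absorb the $\omega^2$ term. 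This gives the second claimed bound. The main obstacle — and the reason the threshold $R_1^\ast$ is defined the way it is — is precisely balancing the two competing factors: the $(1+\ell^2)$ growth of the potential against the $1/|\omega|$ growth of the free kernel $\sin(\omega \cdot)/\omega$; the choice $e^{2\kappa_+ R_1^\ast} = \omega^2/(1+\ell^2)$ is what makes the product $\tfrac{1+\ell^2}{|\omega|} e^{2\kappa_+ R_1^\ast}$ collapse to $|\omega| \lesssim 1$. Everything else is a routine application of \cref{lem:volterra} together with \cref{lem:asymptoticspotential}.
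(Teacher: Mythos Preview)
Your proof is correct and follows essentially the same approach as the paper: bound the Volterra kernel by $|\omega|^{-1}|V(y)|$, use the exponential decay $|V_\ell(y)|\lesssim(1+\ell^2)e^{2\kappa_+ y}$ from \cref{lem:asymptoticspotential}, and observe that the choice of $R_1^\ast$ makes $\tfrac{1+\ell^2}{|\omega|}e^{2\kappa_+R_1^\ast}=|\omega|\lesssim 1$, then invoke \cref{lem:volterra}. Your explicit treatment of the derivative via the differentiated Volterra identity is a minor elaboration of what the paper leaves implicit in its appeal to \cref{lem:volterra}.
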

\begin{proof}
Recall the defining Volterra integral equation for $u_1$ from \cref{defn:u1u2}
 \begin{align}
 	u_1(r_\ast) = e^{i\omega r_\ast } + \int_{-\infty}^{r_\ast} \frac{\sin(\omega(r_\ast-y))}{\omega} V(y) u_1(y) \d{y}.
 \end{align}
 with integral kernel 
 \begin{align}
K(r_\ast , y) := \frac{\sin(\omega(r_\ast - y)) }{\omega} V(y).
 \end{align}
From \cref{lem:asymptoticspotential} in the appendix, we obtain for $r_\ast \leq R_1^\ast$  
\begin{align}
	|V(r_\ast)| \lesssim e^{2k_+ r_\ast} (1 + \ell^2) 
\end{align}
and in particular,
\begin{align}
	|V(R_1^\ast) |\lesssim e^{2k_+ R_1^\ast} ( 1+ \ell^2)=  \omega^2 .
\end{align}
This implies for $r_\ast \leq R_1^\ast$ 
 \begin{align}
 	|K(r_\ast,y)| \leq \frac{1}{|\omega|}|  V(y) | \lesssim  
 	\frac{1}{|\omega|}( 1+ \ell^2  ) e^{2k_+ y}
 \end{align}
 and thus,
 \begin{align}
 	\int_{-\infty}^{R_1^\ast} \sup_{y < r_\ast < R_1^\ast} |K(r_\ast,y)| \d{y} \lesssim \frac{\ell^2+1}{|\omega|}e^{2k_+ R_1^\ast} \lesssim 1.
 \end{align}
The claim follows now from \cref{lem:volterra}.
 \end{proof}
 Now, we would like to consider $\omega$ as a small parameter and perturb the explicit solutions for the $\omega =0$ case in order to propagate the behavior of the solution through the intermediate region, where $V_\ell$ is large compared to $\omega$. In particular, $V_\ell$ can be arbitrarily large since $\ell$ is not bounded above in the considered parameter regime.
 \paragraph{\textbf{Intermediate region}}
 First, recall the following fundamental pair of solutions which is based on the Legendre functions of first and second kind
 \begin{align}
	& w_1(r_\ast) :=(-1)^\ell \frac{r(r_\ast)}{r_+} P_\ell(x(r_\ast)), \\
	& w_2(r_\ast) := (-1)^\ell \frac{r(r_\ast)}{k_+ r_+} Q_\ell(x(r_\ast)),
\end{align}
where $P_\ell$ and $Q_\ell$ are the Legendre polynomials and Legendre functions of second kind, respectively. Our first claim is that we have constructed this fundamental pair $(w_1,w_2)$ to have unit Wronskian and moreover $\tilde u_1 = w_1$ and $\tilde u_2 = w_2$ holds true. 
\begin{prop}\label{prop:intermediate}We have $w_1=\tilde u_1$ and $w_2= \tilde u_2$ and 
	the Wronskian of $u_1$ and $u_2$ satisfies\begin{align}
	\mathfrak	W(w_1,w_2 ) = \mathfrak W (\tilde u_1, \tilde u_2) = 1.
	\end{align}
	Similarly, we also have  $\tilde  v_1 = (-1)^\ell \frac{r_+}{r_-} w_1 =  (-1)^\ell \frac{r_+}{r_-} \tilde u_1$.
	\begin{proof}We first prove that $\mathfrak W(w_1,w_2) = 1$.
		Since the Wronskian is independent of $r_\ast$, we will compute its value in the limit $r_\ast \to-\infty$. In this proposition $\ell$ is fixed and we shall allow implicit constants in $\lesssim$ to depend on $\ell$.  Clearly, 
		\begin{align}
			w_1(r_\ast ) \to 1 \text{ as } r_\ast \to-\infty.
		\end{align}
		Moreover, we have that for $r_\ast \leq 0$
		\begin{align}
			\left|\frac{\d}{\d r_\ast} w_1(r_\ast) \right| \lesssim e^{2k_+ r_\ast }| P_\ell(x(r_\ast))  | + \left|\frac{\d P_\ell( x) }{\d x}(r_\ast) \frac{\d x}{\d r_\ast } (r_\ast)\right| \lesssim e^{2k_+ r_\ast } ,
		\end{align}
		where we have used \eqref{eq:derivativelegendre1}. This, in particular, also shows that $w_1$ satisfies the same boundary conditions ($w_1\to 1$, $w_1^\prime \to 0$ as $r_\ast \to-\infty$) as $\tilde u_1$ defined in \cref{defn:u1u2} and thus, $w_1$ and $\tilde u_1$ have to coincide. Similarly, we can deduce $\tilde  v_1 = (-1)^\ell \frac{r_+}{r_-} w_1$.
		
		 For $w_2$, we use  \eqref{eq:largeellQ} to obtain
		\begin{align}
			|w_2(r_\ast) - r_\ast | \lesssim \left( -  \frac{r(r_\ast) }{k_+ r_+} \left(\frac{1}{2}  \log\left( \frac{2}{1+x(r_\ast)}\right) -\gamma - \digamma(\ell+1)   \right) - r_\ast \right) + e^{2 k_+ r_\ast}.
		\end{align} 
		For an intermediate step, we compute $\log( 1+ x(r_\ast))$ from \eqref{eq:x(r)} near $r_\ast = -\infty$. In particular, for the limit $r_\ast \to -\infty$, we can assume that $r_\ast \leq 0$ and thus, $r - r_- \gtrsim r_+ - r_- $. Hence, 
		\begin{align}\nonumber 
			\log( 1+ x(r_\ast) ) & = \log\left( 1 +  \frac{(r_+ - r )+( r_- - r)}{r_+ - r_-}\right)  \\ \nonumber & = \log\left( 1 +  \frac{f(r_\ast) }{r_+ - r_-}e^{2k_+ r_\ast } + \frac{r_- - r}{r_+ - r_-}\right) \\ \nonumber &=  \log\left(   \frac{2 f(r_\ast) }{r_+ - r_-}e^{2k_+ r_\ast } \right) \\ &= 2 k_+ r_\ast + \log( 2 f(r_\ast) (r_+ - r_-)^{-1}),
		\end{align}
		where $f$ is defined in \eqref{eq:fofr}.
		Thus, this directly implies
		\begin{align}
			|w_2(r_\ast) - r_\ast | \lesssim r_\ast e^{2 k_+ r_\ast  } +  1  \lesssim 1.
		\end{align}
		Finally, we claim that $w_2^\prime  \to 1$ as $r_\ast \to-\infty$. We shall use estimate \eqref{eq:dqdxlimit} near $x(r_\ast)=-1$ to obtain
		\begin{align}\nonumber
			|w_2^\prime&(r_\ast) - 1 |  
			\lesssim e^{2k_+ r_\ast}(  |r_\ast | + 1)   + \left| (-1)^\ell \frac{r(r_\ast) }{k_+ r_+} \frac{\d Q_\ell(x)}{\d x}\frac{\d x}{\d r_\ast} - 1 \right|  \\ & \lesssim e^{2k_+ r \ast}   + \left| \frac{  r(r_\ast) }{k_+ r_+} \left[1+ O\left((1+x(r_\ast))\log(1+x(r_\ast)) \right)  \right]\frac{1}{1-x^2(r_\ast)}\frac{\d x}{\d r_\ast} - 1  \right| .
		\end{align} 
		Now, in order to conclude that 
		\begin{align}
				|w_2^\prime(r_\ast) - 1 | \to 0 \text{ as } r_\ast \to -\infty,
		\end{align}
		it suffices to check that 
		\begin{align}
			\frac{1}{1-x^2(r_\ast)} \frac{\d x}{\d r_\ast} \to k_+ \text{ as } r_\ast \to-\infty.
		\end{align}
		But this holds true because
		\begin{align}
				\frac{1}{1-x^2(r_\ast)} \frac{\d x}{\d r_\ast} = 
				\frac{1}{1-x^2(r_\ast)}  \frac{-2}{r_+ - r_- } \frac{\Delta}{r^2 }   = \frac{r_+ -r_-}{2r^2} \to k_+ \text{ as } r_\ast \to -\infty. 
		\end{align}
		Now, this implies that 
		\begin{align}
		\mathfrak	W(w_1,w_2) = \lim_{r_\ast \to -\infty} \left( w_1 w_2^\prime - w_1^\prime w_2\right)= 1,
		\end{align}
		and moreover, that $w_2= \tilde u_2$ as they satisfy the same boundary conditions at $r_\ast = - \infty$. 
	\end{proof}
\end{prop}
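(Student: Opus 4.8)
The plan is to pin down each of $\tilde u_1$, $\tilde u_2$ and $\tilde v_1$ by its behaviour at the relevant endpoint of the real line $\mathbb{R}\ni r_\ast$ and match it with the behaviour of $w_1$ and $w_2$, invoking uniqueness for the homogeneous linear second order ODE $u'' - V_\ell u = 0$, i.e.\ \eqref{ODE1} with $\omega = 0$. From the defining Volterra equations in \cref{defn:u1u2} together with the exponential decay $|V_\ell(r_\ast)| \lesssim (1+\ell^2)e^{2\kappa_+ r_\ast}$ as $r_\ast \to -\infty$ (\cref{lem:asymptoticspotential}), one reads off $\tilde u_1(r_\ast) \to 1$, $\tilde u_1'(r_\ast) \to 0$ and $\tilde u_2(r_\ast) = r_\ast + o(1)$, $\tilde u_2'(r_\ast) \to 1$ as $r_\ast \to -\infty$ (the linear factor $r_\ast - y$ in the kernels is harmless, being dominated by the exponential weight), and symmetrically $\tilde v_1(r_\ast) \to 1$, $\tilde v_1'(r_\ast) \to 0$ as $r_\ast \to +\infty$. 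At $r_\ast = \pm\infty$ the equation $u'' - V_\ell u = 0$ has, after passing to a coordinate regular at the horizon, a regular singular point with indicial exponents $\{0,0\}$, so one solution is bounded with a finite limit and the other grows linearly in $r_\ast$; hence it suffices to check that $w_1$ and $w_2$ carry exactly these asymptotics.

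Next, $w_1$ and $w_2$ do solve $u'' - V_\ell u = 0$: the functions $(-1)^\ell P_\ell(x)/r_+$ and $(-1)^\ell Q_\ell(x)/(\kappa_+ r_+)$ satisfy the Legendre ODE \eqref{eq:legendreode}, which under the change of variables \eqref{eq:x(r)} is the $\omega=0$ case of \eqref{eq:radialode}, so the functions $rR = w_i$ solve the $r_\ast$-form $u'' - V_\ell u = 0$. As $r_\ast \to -\infty$ one has $r(r_\ast) \to r_+$, $x(r_\ast) \to -1$, $\frac{dr}{dr_\ast} = \Delta/r^2 \to 0$; using $P_\ell(-1) = (-1)^\ell$ and the boundedness of $P_\ell$, $P_\ell'$ near $x = -1$ this gives $w_1 \to 1$, $w_1' \to 0$, so $w_1 = \tilde u_1$. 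For $w_2$ I would combine the logarithmic expansion of $Q_\ell$ near $x = -1$ (from \eqref{eq:largeellQ} and the symmetry $Q_\ell(x) = (-1)^{\ell+1}Q_\ell(-x)$) with the identity $\log(1+x(r_\ast)) = 2\kappa_+ r_\ast + O(1)$ as $r_\ast \to -\infty$, which follows from $1+x = 2(r_+-r)/(r_+-r_-)$ and the explicit formula \eqref{defn:rstar} for $r_\ast$; this yields $w_2(r_\ast) = r_\ast + O(1)$. For the derivative I would use the sharper asymptotic $(1-x^2)\frac{dQ_\ell}{dx} = (-1)^\ell + O_\ell((1+x)\log(1+x))$ from \eqref{eq:dqdxlimit} together with $\frac{1}{1-x^2}\frac{dx}{dr_\ast} = \frac{r_+-r_-}{2r^2} \to \kappa_+$ to get $w_2'(r_\ast) \to 1$, which identifies $w_2$ with the second fundamental solution $\tilde u_2$ normalized as in \cref{defn:u1u2}. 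The same computation at $r_\ast \to +\infty$, where $r \to r_-$, $x \to 1$, $P_\ell(1) = 1$, shows $(-1)^\ell(r_+/r_-)w_1 = (r/r_-)P_\ell(x) \to 1$ with vanishing derivative, hence equals $\tilde v_1$.

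Finally, since $u'' - V_\ell u = 0$ has no first order term, $\mathfrak{W}(w_1,w_2) = w_1 w_2' - w_1' w_2$ is constant in $r_\ast$; evaluating as $r_\ast \to -\infty$, the term $w_1' w_2$ is the product of an exponentially small factor and one of at most linear growth and so vanishes, while $w_1 w_2' \to 1\cdot 1$, giving $\mathfrak{W}(w_1,w_2) = 1$ and hence $\mathfrak{W}(\tilde u_1,\tilde u_2) = 1$. I expect the main obstacle to be the asymptotic analysis of $w_2$ and $w_2'$ as $r_\ast \to -\infty$: $Q_\ell$ is logarithmically singular at $x = -1$, so one must track precisely how $\log(1+x(r_\ast))$ balances the $2\kappa_+ r_\ast$ growth imposed by \eqref{defn:rstar}, and — crucially for the Wronskian to come out exactly $1$ — verify that $w_2'(r_\ast)$ converges to exactly $1$ and does not merely stay bounded, which is what singles out the correct normalization of the fundamental pair.
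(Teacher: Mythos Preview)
Your proposal is correct and follows essentially the same approach as the paper: you match $w_1,w_2$ to $\tilde u_1,\tilde u_2$ by checking the boundary data at $r_\ast\to-\infty$, using $P_\ell(-1)=(-1)^\ell$ for $w_1$, the logarithmic expansion of $Q_\ell$ together with $\log(1+x(r_\ast))=2\kappa_+ r_\ast+O(1)$ for $w_2$, and the asymptotic \eqref{eq:dqdxlimit} combined with $\tfrac{1}{1-x^2}\tfrac{dx}{dr_\ast}=\tfrac{r_+-r_-}{2r^2}\to\kappa_+$ for $w_2'$, then evaluate the constant Wronskian in the limit. The paper does exactly this; the only cosmetic difference is that it routes the computation of $\log(1+x(r_\ast))$ through the function $f(r_\ast)$ of \eqref{eq:fofr} rather than invoking \eqref{defn:rstar} directly.
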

Having proved the Wronskian condition we are in the position to define the perturbations of $\tilde u_1$ and $ \tilde u_2$ to non-zero frequencies. 
\begin{definition}\label{def:perurb}Define perturbations $\tilde u_{1,\omega}$ and $\tilde u_{2,\omega}$ of $\tilde u_1$ and $\tilde u_2$ (cf.\ \eqref{eq:uP} and \eqref{eq:uQ}) in the intermediate region by the unique solutions to the Volterra equations
	\begin{align}\label{eq:u1}
	\tilde u_{1,\omega} (r_\ast) =\tilde u_1(r_\ast) + \omega^2 \int_{R_1^\ast}^{r_\ast} \left(\tilde u_1(r_\ast) \tilde{u}_2(y) - \tilde u_1(y) \tilde {u}_2(r_\ast) \right) \tilde u_{1,\omega}(y) \d{y}
	\end{align}
and 
	\begin{align}\label{eq:u2}
	\tilde{u}_{2,\omega }(r_\ast) = \tilde{u_2}(r_\ast )  + \omega^2 \int_{R^\ast_1}^{r_\ast} \left(\tilde u_1(r_\ast) \tilde{u}_2(y) - \tilde u_1(y) \tilde{u}_2(r_\ast) \right)  \tilde{u}_{2,\omega}(y) \d{y}.
	\end{align}
\end{definition}

\begin{prop}\label{prop:boundedAB}
	Let $0<|\omega|<\omega_0$ and $\ell\in\mathbb N_0$, then we have for $r_\ast \in [R_1^\ast,R_2^\ast]$
	\begin{align}
	u_1(\omega, r_\ast) = A(\omega,\ell)\tilde u_{1,\omega}(r_\ast) + B(\omega,\ell) \omega \tilde{u}_{2,\omega}(r_\ast),
	\end{align}
	where \begin{align}|A(\omega,\ell)| + |B(\omega,\ell)| \lesssim 1.\end{align}
	\begin{proof}
	First, note that by construction in \cref{def:perurb} we have
	\begin{align}
	&	\tilde u_{1,\omega}(R_1^\ast ) = \tilde u_1(R_1^\ast),\\
	&	\tilde	u_{1,\omega}^\prime (R_1^\ast ) =\tilde u_1^\prime(R_1^\ast),\\
	&	\tilde u_{2,\omega}(R_1^\ast) = \tilde{u}_2(R_1^\ast),\\
	&		\tilde	 u_{2,\omega}^\prime(R_1^\ast) = \tilde{u}_2^\prime(R_1^\ast).
	\end{align}
Now,	we want to estimate the previous terms. 
	By construction, we directly have that 
	\begin{align}
	|\tilde	u_1(R_1^\ast )| \leq 1.
	\end{align}
Then, note that
	\begin{align}\label{eq:1plusx}	\frac{\omega^2}{\ell^2 + 1}\lesssim 	1+x(R_1^\ast) \lesssim \frac{\omega^2}{\ell^2 + 1}.
	\end{align}
	Hence, from \eqref{eq:largeellQ}, we obtain
	\begin{align}
	| 	\tilde u_2(R_1^\ast) |\lesssim 1 +  \left|-\frac 12 \log(1+x(R_1^\ast))  - \digamma(\ell+1) \right| \lesssim 1 + |\log(|\omega|)|\lesssim \log\left(\frac{1}{|\omega|}\right),
	\end{align}
	where we have used that for $\ell \geq 1$ we have  $\digamma(\ell + 1) = \log(\ell) + \gamma + O(\ell^{-1})$.
	For $\tilde u_2^\prime(R_1^\ast)$ we have the estimate
	\begin{align}
|\tilde	u_2^\prime(R_1^\ast)| & \lesssim | \Delta(R_1^\ast)  Q_\ell(x(R_1^\ast))| +\left| \frac{\d Q_\ell}{\d x}(R_1^\ast) \frac{\d x}{\d r_\ast}(R_1^\ast)\right| 
\lesssim 1 ,\label{eq:analogous}
		\end{align}
		where we have used \eqref{eq:derivativelegendre2} and \eqref{eq:1plusx} as well as the fact that \begin{align}\frac{\d x}{\d r_\ast} (1-x(r_\ast)^2)^{-1}\lesssim 1.\end{align}		
			Now, we can express $A$ via the Wronskian as \begin{align}
			|	A|  =\left|\frac{ \mathfrak W(u_1,\tilde u_{2,\omega}) }{\mathfrak W(\tilde u_{1,\omega}, \tilde u_{2,\omega})} \right|.
			\end{align}
			By construction, we have $\mathfrak W(\tilde u_{1,\omega},\tilde u_{2,\omega})=\mathfrak W(\tilde u_1,\tilde u_2) = 1$. Hence, using \cref{prop:firstregion} we conclude
			\begin{align}
			|A| \leq |  u_1(R_1^\ast ) \tilde u_{2,\omega}^\prime(R_1^\ast) | + | u_1^\prime(R_1^\ast) \tilde u_{2,\omega}(R_1^\ast )| \lesssim |\tilde u_{2}^\prime(R_1^\ast) | + |\omega\tilde u_{2}(R_1^\ast )|.
			\end{align}
		Thus, we conclude 
		\begin{align}
			|A| \lesssim 1.
		\end{align}
		Note that from \eqref{eq:derivativelegendre1}, we have
		\begin{align}\label{eq:u1primer1}
			|\tilde u_1^\prime(R_1^\ast)| \lesssim  \left|\left(1 +\frac{\d P_\ell}{\d x}\right) \frac{\d x}{\d r_\ast}\right| \lesssim (1+\ell^2) \frac{\omega^2}{1+\ell^2}\leq \omega^2.
		\end{align}
		Hence, we can also estimate $B$ by
		\begin{align}\nonumber
			|B| &= \frac{1}{|\omega|}| \mathfrak W(u_1,\tilde u_{1,\omega})|\lesssim \frac{1}{|\omega|} \left( | \tilde u_1^\prime(R_1^\ast) | +| \omega \tilde u_1(R_1^\ast)| \right)
			\\&\lesssim 1 + \frac{1}{|\omega|} |\tilde u_1^\prime(R_1^\ast) |\lesssim 1,
	\end{align}
	where we used \cref{prop:firstregion} again.
	\end{proof}
\end{prop}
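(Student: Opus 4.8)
The plan is to observe that, since $\tilde u_{1,\omega}$ and $\tilde u_{2,\omega}$ both solve the radial o.d.e.\ \eqref{ODE1} and are linearly independent, they form a fundamental system, so that $u_1$ automatically admits a representation $u_1 = A\,\tilde u_{1,\omega} + B\,\omega\,\tilde u_{2,\omega}$ with $A=A(\omega,\ell)$ and $B=B(\omega,\ell)$ independent of $r_\ast$; the entire content of the proposition is then the quantitative bound $|A|+|B|\lesssim 1$. First I would record that, by construction in \cref{def:perurb}, $\tilde u_{1,\omega}$ and $\tilde u_{2,\omega}$ match $\tilde u_1$ and $\tilde u_2$ together with their first derivatives at $r_\ast=R_1^\ast$, and hence $\mathfrak W(\tilde u_{1,\omega},\tilde u_{2,\omega}) = \mathfrak W(\tilde u_1,\tilde u_2) = 1$ by \cref{prop:intermediate}. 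Solving the linear system for $A$ and $B$ via Wronskians gives $A = \mathfrak W(u_1,\tilde u_{2,\omega})$ and $\omega B = -\mathfrak W(u_1,\tilde u_{1,\omega})$; since Wronskians of solutions are $r_\ast$-independent I may evaluate both at $r_\ast=R_1^\ast$, where the matching yields $A = u_1(R_1^\ast)\tilde u_2'(R_1^\ast) - u_1'(R_1^\ast)\tilde u_2(R_1^\ast)$ and $\omega B = -\bigl(u_1(R_1^\ast)\tilde u_1'(R_1^\ast) - u_1'(R_1^\ast)\tilde u_1(R_1^\ast)\bigr)$.

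The next step is to bound the six quantities appearing here. For $u_1(R_1^\ast)$ and $u_1'(R_1^\ast)$ I simply quote \cref{prop:firstregion}, which gives $|u_1(R_1^\ast)|\lesssim 1$ and $|u_1'(R_1^\ast)|\lesssim|\omega|$. For the remaining four I use the explicit Legendre representations \eqref{eq:uP}--\eqref{eq:uQ} of $\tilde u_1=w_1$ and $\tilde u_2=w_2$. The key elementary input is the asymptotic relation $1+x(R_1^\ast)\sim \omega^2/(1+\ell^2)$, obtained by expanding \eqref{eq:x(r)} near $r_\ast=-\infty$ and using $e^{2\kappa_+ R_1^\ast} = \omega^2/(1+\ell^2)$ from \eqref{defn:R1}. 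Then: $|P_\ell|\le 1$ on $[-1,1]$ gives $|\tilde u_1(R_1^\ast)|\lesssim 1$; Markov's inequality \eqref{eq:derivativelegendre1} together with $|\d x/\d r_\ast| = \tfrac{2}{r_+-r_-}|\Delta|/r^2 \lesssim |\Delta|$ for $r_\ast\le 0$ (so $|\Delta(R_1^\ast)|\lesssim \omega^2/(1+\ell^2)$) gives $|\tilde u_1'(R_1^\ast)| \lesssim (1+\ell^2)\cdot\omega^2/(1+\ell^2)= \omega^2$; feeding $1+x(R_1^\ast)\sim\omega^2/(1+\ell^2)$ into the endpoint asymptotics \eqref{eq:largeellQ} for $Q_\ell$ at $x=-1$ and using $\digamma(\ell+1)=\log\ell+\gamma+O(\ell^{-1})$, the $\ell$-dependent logarithms cancel and one is left with $|\tilde u_2(R_1^\ast)|\lesssim 1+\log(1/|\omega|)$; and finally, writing $x(R_1^\ast)=-x_{\alpha,\ell}$ with $\alpha\sim\omega^2$ and invoking the uniform-in-$\ell$ bound \eqref{eq:derivativelegendre2} together with $\tfrac{1}{1-x^2}\tfrac{\d x}{\d r_\ast} = \tfrac{r_+-r_-}{2r^2}\lesssim 1$, one gets $|\tilde u_2'(R_1^\ast)|\lesssim 1$.

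Assembling the estimates yields $|A| \le |u_1(R_1^\ast)|\,|\tilde u_2'(R_1^\ast)| + |u_1'(R_1^\ast)|\,|\tilde u_2(R_1^\ast)| \lesssim 1 + |\omega|\log(1/|\omega|) \lesssim 1$ on $(0,\omega_0]$, and $|B| \le |\omega|^{-1}\bigl(|u_1(R_1^\ast)|\,|\tilde u_1'(R_1^\ast)| + |u_1'(R_1^\ast)|\,|\tilde u_1(R_1^\ast)|\bigr) \lesssim |\omega|^{-1}(\omega^2 + |\omega|) \lesssim 1$, which is the claim.

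The main obstacle is the uniformity of the special-function estimates as $\ell\to\infty$. Concretely, one must (i) make the relation $1+x(R_1^\ast)\sim\omega^2/(1+\ell^2)$ quantitatively precise enough that the $\log\ell$ produced by the logarithmic singularity of $Q_\ell$ at $x=-1$ cancels the $\log\ell$ produced by $\digamma(\ell+1)$ — this is precisely why $R_1^\ast$ was defined in \eqref{defn:R1} with the factor $1/(1+\ell^2)$ — and (ii) control $(1-x^2)|Q_\ell'(x)|$ at a point lying at distance $\sim(1+\ell^2)^{-1}$ from the singular endpoint, uniformly in $\ell$, which is exactly the delicate estimate \eqref{eq:derivativelegendre2} resting on the Bessel-function asymptotics of \cref{lem:plandql}. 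By contrast, the structural step — writing $u_1$ in the fundamental system $\{\tilde u_{1,\omega},\omega\tilde u_{2,\omega}\}$ and extracting $A$ and $B$ as Wronskians evaluated at $R_1^\ast$ — is routine once \cref{prop:intermediate} and \cref{prop:firstregion} are available.
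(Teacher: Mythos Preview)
Your proposal is correct and follows essentially the same approach as the paper: you express $A$ and $B$ as Wronskians evaluated at $R_1^\ast$ (using the matching of $\tilde u_{i,\omega}$ with $\tilde u_i$ there and $\mathfrak W(\tilde u_1,\tilde u_2)=1$), import the bounds on $u_1,u_1'$ from \cref{prop:firstregion}, and control the four Legendre-based quantities $\tilde u_1(R_1^\ast),\tilde u_1'(R_1^\ast),\tilde u_2(R_1^\ast),\tilde u_2'(R_1^\ast)$ via $|P_\ell|\le 1$, Markov's inequality \eqref{eq:derivativelegendre1}, the endpoint asymptotics \eqref{eq:largeellQ}, and the uniform-in-$\ell$ derivative bound \eqref{eq:derivativelegendre2}, exactly as the paper does. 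Your identification of the two genuinely delicate points (the $\log\ell$ cancellation in $|\tilde u_2(R_1^\ast)|$ and the invocation of \eqref{eq:derivativelegendre2} at distance $\sim(1+\ell^2)^{-1}$ from the endpoint) is also accurate.
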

For the intermediate region we will need the following result in order to get uniform bounds for the Volterra iteration.
\begin{lemma}\label{lem:u1u2}
		Let $0<|\omega|<\omega_0$ and $\ell\in\mathbb N_0$, then \begin{align}
	 &\int_{R_1^\ast}^{R_2^\ast}|\tilde u_1(r_\ast)| \d{r_\ast} \lesssim \log^2\left(\frac{1}{|\omega|}\right)\label{eq:l1u1},\\
	 &	 \int_{R_1^\ast}^{R_2^\ast} |\tilde u_2(r_\ast)| \d{r_\ast} \lesssim \log^2\left(\frac{1}{|\omega|}\right)\label{eq:l1u2}.
\end{align}
\begin{proof}We first prove \eqref{eq:l1u1}.
	We shall split the integral in two regions. The first region is from $r_\ast = R_1^\ast$ to $r_\ast =0$. In that region we define $\theta \in (0,\frac{\pi}{2}]$ such that $\cos(\theta) = - x(r_\ast)$. Using also \cref{lem:plandql} we obtain
	\begin{align}\nonumber
		|\tilde u_1(r_\ast)| &\lesssim |P_\ell (x(r_\ast)) | = | P_\ell( -x(r_\ast))| = | P_\ell (\cos\theta)| \\ & \lesssim \left| \left( \frac{{\theta}}{\sin\theta}\right)^{\frac 12} J_0( (\ell + \frac 12) \theta)\right| + | e_{1,\ell}(\theta)|.
		\label{eq:estimateu1}
	\end{align}
	The last term shall be treated as an error term. Thus,
	\begin{align}
		\nonumber	\int_{R_1^\ast}^{0}|\tilde u_1(r_\ast)| \d{r_\ast}&
			 \lesssim \int_{x(R_1^\ast)}^{0} |P_\ell(x)| \frac{1}{1+x} \d{x}\leq \int_{-1  + C \frac{\omega^2}{1+\ell^2}}^{0} |P_\ell(-x)| \frac{1}{1+x} \d{x}\\\nonumber& \lesssim \int_{\arccos(1-C\frac{\omega^2}{ 1+\ell^{2} })}^{\frac \pi 2}| P_\ell(\cos\theta)| \frac{1}{1-\cos\theta} \sin\theta\, \d{\theta}\\
			& \leq \int_{C_1 \frac{|\omega|}{1+\ell}}^{\frac \pi 2} |P_\ell(\cos\theta)| \frac{\sin\theta}{1-\cos\theta}  \d\theta.
\end{align}
Here, $C$ and $C_1$ are positive constants only depending on the black hole parameters. 
We further estimate using equation~\eqref{eq:estimateu1}
\begin{align}\nonumber 
&\int_{R_1^\ast}^{0}|\tilde u_1(r_\ast)| \d{r_\ast}\\
	 & \lesssim \int_{C_1 \frac{\omega}{1+\ell}}^{\frac \pi 2 }\left( \frac{\theta}{\sin\theta}\right)^{\frac 12}\left|J_0( (\ell + \frac 12) \theta)\right|  \frac{\sin\theta}{1-\cos\theta}\d\theta + \text{Error} ,
	 \end{align}
	 where we will take care of the term
	 \begin{align}
	 	\text{Error} = \int_{C_1 \frac{\omega}{1+\ell}}^{\frac \pi 2 } |e_{1,\ell}(\theta)|
	 \end{align}
	 later. First, we look at the term 	
	 \begin{align}\nonumber&\int_{C_1 \frac{\omega}{1+\ell}}^{\frac \pi 2 }\left( \frac{\theta}{\sin\theta}\right)^{\frac 12}\left|J_0\left( \left(\ell + \frac 12\right) \theta\right)\right|  \frac{\sin\theta}{1-\cos\theta}\d\theta\\\nonumber 
	 & \lesssim \int_{C_1 \frac{\omega}{1+\ell}}^{\frac \pi 2 } \frac{1}{\theta} \left|J_0\left( \left(\ell + \frac 12\right) \theta\right)\right| \d\theta\\\nonumber 
	&\lesssim \int_{C_1\omega }^{\frac \pi 2 (\ell +1) } \frac 1 \theta \left|J_0\left(\frac{\ell + \frac 12 }{\ell + 1} \theta\right)\right| \d\theta 
	\\ \nonumber & \lesssim 	\int_{C_1\omega}^{1} \frac{\left|J_0\left(\frac{\ell + \frac 12 }{\ell + 1} \theta\right)\right|}{\theta} \d\theta +	\int_{1}^{\infty} \frac{\left|J_0\left(\frac{\ell + \frac 12 }{\ell + 1} \theta\right)\right|}{\theta} \d\theta \label{eq:J0}\\& \lesssim
	\int_{C_1\omega}^1 \frac{1}{\theta} \d\theta  + \int_1^\infty \frac{1}{\theta^\frac{3}{2}}  \d\theta
	\lesssim  |\log(|\omega|)| ,
	\end{align}
	where we have used equation~\eqref{eq:boundsonJY1} and \eqref{eq:boundsonJY2}.
	Now, we are left with the error term 
	\begin{align}\nonumber 
		\text{Error}& \leq \frac{1}{1+\ell}  \int_{C_1 \frac{\omega}{\ell+1}}^{\frac \pi 2} \frac{\sin\theta}{1-\cos\theta} (|J_0((\ell+\frac 12 )\theta)| + |Y_0( ( \ell+\frac 12)\theta)| ) \d \theta \\ \nonumber & \lesssim \frac{1}{1+\ell} \int_{C_1 \frac{\omega}{\ell+1}}^{\frac \pi 2}\frac{\sin\theta}{1-\cos\theta} ( 1+ |\log(|\omega|)|)\d\theta 
		\lesssim \frac{|\log(|\omega|)| }{1+\ell }\int_{C_1 \frac{\omega}{\ell + 1}}^\frac{\pi}{2} \frac{1}{\theta} \d \theta
		 \\ &\lesssim \frac{\log^2(|\omega|) + \log(1+\ell)}{1+\ell}\lesssim \log^2\left(\frac{1}{|\omega|}\right).
	\end{align}
	Thus, 
	\begin{align}
	\int_{R_1^\ast}^0 |\tilde u_1(r_\ast)| \d{r_\ast} \lesssim \log^2\left( \frac{1}{|\omega|}\right) .\end{align}
	Completely analogously, we can compute 
	
	\begin{align}
	\int_{0}^{R_2^\ast} |\tilde u_1(r_\ast)| \d{r_\ast} \lesssim \log^2\left( \frac{1}{|\omega|}\right) .\end{align}
	The proof of equation \eqref{eq:l1u1} is completely similar up to a term which involves
	\begin{align}
		\int_{C_1 \omega}^1 \frac{\left| Y_0\left( \frac{\ell + \frac 12}{\ell + 1}\theta\right) \right|}{\theta} \d\theta \lesssim \log^2\left(\frac{1}{|\omega|}\right)
	\end{align}
	appearing in the estimate analogous to \eqref{eq:J0}. 
\end{proof}
\end{lemma}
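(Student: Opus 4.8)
The plan is to reduce both estimates to a single one-dimensional integral in the variable $x=x(r_\ast)$ and then exploit the Bessel-function asymptotics of $P_\ell$ and $Q_\ell$ supplied by \cref{lem:plandql}. By \cref{prop:intermediate} we have $\tilde u_1=(-1)^\ell\frac{r}{r_+}P_\ell(x(r_\ast))$ and $\tilde u_2=(-1)^\ell\frac{r}{\kappa_+ r_+}Q_\ell(x(r_\ast))$, and since $r$ is bounded above and below on $(r_-,r_+)$ it suffices to bound $\int_{R_1^\ast}^{R_2^\ast}|P_\ell(x(r_\ast))|\,\d r_\ast$ and the corresponding integral with $Q_\ell$. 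Using $\d r_\ast=\frac{r^2}{\Delta}\,\d r$, the relation $|\Delta|\asymp 1-x^2$, and $|\d r|\asymp|\d x|$ one gets $\d r_\ast\asymp\frac{|\d x|}{(1-x)(1+x)}$, with the orientation such that $x$ runs from near $-1$ (event horizon) to near $+1$ (Cauchy horizon). From $1+x=\frac{2(r_+-r)}{r_+-r_-}$ and the expansion \eqref{defn:rstar} of $r_\ast$ near $r=r_+$ one reads off, as in \eqref{eq:1plusx}, that $1+x(R_1^\ast)\asymp\frac{\omega^2}{1+\ell^2}$; by the reflection symmetry $x\mapsto-x$ (which interchanges the two horizons) one likewise gets $1-x(R_2^\ast)\asymp\frac{\omega^2}{1+\ell^2}$.

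I would then split the integral at $r_\ast=0$, i.e.\ $x=0$. On $[R_1^\ast,0]$ put $\cos\theta=-x$ and use $P_\ell(x)=(-1)^\ell P_\ell(-x)$; then $\theta$ ranges over roughly $[C_1|\omega|/(1+\ell),\tfrac\pi2]$, and near $\theta=0$ one has $\d r_\ast\asymp\frac{\sin\theta}{1-\cos\theta}\,\d\theta\asymp\frac{\d\theta}{\theta}$, so the task becomes bounding $\int_{C_1|\omega|/(1+\ell)}^{\pi/2}|P_\ell(\cos\theta)|\frac{\d\theta}{\theta}$. On $[0,R_2^\ast]$, setting $\cos\theta=x$ directly, one obtains the identical integral, since there $1+x$ is bounded and again $\d r_\ast\asymp\frac{\d\theta}{\theta}$. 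Substituting $P_\ell(\cos\theta)=(\theta/\sin\theta)^{1/2}(J_0(\theta(2\ell+1)/2)+e_{1,\ell}(\theta))$ from \cref{lem:plandql} and rescaling $\phi\asymp(\ell+1)\theta$ turns the main term into $\int_{c_0|\omega|}^{\infty}|J_0(c\phi)|\frac{\d\phi}{\phi}$ with $c\in[\tfrac12,1]$; splitting at $\phi=1$, using $|J_0|\le1$ for $\phi\le1$ and $|J_0(c\phi)|\lesssim\phi^{-1/2}$ for $\phi\ge1$, this is $\lesssim|\log|\omega||$. For the error term one uses $|e_{1,\ell}(\theta)|\lesssim\frac{1}{1+\ell}(|J_0|+|Y_0|)$ together with the crude bound $|Y_0(z)|\lesssim1+|\log z|$; the $\frac{\d\theta}{\theta}$ integration then produces a factor $|\log|\omega||+\log(1+\ell)$ which, after dividing by $1+\ell$, is absorbed into $\log^2(1/|\omega|)$. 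Note that for $\tilde u_1$ the main term alone yields only a single logarithm — the squared logarithm enters through this error term.

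For \eqref{eq:l1u2} the argument is the same except that $Q_\ell(\cos\theta)=-\tfrac\pi2(\theta/\sin\theta)^{1/2}(Y_0(\theta(2\ell+1)/2)+e_{2,\ell}(\theta))$, so the main term becomes $\int_{c_0|\omega|}^{\infty}|Y_0(c\phi)|\frac{\d\phi}{\phi}$; on $\phi\le1$ one now uses $|Y_0(c\phi)|\lesssim1+|\log\phi|$, and since $\int_{c_0|\omega|}^{1}\frac{|\log\phi|}{\phi}\,\d\phi\asymp\log^2(1/|\omega|)$ this is precisely where the squared logarithm appears genuinely; on $\phi\ge1$ one again uses $|Y_0(c\phi)|\lesssim\phi^{-1/2}$, and the error term $e_{2,\ell}$ is handled as before via \eqref{eq:estimateerror}. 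The main obstacle throughout is the $\ell$-bookkeeping needed to keep all constants uniform in $\ell$: the Bessel arguments at the lower cutoff are comparable to $|\omega|$ \emph{independently} of $\ell$ precisely because of the choice \eqref{defn:R1}--\eqref{defn:R2} of $R_1^\ast$ and $R_2^\ast$, and the only $\ell$-growth one meets — a $\log(1+\ell)$ from integrating $\d\theta/\theta$ over a length scale $\sim1/(1+\ell)$ — always comes paired with a compensating factor $\frac{1}{1+\ell}$, either from the change of variables or from \eqref{eq:estimateerror}. Verifying that this cancellation persists in every subcase, and that the Bessel error terms never dominate the leading Bessel-function contributions, is the delicate part.
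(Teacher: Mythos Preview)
Your proposal is correct and follows essentially the same approach as the paper: the same splitting at $r_\ast=0$, the same change of variables $r_\ast\to x\to\theta$ with $\d r_\ast\asymp\d\theta/\theta$, the same use of \cref{lem:plandql}, the same rescaling so that the lower Bessel argument is $\asymp|\omega|$ independently of $\ell$, and the same split at argument $\sim1$ using \eqref{eq:boundsonJY1}--\eqref{eq:boundsonJY2}. Your remark that for $\tilde u_1$ the main Bessel term gives only one logarithm while the $e_{1,\ell}$ error is what forces $\log^2$, and that for $\tilde u_2$ the $\log^2$ arises already from $\int|\log\phi|\,\d\phi/\phi$ in the main $Y_0$ term, is a sharper observation than the paper makes explicit.
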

With the help of the previous lemma we can now bound our solution $u_1$ at $R_2^\ast$. This results in 
\begin{prop}\label{prop:ur2}	Let $0<|\omega|<\omega_0$ and $\ell\in\mathbb N_0$, then
\begin{align}
	\| u_1 \|_{L^\infty(R_1^\ast, R_2^\ast)}  \lesssim 1 \text{ and } |u_1^\prime|(R_2^\ast) \lesssim | \omega |.
\end{align}
\begin{proof}
	Recall that we have from \cref{prop:boundedAB} for $r_\ast \in [R_1^\ast,R_2^\ast]$
\begin{align}
u_1(\omega, r_\ast ) = A(\omega,\ell) \tilde u_{1,\omega}(r_\ast) + \omega B(\omega,\ell) \tilde u_{2,\omega}(r_\ast)\label{eq:u1expansion}
\end{align}
for some uniformly bounded (in $|\omega | \leq \omega_0$ and $\ell$) constants $A,B$.
In particular, from \cref{lem:volterra} and \cref{rmk:volterra} we obtain the  bound
\begin{align}
	\|\tilde u_{1,\omega}\|_{L^\infty(R_1^\ast,R_2^\ast)}  \leq e^{\alpha} \|\tilde u_1\|_{L^\infty(R_1^\ast, R_2^\ast)}
\end{align}
for 
\begin{align}
	\alpha =\omega^2 \int_{R_1^\ast}^{R_2^\ast} \sup_{ \{ r_\ast\vert y \leq r_\ast \leq R_{2}^\ast \}}|\tilde u_1(r_\ast) \tilde u_2(y) -\tilde u_1(y)\tilde u_2(r_\ast) | \d{y}.
\end{align}
First, we have the bound\begin{align}
	\|\tilde u_1\|_{L^\infty(R_1^\ast, R_2^\ast)}\leq 1.\label{eq:u1tilde}
\end{align}
Secondly, for $r_\ast \in [R_1^\ast, R_2^\ast]$ we have \begin{align}1- x(r_\ast) \gtrsim \frac{\omega^2}{1+\ell^2}\end{align}
and
\begin{align}
	1+x(r_\ast) \gtrsim  \frac{\omega^2}{1+\ell^2}.
\end{align}
Consider the case $x(r_\ast) \geq 0$ first and implicitly define $\theta(r_\ast)$ by $\cos\theta(r_\ast) = x(r_\ast)$. Then, in view of \eqref{eq:estimateql} and $\theta(x(r_\ast)) = \sqrt{2-2x(r_\ast)} + O ((1-x(r_\ast)^{\frac 32} ))$, we estimate
\begin{align}
|	\tilde u_2(r_\ast) |\lesssim |Q_\ell(\cos(\theta(r_\ast)))| \lesssim \left| Y_0\left( \frac{\theta(r_\ast) (2\ell + 1)}{2} \right) \right|\lesssim \left| Y_0 \left( C {|\omega|} \right) \right|
\end{align} for a $C=C(M,Q)>0$. Analogously, this also holds for $x(r_\ast) < 0$ such that \eqref{eq:boundsonJY1} and \eqref{eq:boundsonJY2} imply
\begin{align}
	\|\tilde u_2\|_{L^\infty(R_1^\ast, R_2^\ast)}\lesssim \log\left(\frac{1}{|\omega|}\right). \label{eq:estu2}
\end{align}
Together with \cref{lem:u1u2} we obtain
\begin{align}
	\alpha \lesssim 1.
\end{align}
Hence, 
\begin{align}
	\| \tilde u_{1,\omega}\|_{L^\infty(R_1^\ast, R_2^\ast)} \lesssim 1
\end{align}
and similarly,
\begin{align}
		\|\tilde u_{2,\omega}\|_{L^\infty(R_1^\ast,R_2^\ast)} \lesssim \log\left(\frac{1}{|\omega|}\right). \label{eq:estimu2w}
\end{align}
This shows $\|u_1\|_{L^\infty(R_1^\ast,R_2^\ast)} \lesssim 1$ in view of \eqref{eq:u1expansion}.

Now, we are left with the derivative $u_1^\prime(R_2^\ast)$. To do so, we start by estimating $\tilde u_1^\prime(R_2^\ast)$ and  $\tilde u_2^\prime(R_2^\ast)$. Using the analogous estimate as we did for $R_1^\ast$ in \eqref{eq:analogous} and \eqref{eq:u1primer1}, we obtain
\begin{align}\label{eq:u2r2}
|	\tilde u^\prime_2 (R_2^\ast)| \lesssim 1 \text{ and } |\tilde u^\prime_{1} (R_2^\ast) | \lesssim \omega^2.
\end{align} 
Note that 
\begin{align}
	\tilde u_{2,\omega}^\prime(R_2^\ast) = \tilde u_2^\prime(R_2^\ast) + \omega^2 \int_{R_1^\ast}^{R_2^\ast} \left( \tilde u_1^\prime(R_2^\ast) \tilde u_2(y) - \tilde u_1(y) \tilde u_2^\prime(R_2^\ast) \right) \tilde u_{2,\omega}(y) \d y
\end{align} and thus in view of \cref{lem:u1u2}, \eqref{eq:u2r2}, \eqref{eq:estimu2w}, \eqref{eq:estu2}, and \eqref{eq:u1tilde} we estimate 
\begin{align}\nonumber
|\tilde	u_{2,\omega}^\prime(R_2^\ast ) |&\leq|\tilde u_2^\prime(R_2^\ast)| + \omega^2 \log\left(\frac{1}{|\omega|}\right)\int_{R_1^\ast}^{R_2^\ast} |\tilde u_1^\prime(R_2^\ast)\tilde u_2(y)|  + | \tilde u_1(y)\tilde u_2^\prime(R_2^\ast)| \d{y}\\ &\lesssim 1+ \omega^2 \, | \log(|\omega|)| \,  ( \omega^2 \log^2(|\omega|) + \log^2(|\omega|)) \lesssim 1. 
\end{align}
Similarly, we obtain 
\begin{align}\nonumber
|\tilde	u_{1,\omega}^\prime(R_2^\ast ) |&\leq|\tilde  u_1^\prime(R_2^\ast)| + \omega^2 \int_{R_1^\ast}^{R_2^\ast} |\tilde u_1^\prime(R_2^\ast) \tilde u_2(y)|  + | \tilde u_1(y) \tilde u_2^\prime(R_2^\ast)| \d{y}\\ &\lesssim \omega^2+ \omega^2 ( \omega^2 \log^2(|\omega|) + \log^2(|\omega|)) \lesssim | \omega|
\end{align}
which concludes the proof in the light of \eqref{eq:u1expansion}.
\end{proof}\end{prop}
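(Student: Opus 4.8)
The plan is to assemble \cref{prop:ur2} from the representation of $u_1$ in the intermediate region provided by \cref{prop:boundedAB}, namely
\begin{align}
u_1(\omega,r_\ast) = A(\omega,\ell)\,\tilde u_{1,\omega}(r_\ast) + \omega B(\omega,\ell)\,\tilde u_{2,\omega}(r_\ast), \qquad r_\ast \in [R_1^\ast, R_2^\ast],
\end{align}
together with the uniform bound $|A|+|B| \lesssim 1$. It therefore suffices to control the perturbed Legendre solutions $\tilde u_{1,\omega}, \tilde u_{2,\omega}$ from \cref{def:perurb} and their derivatives at $R_2^\ast$, uniformly in $0<|\omega|<\omega_0$ and $\ell\in\mathbb N_0$. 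First I would record the sup-norm bounds on the unperturbed pair: since $\tilde u_1 = (-1)^\ell (r/r_+)P_\ell(x)$ and $|P_\ell|\le 1$ on $(-1,1)$, one has $\|\tilde u_1\|_{L^\infty(R_1^\ast,R_2^\ast)}\le 1$. For $\tilde u_2$ (the Legendre-$Q$ solution) the point is that the cut-offs \eqref{defn:R1}--\eqref{defn:R2} are chosen so that $1\pm x(r_\ast)\gtrsim \omega^2/(1+\ell^2)$ throughout the region; writing $x=\cos\theta$ and using the Bessel asymptotic \eqref{eq:estimateql} for $Q_\ell$ together with \eqref{eq:boundsonJY1}--\eqref{eq:boundsonJY2}, the argument of $Y_0$ stays $\gtrsim|\omega|$, which yields $\|\tilde u_2\|_{L^\infty(R_1^\ast,R_2^\ast)}\lesssim \log(1/|\omega|)$.

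Next I would estimate the Volterra iteration constant
\begin{align}
\alpha = \omega^2 \int_{R_1^\ast}^{R_2^\ast} \sup_{y\le r_\ast\le R_2^\ast}\bigl|\tilde u_1(r_\ast)\tilde u_2(y) - \tilde u_1(y)\tilde u_2(r_\ast)\bigr|\,\d y
\end{align}
associated to the kernels in \cref{def:perurb}. Bounding the supremum by $\|\tilde u_1\|_\infty|\tilde u_2(y)| + \|\tilde u_2\|_\infty|\tilde u_1(y)|$ and invoking the $L^1$-estimates $\int_{R_1^\ast}^{R_2^\ast}|\tilde u_i|\lesssim \log^2(1/|\omega|)$ of \cref{lem:u1u2} gives $\alpha \lesssim \omega^2\bigl(\log^2(1/|\omega|)+\log^3(1/|\omega|)\bigr)\lesssim 1$, since powers of $\omega$ dominate any logarithm as $\omega\to 0$. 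Then \cref{lem:volterra} produces $\|\tilde u_{1,\omega}\|_{L^\infty(R_1^\ast,R_2^\ast)}\lesssim 1$ and $\|\tilde u_{2,\omega}\|_{L^\infty(R_1^\ast,R_2^\ast)}\lesssim \log(1/|\omega|)$, whence the representation gives $\|u_1\|_{L^\infty(R_1^\ast,R_2^\ast)}\lesssim 1 + |\omega|\log(1/|\omega|)\lesssim 1$, which is the first assertion.

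For the derivative bound at $R_2^\ast$ I would differentiate the Volterra equations of \cref{def:perurb}; the boundary term vanishes because the kernel $\tilde u_1(r_\ast)\tilde u_2(y)-\tilde u_1(y)\tilde u_2(r_\ast)$ is zero on the diagonal $y=r_\ast$, leaving
\begin{align}
\tilde u_{i,\omega}'(R_2^\ast) = \tilde u_i'(R_2^\ast) + \omega^2\int_{R_1^\ast}^{R_2^\ast}\bigl(\tilde u_1'(R_2^\ast)\tilde u_2(y) - \tilde u_1(y)\tilde u_2'(R_2^\ast)\bigr)\tilde u_{i,\omega}(y)\,\d y.
\end{align}
The endpoint estimates $|\tilde u_1'(R_2^\ast)|\lesssim \omega^2$ and $|\tilde u_2'(R_2^\ast)|\lesssim 1$ follow from the symmetric analogues of the computations carried out at $R_1^\ast$ in \eqref{eq:analogous} and \eqref{eq:u1primer1}. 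Feeding these, along with the sup- and $L^1$-bounds above, into the two integrals yields $|\tilde u_{1,\omega}'(R_2^\ast)|\lesssim \omega^2 + \omega^2\log^2(1/|\omega|)\lesssim |\omega|$ and $|\tilde u_{2,\omega}'(R_2^\ast)|\lesssim 1 + \omega^2\log^3(1/|\omega|)\lesssim 1$. Differentiating the representation of $u_1$ and combining then gives $|u_1'(R_2^\ast)|\lesssim |A|\,|\omega| + |\omega|\,|B|\lesssim |\omega|$.

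The main obstacle is the uniformity in $\ell$: the Legendre function $Q_\ell$ diverges logarithmically at both horizons, so $\tilde u_2$ is only controlled by $\log(1/|\omega|)$ rather than a constant, and the whole argument succeeds only because the cut-offs \eqref{defn:R1}--\eqref{defn:R2} keep $x$ away from $\pm 1$ by $\gtrsim \omega^2/(1+\ell^2)$, converting the horizon singularity into a manageable $\log(1/|\omega|)$. The surplus factors $\omega^2$ from the Volterra kernel and $\omega$ from the coefficient of $\tilde u_{2,\omega}$ must then be shown to beat these logarithms \emph{uniformly in $\ell$}; this is precisely the content of the reductions $\omega^2\log^3(1/|\omega|)\lesssim 1$ and $|\omega|\log(1/|\omega|)\lesssim 1$, and tracking the exact powers of $\omega$ and $\log(1/|\omega|)$ so that everything closes uniformly is the delicate point of the proof.
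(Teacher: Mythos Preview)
Your proposal is correct and follows essentially the same approach as the paper's proof: the same decomposition via \cref{prop:boundedAB}, the same sup-norm bounds on $\tilde u_1,\tilde u_2$ via the Bessel asymptotics for $Q_\ell$ combined with the cut-off condition $1\pm x\gtrsim \omega^2/(1+\ell^2)$, the same Volterra constant estimate using \cref{lem:u1u2}, and the same endpoint derivative argument at $R_2^\ast$. Your explicit remark that the diagonal boundary term vanishes when differentiating the Volterra integral is a small clarification the paper leaves implicit, but otherwise the two arguments coincide step for step.
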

\paragraph{\textbf{Region near the Cauchy horizon}}
Completely analogously to \cref{prop:firstregion}, we have
\begin{prop}
	\label{prop:thirdregion}
	Let $0<|\omega|<\omega_0$ and $\ell\in\mathbb N_0$. Then, we have 
	\begin{align}
	&	\| v_1^\prime \|_{L^\infty(R_2^\ast,\infty)} \lesssim |\omega|,\;\;
		\|v_1\|_{L^\infty(R_2^\ast,\infty)} \lesssim 1
	\end{align}
	and 
		\begin{align}
		&	\| v_2^\prime \|_{L^\infty(R_2^\ast,\infty)} \lesssim |\omega|,\;\;
			\|v_2\|_{L^\infty(R_2^\ast,\infty)} \lesssim 1.
		\end{align}
\end{prop}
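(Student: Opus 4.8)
The plan is to mirror the proof of \cref{prop:firstregion} essentially verbatim, reading the real line from $+\infty$ rather than from $-\infty$. Recall from \cref{defn:u1u2} that $v_1$ and $v_2$ are the unique solutions of the Volterra integral equations
\begin{align}
v_j(\omega, r_\ast) = e^{\pm i\omega r_\ast} - \int_{r_\ast}^{\infty} \frac{\sin(\omega(r_\ast - y))}{\omega}\, V(y)\, v_j(\omega, y)\, \d y, \qquad j = 1, 2,
\end{align}
with $+$ for $j=1$ and $-$ for $j=2$, and with integral kernel $\tilde K(r_\ast, y) := -\frac{\sin(\omega(r_\ast - y))}{\omega} V(y)$; by \cref{rmk:volterra}, \cref{lem:volterra} applies on the half-line $(R_2^\ast, +\infty)$ (with the roles of $x$ and the integration tail exchanged). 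The first step is to record the decay of the potential near the Cauchy horizon: by \cref{lem:asymptoticspotential} one has $|V(r_\ast)| \lesssim (1+\ell^2)\, e^{2\kappa_- r_\ast}$ for $r_\ast$ bounded below, and since $\kappa_- < 0$ this is integrable on $(R_2^\ast,\infty)$. Moreover the definition \eqref{defn:R2} of $R_2^\ast$ gives $e^{2\kappa_- R_2^\ast} = \omega^2/(1+\ell^2)$, so in particular $|V(R_2^\ast)| \lesssim \omega^2$.

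Given this, for $r_\ast \geq R_2^\ast$ one has $|\tilde K(r_\ast, y)| \leq |V(y)|/|\omega| \lesssim \frac{1+\ell^2}{|\omega|}\, e^{2\kappa_- y}$, and hence
\begin{align}
\alpha := \int_{R_2^\ast}^{\infty} \sup_{\{r_\ast \,:\, R_2^\ast < r_\ast < y\}} |\tilde K(r_\ast, y)|\, \d y \;\lesssim\; \frac{1+\ell^2}{|\omega|}\int_{R_2^\ast}^{\infty} e^{2\kappa_- y}\, \d y \;\lesssim\; \frac{1+\ell^2}{|\omega|}\, e^{2\kappa_- R_2^\ast} \;=\; |\omega| \;\lesssim\; 1.
\end{align}
Then \cref{lem:volterra} together with \cref{rmk:volterra} yields $\|v_j\|_{L^\infty(R_2^\ast,\infty)} \leq e^{\alpha}\lesssim 1$ at once, for $j = 1, 2$.

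For the derivative bounds I would differentiate the Volterra equation; since $\sin(\omega(r_\ast - y))$ vanishes at $y = r_\ast$ no boundary term appears, and the factor $\omega$ cancels, giving $v_j'(\omega,r_\ast) = \pm i\omega\, e^{\pm i\omega r_\ast} - \int_{r_\ast}^{\infty} \cos(\omega(r_\ast - y))\, V(y)\, v_j(\omega,y)\,\d y$. Estimating the integral using the sup bound on $v_j$ just established and the exponential decay of $V$,
\begin{align}
|v_j'(\omega, r_\ast)| \;\leq\; |\omega| + (1+\ell^2)\,\|v_j\|_{L^\infty(R_2^\ast,\infty)}\int_{R_2^\ast}^{\infty} e^{2\kappa_- y}\,\d y \;\lesssim\; |\omega| + (1+\ell^2)\, e^{2\kappa_- R_2^\ast} \;=\; |\omega| + \omega^2 \;\lesssim\; |\omega|,
\end{align}
where the last step uses $0 < |\omega| < \omega_0 \leq \frac{1}{2}$. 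This establishes all four claimed bounds.

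The only point requiring any genuine care — and it is the only one — is the sign of $\kappa_-$: the weight $e^{2\kappa_- y}$ is integrable on $(R_2^\ast,\infty)$ precisely because $\kappa_- < 0$, and it is this integrability, combined with the identity $e^{2\kappa_- R_2^\ast} = \omega^2/(1+\ell^2)$, that makes $\alpha$ uniformly small in both $\omega$ and $\ell$. Everything else is a direct transcription of \cref{prop:firstregion}, with the event horizon $r_\ast \to -\infty$ (where $V$ decays like $e^{2\kappa_+ r_\ast}$) replaced by the Cauchy horizon $r_\ast \to +\infty$ (where $V$ decays like $e^{2\kappa_- r_\ast}$) and with the Volterra equations integrated from $+\infty$ rather than from $-\infty$.
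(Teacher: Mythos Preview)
Your proof is correct and follows exactly the approach the paper intends: the paper simply states that \cref{prop:thirdregion} holds ``completely analogously to \cref{prop:firstregion}'' and gives no further argument, so you have faithfully supplied the omitted details. Your explicit treatment of the derivative bound by differentiating the Volterra equation is slightly more detailed than the paper's corresponding step in \cref{prop:firstregion}, but the underlying argument is identical.
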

\paragraph{\textbf{Boundedness of the scattering coefficients}}
Finally, we conclude that the reflection and transmission coefficients are uniformly bounded for parameters $0<|\omega|<\omega_0$ and $\ell \in \mathbb{N}_0$. 
\begin{prop}
	\label{prop:easyregime}
	We have
	\begin{align}
		\sup_{0<|\omega|<\omega_0, \ell \in \mathbb N_0} (|\mathfrak R(\omega, \ell)| + |\mathfrak T(\omega,\ell)| )\lesssim 1.
	\end{align}
\begin{proof} Let $0<|\omega|<\omega_0$ and $\ell \in \mathbb N_0$ and recall \cref{defn:TandR}. 
Then, \cref{prop:ur2} and \cref{prop:thirdregion} imply
\begin{align}
&	|\mathfrak T| \lesssim \frac{|\mathfrak W(u_1,v_2)|}{|\omega|} \leq \frac{|u_1(R_2^\ast)v_2^\prime(R_2^\ast)| +|u_1^\prime(R_2^\ast)v_2(R_2^\ast)| }{|\omega|}\lesssim 1\end{align}
and 
\begin{align}
&	|\mathfrak R| \lesssim \frac{|\mathfrak W(u_1,v_1)|}{|\omega|} \leq \frac{|u_1(R_2^\ast)v_1^\prime(R_2^\ast)| +|u_1^\prime(R_2^\ast)v_1(R_2^\ast)| }{|\omega|}\lesssim 1.
\end{align}
\end{proof}
\end{prop}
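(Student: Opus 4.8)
The plan is to exploit that the Wronskian of two solutions of \eqref{ODE1} is independent of $r_\ast$, together with the region-by-region control of $u_1$ that has been set up. Recall from \cref{defn:TandR} that $\mathfrak T(\omega,\ell)=\mathfrak W(u_1,v_2)/(-2i\omega)$ and $\mathfrak R(\omega,\ell)=\mathfrak W(u_1,v_1)/(2i\omega)$. Since these Wronskians may be evaluated at any point, I would compute them at the matching radius $r_\ast=R_2^\ast$ defined in \eqref{defn:R2}, which separates the intermediate region from the region near the Cauchy horizon. It then suffices to bound $u_1,u_1'$ and $v_1,v_1',v_2,v_2'$ at $R_2^\ast$, uniformly over $0<|\omega|<\omega_0$ and $\ell\in\mathbb N_0$.

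For $v_1,v_2$ this is exactly the content of \cref{prop:thirdregion}: near the Cauchy horizon $V_\ell$ decays exponentially, and the cutoff $R_2^\ast$ is chosen precisely so that the Volterra kernel in the defining equations of \cref{defn:u1u2} has $O(1)$ total variation; iterating as in \cref{prop:firstregion} yields $\|v_i\|_{L^\infty(R_2^\ast,\infty)}\lesssim1$ and $\|v_i'\|_{L^\infty(R_2^\ast,\infty)}\lesssim|\omega|$. For $u_1$ I would invoke \cref{prop:ur2}: first propagate $u_1$ from $r_\ast=-\infty$ up to $R_1^\ast$ treating $V_\ell$ as a perturbation of the free equation (\cref{prop:firstregion}); then, on the intermediate window $[R_1^\ast,R_2^\ast]$, write $u_1=A(\omega,\ell)\,\tilde u_{1,\omega}+B(\omega,\ell)\,\omega\,\tilde u_{2,\omega}$ in terms of perturbations of the explicit $\omega=0$ Legendre solutions, and bound $|A|+|B|\lesssim1$ by matching Wronskians at $R_1^\ast$ (\cref{prop:boundedAB}); finally control the Volterra iteration defining $\tilde u_{1,\omega},\tilde u_{2,\omega}$ using the $L^1$ bounds of \cref{lem:u1u2}. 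The upshot is $|u_1|(R_2^\ast)\lesssim1$ and $|u_1'|(R_2^\ast)\lesssim|\omega|$.

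Given all this, the conclusion is a one-line estimate: $|\mathfrak W(u_1,v_i)(R_2^\ast)|\le|u_1(R_2^\ast)|\,|v_i'(R_2^\ast)|+|u_1'(R_2^\ast)|\,|v_i(R_2^\ast)|\lesssim 1\cdot|\omega|+|\omega|\cdot1\lesssim|\omega|$ for $i=1,2$, so dividing by $2|\omega|$ gives $|\mathfrak T(\omega,\ell)|+|\mathfrak R(\omega,\ell)|\lesssim1$ uniformly in the stated range.

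The hard part is not this assembly but the intermediate-region estimates behind \cref{prop:ur2}, \cref{prop:boundedAB} and \cref{lem:u1u2}. The window $[R_1^\ast,R_2^\ast]$ both translates and shrinks as $(\omega,\ell)$ vary, the $\omega=0$ solution $\tilde u_2$ (built from $Q_\ell$) blows up logarithmically at the endpoints, and one must nonetheless keep the Volterra constant $\alpha=\omega^2\int_{R_1^\ast}^{R_2^\ast}\sup|\tilde u_1(r_\ast)\tilde u_2(y)-\tilde u_1(y)\tilde u_2(r_\ast)|\,\d y$ bounded \emph{uniformly in $\ell$}. This is where I expect the genuine difficulty to lie: it forces one to use the sharp, $\ell$-uniform asymptotics of $P_\ell$ and $Q_\ell$ in terms of the Bessel functions $J_0,Y_0$ from \cref{lem:plandql}, carefully tracking the error terms $e_{i,\ell}$, to get the $L^1$ bounds $\int_{R_1^\ast}^{R_2^\ast}|\tilde u_j|\lesssim\log^2(1/|\omega|)$ that in turn make $\alpha\lesssim1$; once those are in hand, everything else is routine Volterra/Wronskian bookkeeping.
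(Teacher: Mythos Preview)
Your proposal is correct and follows essentially the same approach as the paper: evaluate the Wronskians defining $\mathfrak T$ and $\mathfrak R$ at $r_\ast=R_2^\ast$, invoke \cref{prop:ur2} for $u_1,u_1'$ and \cref{prop:thirdregion} for $v_1,v_2,v_1',v_2'$, and divide by $|\omega|$. Your additional paragraph correctly identifies that the substance lies in the intermediate-region estimates feeding into \cref{prop:ur2}, not in this final assembly step.
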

\subsection{Frequencies bounded from below and bounded angular momenta (\texorpdfstring{$|\omega|\geq \omega_0, \ell \leq \ell_0$}{wgeqw0,lleql0})}
Now, we will consider parameters of the form $|\omega|\geq \omega_0$ and $\ell \leq \ell_0$, where $\omega_0$ is small and determined from \cref{subsec:smallfreq}. Also, the upper bound on the angular momentum $\ell_0$ will be determined from \cref{subsec:boundedbelow}. As before, constants appearing in $\lesssim$ and $\gtrsim$ may depend on $\omega_0$.
\begin{prop} \label{prop9}We have
	\begin{align}
	\sup_{\omega_0\leq|\omega|, \ell\leq \ell_0} (|\mathfrak R(\omega, \ell)| + |\mathfrak T(\omega,\ell)| )\lesssim 1.
	\end{align}
		\begin{proof}
			Recall the definition of $u_1$ as the unique solution to
			\begin{align}\label{eq:integralbound}
				u_1(\omega, r_\ast ) = e^{i\omega r_\ast } + \int_{-\infty}^{r_\ast } \frac{\sin(\omega(r_\ast -y))}{\omega} V(y) u_1(\omega, y) \d y .
			\end{align}
			Note that in the regime $\ell \leq \ell_0$ we have a bound of the form 
			\begin{align}
			|	V(r_\ast)|\lesssim e^{-2 \min(k_+,|k_-|) |r_\ast|}
			\end{align}
			which implies the following bound on the integral kernel of the perturbation in \eqref{eq:integralbound}
			\begin{align}
				|K(r_\ast,y)| =\left| \frac{\sin(\omega(r_\ast - y))}{\omega} V(y)\right| \lesssim |V(y)|
			\end{align}
			in view of $|\omega| \geq \omega_0$. Thus, 
			\begin{align}
				\int_{-\infty}^{\infty} \sup_{r_\ast \in \mathbb R} |K(r_\ast,y)| \d y\lesssim \int_{-\infty}^{\infty} |V(y)| \d y\lesssim 1.
			\end{align}
			Hence, from \cref{lem:volterra} we deduce
			\begin{align}\label{eq:boundonu}
		\|u_1\|_{L^\infty(\mathbb R)} \lesssim 1 
			\end{align}
			and 
			\begin{align}\label{eq:boundsonuprime}
			\|u_1^\prime \|_{L^\infty(\mathbb R)}  \lesssim |\omega|.
			\end{align}
Note that we have obtained similar, indeed even stronger bounds for $u_1$ as in \cref{prop:ur2}. An argument completely similar to \cref{prop:easyregime} allows us to conclude.
\end{proof}
\end{prop}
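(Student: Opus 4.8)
The plan is to handle this ``frequencies bounded below, angular momenta bounded'' regime in exactly the spirit of the concluding argument of \cref{prop:easyregime}: here both $|\omega|^{-1}\le\omega_0^{-1}$ and the potential are under completely uniform control, so no splitting of the $r_\ast$-line into regions and no special-function analysis are needed. This is the mildest of the three regimes.

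First I would record the relevant bound on the potential. Since $\ell$ ranges over the finite set $\{0,1,\dots,\ell_0\}$, \cref{lem:asymptoticspotential} gives $|V_\ell(r_\ast)|\lesssim e^{-2\min(\kappa_+,|\kappa_-|)\,|r_\ast|}$ with implicit constant depending only on $\ell_0$ (and the black hole parameters); in particular $\int_{-\infty}^{\infty}|V_\ell(y)|\,\d y\lesssim 1$ uniformly over the whole parameter range $|\omega|\ge\omega_0$, $\ell\le\ell_0$.

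Next I would return to the Volterra equation for $u_1$ from \cref{defn:u1u2}, whose kernel is $K(r_\ast,y)=\frac{\sin(\omega(r_\ast-y))}{\omega}V(y)$. Because $|\omega|\ge\omega_0$, we have $|K(r_\ast,y)|\le\omega_0^{-1}|V(y)|$, hence $\int_{-\infty}^{\infty}\sup_{y<r_\ast}|K(r_\ast,y)|\,\d y\lesssim 1$, and \cref{lem:volterra} yields $\|u_1\|_{L^\infty(\mathbb R)}\lesssim\|e^{i\omega r_\ast}\|_{L^\infty(\mathbb R)}=1$. Differentiating the Volterra equation gives $u_1'(\omega,r_\ast)=i\omega e^{i\omega r_\ast}+\int_{-\infty}^{r_\ast}\cos(\omega(r_\ast-y))V(y)u_1(\omega,y)\,\d y$, so $\|u_1'\|_{L^\infty(\mathbb R)}\le|\omega|+\|V\|_{L^1}\|u_1\|_{L^\infty(\mathbb R)}\lesssim|\omega|$, absorbing the $O(1)$ term using $|\omega|\ge\omega_0$. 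The identical argument applied to the Volterra equations defining $v_1$ and $v_2$ (now integrating from $r_\ast$ to $+\infty$, cf.\ \cref{prop:thirdregion}) gives $\|v_j\|_{L^\infty(\mathbb R)}\lesssim 1$ and $\|v_j'\|_{L^\infty(\mathbb R)}\lesssim|\omega|$ for $j\in\{1,2\}$.

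Finally I would read off the scattering coefficients from \cref{defn:TandR}, namely $\mathfrak T=\mathfrak W(u_1,v_2)/(-2i\omega)$ and $\mathfrak R=\mathfrak W(u_1,v_1)/(2i\omega)$, and evaluate the (constant-in-$r_\ast$) Wronskians at $r_\ast=0$. Then $|\mathfrak W(u_1,v_2)(0)|\le|u_1(0)|\,|v_2'(0)|+|u_1'(0)|\,|v_2(0)|\lesssim|\omega|$, so $|\mathfrak T|\lesssim 1$, and likewise $|\mathfrak R|\lesssim 1$, with constants uniform over $|\omega|\ge\omega_0$, $\ell\le\ell_0$ --- precisely the argument of \cref{prop:easyregime}. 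There is no genuine obstacle here: the only point to watch is that the suppressed constants stay uniform, which they do because $\ell$ lives in a finite set and because $|\omega|\ge\omega_0$ simultaneously bounds $1/|\omega|$ and allows $|\omega|+O(1)\lesssim|\omega|$. (Contrast this with \cref{subsec:smallfreq}, where the absence of a lower bound on $|\omega|$ together with the $\ell\to\infty$ limit forces the three-region decomposition and the Legendre/Bessel function estimates.)
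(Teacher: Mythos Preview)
Your proof is correct and follows essentially the same approach as the paper: bound the Volterra kernel using $|\omega|\ge\omega_0$ and the uniform exponential decay of $V_\ell$ for $\ell\le\ell_0$, apply \cref{lem:volterra} to get $\|u_1\|_{L^\infty}\lesssim 1$ and $\|u_1'\|_{L^\infty}\lesssim|\omega|$, and then conclude via the Wronskian argument of \cref{prop:easyregime}. You are slightly more explicit than the paper (spelling out the differentiated Volterra equation and the $v_j$ bounds rather than deferring to \cref{prop:thirdregion} and \cref{prop:easyregime}), but the content is the same.
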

\subsection{Frequencies and angular momenta bounded from below (\texorpdfstring{$|\omega| \geq \omega_0$, $\ell \geq \ell_0$}{wgeqw0,lgeql0})}
\label{subsec:boundedbelow}
In this regime we assume $\omega\geq \omega_0$ and $\ell \geq \ell_0$, where we choose $\ell_0$ large enough such that $V_\ell < 0$ everywhere. Note that such an $\ell_0$ can be chosen only depending on the black hole parameters. 

We write the o.d.e.\ as \begin{align}
u^{\prime \prime} = - (\omega^2 - V_\ell) u
\end{align}
and will represent the solution of the o.d.e.\ via a WKB approximation. For concreteness we will use the following theorem which is a slight modification of \cite[Theorem 4]{olver1961error}. 
\begin{lemma}[Theorem 4 of \cite{olver1961error}]\label{thm:wkb}
	Let $p\in C^2(\mathbb{R})$ be a positive function such that  
	\begin{align}
	F(x) = \left|\int_{-\infty}^x p^{-\frac{1}{4}} \left|\frac{\d^2}{\d x^2} \left( p^{-\frac 14}\right)\right| \d y\right|
	\end{align}
	satisfies $\sup_{x\in\mathbb{R}}F(x)< \infty$.
	 Then, the differential equation 
	\begin{align}
		\frac{\d^2 u (x) }{\d x^2} = - p(x) u(x)
	\end{align}
has conjugate solutions $u$ and $\bar u$ such that 
\begin{align}
	&u(x) =  p^{-\frac 14}\left( \exp\left(i  \int^x_0 \sqrt p(y) \d y \right) +\epsilon \right), \\
	&u^\prime (x) =  i  p^\frac{1}{4}\left[ \exp\left(i \int_0^x \sqrt{p}(y) \d y \right) -i \eta + \frac{i p^\prime}{4  p^\frac{3}{2}}\left( \exp\left(- i  \int_0^x \sqrt p (y) \d y \right) + \epsilon \right) \right],
\end{align}
where 
\begin{align}
	|\eta(x)|, |\epsilon(x)|\leq  \exp\left(F(x)\right) - 1.\label{eq:boundsonetaolver}
\end{align}
\end{lemma}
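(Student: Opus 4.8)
The plan is to carry out the classical Liouville--Green (WKB) reduction and then invoke the Volterra theory already recorded in \cref{lem:volterra}. First I would introduce the new independent variable $\xi=\xi(x):=\int_0^x\sqrt{p(y)}\,\d y$, which is a strictly increasing $C^2$ map of $\mathbb{R}$ onto an interval, together with the new unknown $W:=p^{1/4}u$. Using $\frac{\d}{\d x}=\sqrt p\,\frac{\d}{\d\xi}$, a direct differentiation shows that $u''=-pu$ becomes
\begin{align}
\frac{\d^2 W}{\d\xi^2}+\bigl(1+\psi\bigr)W=0,\qquad \psi(\xi):=p^{-3/4}\,\frac{\d^2}{\d x^2}\bigl(p^{-1/4}\bigr),
\end{align}
where $\psi$ is regarded as a function of $\xi$ (and dots will henceforth denote $\d/\d\xi$). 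Since $\d\xi=\sqrt p\,\d x$, one has $\int_{-\infty}^{\xi(x)}|\psi|\,\d\xi=\int_{-\infty}^{x}p^{-1/4}\bigl|\tfrac{\d^2}{\d y^2}(p^{-1/4})\bigr|\,\d y=F(x)$, so the perturbation $\psi$ is integrable with $\int_{\mathbb{R}}|\psi|\,\d\xi=\sup_x F(x)<\infty$ by hypothesis.

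Next I would recast the transformed equation as the Volterra integral equation
\begin{align}
W(\xi)=e^{i\xi}-\int_{-\infty}^{\xi}\sin(\xi-t)\,\psi(t)\,W(t)\,\d t,
\end{align}
whose solution is the one with $W\sim e^{i\xi}$ as $\xi\to-\infty$. Because $|\sin(\xi-t)\psi(t)|\le|\psi(t)|$, the kernel meets the hypotheses of \cref{lem:volterra} (with finite $x_0$; cf.\ also \cref{rmk:volterra}) with $\alpha=F(x)$ on each half-line $(-\infty,\xi(x))$; hence $W$ exists, is unique, and is $C^2$, so that $u:=p^{-1/4}W$ is a $C^2$ solution of $u''=-pu$. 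Running the Picard iteration $W_0=e^{i\xi}$, $W_{n+1}(\xi)=-\int_{-\infty}^{\xi}\sin(\xi-t)\psi(t)W_n(t)\,\d t$ and using the substitution $s=F(t)$ (so $\d s=|\psi(t)|\,\d t$) one gets $|W_n(\xi)|\le F(x)^n/n!$, whence $\epsilon:=W-e^{i\xi}$ satisfies $|\epsilon(x)|\le e^{F(x)}-1$. Differentiating the integral equation once (the boundary term vanishes since $\sin 0=0$) gives $\dot W=ie^{i\xi}-\int_{-\infty}^{\xi}\cos(\xi-t)\psi(t)W(t)\,\d t$, and the same iteration yields the companion bound $e^{F(x)}-1$ for the remainder in $\dot W$.

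Undoing the substitution then produces the stated formulae: $u=p^{-1/4}W=p^{-1/4}(e^{i\xi}+\epsilon)$, and from $u'=(p^{-1/4})'W+p^{1/4}\dot W$ together with the identity $(p^{-1/4})'=-\tfrac{p'}{4p^{3/2}}\,p^{1/4}$ one collects the $e^{\pm i\xi}$ contributions and defines $\eta$ to be the appropriate combination built from the remainder in $\dot W$; the bounds above give $|\eta(x)|\le e^{F(x)}-1$. Finally, since $p$ is real the complex conjugate $\bar u$ solves the same equation, and the two are linearly independent because $\operatorname{Im}(\bar u\,u')$ is constant along the equation and, evaluated via the asymptotics as $\xi\to-\infty$ (where $\epsilon,\dot\epsilon\to0$ and the $p'/p^{3/2}$ term contributes only a real multiple of $|e^{i\xi}+\epsilon|^2$), equals $1$. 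I expect the only genuinely fiddly point to be the bookkeeping in this last step---matching the precise algebraic shape of the expression for $u'$ and the definition of $\eta$ to the statement above, which is exactly where the ``slight modification'' of Olver's Theorem~4 enters---together with the mild strengthening of \cref{lem:volterra} from the a priori bound $\|f\|_\infty\le e^{\alpha}\|g\|_\infty$ to the sharp estimate $|\epsilon|\le e^{F}-1$ for the \emph{remainder}, which needs the explicit Picard iteration rather than a black-box appeal; neither is a real obstacle once the substitution $s=F(t)$ is in hand.
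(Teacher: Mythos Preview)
The paper does not supply its own proof of this lemma: it is quoted as a slight modification of Theorem~4 of Olver's 1961 paper and used as a black box in the subsequent WKB analysis (\cref{prop:reg3}). Your sketch---the Liouville--Green change of variables $\xi=\int_0^x\sqrt p$, $W=p^{1/4}u$, the reduction to $\ddot W+(1+\psi)W=0$ with $\int|\psi|\,\d\xi=F$, and the Volterra/Picard iteration giving the $e^{F}-1$ remainder bound---is exactly Olver's own approach and is correct. The only caveat, which you already flag, is the algebraic bookkeeping in the formula for $u'$ (in particular the appearance of $e^{-i\xi}$ and the factor $-i\eta$ in the statement versus what falls out of $u'=(p^{-1/4})'W+p^{1/4}\dot W$); this is purely cosmetic and amounts to how one chooses to define $\eta$, and does not affect the bound \eqref{eq:boundsonetaolver} or any downstream use of the lemma.
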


\begin{prop}\label{prop:reg3}Let $\omega_0 \leq |\omega|$ and $\ell \geq \ell_0$. Assume without loss of generality that $\omega >0$. Then,
\begin{align}
&u_1(\omega, r_\ast) = A \omega^{\frac 12}  (\omega^2 - V(r_\ast))^{-\frac 14}\left( \exp\left(i  \int_{0}^{r_\ast} (\omega^2 - V_\ell(y))^\frac{1}{2} \d{y}\right) + \epsilon(r_\ast)\right),\\
&u_1^\prime(\omega, r_\ast) =  A {\omega}^\frac{1}{2} i (\omega^2 - V(r_\ast))^{\frac 14} \left[ \exp\left( i \int_0^{r_\ast} (\omega^2 - V_\ell(y))^{\frac{1}{2}}\d y\right) - i \eta(r_\ast) \right. \nonumber \\& \hspace{2cm} \left. - \frac{i V^\prime(r_\ast)}{4 (\omega^2 - V)^{\frac 32}(r_\ast)}\left( \exp\left(i \int_0^{r_\ast} (\omega^2 - V_\ell(y))^\frac{1}{2} \d y\right) + \epsilon(r_\ast)  \right) \right] ,
\end{align}
where \begin{align}|A| =1 , \sup_{r_\ast \in \mathbb{R}}(|\epsilon|(r_\ast)+  |\eta|(r_\ast)) \lesssim 1\label{eq:boundsoneta}\end{align}
and
\begin{align}\label{eq:limits}
\lim_{r_\ast \to -\infty} \eta (r_\ast)= \lim_{r_\ast \to -\infty} 	\epsilon(r_\ast) = 0.
\end{align}
In particular, this proves
\begin{align}
&	\limsup_{r_\ast \to\infty}|u(r_\ast)| \lesssim 1, \\
& \limsup_{r_\ast \to \infty} |u^\prime(r_\ast)|\lesssim |\omega|,
\end{align}
and uniform bounds on the reflection and transmission coefficients
\begin{align}
\sup_{\omega_0\leq|\omega|, \ell\geq \ell_0} (|\mathfrak R(\omega, \ell)| + |\mathfrak T(\omega,\ell)| )\lesssim 1.
\end{align}
\begin{proof}
We will apply \cref{thm:wkb}. First, we set \begin{align}p=(\omega^2 - V_\ell )\end{align} which is positive and smooth. Then, the o.d.e.\ reads
\begin{align}
	u^{\prime \prime} = -  p u.
\end{align} Now we have to show that $F$ is uniformly bounded on the real line.
Note that we have the following bounds on the potential and its derivatives
\begin{align}
 |V_\ell(r_\ast)|, |V_\ell^\prime(r_\ast)|, |V_\ell^{\prime\prime}(r_\ast)|\lesssim \ell^2 e^{2\kappa_+ r_\ast} \text{ and } \ell^2 e^{2\kappa_+ r_\ast} \lesssim |V_\ell(r_\ast)|\text{ for } r_\ast \leq 0,\label{eq:boundsonV01}\\
|	V_\ell(r_\ast)|,|V_\ell^\prime(r_\ast)|, |V_\ell^{\prime\prime}(r_\ast)|\lesssim \ell^2 e^{2\kappa_- r_\ast}  \text{ and } \ell^2 e^{2\kappa_- r_\ast} \lesssim |	V_\ell(r_\ast)| \text{ for } r_\ast \geq 0.\label{eq:boundsonV02}
\end{align}
Here, we might have to choose $\ell_0(M,Q)$ even larger ($r_+^2 (r_+ - 3r_-) + \ell(\ell+1)>0$, cf.\ \eqref{eq:cmV}) in order to assure the lower bounds on the potential.
 Finally, we can estimate $F$ by
\begin{align}\nonumber
	\sup_{r_\ast \in \mathbb{R}} F(r_\ast) &\leq \left| \int_{-\infty}^{\infty} p^{- \frac 14} \left| \frac{\d^2}{\d x^2} \left(p^{-\frac 14 } \right| \right)\d y\right| \\ \nonumber&= \int_{-\infty}^{\infty} p^{- \frac 14}\left( p^{- \frac 94} {p^\prime}^2 + p^{-\frac 54 } |p^{\prime \prime}| \right)\d y\\ \nonumber&
\lesssim \frac{1}{\ell}\int_{0}^{\infty}\left( \frac{e^{4\kappa_- y} }{(\ell^{-2} + e^{2\kappa_- y})^{\frac 52}} + \frac{e^{2\kappa_- y}}{(\ell^{-2} + e^{2\kappa_- y})^{\frac 32}} \right)\d y\\
& + \frac{1}{\ell} \int_{-\infty}^{0} \left( \frac{e^{4\kappa_+ y} }{(\ell^{-2} + e^{2\kappa_+ y})^{\frac 52}} + \frac{e^{2\kappa_+ y}}{(\ell^{-2} + e^{2\kappa_+ y})^{\frac 32}} \right) \d y,
\end{align}
where we have used the bounds from \eqref{eq:boundsonV01} and \eqref{eq:boundsonV02}.
We shall estimate both terms independently. After a change of variables $y \mapsto \frac{1}{2\kappa_-}\log(y)$, we can estimate the first term by
\begin{align} \nonumber &\frac{1}{\ell}\int_{0}^{\infty}\left( \frac{e^{4\kappa_- y} }{(\ell^{-2} + e^{2\kappa_- y})^{\frac 52}} + \frac{e^{2\kappa_- y}}{(\ell^{-2} + e^{2\kappa_- y})^{\frac 32}} \right)\d y\\ \nonumber  & \lesssim \frac{1}{\ell}
	\int_{0}^{1}\left( \frac{y}{(\ell^{-2} +y)^{\frac 52}} + \frac{1}{(\ell^{-2} + y)^{\frac 32}} \right)\d y\\ \nonumber &\lesssim \ell^2 \int_{0}^1 \frac{\ell^2 y}{(1+\ell^2 y)^{\frac 52} } + \frac{1}{(1+ \ell^2 y)^{\frac 32 }}\d y\\ & \lesssim  \int_0^\infty \frac{y}{(1+y)^{\frac 52}} + \frac{1}{(1+y)^{\frac 32 }} \d y \lesssim 1.
\end{align}
Completely analogously, we get the bound for the second integral. In particular, this shows
\begin{align}
	\sup_{\mathbb{R}} F \lesssim 1.
\end{align}
This implies the bounds on $\eta$ and $\epsilon$ in the statement of the theorem (cf.\  \eqref{eq:boundsoneta}) using  \eqref{eq:boundsonetaolver}.

 The limits in equation \eqref{eq:limits} follow from the fact that $F(r_\ast) \to 0$ as $r_\ast \to -\infty$ by construction.  

 The bound on the reflection and transmission coefficients follows now from 
\begin{align}
	|\mathfrak R| \lesssim \left|\frac{\mathfrak W(u_1,v_1)}{\omega}\right|  \leq \frac{1}{|\omega|} \limsup_{r_\ast\to\infty} \left(| u_1^\prime v_1| +| u_1 v_1^\prime|\right) \lesssim 1
\end{align}
and analogously for $\mathfrak T$. 

Finally, $A$ can be determined from the asymptotic behaviour $u \to  e^{i\omega r_\ast}$ as $r_\ast \to - \infty$ and it is given by 
\begin{align}\nonumber
	A & =\lim_{r_\ast \to - \infty} \exp\left( i \omega r_\ast -i  \int_0^{r_\ast} (\omega^2 - V(y))^{\frac 12 } \d y\right) \\ &=
	\lim_{r_\ast \to - \infty} \exp\left( -i  \int_0^{r_\ast} \left((\omega^2 - V(y))^{\frac 12 } - \omega \right) \d y\right) 
	\end{align}
	which converges since $V$ tends to zero exponentially fast. In particular, this also shows that $|A| =1$. 
\end{proof}
\end{prop}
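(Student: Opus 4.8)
The plan is to apply the WKB comparison theorem \cref{thm:wkb} with the choice $p = \omega^2 - V_\ell$. First I would fix $\ell_0 = \ell_0(M,Q)$ large enough that $V_\ell(r_\ast) < 0$ for every $r_\ast$; inspecting \eqref{eq:potential}, the angular contribution $\Delta\,\ell(\ell+1)/r^4$ dominates the fixed geometric term $\Delta\,(r(r_+ + r_-) - 2r_+ r_-)/r^3$ once $\ell(\ell+1) > r_+^2(r_+ - 3r_-)$, so such an $\ell_0$ exists depending only on the black hole parameters. With $V_\ell < 0$ and $|\omega| \geq \omega_0 > 0$, the function $p$ is strictly positive and $C^2$, the o.d.e.\ \eqref{ODE1} reads $u'' = -p\,u$, and \cref{thm:wkb} applies in principle.

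The heart of the argument, and the step I expect to be the main obstacle, is to show that the control function
\begin{align}
F(r_\ast) = \left|\int_{-\infty}^{r_\ast} p^{-\frac14}\left|\frac{\d^2}{\d y^2}\left(p^{-\frac14}\right)\right|\d y\right|
\end{align}
is bounded \emph{uniformly} in both $\omega \geq \omega_0$ and $\ell \geq \ell_0$. Expanding the derivative, the integrand is controlled by $p^{-\frac52}(p')^2 + p^{-\frac32}|p''|$, and since $p' = -V_\ell'$, $p'' = -V_\ell''$ I would feed in the two-sided potential bounds from \cref{lem:asymptoticspotential}: the upper bounds $|V_\ell|, |V_\ell'|, |V_\ell''| \lesssim \ell^2 e^{2\kappa_+ r_\ast}$ for $r_\ast \leq 0$ and $\lesssim \ell^2 e^{2\kappa_- r_\ast}$ for $r_\ast \geq 0$, together with the matching lower bounds $|V_\ell| \gtrsim \ell^2 e^{2\kappa_\pm r_\ast}$ (again enlarging $\ell_0$ if needed). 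Splitting at $r_\ast = 0$ and rescaling $y \mapsto (2\kappa_\pm)^{-1}\log y$ collapses each piece onto a fixed convergent integral of the shape $\int_0^\infty \left(\frac{y}{(1+y)^{5/2}} + \frac{1}{(1+y)^{3/2}}\right)\d y$; the point is that the $\ell^2$ growth in the potential cancels exactly against the $\ell^{-1/2}$-type factors coming from $p^{-1/4}$, so that all $\ell$- and $\omega$-dependence disappears after the change of variables and $\sup_{\mathbb R} F \lesssim 1$.

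Granting $\sup F \lesssim 1$, the stated representations of $u_1$ and $u_1'$ are precisely the output of \cref{thm:wkb}, and the bound \eqref{eq:boundsoneta} on $\epsilon, \eta$ follows from \eqref{eq:boundsonetaolver} via $|\epsilon|, |\eta| \leq e^{F} - 1 \lesssim 1$. The limits \eqref{eq:limits} follow because $F(r_\ast) \to 0$ as $r_\ast \to -\infty$ (the integral starts at $-\infty$), forcing $\epsilon, \eta \to 0$ there. For the asymptotics at $+\infty$ I would use that $V_\ell \to 0$ exponentially, so $p \to \omega^2$ and $p^{\pm1/4} \to \omega^{\pm1/2}$; combined with the uniform control of $\epsilon, \eta$ and of $V_\ell'/p^{3/2}$, this gives $\limsup_{r_\ast \to \infty}|u_1| \lesssim 1$ and $\limsup_{r_\ast \to \infty}|u_1'| \lesssim |\omega|$.

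Finally, for the scattering coefficients I would invoke \cref{defn:TandR}, writing $\mathfrak T = \mathfrak W(u_1, v_2)/(-2i\omega)$ and $\mathfrak R = \mathfrak W(u_1, v_1)/(2i\omega)$, and evaluate these $r_\ast$-independent Wronskians in the limit $r_\ast \to \infty$. There $v_1, v_2 \to e^{\pm i\omega r_\ast}$ with derivatives of size $O(|\omega|)$, so $|\mathfrak W(u_1, v_i)| \lesssim |u_1|\,|v_i'| + |u_1'|\,|v_i| \lesssim |\omega|$, whence $\sup_{\omega_0 \leq |\omega|, \ell \geq \ell_0}(|\mathfrak R| + |\mathfrak T|) \lesssim 1$. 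The constant $A$ is then fixed by matching $u_1 \to e^{i\omega r_\ast}$ as $r_\ast \to -\infty$, yielding $A = \lim_{r_\ast \to -\infty}\exp\left(-i\int_0^{r_\ast}\left((\omega^2 - V_\ell)^{1/2} - \omega\right)\d y\right)$; the integrand decays exponentially so the limit exists, and its exponent is purely imaginary, giving $|A| = 1$.
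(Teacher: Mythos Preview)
Your proposal is correct and follows essentially the same approach as the paper: apply \cref{thm:wkb} with $p = \omega^2 - V_\ell$, verify $\sup_{\mathbb R} F \lesssim 1$ by splitting at $r_\ast = 0$, inserting the two-sided bounds on $V_\ell, V_\ell', V_\ell''$, and reducing via the substitution $y \mapsto (2\kappa_\pm)^{-1}\log y$ to a fixed convergent integral; then read off the bounds on $\epsilon,\eta$, the limits at $-\infty$, the asymptotics of $u_1,u_1'$ at $+\infty$, and the Wronskian bounds on $\mathfrak R,\mathfrak T$, finally fixing $A$ by matching to $e^{i\omega r_\ast}$. The paper carries out exactly this chain of estimates with the same change of variables and the same identification of $A$.
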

Finally, \cref{thm:boundednesstrans} is a consequence of \cref{prop:easyregime}, \cref{prop9}, and \cref{prop:reg3}.
\section{Proof of \texorpdfstring{\cref{thm:forwardevolution}}{Theorem 1}: Existence and boundedness of the \texorpdfstring{$T$}{T} energy scattering map}
\label{sec:mainthm}
Having performed the analysis of the radial o.d.e.\ and having in particular proven  uniform boundedness of the transmission coefficient $\mathfrak T$ and the reflection coefficients $\mathfrak R$, we shall prove \cref{thm:forwardevolution} in this section.

\subsection{Density of the domains \texorpdfstring{$\mathcal{D}^T_\Hp$}{DTH} and \texorpdfstring{$\mathcal{D}^T_\Ch$}{CTCH}}
We start by proving that the domains $\mathcal{D}^T_\Hp$ and $\mathcal{D}^T_\Ch$ are dense.
\begin{lemma}\label{lem:lemmadense}
	The domains of the forward and backward evolution $\mathcal{D}_{\Ho}^T$ and  $\mathcal{D}^T_{\Ch}$ are dense in $\mathcal{E}^T_{\Ho}$ and $\mathcal{E}^T_{\Ch}$, respectively. 
	\begin{proof}We will only prove that the domain of the forward evolution is dense since the other claim is analogous.
		
		Recall that by definition $C_c^\infty(\Ho)$ is dense in $\mathcal{E}^T_{\Ho}$. Now, let $\Psi \in C_c^\infty(\Ho)$ be arbitrary and denote by $\psi$ its forward evolution. We will show that we can approximate $\Psi$ with functions of $\mathcal{D}^T_{\Ho}$ arbitrarily well. To do so, fix $r_{red}< r_0<r_+$. Then, using the red-shift effect (see \cref{lem:exponentialdecay} in the appendix) the $N$ energy of $\psi\restriction_{r=r_0}$ will have exponential decay towards $i_+$. Hence, it can be approximated with smooth functions $\phi_n$ of compact support on the hypersurface $r=r_0$ w.r.t.\ the norm induced by the non-degenerate $N$ energy (see \cref{rmk:approxwithcompactly} in the appendix). More precisely, on $\Sigma_{r_0} = \{r=r_0\}$ define a sequence $\phi_n \in C_c^\infty(\Sigma_{r_0}) $ by
		\begin{align} \phi_n(t,\theta,\phi) = \psi\restriction_{r=r_0}(t,\theta,\phi)\chi(n^{-1} t ),
		\end{align}
		where $(\theta,\phi) \in \mathbb{S}^2$ and $\chi\colon \mathbb{R}\to [0,1]$ is smooth with $\operatorname{supp} \chi \subseteq [-2,2]$, $\chi\restriction_{[-1,1]} = 1$. Then, we obtain that $ \int_{\Sigma_{r_0}} J_\mu^N[\psi-\phi_n ] n_{\Sigma_{r_0}}^\mu \d\mathrm{vol} \to 0 $ as $n\to\infty$. By construction, the restriction to the event horizon of the backward evolution, $\Phi_n$ of each $\phi_n$ will lie in $\mathcal{D}_{\Ho}^T$. Finally, we can conclude the proof by applying \cref{lem:estimatetorconst} from the appendix, which yields
		\begin{align}
		\| \Psi - \Phi_n\|_{\mathcal{E}^T_\mathcal{H}}^2 = \int_{\mathcal{H}}  J^T_\mu[\Psi-\Phi_n] T^\mu \d\mathrm{vol}\lesssim \int_{r=r_0} J^N_\mu [\psi- \phi_n] n_{\Sigma_{r_0}}^\mu \d\mathrm{vol} \to 0
		\end{align} 
		as $n\to\infty$. 
	\end{proof}
\end{lemma}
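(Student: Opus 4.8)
The plan is to reduce everything to approximating a single smooth compactly supported function on $\mathcal H$. Since $\mathcal E^T_{\mathcal H}$ is by definition the completion of $C_c^\infty(\mathcal H)$ in the $T$-energy norm, and since the construction below is symmetric under time reflection (interchanging $\mathcal H\leftrightarrow\mathcal{CH}$, i.e.\ \cref{thm:welldefined} vs.\ \cref{thm:time}), it suffices to show: given $\Psi\in C_c^\infty(\mathcal H)$, there exist $\Phi_n\in\mathcal D^T_{\mathcal H}$ with $\|\Psi-\Phi_n\|_{\mathcal E^T_{\mathcal H}}\to 0$. Let $\psi$ be the forward Cauchy evolution of $\Psi$, smooth on $\mathcal M_{\mathrm{RN}}\setminus\mathcal{CH}$ by \cref{thm:welldefined}. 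The only obstruction to $\Psi$ itself lying in $\mathcal D^T_{\mathcal H}$ is that $\psi$ generically develops a tail along each hypersurface $\{r=\mathrm{const}\}$ running off to $i^+$, so that $\psi\restriction_{\{r=\mathrm{const}\}}$ is not compactly supported in $t$; the idea is to truncate this tail once, on a fixed slice, and re-solve.

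Concretely, fix $r_{\mathrm{red}}<r_0<r_+$ close enough to the event horizon that the red-shift mechanism there is effective, and write $\Sigma_{r_0}:=\{r=r_0\}$, a spacelike Cauchy hypersurface for the interior whose two ends approach the two copies of $i^+$. The first input is a quantitative red-shift estimate (an appendix lemma, \cref{lem:exponentialdecay}): for $r_0$ in this range the non-degenerate $N$-energy of $\psi$ on $\Sigma_{r_0}$ is finite and, in the coordinate $t$, decays exponentially as $t\to\pm\infty$. Granting this, pick $\chi\colon\mathbb R\to[0,1]$ smooth with $\chi\restriction_{[-1,1]}=1$, $\operatorname{supp}\chi\subseteq[-2,2]$, and let $\psi_n$ be the solution of \eqref{eq:linearwave} whose Cauchy data on $\Sigma_{r_0}$ is $\chi(t/n)$ times that of $\psi$; this data is smooth and compactly supported on $\Sigma_{r_0}$, bounded away from both copies of $i^+$. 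Set $\Phi_n:=\psi_n\restriction_{\mathcal H}$. By finite speed of propagation $\Phi_n\in C_c^\infty(\mathcal H)$, and by uniqueness the forward evolution of $\Phi_n$ is $\psi_n$ again, which has compact support on every $\{r=\mathrm{const}\}$-slice (being determined by compactly supported data on $\Sigma_{r_0}$); hence $\Phi_n\in\mathcal D^T_{\mathcal H}$.

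For the convergence, observe that $(\psi-\psi_n)\restriction_{\Sigma_{r_0}}$ has Cauchy data $(1-\chi(t/n))$ times that of $\psi$, supported in $\{|t|\ge n\}$. Its $N$-energy on $\Sigma_{r_0}$ is bounded by the $N$-energy of $\psi$ over $\{|t|\ge n\}\cap\Sigma_{r_0}$ plus a commutator term weighted by $|\chi'(t/n)|\,n^{-1}$ supported in $\{n\le|t|\le 2n\}$; both tend to $0$ as $n\to\infty$ by the exponential decay of the previous paragraph (invoking \cref{rmk:approxwithcompactly} for the latter point). On the other hand, since $T$ is null on $\mathcal H$ while $N$ is timelike, the $T$-flux of $\psi-\psi_n$ through $\mathcal H$ is dominated by its $N$-flux there, and a red-shift energy estimate in the slab $\{r_0\le r<r_+\}$ — in which the bulk error term carries the good sign — bounds that $N$-flux on $\mathcal H$ by the $N$-flux on $\Sigma_{r_0}$; this chain of inequalities is the content of \cref{lem:estimatetorconst}. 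Combining, $\|\Psi-\Phi_n\|_{\mathcal E^T_{\mathcal H}}^2\lesssim\int_{\Sigma_{r_0}}J^N_\mu[\psi-\psi_n]\,n^\mu\,\d\mathrm{vol}\to 0$. The statement for $\mathcal D^T_{\mathcal{CH}}$ follows by the identical argument run backwards from a slice $\{r=r_1\}$ near the Cauchy horizon, using \cref{thm:time} and the red-shift there for the past-directed problem.

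The real work — and the main obstacle — is hidden in the two appendix inputs: producing a red-shift vector field $N$ near $\mathcal H$ for which the bulk error in the associated divergence identity has a favourable sign throughout $\{r_0\le r<r_+\}$, so that the degenerate $T$-flux on $\mathcal H$ is controlled by the non-degenerate $N$-flux on $\Sigma_{r_0}$, and establishing the quantitative exponential-in-$t$ decay of $\psi\restriction_{\Sigma_{r_0}}$ towards $i^+$ that makes the truncation error vanish. Once these are in hand, the cut-off-and-re-solve argument above is routine.
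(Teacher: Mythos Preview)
Your proposal is correct and follows essentially the same approach as the paper: truncate the solution on a fixed spacelike slice $\Sigma_{r_0}$ near the event horizon using a cutoff $\chi(t/n)$, re-solve backwards to obtain $\Phi_n\in\mathcal D^T_{\mathcal H}$, invoke the red-shift decay (\cref{lem:exponentialdecay}) to kill the $N$-energy of the difference on $\Sigma_{r_0}$, and then use \cref{lem:estimatetorconst} to pass from $N$-energy on $\Sigma_{r_0}$ to $T$-energy on $\mathcal H$. Your write-up is in fact slightly more explicit than the paper's in handling the commutator term from the cutoff and in spelling out the mechanism behind \cref{lem:estimatetorconst}; the only cosmetic difference is that the paper's \cref{lem:estimatetorconst} actually uses the multiplier $S=r^{-2}\partial_{r_*}$ rather than $N$ itself, but this is immaterial to the argument.
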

\subsection{Boundedness of the scattering and backward map on \texorpdfstring{$\mathcal{D}^T_\Hp$}{DTH} and \texorpdfstring{$\mathcal{D}^T_\Ch$}{DTCH}}
In the following proposition we shall lift the boundedness of the transmission and reflection coefficients (\cref{thm:boundednesstrans}) to the physical space picture on the dense domains $\mathcal{D}^T_\Hp$ and $\mathcal{D}^T_\Ch$.
	\begin{restatable}{prop}{thmbounds}
		\label{thm:thmbounds}
		Let $\psi$ be a smooth solution to \eqref{eq:linearwave} on $\mathcal{M}_{\mathrm{RN}} $ such that $\psi\restriction_{\Ho}\in \mathcal{D}_{\Ho}^T$ (or equivalently, $\psi\restriction_{\Ch}\in \mathcal{D}_{\Ch}^T$). Then,
		\begin{align}
		\|\psi\restriction_{\Ch_A}\|_{\mathcal{E}^T_{\Ch_A}}^2 + \|\psi\restriction_{\Ch_B} \|_{\mathcal{E}^T_{\Ch_B}}^2 
		\leq B \left( \| \psi\restriction_{\Ho_A}\|_{\mathcal{E}^T_{\Ho_A}}^2 + \|\psi\restriction_{\Ho_B}\|_{\mathcal{E}^T_{\Ho_B}}^2\right)
		\end{align}
		and 
		\begin{align}
		\| \psi\restriction_{\Ho_A}\|_{\mathcal{E}^T_{\Ho_A}}^2 + \|\psi\restriction_{\Ho_B}\|_{\mathcal{E}^T_{\Ho_B}}^2
		\leq \tilde B	 \left( \|\psi\restriction_{\Ch_A}\|_{\mathcal{E}^T_{\Ch_A}}^2 + \|\psi\restriction_{\Ch_B} \|_{\mathcal{E}^T_{\Ch_B}}^2 \right)
		\end{align}
		for constants $B$ and $\tilde B$ only depending on the black hole parameters. 
	\end{restatable}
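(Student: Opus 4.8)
\textbf{Proof proposal for \cref{thm:thmbounds}.}

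The plan is to transfer the fixed-frequency bound from \cref{thm:boundednesstrans} into physical space via Plancherel, exploiting the crucial fact that solutions in the domain $\mathcal{D}^T_\Hp$ are compactly supported on constant-$r_\ast$ slices so that the separation of variables from the preliminaries applies literally (no distributional subtleties). First I would take $\psi$ with $\psi\restriction_\Hp \in \mathcal{D}^T_\Hp$ and decompose $\psi\restriction_{\Hp_A}$ and $\psi\restriction_{\Hp_B}$ into spherical harmonics and then Fourier transform in the relevant null coordinate along each component; by the unitarity of these transforms (the proposition on $\mathcal{F}_{\Hp_A}\oplus\mathcal{F}_{\Hp_B}$ being unitary onto $\hat{\mathcal E}^T_{\Hp_A}\oplus\hat{\mathcal E}^T_{\Hp_B}$), the norms $\|\psi\restriction_{\Hp_A}\|^2_{\mathcal{E}^T_{\Hp_A}}+\|\psi\restriction_{\Hp_B}\|^2_{\mathcal{E}^T_{\Hp_B}}$ equal $\int_{\mathbb R}\sum_{|m|\le\ell}\left(|\mathcal F_{\Hp_A}(\Psi_A)|^2+|\mathcal F_{\Hp_B}(\Psi_B)|^2\right)\tfrac{\omega^2}{r_+^2}\,\d\omega$, and similarly on the Cauchy horizon with $r_-$.

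The heart of the matter is the identification, for each fixed $(\omega,m,\ell)$ with $\omega\neq0$, of the transmitted/reflected data on $\Ch$ in terms of $\mathcal F_{\Hp_A}(\Psi_A),\mathcal F_{\Hp_B}(\Psi_B)$ and the coefficients $\mathfrak T,\mathfrak R$. Concretely I would write the Fourier/spherical mode $\hat\psi_{m\ell}(\omega,r_\ast)=u(r_\ast)/r$ where $u$ solves \eqref{ODE1}; the boundary behaviour at $r_\ast\to-\infty$ (the event horizon) corresponds to an incoming piece proportional to $u_1$ built from $\Psi_A$ and an outgoing piece proportional to $u_2$ built from $\Psi_B$, while at $r_\ast\to+\infty$ (the Cauchy horizon) the data on $\Ch_B$ and $\Ch_A$ are read off from the coefficients of $v_1=e^{i\omega r_\ast}$ and $v_2=e^{-i\omega r_\ast}$. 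Using $u_1=\mathfrak T v_1+\mathfrak R v_2$ and the analogous relation for $u_2$ (obtained by complex conjugation, since $\bar u_2$ solves the same real o.d.e.\ and has conjugate asymptotics, giving $u_2=\bar{\mathfrak R}v_1+\bar{\mathfrak T}v_2$), superposition yields the $2\times2$ matrix $\hat S^T=\begin{pmatrix}\mathfrak T&\bar{\mathfrak R}\\\mathfrak R&\bar{\mathfrak T}\end{pmatrix}$ acting fibrewise. I will record this as the formulas \eqref{eq:scatteringcoef1}, \eqref{eq:scatteringcoef2} referenced in \cref{thm:fouriertophysical}. Then by \cref{thm:boundednesstrans} we have $|\mathfrak T(\omega,\ell)|^2+|\mathfrak R(\omega,\ell)|^2\le C(M,Q)$ uniformly, so the operator norm of $\hat S^T$ is bounded by a constant $B=B(M,Q)$, and integrating $|\hat S^T(\cdot)|^2\le B|\cdot|^2$ against $\tfrac{\omega^2}{r_-^2}\,\d\omega$ (note the measure on $\Ch$ carries $r_-$, while the one on $\Hp$ carries $r_+$, which only changes the constant) and summing over $(m,\ell)$ gives the first inequality; the compact-support assumption guarantees all these integrals and sums converge and that the contributions from a neighbourhood of $\omega=0$ are harmless because $\mathfrak T,\mathfrak R$ extend continuously there by \cref{cor:rdoesntvanish}. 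The reverse inequality with $\tilde B$ follows identically using the explicit bounded inverse $(\hat S^T)^{-1}=\begin{pmatrix}\bar{\mathfrak T}&-\bar{\mathfrak R}\\-\mathfrak R&\mathfrak T\end{pmatrix}$ (well-defined since $\det\hat S^T=|\mathfrak T|^2-|\mathfrak R|^2=1$ by \cref{prop:pseudouni}) together again with \cref{thm:boundednesstrans}, or equivalently by running the argument with the roles of $\Hp$ and $\Ch$ interchanged via \cref{thm:time}.

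The main obstacle I anticipate is the careful bookkeeping in the mode-by-mode identification of scattering data: getting the correct normalizations (the factors of $r_\pm$, the $i\omega$ normalizing factors in \eqref{eq:defnF1}--\eqref{eq:defnF2}, and the sign/orientation conventions for $\partial_u$ versus $\partial_v$ on the four horizon components as recorded in the preliminaries) so that $\hat S^T$ comes out exactly as the matrix $\begin{pmatrix}\mathfrak T&\bar{\mathfrak R}\\\mathfrak R&\bar{\mathfrak T}\end{pmatrix}$ rather than some conjugated or rescaled version. A secondary technical point is to justify that for $\psi\in\mathcal{D}^T_\Hp$ the trace $\psi\restriction_\Ch$ genuinely has the Fourier representation above with $L^2$ (in the $\omega^2\,d\omega$ sense) coefficients — this is where compact support on constant-$r_\ast$ slices is used, since it makes $\hat\psi_{m\ell}(\omega,r_\ast)$ an honest solution of \eqref{ODE1} for each $\omega$ and lets one pass the limit $r_\ast\to+\infty$ inside the Fourier integral by dominated convergence using the uniform bounds on $u_1,u_2$ from \cref{sec:radial}. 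Once these are in place the estimate is purely Plancherel plus the uniform bound, with no further analysis required.
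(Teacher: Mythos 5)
Your proposal is correct and follows essentially the same route as the paper: mode decomposition justified by the compact support on constant-$r$ slices, identification of the asymptotic coefficients of $u_1,u_2$ (resp.\ $v_1,v_2$) with the Fourier data on $\mathcal{H}$ (resp.\ $\mathcal{CH}$) via dominated convergence, the fibrewise matrix $\begin{pmatrix}\mathfrak T&\bar{\mathfrak R}\\\mathfrak R&\bar{\mathfrak T}\end{pmatrix}$ and its explicit inverse from $|\mathfrak T|^2-|\mathfrak R|^2=1$, and Plancherel combined with \cref{thm:boundednesstrans}. The only cosmetic difference is that the paper works with $\phi=T\psi$ so that the horizon traces are honest $L^2$ functions, whereas you absorb the derivative into the $i\omega$ normalization of \eqref{eq:defnF1}--\eqref{eq:defnF2}; these are equivalent.
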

	\begin{proof}		
		Set $\phi:= T\psi$ and note that $\phi\restriction_{\mathcal H} \in \mathcal{D}^T_{\mathcal H}$  and $\phi$ also solves \eqref{eq:linearwave}. Since $\psi \in \mathcal{D}_{\mathcal H}^T \subset \mathcal{E}^T_{\Hp}$, we have that $\phi\restriction_{\Hp_A} = T \psi\restriction_{\Hp_A} \in L^2(\Hp_A)$ with respect to the unique volume form induced by the normal vector field $T$. Analogously, we also have $\phi\restriction_{\Hp_B} = T \psi\restriction_{\Hp_B} \in L^2(\Hp_B)$. Thus, we can define the Fourier transform on the event horizon with the charts \eqref{eq:ha} and \eqref{eq:hb} as \begin{align}
		a_{\Hp_A}(\omega,\theta,\phi) := \frac{1}{\sqrt{2\pi} } \int_{\mathbb{R} } \phi\restriction_{\Hp_A} (v,\theta,\phi) e^{-i\omega v} \d v  
		\end{align}
		and 
		\begin{align}
		a_{\Hp_B}(\omega,\theta,\phi) := \frac{1}{\sqrt{2\pi} } \int_{\mathbb{R} } \phi\restriction_{\Hp_B} (u,\theta,\phi) e^{i\omega u} \d u.
		\end{align}
	 We can further decompose the Fourier coefficients in spherical harmonics to obtain
		\begin{align}
		a^{\ell,m}_{\Hp_A}(\omega) = \langle Y_{\ell m} ,a_{\mathcal{H}_A} \rangle_{L^2(\mathbb S^2)}
	\text{ and }
		a^{\ell,m}_{\Hp_B}(\omega) = \langle Y_{\ell m} ,a_{\mathcal{H}_B} \rangle_{L^2(\mathbb S^2)}.
		\end{align}
			From Plancherel's theorem, we obtain
			\begin{align}
			&	\| \psi\restriction_{\Hp_A} \|_{\mathcal{E}^T_{\Ho_A}}^2 = \sum_{|m|\leq \ell, \ell \geq 0}\int_{\mathbb{R}} |a_{\Hp_A}^{\ell,m} (\omega) |^2 \d \omega,\label{eq:psiT1} \\
			&	\| \psi\restriction_{\Hp_B} \|_{\mathcal{E}^T_{\Hp_b}}^2 = \sum_{|m|\leq \ell, \ell \geq 0}\int_{\mathbb{R}} |a_{\Hp_B}^{\ell,m} (\omega) |^2 \d \omega.\label{eq:psiT2}
			\end{align}
		Similarly, since $\phi\restriction_{\Ch} \in \mathcal{D}^T_{\Ch}$, we define
		\begin{align}
		b_{\Ch_A}(\omega,\theta,\phi) := \frac{1}{\sqrt{2\pi} } \int_{\mathbb{R} } \phi\restriction_{\Ch_A} (v,\theta,\phi) e^{-i\omega v} \d v  
		\end{align}
		and 
		\begin{align}
	b_{\Ch_B}(\omega,\theta,\phi) := \frac{1}{\sqrt{2\pi} } \int_{\mathbb{R} } \phi\restriction_{\Ch_B} (u,\theta,\phi) e^{i\omega u} \d u.
		\end{align}
		We can further decompose the Fourier coefficients in spherical harmonics to obtain
		\begin{align}
		b^{\ell,m}_{\Ch_A}(\omega) = \langle Y_{\ell m} ,b_{\mathcal{CH}_A} \rangle_{L^2(\mathbb S^2)}
		\text{ and }
		b^{\ell,m}_{\Ch_B}(\omega) = \langle Y_{\ell m} ,b_{\mathcal{CH}_B} \rangle_{L^2(\mathbb S^2)}.
		\end{align}
	Again, in view of Plancherel's theorem
		\begin{align}\label{eq:psiT3}
		&\|\psi\restriction_{\Ch_A} \|^2_{\mathcal{E}^T_{\Ch_A}} = \sum_{|m|\leq \ell, \ell \geq 0} \int_{\mathbb{R}} |b^{\ell,m}_{\Ch_A} (\omega)|^2 \d\omega ,\\
		&\|\psi\restriction_{\Ch_B} \|^2_{\mathcal{E}^T_{\Ch_B}} = \sum_{|m|\leq \ell, \ell \geq 0} \int_{\mathbb{R}} |b^{\ell,m}_{\Ch_B} (\omega)|^2 \d\omega .\label{eq:psiT4}
		\end{align}
		and similarly for $\Ch_B$. 
		We shall also decompose $\phi$ on a constant $r$ slice. Fix $r\in(r_-,r_+)$, then set 
		\begin{align} 
		\hat \phi_{m \ell }(\omega,r)  = \frac{1}{\sqrt{2\pi }}\int_{\mathbb{R}} \int_{\mathbb{S}^2} Y_{m\ell }(\theta,\phi) \phi(t,r,\theta,\phi) e^{-i\omega t} \sin\theta \d\theta \d\phi \d t
		\end{align}
		such that 
		\begin{align}
		\phi(t,r,\theta,\phi) = \frac{1}{\sqrt{2\pi}} \sum_{|m|\leq \ell, \ell \geq 0 } \int_{\mathbb{R}} \hat{\phi}_{m\ell} ( \omega, r) Y_{m\ell} (\theta,\phi) e^{i\omega t} \d \omega .
		\end{align}
		This is well-defined since $\phi(t,r,\theta,\phi)$ is compactly supported on each $r=const.$ slice.
		
		Since $\phi$ is smooth, we also know that $\hat{\phi}_{m\ell}$ satisfies the radial o.d.e.\ \eqref{eq:radialode} and  can be expanded as
		\begin{align}
		\hat{\phi}_{m\ell} (\omega,r(r_\ast)) = \alpha^{ \ell, m }_{\Ho_A} (\omega) \frac{r_+}{r} u_1(\omega, r_\ast) +  \alpha^{\ell, m}_{\Ho_B} (\omega) \frac{r_+}{r} u_2(\omega, r_\ast), \label{eq:psihatrplus}
		\end{align}
		where \begin{align} &| u_1 - e^{i\omega r_\ast} | \lesssim_\ell e^{2  \kappa_+ r_\ast} \sim (r_+ - r),
		\\ &| u_2 - e^{ - i\omega r_\ast} | \lesssim_\ell e^{2  \kappa_+ r_\ast} \sim (r_+ - r) 
		\end{align}
		for $r_\ast \leq 0$. Note that this holds uniformly in $\omega$. We shall show in the following that indeed $\alpha_{\Ho_A}^{\ell,m} = a_{\Ho_A}^{\ell,m}$ and $\alpha_{\Ho_B}^{\ell,m} = a_{\Ho_B}^{\ell,m}$. To do so, note that for $r(r_\ast)$ with $r_\ast \leq 0$ we have for fixed $(m,\ell)$ that
		\begin{align}\nonumber
		\phi^{\ell,m}(t,r) = \langle \phi, Y_{m\ell} \rangle_{L^2(\mathbb S^2)} = \int_{\mathbb{R}} &\left(  \alpha^{\ell,m}_{\Ho_A} (\omega) \frac{r_+}{r} u_1(\omega, r_\ast(r))  +  \alpha^{\ell,m}_{\Ho_B} (\omega) \frac{r_+}{r} u_2(\omega, r_\ast(r)) \right) e^{i\omega t}  \frac{ \d \omega }{\sqrt{2\pi}}.
		\end{align}
		We want to interchange the limit $r\to r_+$ with the integral. In order to use Lebesgue's dominated convergence theorem we will estimate $ \alpha_{\Ho_A}^{\ell,m}$ and $ \alpha_{\Ho_B}^{\ell,m}$. Note that
		\begin{align}
			| \alpha_{\Ho_A}^{\ell,m}| = \left| \frac{\mathfrak W(\frac{r}{r_+} \hat \phi_{m\ell}, u_2)}{\mathfrak W(u_1, u_2)} \right| =  \left| \frac{\mathfrak W( \frac{r}{r_+} \hat {T\psi_{m\ell}}, u_2)}{\mathfrak W(u_1, u_2)} \right|\leq   \frac{|\omega \mathfrak W( \frac{r}{r_+} \hat \psi_{m\ell} , u_2)|}{2|\omega|} \leq \left|\mathfrak W\left(\frac{r}{r_+} \hat \psi_{m\ell}, u_2\right)\right|,
		\end{align}
		which is independent of $r(r_\ast)$ and integrable since $\omega\mapsto \hat \psi_{m\ell}(\omega,r_\ast)$ is a Schwartz function. 
		Now, we shall fix $v= r_\ast+ t$ and let $r\to r_+$ such that $r_\ast \to -\infty$. Then, using Lebesgue's dominated convergence theorem, we obtain
		\begin{align}\nonumber
		\phi^{\ell,m} = \int_{\mathbb{R}} \left(  \alpha^{\ell,m}_{\Ho_A} (\omega) e^{i\omega v} + \alpha^{\ell,m}_{\Ho_B} (\omega) e^{-2i \omega r_\ast}e^{i \omega v } \right)  \frac{ \d \omega }{\sqrt{2\pi}}  + O(r_+ - r)
		\end{align}
		as $r\to r_+$. 
		Finally, for $v$ fixed and letting $r\to r_+$ (or $r_\ast \to -\infty$), we obtain
		\begin{align}
		\phi^{\ell,m}\restriction_{\Ho_A} (v) =  \int_{\mathbb{R}}   \alpha^{\ell,m}_{\Ho_A} (\omega) e^{i\omega v}  \frac{ \d \omega }{\sqrt{2\pi}} 
		\end{align}
		in view of the Riemann--Lebesgue lemma.
		Also, by definition of $a^{\ell, m}_{\Ho_A}$, 
		\begin{align}
			\phi\restriction_{\Ho_A} (v,\theta,\phi)= \sum_{|m|\leq \ell, \ell \geq 0} \int_{\mathbb{R}}   a^{\ell,m}_{\Ho_A}(\omega, \theta, \phi) e^{i\omega v } Y_{\ell m}(\theta, \phi) \frac{\d v}{\sqrt{2\pi}}.
		\end{align}
		In view of the Fourier inversion theorem and the fact that the spherical harmonics form a basis we conclude that \begin{align}\label{eq:aequalsalpha} \alpha_{\Ho_A}^{\ell,m} =  a_{\Ho_A}^{\ell,m} \text{  and analogously, }\alpha_{\Ho_B}^{\ell,m} = a_{\Ho_B}^{\ell,m}.\end{align}
		
		Similarly to \eqref{eq:psihatrplus}, we can expand $\hat \psi_{m\ell}$ in a fundamental pair of solutions corresponding to both Cauchy horizons $\Ch_A$ and $\Ch_B$. In particular, we can write
		\begin{align}
		\hat{\phi}_{m\ell} (\omega,r(r_\ast)) = \beta^{\ell,m}_{\Ch_A} (\omega) \frac{r_+}{r} v_1(\omega, r_\ast) +   \beta^{ \ell,m}_{\Ch_A} (\omega)  \frac{r_+}{r} v_2(\omega, r_\ast), 
		\end{align}
		where 
		\begin{align} &| v_1 - e^{-i\omega r_\ast} | \lesssim_\ell e^{2  \kappa_- r_\ast} \sim (r- r_-) ,
		\\ &| v_2 - e^{ i\omega r_\ast} | \lesssim_\ell e^{2  \kappa_- r_\ast} \sim (r- r_-).
		\end{align}
		for $r_\ast \geq 0$. Similarly to \eqref{eq:aequalsalpha}, we can prove
		\begin{align}
\frac{r_+}{r_-}\beta^{\ell,m}_{\Ch_A} (\omega)= b^{\ell,m}_{\Ch_A} (\omega) \text{ and } \frac{r_+}{r_-}\beta^{\ell,m}_{\Ch_B} (\omega) = b^{\ell,m}_{\Ch_B} (\omega).
		\end{align}
		Moreover, from the uniform boundedness of the reflection and transmission coefficients (cf. \cref{thm:boundednesstrans}) we have the estimate
		\begin{align}
| b^{\ell,m}_{\Ch_A} (\omega)| + | b^{\ell,m}_{\Ch_B} (\omega)| &= 	\frac{r_+}{r_-}	| \beta^{\ell,m}_{\Ch_A} (\omega)| + \frac{r_+}{r_-}| \beta^{\ell,m}_{\Ch_B} (\omega) |  =  \frac{r_+}{r_-}\left(\left| \mathfrak R \alpha_{\Hp_A}^{\ell,m} +\bar{ \mathfrak T}\alpha_{\Hp_B}^{\ell,m}\right| + \left| \bar{\mathfrak{R}}\alpha_{\Hp_B}^{\ell,m} + \mathfrak T \alpha_{\Hp_A}^{\ell,m} \right| \right) \nonumber \\&\leq C ( | \alpha^{\ell,m}_{\Ho_A} (\omega)| + |\alpha^{\ell,m}_{\Ho_B} (\omega)| ) = C( | a^{\ell,m}_{\Ho_A} (\omega)| + |a^{\ell,m}_{\Ho_B} (\omega)|) \label{eq:estimatebeta}
		\end{align}
		for a constant $C$ which only depends on the black hole parameters. Here, we have used the fact that
		\begin{align}\label{eq:scatteringcoef1}
			\begin{pmatrix}\beta^{\ell,m}_{\Ch_B} \\ 
			\beta^{\ell,m}_{\Ch_A} 
			\end{pmatrix} = \begin{pmatrix}
			\mathfrak T & \bar{\mathfrak R}\\\mathfrak R & \bar{\mathfrak T}
			\end{pmatrix}
			\begin{pmatrix}
			\alpha^{\ell,m}_{\Hp_A} \\ 	\alpha^{\ell,m}_{\Hp_B}.
			\end{pmatrix}.
		\end{align}
		In view of $1=|\mathfrak T|^2 - |\mathfrak R|^2$, we also have
		\begin{align}\label{eq:scatteringcoef2}
			\begin{pmatrix}
			\alpha^{\ell,m}_{\Hp_A} \\ 	\alpha^{\ell,m}_{\Hp_B}
			\end{pmatrix}	 = \begin{pmatrix}
		\bar{	\mathfrak T} & -\bar{\mathfrak R}\\ -\mathfrak R & {\mathfrak T}
		\end{pmatrix}
		\begin{pmatrix}\beta^{\ell,m}_{\Ch_B} \\ 
		\beta^{\ell,m}_{\Ch_A} 
		\end{pmatrix}
		\end{align}
from which we deduce 
		\begin{align}\label{eq:boundedbackwards}
| a^{\ell,m}_{\Ho_A} (\omega)| + |a^{\ell,m}_{\Ho_B} (\omega)| \lesssim | b^{\ell,m}_{\Ch_A} (\omega)| + | b^{\ell,m}_{\Ch_B} (\omega)|.
		\end{align}
		Estimate \eqref{eq:estimatebeta} and \eqref{eq:boundedbackwards} show the claim in view of \eqref{eq:psiT1}, \eqref{eq:psiT2}, \eqref{eq:psiT3}, and \eqref{eq:psiT4}. Finally, in view of the Fourier inversion theorem, note that the previous also justifies the Fourier representation of scattering map \eqref{eq:scatteringfourier}, and the Fourier representations \eqref{eq:fourierrep1} and \eqref{eq:fourierrep2}. 
	\end{proof}
	\subsection{Completing the proof}
Having proven \cref{lem:lemmadense} and \cref{thm:thmbounds}, we can finally show \cref{thm:forwardevolution} in the following.
\begin{proof}[Proof of \cref{thm:forwardevolution}]  Since $\mathcal{D}^T_{\Hp}\subset \mathcal{E}^T_\Hp$ is dense (\cref{lem:lemmadense}) and $S_0^T\colon \mathcal{D}^T_{\Hp}\subset \mathcal{E}^T_\Hp \to \mathcal{D}^T_{\Ch}\subset \mathcal{E}^T_\Ch$ is a bounded injective map (\cref{rmk:injectiv}, \cref{thm:thmbounds}), we can uniquely extend $S_0^T$ to the bounded injective scattering map 
	\begin{align}
	S^T\colon \mathcal{E}^T_\Hp \to \mathcal{E}^T_\Ch.
	\end{align}
	
	Analogously, in view of \cref{thm:time}, \cref{rmk:domain}, \cref{rmk:injectiv}, and \cref{thm:thmbounds}, we can uniquely extend the bounded injective map $B_0^T\colon \mathcal{D}^T_\Ch \subset \mathcal{E}^T_\Ch \to \mathcal{D}^T_\Ch\subset \mathcal{E}^T_\Hp$ to the bounded injective backward map $B^T\colon \mathcal{E}^T_\Ch \to \mathcal{E}^T_\Hp$ (\cref{lem:lemmadense}).	
	
	Since $B_0^T\circ	S_0^T = \mathrm{Id}_{ \mathcal{D}^T_{\Ho} }$ and $S_0^T \circ B_0^T = \mathrm{Id}_{ \mathcal{D}^T_{\Ch}}$ on dense sets, it also extends to $\mathcal{E}^T_\Hp$ and $\mathcal{E}^T_\Ch$ from which \eqref{eq:inverses} follows.
	Similarly, it suffices to check  \eqref{eq:pseudounitary} for $\psi \in \mathcal{D}_{\Ho}^T$. Indeed, \eqref{eq:pseudounitary} holds true for $\psi \in \mathcal{D}_{\Ho}^T$ in view of the $T$ energy identity. 
\end{proof}
\section{Proof of \texorpdfstring{\cref{thm:cosmological}}{Theorem 6}: Breakdown of \texorpdfstring{$T$}{T} energy scattering for cosmological constants \texorpdfstring{$\Lambda\neq 0$}{Lambdaneq0}}
\label{sec:cosmo}
In the presence of a cosmological constant $\Lambda$, the situation regarding the $T$ energy scattering problem is changed radically. In this section we will consider the subextremal (anti-) de Sitter--Reissner--Nordström black hole interior $(\mathcal{M}_{\text{(a)dSRN}} , g_{Q,M,\Lambda})$ which is completely analogous to $(\mathcal{M}_{\text{RN}} , g_{Q,M}$). We will assume that $(M,Q,\Lambda) \in \mathcal{P}_{\mathrm{se}}$ as defined in \cref{subsec:nonex}. 
 Also, recall that in the presence of a cosmological constant it is natural to look at the Klein--Gordon equation
\begin{align}\label{eq:kg}
\Box_g \psi - \mu \psi = 0
\end{align} with mass $\mu = \frac{3}{2} \Lambda$ for the conformal invariant equation or more general $\mu = \nu \Lambda$ for fixed $\nu \in \mathbb R$. 

 This section is devoted to prove \cref{thm:cosmological} which relies on the fact that solutions of the corresponding radial o.d.e.\ in the vanishing frequency limit $\omega=0$ generically map bounded solutions at $r_\ast = -\infty$ to unbounded solutions at $r_\ast  = +\infty$.  More precisely, for $\Lambda \neq 0$ we obtain---after separation of variables for \eqref{eq:kg} and setting $\d r_\ast = h^{-1} \d r$---the o.d.e.\
\begin{align}
\label{eq:ODElambda}
	-u^{\prime \prime } + V_{\ell,\Lambda} u = \omega^2 u
\end{align}
 for $u(r_\ast) = r(r_\ast) R(r_\ast)$, where
\begin{align}\label{eq:potentiallam}
V_{\ell,\Lambda} = h \left(\frac{ h h^\prime}{r} + \frac{\ell(\ell+1)}{r^2} - \mu \right) =  h \left(\frac{ \frac{\d h}{\d r}}{r} + \frac{\ell(\ell+1)}{r^2} - \mu \right)
\end{align}
and \begin{align}\label{eq:hlambda}h= \frac{\Delta}{r^2} =  1- \frac{2M}{r}- \frac 13 \Lambda r^2 + \frac{Q^2}{r^2}.\end{align}
Here, consider $r(r_\ast)$ as a function $r_\ast$ and recall that $~^\prime$ denotes the derivative with respect to $r_\ast$. The presence of the mass and the cosmological constant leads to a modification of the potential $V_{\ell,\Lambda}$. 

	Nevertheless, the potential $V_{\ell,\Lambda}$ still decays exponentially at $\pm \infty$ and we can define  asymptotic states $u_1^{(\Lambda)},u_2^{(\Lambda)}$, and $ v_1^{(\Lambda)},v_2^{(\Lambda)}$ for $\omega\neq 0$ and $\tilde u_1^{(\Lambda)}$, $\tilde u_2^{(\Lambda)}$, and $\tilde v_1^{(\Lambda)},\tilde v_2^{(\Lambda)}$ for $\omega =0$ just as in the case where $\Lambda=\mu =0$ in \cref{defn:u1u2}.
	In particular, $\tilde u_1^{(\Lambda)}$ and $\tilde v_1^{(\Lambda)}$ remain bounded as $r_\ast \to -\infty$ and $r_\ast \to + \infty$, respectively. In contrast to that, $\tilde u_2^{(\Lambda)}$ and $\tilde v_2^{(\Lambda)}$ grow linearly in their respective limits. The next proposition states that in the presence of a cosmological constant, solutions to \eqref{eq:kg} in the case $\omega=0$ which are bounded at $r_\ast =-\infty$ do not need to be bounded at $r_\ast = +\infty$. 
	\begin{prop}\label{prop:Bneq0} Fix $\nu \in \mathbb{R}$ (e.g.\ $\nu = \frac{3}{2}$ for the conformal invariant mass) and fix subextremal black hole parameters  $(M,Q,\Lambda) \in \mathcal{P}_{\mathrm{se}}$. Assume moreover that $(M,Q,\Lambda) \notin D(\nu)$, where $D(\nu)\subset \mathcal{P}_{\mathrm{se}}$ is defined in the proof and has measure zero. Then, there exists an $\ell_0 = \ell_0(\nu) \in \mathbb{N}_0$ such that we have \begin{align}\label{eq:decompu1inlambda}\tilde u_1^{(\Lambda)} =A(\ell_0 ,\Lambda,M,Q) \tilde v_1^{(\Lambda)} + B(\ell_0,\Lambda,M,Q) \tilde v_2^{(\Lambda)},
		\end{align}
		with $B = B(\ell_0, \Lambda,M,Q) \neq 0$.  
Moreover, $\mathcal{P}_{\mathrm{se}}^{\Lambda =0}\subset D(\nu)$ for all $\nu \in \mathbb R$ and there exists an open subset $U$ with $\mathcal{P}_{\mathrm{se}}^{\Lambda =0} \subset U\subset \mathcal{P}_{\mathrm{se}}$ and  $\mathcal{P}_{\mathrm{se}} \cap U = \mathcal{P}_{\mathrm{se}}^{\Lambda =0}$.
		\begin{proof} Let $\nu \in \mathbb R$ be fixed.
			In the case $\Lambda = 0$ we can represent $\tilde u_1$ with Legendre polynomials and in particular we have that $B(\ell,\Lambda=0,M,Q ) =0$ for all $\ell$ and $0<|Q|<M$. 
			Note that we can write $B$ as 
			\begin{align}\label{eq:defnb}
				B(\Lambda,\ell,M,Q) = \frac{\mathfrak W (\tilde v_2^{(\Lambda)}, \tilde u_1^{(\Lambda)})}{\mathfrak W(\tilde v_1^{(\Lambda)}, \tilde v_2^{(\Lambda)})} = \mathfrak W (\tilde v_2^{(\Lambda)}, \tilde u_1^{(\Lambda)})
			\end{align}
			for all $\Lambda$ such that $(M,Q,\Lambda) \in \mathcal{P}_{\mathrm{se}}$.
				\paragraph{Step 1: \texorpdfstring{$\mathcal{P}_{\mathrm{se}}\subset \mathbb{R}^3$ is open and has two connected components where either $Q>0$ or $Q<0$}{Pse is connected}} For the sake of completeness we will give a proof of \emph{Step 1}, although this seems a quite well-known fact.
			Note that $\mathcal{P}_{\mathrm{se}} = \mathcal{P}_{\mathrm{se}}^{\Lambda >0} \cup\mathcal{P}_{\mathrm{se}}^{\Lambda <0} \cup\mathcal{P}_{\mathrm{se}}^{\Lambda =0} $ is open which can be inferred from its definition.
				
				For the second statement, first note that $\{Q=0\}\cap \mathcal{P}_{\mathrm{se}} =\emptyset$. We will now show that $ \{Q>0\}\cap \mathcal{P}_{\mathrm{se}}$ is connected. In \cref{prop:connectedness} in the appendix we show that $\mathcal{P}_{\mathrm{se}}^{\Lambda >0} \cap \{ Q>0\}$ and $\mathcal{P}_{\mathrm{se}}^{\Lambda <0} \cap \{ Q>0\}$ are path-connected.
			To conclude, note that for every $(M_0,Q_0,\Lambda_0 = 0 )\in \mathcal{P}_{\mathrm{se}}^{\Lambda=0}$, there exist paths from $(M_0,Q_0,\Lambda_0)$ to both $(M_0,Q_0,\epsilon) \in \mathcal{P}_{\mathrm{se}}^{\Lambda >0}$ and  $(M_0,Q_0,-\epsilon) \in \mathcal{P}_{\mathrm{se}}^{\Lambda <0}$ for some $\epsilon(M_0,Q_0)>0$. Together with the fact that $\mathcal{P}_{\mathrm{se}}^{\Lambda=0}\cap \{ Q >0\}$ is path-connected, this shows that  $ \{Q>0\}\cap \mathcal{P}_{\mathrm{se}}$ is path-connected and similarly that $ \{Q<0\}\cap \mathcal{P}_{\mathrm{se}}$ is path-connected which proves the claim.
			\paragraph{Step 2: \texorpdfstring{$\mathcal{P}_{\mathrm{se}}\ni (M,Q,\Lambda) \mapsto B(\ell,\Lambda,M,Q)$ is real analytic}{B is holomorphic}} 
To show \emph{Step 2} we first express \eqref{eq:decompu1inlambda} in $r$ coordinates.  Note that for $(M,Q,\Lambda)\in \mathcal{P}_{\mathrm{se}}$ equation \eqref{eq:decompu1inlambda} is equivalent to 
\begin{align}\label{eq:p=ptilde}
	\frac{r_+}{r_-}  (-1)^\ell P_\ell^{(\Lambda)} (x(r)) =  A(\ell,\Lambda) \tilde P^{(\Lambda)}_\ell (x(r)) +   B(\ell,\Lambda) \tilde Q^{(\Lambda)}_\ell (x(r)),
\end{align} 
where  $r\in (r_-, r_+)$,
\begin{align}
x(r):= - \frac{2r}{r_+ - r_-} + \frac{r_+ + r_-}{r_+ - r_-},
\end{align}
\begin{align}
r(x) =- \frac{r_+ - r_-}{2} x+ \frac{r_+ + r_-}{2}
\end{align}
and $0<r_-<r_+$. Now, note that $\mathcal{P}_{\mathrm{se}} \ni (M,Q,\Lambda)\mapsto r_-$ and $\mathcal{P}_{\mathrm{se}}\ni (M,Q,\Lambda) \mapsto r_+$ are real analytic. Moreover, we can write $\Delta = (r - r_-) (r-r_+) p(r)$ for a second order polynomial $p(r)$, where $\mathcal{P}_{\mathrm{se}} \ni \Lambda \mapsto p(r) $ is also real analytic for fixed $r$. Now, $P_\ell^{(\Lambda)}$, $\tilde P_\ell^{(\Lambda)}$ and $ \tilde Q_\ell^{(\Lambda)}$ appearing in \eqref{eq:p=ptilde} are defined as the unique solutions of 
\begin{align}\label{eq:legendreperturbed}
\frac{\d }{\d x} \left( (1-x^2) p(r(x)) \frac{\d R}{\d x}\right)  + \ell (\ell+1) R - r(x)^2 \nu \Lambda R=0
\end{align}
satisfying
\begin{align}\label{eq:propertiesfirst}
	&P_\ell^{(\Lambda)} = (-1)^\ell +O_\ell(1+x) \text{ as } x\to -1,\\
	& \frac{\d P_\ell^{(\Lambda)}}{\d x} = O_\ell(1) \text{ as } x\to-1,\\
		&\tilde P_\ell^{(\Lambda)} = 1 +O_\ell(1-x) \text{ as } x\to 1,\\
		& \frac{\d \tilde P_\ell^{(\Lambda)}}{\d x} = O_\ell(1) \text{ as } x\to1,\\
			&\tilde Q_\ell^{(\Lambda)} = - \frac 12 \log(1-x) + O_\ell(1) \text{ as } x\to 1,\\
		& \frac{\d \tilde Q_\ell^{(\Lambda)}}{\d x} = \frac{1}{2(1-x)} + O_\ell( (1-x) \log(1-x)) \text{ as } x\to 1.\label{eq:propertieslast}
\end{align}
Note that \eqref{eq:legendreperturbed} depends real analytically on $(M,Q,\Lambda) \in \mathcal{P}_{\mathrm{se}}$ such that $ P_\ell^{(\Lambda)}(x)$, $\tilde P_\ell^{(\Lambda)}(x)$, $\tilde Q_\ell^{(\Lambda)}(x)$ are real analytic functions of $(M,Q,\Lambda) \in \mathcal{P}_{\mathrm{se}}$ for $x\in (-1,1)$. Hence, $\mathcal{P}_{\mathrm{se}} \ni (M,Q,\Lambda) \mapsto B(\ell, \Lambda,M,Q)$ is real analytic.
\paragraph{Step 3: \texorpdfstring{$B(\ell_0(\nu),\Lambda,M,Q)$}{Bdlambda} only vanishes on a set \texorpdfstring{$D(\nu) \subset \mathcal{P}_{\mathrm{se}}$}{D} of measure zero.} The claim follows from 
\begin{align}\label{eq:dbdlam}\left.\frac{\partial B(\ell,\Lambda,M_0,Q_0)}{\partial \Lambda}\right|_{\Lambda = 0}\neq 0\end{align}
for some $0<|Q_0|<M_0$. Throughout \emph{Step 2} we fix $0<|Q_0|<M_0$ and avoid writing their explicit dependence. 
First note that that for $\Lambda =0$ we obtain the Legendre functions of first and second kind, i.e.\ $P_\ell^{(0)} = \tilde P_\ell^{(0)} = P_\ell $ and $\tilde Q_\ell^{(0)} = Q_\ell$ and $B(0,\ell) =0$. 
Now, define coefficients $\tilde A(\ell,\Lambda)$ and $\tilde B(\ell,\Lambda)$ to satisfy
 \begin{align}
 	\label{eq:expansion11}
 	P_\ell^{(\Lambda)} = \tilde A(\ell,\Lambda) \tilde  P_\ell^{(\Lambda)} + \tilde B (\ell,\Lambda)\tilde Q_\ell^{(\Lambda)},
 \end{align}
 and note that \eqref{eq:dbdlam} is equivalent (use that $B(\ell,0) =  \tilde B(\ell,0) = 0 $) to 
 \begin{align}
 	\frac{\partial \tilde B(\ell,\Lambda)}{\partial \Lambda}\Big\vert_{\Lambda =0} \neq 0 .
 \end{align}
 By construction, $P_\ell^{(\Lambda)}$ solves \eqref{eq:legendreperturbed}. Multiplying 
\begin{align}
	\frac{\d }{\d x} \left( (1-x^2) p(r(x)) \frac{\d P_\ell^{(\Lambda)}}{\d x}\right)  + \ell (\ell+1) P_\ell^{(\Lambda)} - r(x)^2 \nu \Lambda P_\ell^{(\Lambda)}=0
\end{align}
by $P_\ell^{(0)}$ and integrating from $x=-1$ to $x=1$ yields
\begin{align}
	0= 	\int_{-1}^{1}P_\ell^{(0)} \left( 	\frac{\d }{\d x} \left( (1-x^2) p(r(x)) \frac{\d P_\ell^{(\Lambda)}}{\d x}\right)  + \ell (\ell+1) P_\ell^{(\Lambda)} - r(x)^2\nu \Lambda P\ell^{(\Lambda)} \right)\d{x}.
\end{align}
Using the expansion \eqref{eq:expansion11} and the properties \eqref{eq:propertiesfirst} -- \eqref{eq:propertieslast} at the end points $x=-1$ and $x=1$ gives after an integration by parts
\begin{align}0= &	\int_{-1}^{1}P_\ell^{(\Lambda)} \left( 	\frac{\d }{\d x} \left( (1-x^2) p(r(x)) \frac{\d P_\ell^{(0)}}{\d x}\right)  + \ell (\ell+1) P_\ell^{(0)} - r(x)^2\nu \Lambda  P_\ell^{(0)}\right)\d{x} + p(r(1)) \tilde B(\ell,\Lambda). \end{align}

Now, taking $\partial_\Lambda \big\vert_{\Lambda =0}$ and integrating by parts once again yields
\begin{align}
p(r(1))	\partial_{\Lambda} \big\vert_{\Lambda =0} \tilde B (\ell, \Lambda) & =  \int_{-1}^1 \left[ \left| \frac{\d P_\ell^{(0)} }{\d x}\right|^2 (1-x^2) \partial_{\Lambda} \big\vert_{\Lambda =0} ( p(r(x)) )+ \left|P_\ell^{(0)}\right|^2\partial_{\Lambda}\big\vert_{\Lambda =0} ( \nu r(x)^2 \Lambda)\right] \d x \nonumber \\ & =  \int_{-1}^1 \left[ \left| \frac{\d P_\ell^{(0)} }{\d x}\right|^2 (1-x^2) \partial_{\Lambda} \big\vert_{\Lambda =0}( p(r(x)) )+\nu \left|P_\ell^{(0)}\right|^2   r(x)^2\vert_{\Lambda =0} \right] \d x . 
\end{align}
Recall that we are in the subextremal range which guarantees that $p(r(1))\neq 0$. We will now distinguish two cases, $\nu =0$ and $\nu \neq 0$. 

\paragraph{Part I: $\nu =0$}In the case $\nu =0$ we have
\begin{align}
p(r(1))	\partial_\Lambda\vert_{\Lambda =0} \tilde B(\ell,\Lambda) = \partial_\Lambda\vert_{\Lambda =0} \int_{-1}^{1} \left|\frac{\d P_\ell}{\d x}\right|^2 (1-x^2)p(r(x)) \d x
\end{align}
In the case $\nu =0$ we will choose $\ell =1$ such that 
\begin{align*}	p(r(1)) \partial_\Lambda\vert_{\Lambda =0} \tilde B(1,\Lambda) =& \partial_\Lambda\vert_{\Lambda =0} \int_{-1}^{1}  (1-x^2)p(r(x)) \d x \\ = & 
	\partial_\Lambda\vert_{\Lambda =0} \int_{-1}^{1}  -\Delta(r(x)) \frac{4}{(r_+ - r_-)^2} \d x
	\\ = & 
	\partial_\Lambda\vert_{\Lambda =0}\left( \frac{-8}{(r_+ - r_-)^3}  \int_{r_-}^{r_+}  \Delta(r) \d r\right)
	\\= & -8\, \partial_\Lambda\vert_{\Lambda =0} \left(
	\frac{\frac{r_+^3-r_-^3}{3} - M_0 (r_+^2 - r_-^2) +Q_0^2(r_+ - r_-) -\frac {1}{15} \Lambda (r_+^5 - r_-^5) }{(r_+ - r_-)^3}\right)\\ =&\,  
	\frac{8(r_+^5-r_-^5)}{15(r_+-r_-)^3} \Big\vert_{\Lambda =0} \\ & + 8
	\frac{\frac{r_+^3-r_-^3}{3} - M_0 (r_+^2 - r_-^2) +Q_0^2(r_+ - r_-) }{(r_+ - r_-)^5}  (r_+^4 +  r_-^4)\Big\vert_{\Lambda =0}\\ & 
	- \frac{8}{3} \frac{ r_+^6 +  r_-^6 - 2M_0 (r_+^5+r_-^5) + Q_0^2(r_+^4 +  r_-^4)}{(r_+ - r_-)^4}\Big\vert_{\Lambda =0} \\ = &
	\frac{-8}{15} \left( 3 r_+^3 + 3 r_-^2 + 4 r_+ r_- \right)\Big\vert_{\Lambda =0} \\ = & \frac{-8}{15}  \left(6 M_0^2-Q_0^2\right) < - 24 M_0^2 .
\end{align*}
The last step is a long but direct computation using that  $\Delta=r^2 - 2M_0r + Q_0^2 -\frac{\Lambda}{3} r^4$ and  $r_{\pm}\vert_{\Lambda =0} = M_0\pm \sqrt{M_0^2-Q_0^2}$, i.e.\ $Q_0^2 = r_+ r_-\vert_{\Lambda =0}$ and $2M_0 = r_+\vert_{\Lambda =0} + r_-\vert_{\Lambda =0}$. Moreover, in view of the inverse function theorem we have
\begin{align}
	\partial_\Lambda \vert_{\Lambda =0} r_+ =  \frac{ r_+^4{}}{3(r_+-r_-) } \Big\vert_{\Lambda =0}
\end{align}
and 
\begin{align}
	\partial_\Lambda \vert_{\Lambda =0} r_- = - \frac{ r_-^4}{3 (r_+-r_-)}\Big\vert_{\Lambda =0}. 
\end{align}

\paragraph{Part II: $\nu \neq 0$} In this case we choose $\ell=0$ such that $P_\ell^{(0)} = 1$ and $\frac{\d P_\ell^{(0)}}{\d x } = 0$. Hence,
\begin{align}\nonumber 
p(r(1)) \partial_\Lambda\vert_{\Lambda =0} \tilde B(\ell,\Lambda) &= \partial_\Lambda\vert_{\Lambda =0} \int_{-1}^{1} r(x)^2 \nu \Lambda \d x = \nu \partial_{\Lambda}\vert_{\Lambda =0} \int_{-1}^{1} \left( -\frac{r_+ - r_-}{2}x + \frac{r_+ + r_-}{2}\right)^2 \Lambda \d x\\ & = \nu   \left(\frac{1}{6} {(r_+  - r_-)^2}  + \frac{1}{2} (r_+ + r_-)^2 \right)\Big\vert_{\Lambda =0}  \neq 0.
\end{align}

This shows that $\mathcal{P}_{\mathrm{se}} \ni (M,Q,\Lambda) \mapsto B(\ell_0(\nu), M,Q,\Lambda)$ is a non-trivial real analytic function which zero set $D(\nu)$ has zero measure. The proof also shows that $\mathcal{P}_{\mathrm{se}}^{\Lambda =0}\subset D(\nu)$ and that there exists an open set $U\subset \mathcal{P}_{\mathrm{se}}$ with $\mathcal{P}_{\mathrm{se}}^{\Lambda =0} \subset U$  and $D(\nu) \cap U = \mathcal{P}_{\mathrm{se}}^{\Lambda =0}$.
	\end{proof}
\end{prop}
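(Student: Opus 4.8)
The plan is to show that $B(\ell,\Lambda,M,Q)$, viewed as a function of the three black hole parameters, is a \emph{nontrivial real-analytic} function on $\mathcal{P}_{\mathrm{se}}$; its zero set $D(\nu)$ then automatically has Lebesgue measure zero. First I would record that, exactly as in the $\Lambda=0$ case (cf.\ \cref{prop:intermediate}), the $\omega=0$ asymptotic states $\tilde v_1^{(\Lambda)},\tilde v_2^{(\Lambda)}$ can be normalized so that $\mathfrak W(\tilde v_1^{(\Lambda)},\tilde v_2^{(\Lambda)})=1$, whence the decomposition coefficient may be written as the single $r_\ast$-independent Wronskian
\[
B(\ell,\Lambda,M,Q)=\mathfrak W\bigl(\tilde v_2^{(\Lambda)},\tilde u_1^{(\Lambda)}\bigr).
\]
The crucial structural input is that in the unperturbed case the solution $\tilde u_1$ bounded at $r_\ast=-\infty$ is essentially a Legendre polynomial, which is also bounded at $r_\ast=+\infty$; equivalently $B(\ell,0,M,Q)\equiv 0$ for every $\ell$ and every $0<|Q|<M$. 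Thus nontriviality cannot be seen at $\Lambda=0$ itself and must be detected by a transverse derivative.

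Second, I would transfer the problem to the compact interval $x\in(-1,1)$ via the coordinate $x(r)$, so that \eqref{eq:ODElambda} becomes the perturbed Legendre equation \eqref{eq:legendreperturbed}. The solutions $P_\ell^{(\Lambda)},\tilde P_\ell^{(\Lambda)},\tilde Q_\ell^{(\Lambda)}$ are pinned down by their behaviour at the two regular singular endpoints \eqref{eq:propertiesfirst}--\eqref{eq:propertieslast}, and $B$ is read off as the coefficient of the logarithmically-growing solution $\tilde Q_\ell^{(\Lambda)}$. Real analyticity of $B$ in $(M,Q,\Lambda)$ then follows from three facts: $r_\pm$ depend real-analytically on the parameters on $\mathcal{P}_{\mathrm{se}}$; the coefficients of \eqref{eq:legendreperturbed}, through $p(r)$ and the factor $r^2\nu\Lambda$, depend real-analytically on them; and solutions of a linear ODE with regular singular points depend analytically on analytic parameters. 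I would also establish (as a preliminary step, deferring path-connectedness to the appendix) that $\mathcal{P}_{\mathrm{se}}$ is open with exactly two connected components, $\{Q>0\}$ and $\{Q<0\}$, so that it suffices to exhibit one point of each component at which $B$ fails to vanish to infinitesimal order.

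The heart of the argument is to compute a transverse derivative at $\Lambda=0$. Since $B(\ell,0,\cdot)=0$, it suffices to show $\partial_\Lambda B|_{\Lambda=0}\neq 0$ for a suitable $\ell_0$; passing to the equivalent coefficient $\tilde B$ in $P_\ell^{(\Lambda)}=\tilde A\,\tilde P_\ell^{(\Lambda)}+\tilde B\,\tilde Q_\ell^{(\Lambda)}$, one has $\partial_\Lambda B|_{\Lambda=0}=p(r(1))\,\partial_\Lambda\tilde B|_{\Lambda=0}$ up to a nonzero factor. I obtain a closed formula by the standard Green's-identity trick: multiply \eqref{eq:legendreperturbed} for $P_\ell^{(\Lambda)}$ by the unperturbed $P_\ell^{(0)}$, integrate over $[-1,1]$, integrate by parts twice using the endpoint asymptotics to isolate the boundary term $p(r(1))\tilde B(\ell,\Lambda)$, and differentiate at $\Lambda=0$, which yields
\[
p(r(1))\,\partial_\Lambda\tilde B|_{\Lambda=0}=\int_{-1}^1\Bigl[(1-x^2)\,\bigl|\tfrac{\d P_\ell^{(0)}}{\d x}\bigr|^2\,\partial_\Lambda p(r(x))+\nu\,r(x)^2\,|P_\ell^{(0)}|^2\Bigr]\d x .
\]
I then split into cases. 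For $\nu=0$ I take $\ell_0=1$, so only the gradient term survives; unwinding $\partial_\Lambda p$ via $\partial_\Lambda r_\pm=\pm r_\pm^4/(3(r_+-r_-))|_{\Lambda=0}$ reduces everything to an explicit polynomial integral in $r_\pm$, giving the value $-\tfrac{8}{15}(6M^2-Q^2)<0$. For $\nu\neq0$ I take $\ell_0=0$, so $P_0^{(0)}\equiv1$ kills the gradient term and leaves $\nu\int_{-1}^1 r(x)^2\,\d x\neq0$. In either case the derivative is nonzero.

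I expect the main obstacle to be precisely this computation: the boundary-term extraction must be done carefully because $\tilde Q_\ell^{(\Lambda)}$ has a genuine logarithmic singularity at $x=1$ and a simple-pole derivative, so the integrations by parts converge only because the singular contributions of $P_\ell^{(\Lambda)}$ and $P_\ell^{(0)}$ cancel against those of $\tilde Q_\ell^{(\Lambda)}$ through the matching conditions \eqref{eq:propertiesfirst}--\eqref{eq:propertieslast}; and the explicit $\nu=0$ integral requires bookkeeping of $\partial_\Lambda r_\pm$ through the inverse function theorem to land on a manifestly signed quantity. Once $\partial_\Lambda B|_{\Lambda=0}\neq0$ is secured at some interior point of each connected component (valid for both $Q_0>0$ and $Q_0<0$), nontriviality of the real-analytic $B$ on each component follows, so $D(\nu)=\{B(\ell_0,\cdot)=0\}$ has measure zero. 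Moreover $\mathcal{P}_{\mathrm{se}}^{\Lambda=0}\subset D(\nu)$ since $B$ vanishes identically there, and the non-degeneracy $\partial_\Lambda B|_{\Lambda=0}\neq0$ combined with the implicit function theorem produces an open neighbourhood $U$ of $\mathcal{P}_{\mathrm{se}}^{\Lambda=0}$ in which the zero set is exactly $\{\Lambda=0\}$, i.e.\ $\mathcal{P}_{\mathrm{se}}\cap U=\mathcal{P}_{\mathrm{se}}^{\Lambda=0}$.
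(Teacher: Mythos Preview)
Your proposal is correct and follows essentially the same approach as the paper: the same Wronskian formula for $B$, the same reduction to the perturbed Legendre equation on $[-1,1]$, the same Green's-identity computation yielding the integral formula for $p(r(1))\,\partial_\Lambda\tilde B|_{\Lambda=0}$, and the same case split $\ell_0=1$ for $\nu=0$ and $\ell_0=0$ for $\nu\neq0$ with the identical explicit values. The only cosmetic difference is that you invoke the implicit function theorem explicitly to produce the neighbourhood $U$, whereas the paper simply asserts it from the nonvanishing of $\partial_\Lambda B|_{\Lambda=0}$.
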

\begin{prop}\label{cor:RTunbounded}Let $\nu \in \mathbb{R}$ be fixed.
Let $\omega\neq 0$, $(M,Q,\Lambda) \in \mathcal{P}_{\mathrm{se}}$, and $\ell \in \mathbb N_0$. Then, define completely analogously to \cref{defn:TandR} transmission and reflection coefficients $\mathfrak T(\omega,\ell,\Lambda)$ and $\mathfrak R(\omega,\ell,\Lambda)$ as the unique coefficients such that
\begin{align}
	 u_1^{(\Lambda)} = \mathfrak T (\omega,\ell,\Lambda)  v_1^{(\Lambda)} + \mathfrak R(\omega,\ell,\Lambda)  v_2^{(\Lambda)}
\end{align}
holds. 

Now, assume further that $(M,Q,\Lambda) \in \mathcal{P}_{\mathrm{se}}\setminus D(\nu)$, where $D(\nu)$ is defined in \cref{prop:Bneq0}. Then, there exists an $ \ell_0 =  \ell_0(\nu)$ such that 
\begin{align}
	\lim_{\omega \to 0} |\mathfrak R(\omega,\ell_0)| = \lim_{\omega \to 0} |\mathfrak T(\omega,\ell_0)| = +\infty.
\end{align}  
This shows that $\mathfrak T$ and $\mathfrak R$ have a simple pole at $\omega=0$.
\begin{proof}Fix $\ell_0 = \ell_0(\nu)$ from \cref{prop:Bneq0} and $(M,Q,\Lambda) \in \mathcal{P}_{\mathrm{se}}$ such that $B(\ell_0,\Lambda,M,Q) \neq 0$. Now, note that the o.d.e.\ implies that $\frac{\d}{\d r_\ast}\text{Im}( \bar u u^\prime)  = 0 $ which shows that $1 = |\mathfrak T|^2  -  |\mathfrak R|^2$. In particular, either $|\mathfrak T|$ and $|\mathfrak R|$ are both bounded or both unbounded as $\omega \to 0$. Also note that as $\omega \to 0$, we have that $u_1^{(\Lambda)}\to \tilde u_1^{(\Lambda )}$ pointwise.

Now, assume for a contradiction that there exists a sequence $\omega_n \to 0$ such that $|\mathfrak T(\omega_n)|$ and $|\mathfrak R(\omega_n)|$ remain bounded. Thus, \begin{align}\nonumber\limsup_{\omega_n \to 0} \|u_1^{(\Lambda)} \|_{L^\infty(\mathbb{R})}\leq& \limsup_{\omega_n \to 0} \|u_1^{(\Lambda)} \|_{L^\infty((-\infty,0))}\\& + \limsup_{\omega_n \to 0} \|\mathfrak R v_1^{(\Lambda)} + \mathfrak T v_2^{(\Lambda)} \|_{L^\infty((0,\infty))} \leq C \end{align} for some constant $C>0$.
 Now, using that $B(\ell_0,\Lambda,M,Q)\neq 0$ in  \cref{prop:Bneq0}, we can choose a $r^\ast_{0} \in \mathbb{R}$ such that $|\tilde u_1^{(\Lambda)}(r^\ast_{0})| > C$ which contradicts the fact that $u_1^{(\Lambda)} \to \tilde u_1^{(\Lambda)}$ pointwise as $\omega_n \to 0$.
\end{proof}
\end{prop}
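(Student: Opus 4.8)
The plan is to read off the blow-up \emph{and} the pole order directly from the Wronskian representations of $\mathfrak T$ and $\mathfrak R$, with \cref{prop:Bneq0} as the only nontrivial input. For the $\ell_0 = \ell_0(\nu)$ and the parameters $(M,Q,\Lambda) \in \mathcal{P}_{\mathrm{se}}\setminus D(\nu)$ selected there, we have $\tilde u_1^{(\Lambda)} = A\,\tilde v_1^{(\Lambda)} + B\,\tilde v_2^{(\Lambda)}$ with $B \neq 0$. Since $\tilde v_1^{(\Lambda)}$ stays bounded and $\tilde v_2^{(\Lambda)}$ grows linearly as $r_\ast \to +\infty$, the solution $\tilde u_1^{(\Lambda)}$, which is bounded at $r_\ast = -\infty$, is \emph{unbounded} at $r_\ast = +\infty$; this incompatibility is what drives the divergence.

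First I would record the limits of the building blocks as $\omega \to 0$. By the analog of \cref{rmk:holo} (exponential decay of $V_{\ell_0,\Lambda}$ and the Volterra construction of \cref{defn:u1u2}), the solutions $u_1^{(\Lambda)}, v_1^{(\Lambda)}, v_2^{(\Lambda)}$ and their $r_\ast$-derivatives converge, as $\omega \to 0$, to $\tilde u_1^{(\Lambda)}$, $\tilde v_1^{(\Lambda)}$, and $\tilde v_1^{(\Lambda)}$ respectively (recall $v_1(0,\cdot) = v_2(0,\cdot) = \tilde v_1^{(\Lambda)}$). Hence the Wronskians converge,
\[
\mathfrak W(u_1^{(\Lambda)}, v_2^{(\Lambda)}),\ \mathfrak W(u_1^{(\Lambda)}, v_1^{(\Lambda)}) \longrightarrow \mathfrak W(\tilde u_1^{(\Lambda)}, \tilde v_1^{(\Lambda)}) = -B\,\mathfrak W(\tilde v_1^{(\Lambda)}, \tilde v_2^{(\Lambda)}),
\]
and this limit is \emph{nonzero}, since $B \neq 0$ and the linearly independent pair $\tilde v_1^{(\Lambda)}, \tilde v_2^{(\Lambda)}$ has $\mathfrak W(\tilde v_1^{(\Lambda)}, \tilde v_2^{(\Lambda)}) \neq 0$. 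Plugging into $\mathfrak T = \mathfrak W(u_1^{(\Lambda)}, v_2^{(\Lambda)})/(-2i\omega)$ and $\mathfrak R = \mathfrak W(u_1^{(\Lambda)}, v_1^{(\Lambda)})/(2i\omega)$ (\cref{defn:TandR}) exhibits both coefficients as meromorphic near $\omega = 0$ with a \emph{simple} pole of nonzero residue; in particular $|\mathfrak T|, |\mathfrak R| \to +\infty$, consistently with $1 = |\mathfrak T|^2 - |\mathfrak R|^2$ (\cref{prop:pseudouni}, which holds verbatim since $\operatorname{Im}(\bar u u')$ is still conserved).

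A more hands-on variant, which also yields the divergence, is a contradiction argument: if $|\mathfrak T(\omega_n)|, |\mathfrak R(\omega_n)|$ remained bounded along some $\omega_n \to 0$, then uniform-in-$\omega$ bounds $\|u_1^{(\Lambda)}\|_{L^\infty(-\infty,0)} \lesssim 1$ and $\|v_1^{(\Lambda)}\|_{L^\infty(0,\infty)} + \|v_2^{(\Lambda)}\|_{L^\infty(0,\infty)} \lesssim 1$, combined with $u_1^{(\Lambda)} = \mathfrak T v_1^{(\Lambda)} + \mathfrak R v_2^{(\Lambda)}$ on $(0,\infty)$, would force $\|u_1^{(\Lambda)}\|_{L^\infty(\mathbb R)} \leq C$ uniformly in $n$; picking $r_\ast^0$ with $|\tilde u_1^{(\Lambda)}(r_\ast^0)| > C$ then contradicts $u_1^{(\Lambda)}(\omega_n, r_\ast^0) \to \tilde u_1^{(\Lambda)}(r_\ast^0)$.

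The main obstacle in either route is the uniform-in-$\omega$ control as $\omega \to 0$. For the contradiction route the delicate estimate is $\|u_1^{(\Lambda)}\|_{L^\infty(-\infty,0)} \lesssim 1$ uniformly: with $\ell_0$ fixed, the Volterra kernel $K(r_\ast,y) = \frac{\sin(\omega(r_\ast - y))}{\omega} V_{\ell_0,\Lambda}(y)$ is bounded via $|\sin(\omega(r_\ast - y))/\omega| \leq |r_\ast - y|$ together with $|V_{\ell_0,\Lambda}(y)| \lesssim e^{2\kappa_+ y}$ as $y \to -\infty$, giving $\int_{-\infty}^0 \sup_{y<r_\ast<0}|K|\,\d y \lesssim \int_{-\infty}^0 |y|\, e^{2\kappa_+ y}\,\d y < \infty$ independently of $\omega$, so \cref{lem:volterra} applies with an $\omega$-independent constant; the decay $|V_{\ell_0,\Lambda}(y)| \lesssim e^{2\kappa_- y}$ as $y \to +\infty$ gives the corresponding bound for $v_1^{(\Lambda)}, v_2^{(\Lambda)}$. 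For the Wronskian route one must instead ensure convergence of the $\omega$-derivatives entering $\mathfrak W$, which follows from the smooth dependence on $\omega$ in \cref{lem:volterra}.
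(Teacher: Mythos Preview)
Your proposal is correct, and your second ``hands-on'' route is exactly the paper's proof: assume boundedness along a sequence $\omega_n\to 0$, bound $\|u_1^{(\Lambda)}\|_{L^\infty(\mathbb R)}$ uniformly via the splitting into $(-\infty,0)$ and $(0,\infty)$, and contradict the pointwise limit $u_1^{(\Lambda)}\to\tilde u_1^{(\Lambda)}$ at a point where $|\tilde u_1^{(\Lambda)}|$ is large.

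Your first route via Wronskians is genuinely different and in one respect sharper. The paper's contradiction argument establishes only that $|\mathfrak T|,|\mathfrak R|\to\infty$; the concluding sentence about a \emph{simple} pole is stated but not actually argued in the paper's proof. Your Wronskian computation supplies this directly: since $\mathfrak W(u_1^{(\Lambda)},v_j^{(\Lambda)})$ is analytic in $\omega$ near $0$ (by the same Volterra/analyticity considerations as in \cref{rmk:holo}) and its value at $\omega=0$ equals $-B\,\mathfrak W(\tilde v_1^{(\Lambda)},\tilde v_2^{(\Lambda)})\neq 0$, the representation $\mathfrak T=\mathfrak W(u_1,v_2)/(-2i\omega)$, $\mathfrak R=\mathfrak W(u_1,v_1)/(2i\omega)$ immediately gives poles of order exactly one. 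The price is that you need convergence of the $r_\ast$-derivatives (not ``$\omega$-derivatives'' as you wrote---a small slip) entering the Wronskian, but as you note this follows from the smooth $\omega$-dependence in the Volterra iteration. The paper's route avoids this derivative convergence but leaves the pole order implicit; yours trades a slightly stronger input for a cleaner conclusion.
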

Finally, this allows us to prove \cref{thm:cosmological} which we restate in the following for the convenience of the reader.
\cosmological*
\begin{proof}Fix $\ell_0 = \ell_0(\nu)$ from \cref{cor:RTunbounded} such that the reflection and transmission coefficients blow up as $\omega \to 0$.
Define a sequence of compactly supported functions $\Psi_n $ on  $\mathcal{H}_A$ by $\Psi_n(v,\theta,\varphi) =  f_n(v) Y_{0\ell}(\theta,\varphi)$, such that $f_n \in C_c^\infty(\mathbb R)$,  \begin{align}\int_{\mathbb R} \omega^2 |\hat f_n (\omega)|^2 \d \omega = 1 \text{ and } \int_{-\frac{1}{n}}^{\frac{1}{n}} \omega^2 |\hat f_n (\omega)|^2 \d \omega \geq \epsilon \int_{\mathbb R} \omega^2 |\hat f_n (\omega)|^2 \d \omega = \epsilon\end{align} 
for some $\epsilon >0$.\footnote{ Such a function can be constructed by setting $f_n(v) := \frac{c}{\sqrt n} f(\frac{v}{n})$ for smooth $f\colon \mathbb R \to [0,1]$ with $\operatorname{supp}(f) \subset [-2,2]$, $f\restriction_{[-1, 1]} =1 $ and some normalization constant $c >0$. Indeed, 
\begin{align}
	\int_{-\frac 1n}^{\frac 1n} \omega^2 |\hat f_n(\omega)|^2 \d \omega = \int_{-\frac 1n}^{\frac 1n} \omega^2 |\sqrt n \hat f(n\omega)|^2\d \omega = \int_{-1}^{1} \omega^2  |\hat f (\omega)|^2 =: \epsilon >0
\end{align} in view of $\hat f (0) = \int_\mathbb R f(v) \d v  >0$.}  Imposing vanishing data on $\Ho_B$, this gives rise to a unique smooth solutions $\psi_n$ up to but excluding the Cauchy horizon.
	Arguments completely analogous to those given in the proof of \cref{thm:thmbounds} show that  
		\begin{align}
		\| \psi_n \restriction_{\mathcal{CH}}\|_{ \mathcal{E}^T_{\Ch}}^2 = \frac{r_+^2}{r_-^2} \int_{\mathbb{R}} \omega^2 (|\mathfrak R(\omega,\ell)|^2 + |\mathfrak T(\omega,\ell)|^2) |\hat f_n(\omega)|^2\d \omega.
		\end{align}
		Thus, \begin{align}
\| \psi_n \restriction_{\mathcal{CH}}\|_{ \mathcal{E}^T_{\Ch}}^2  \geq \frac{r_+^2}{r_-^2} \int_{-\frac 1n}^{\frac 1n} \omega^2 (|\mathfrak R(\omega,\ell)|^2 + |\mathfrak T(\omega,\ell)|^2) |\hat f_n(\omega)|^2\d \omega \geq \epsilon  \,\frac{r_+^2}{r_-^2} \inf_{\omega \in [- \frac 1n, \frac 1n]}\left(  |\mathfrak R|^2 +  |\mathfrak T|^2 \right) .
		\end{align}
Since $|\mathfrak R|,|\mathfrak T|\to\infty$ as $\omega \to 0$, also $\inf_{\omega \in [\frac {1}{2n}, \frac 1n]} |\mathfrak R| \to \infty$ and $\inf_{\omega \in [\frac {1}{2n}, \frac 1n]} |\mathfrak T| \to \infty$ as $n\to\infty$. Thus, as $n \to \infty$, we have 
	\begin{align}
			\| \psi_n \restriction_{\mathcal{CH}}\|_{ \mathcal{E}^T_{\Ch}}^2 \to \infty.
			\end{align}
		\end{proof}
		\section{Proof of \texorpdfstring{\cref{thm:kleingordon}}{Theorem 7}: Breakdown of \texorpdfstring{$T$}{T} energy scattering for the Klein--Gordon equation}
			\label{sec:kleingordonequation}
			In this last section we will prove that for a generic set of Klein--Gordon masses, there does not exist a $T$ scattering theory on the interior of Reissner--Nordstr\"om for the Klein--Gordon equation. For the convenience of the reader, we have restated \cref{thm:kleingordon}. 
			\kleingordon*
			 \begin{proof}
			 	The proof of this statement is easier than and similar to the proof of \cref{thm:cosmological} and the proofs of the propositions leading up to it.  More precisely, similar to \cref{sec:cosmo} we define asymptotic states $\tilde u_1^{(\mu)}$, $\tilde v_1^{(\mu)}$ and $\tilde v_2^{(\mu)}$ and define $A(\ell,\mu)$ and $B(\ell,\mu)$ by
			 	$\tilde u_1^{(\mu)} = A(\ell,\mu) \tilde v_1^{(\mu)} + B(\ell,\mu) \tilde v_2^{(\mu)}$. As in \cref{sec:cosmo}, $\mathbb R \ni \mu \mapsto B(\ell,\mu)$ is real analytic and from the o.d.e.\ $-u^{\prime\prime} + V_{\ell,\mu} u =0$ we obtain 
			 	\begin{align}
\left.	\frac{\partial B(\ell,\mu)}{\partial \mu}\right|_{\mu = 0} = \int_{-\infty}^\infty\left. \frac{\partial V_{\ell,\mu}}{\partial \mu}\right|_{\mu =0} \tilde u_1^2 \d r_\ast,
			 	\end{align}
			 	where 
			 	\begin{align}
V_{\ell,\mu} = h \left(\frac{ h h^\prime}{r} + \frac{\ell(\ell+1)}{r^2} - \mu \right) =  h \left(\frac{ \frac{\d h}{\d r}}{r} + \frac{\ell(\ell+1)}{r^2} - \mu \right)
			 	\end{align}
			 	and \begin{align}h= 1- \frac{2M}{r}+ \frac{Q^2}{r^2}\end{align}
			as in \eqref{eq:h}.	Now, note that 
			 	\begin{align}
			 		\left. \frac{\partial V_{\ell,\mu}}{\partial \mu}\right|_{\mu =0}  = - h >0
			 	\end{align}
			 	which is manifestly positive from which we can infer, by analyticity, that $B(\ell,\mu) \neq 0$ for all $\mu \in \mathbb R \setminus \tilde D$, where $\tilde D = \tilde D(M,Q)\subset \mathbb R$ is a discrete set. This proves the analogous statements to \cref{prop:Bneq0} and \cref{cor:RTunbounded}. The claim of \cref{thm:kleingordon} follows now as in the proof of \cref{thm:cosmological}.
			 \end{proof}
\appendix
\section{Additional lemmata}
\paragraph{\textbf{Energy estimates in the interior}}
\begin{lemma}\label{lem:exponentialdecay}
Let $\Psi \in C_c^\infty(\mathcal H)$ and denote by $\psi$ its evolution in the interior. Then, the non-degenerate $N$ energy of $\Psi$ decays exponentially towards $i_+$ on every $\{r=r_0 \}$ hypersurface for $r_{red}< r_0 < r_+$. Here, $r_{red}$ only depends on the black hole parameters. 
\begin{proof}
This argument is very similar to \cite[Proposition 4.2]{franzen2016boundedness}. We  only prove it for the right component of $i^+$ and clearly only have to look at a neighborhood of $i^+$.   First, recall the existence of the celebrated redshift vector field $N$ satisfying $K^N[\psi] \geq b J^N_\mu[\psi]n_v^\mu$ for $r_+ \geq r\geq r_{red}$, where $n_v$ is the normal to a $v=const.$ hypersurface.\footnote{The normal is fixed by making a choice of a volume form on the null hypersurface}

We set 
\begin{align}
	E(v_0) = \int_{v=v_0, r_{red}\leq r \leq r_+} J^N_\mu n_v^\mu \d\mathrm{vol},
\end{align}
and apply the energy identity with the redshift vector field $N$ in the region $\mathcal{R} = \{ r\in[r_{red}, r_+], v \in [v_0,  v_1] \}$, where $v_0$ is large enough such that $v_0 > \sup\operatorname{supp}(\Psi)$.
This gives in view of the coarea formula that
\begin{align}
	E(v_1) - E(v_0)  + \tilde b \int_{v_0}^{v_1} E(v) \d v  \leq 0 \label{eq:expbound}
\end{align}
for every $v_1 \geq v_0> \sup \text{supp}(\Psi)$. Inequality~\eqref{eq:expbound}, smoothness of $v\mapsto E(v)$ and a further application of the energy identity in the region $\{ v\geq v_0, r_+ \geq r \geq r_{red} \}$ finally shows
\begin{align}
	\int_{v\geq v_0, r=r_{red}}J^N_\mu n_{r}^\mu \d{\text{vol}}\leq C \exp(-\tilde b v_0),
\end{align}where $C$ is a constant depending on $\Psi$. This concludes the proof.
\end{proof}
\end{lemma}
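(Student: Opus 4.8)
The plan is to prove the exponential decay of the non-degenerate $N$ energy toward $i^+$ by a redshift energy estimate, following the strategy of \cite[Proposition 4.2]{franzen2016boundedness}. First I would recall the fundamental property of the redshift vector field $N$: there exists $r_{red} \in (r_-, r_+)$, depending only on the black hole parameters, and a constant $b > 0$ such that the bulk term of the $N$ energy current satisfies the coercive inequality $K^N[\psi] \geq b J^N_\mu[\psi] n_v^\mu$ throughout the region $r_{red} \leq r \leq r_+$, where $n_v$ denotes the normal to a $\{v = \text{const.}\}$ hypersurface (with the normal and volume form fixed by a choice as in the appendix). This is the analogue in the interior, near $i^+$, of the celebrated redshift effect of Dafermos--Rodnianski; the key point is that near the event horizon the redshift dominates and produces a genuinely positive definite spacetime term.

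The second step is to set up the energy quantity and run a Grönwall-type argument. I would define the flux $E(v_0) := \int_{v = v_0,\, r_{red} \leq r \leq r_+} J^N_\mu[\psi] n_v^\mu \, \d\mathrm{vol}$ and apply the divergence theorem (the $N$ energy identity) in the characteristic rectangle $\mathcal{R} = \{r \in [r_{red}, r_+],\, v \in [v_0, v_1]\}$, choosing $v_0 > \sup \operatorname{supp}(\Psi)$ so that the incoming data from $\mathcal{H}_A$ vanishes on the relevant portion. Since $\Psi$ is compactly supported, the only boundary contributions are the flux terms on the $\{v = v_0\}$ and $\{v = v_1\}$ slices together with the (favourably signed) flux on the horizon-side boundary $\{r = r_+\}$ and the bulk term. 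Using the coarea formula to rewrite the bulk integral of $K^N$ as $\int_{v_0}^{v_1} (\text{flux on } v\text{-slices}) \, \d v$ and the coercivity inequality, this yields the differential inequality
\begin{align}
E(v_1) - E(v_0) + \tilde b \int_{v_0}^{v_1} E(v) \, \d v \leq 0
\end{align}
for all $v_1 \geq v_0$, where $\tilde b > 0$ depends only on the black hole parameters.

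The third step is to extract exponential decay and transfer it to the $\{r = r_0\}$ hypersurface. From the inequality above, together with the smoothness and nonnegativity of $v \mapsto E(v)$, a standard Grönwall argument gives $E(v) \leq E(v_0) e^{-\tilde b (v - v_0)}$, hence exponential decay of the $v$-slice flux toward $i^+$. A further application of the $N$ energy identity in the region $\{v \geq v_0,\, r_{red} \leq r \leq r_+\}$, using the coercivity of $N$ once more to control the flux through $\{r = r_{red}\}$ (and hence, by a further energy comparison in the strip $r_{red} \leq r \leq r_0$, through any $\{r = r_0\}$ with $r_{red} < r_0 < r_+$), then shows $\int_{v \geq v_0,\, r = r_0} J^N_\mu[\psi] n_r^\mu \, \d\mathrm{vol} \leq C \exp(-\tilde b v_0)$, where $C$ depends on a higher-order norm of $\Psi$. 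This is precisely the claimed exponential decay toward $i^+$.

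The main obstacle is the construction and coercivity of the redshift vector field $N$ in the interior region adjacent to $i^+$, i.e.\ establishing the inequality $K^N[\psi] \geq b J^N_\mu[\psi] n_v^\mu$ with a constant $r_{red}$ depending only on $(M, Q)$; this is where all the geometry of the problem enters, and it is exactly the point where I would invoke the analysis of \cite{franzen2016boundedness}. The remaining steps---the energy identity, the coarea reformulation, and the Grönwall argument---are routine once this coercivity is in hand. A secondary technical point to handle carefully is the precise accounting of boundary terms and the choice of normals and volume forms so that all horizon-side flux contributions carry a favourable sign, but these are standard bookkeeping in the energy method.
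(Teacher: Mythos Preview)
Your proposal is correct and follows essentially the same approach as the paper's proof: both invoke the redshift vector field $N$ with its coercive bulk term $K^N \geq b J^N_\mu n_v^\mu$ in the region $r_{red} \leq r \leq r_+$, define the flux $E(v_0)$ on $v$-slices, apply the energy identity in the slab $\{r \in [r_{red}, r_+],\, v \in [v_0, v_1]\}$ to obtain the integral inequality $E(v_1) - E(v_0) + \tilde b \int_{v_0}^{v_1} E(v)\, \d v \leq 0$, and then use a Gr\"onwall-type argument together with a further energy identity to conclude exponential decay on the $\{r = r_0\}$ hypersurface. Your write-up is slightly more explicit than the paper's (e.g.\ spelling out the Gr\"onwall step and the transfer from $r_{red}$ to a general $r_0$), but the content is the same.
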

\begin{rmk}\label{rmk:approxwithcompactly}
	By cutting off smoothly we can clearly approximate $\Psi$ on a $\{r=const.\}$ hypersurface with compactly supported functions for any fixed $r\in(r_{red}, r_+)$. 
\end{rmk}
\begin{lemma}\label{lem:estimatetorconst}
Let $\psi$ be a smooth solution of the wave equation on $\mathcal{M}_\mathrm{RN}$ such that its restriction to the event horizon has compact support and let $r_0 \in (r_{red}, r_+)$. Then,
\begin{align}
	\int_{\mathcal{H}} J^T_{\mu} n^\mu \d\mathrm{vol} \lesssim \int_{ \{r=r_0\} } J^N_\mu n^\mu \d \mathrm{vol}.
\end{align}
\begin{proof}
We shall use the vector field $S = r^{-2} \partial_{r_\ast}$. By potentially making $r_{red}$ larger, we can assure that the bulk term $K^S := \nabla^\mu J^{S}_\mu$ of the vector field $S$ has a fixed negative sign in $r_0\in(r_{red},r_+)$. This current is analogous to the current introduced in \cite[par.\ 4.1.3.2]{franzen2016boundedness}. Moreover, applying the energy identity in the region $\mathcal{R}=\{ r_0 \leq r \leq r_+ \}$ and noting that $J^N[\psi]_\mu n^\mu\vert_{r=r_0} \sim J^S[\psi]_\mu n^\mu\vert_{r=r_0}$ as well as $J^T[\psi]_\mu n^\mu \vert_\mathcal{H} \sim J^S[\psi]_\mu n^\mu\vert_\mathcal{H}$ yields 
\begin{align}
\int_{ \{r=r_0\} } J^N[\psi]_\mu n^\mu \d\mathrm{vol} + \int_{\mathcal{R}} K^S \d\mathrm{vol} \gtrsim \int_{\mathcal{H}} J^T_\mu n^\mu \d\mathrm{vol}.
\end{align}
This concludes the proof.
\end{proof}
\end{lemma}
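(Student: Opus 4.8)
The plan is to prove the estimate with a single application of the energy identity for the timelike vector field $S = r^{-2}\partial_{r_\ast} = r^{-2}(\partial_u + \partial_v)$ on the slab $\mathcal{R} = \{r_0 \le r \le r_+\}$, comparing the $S$-flux on each boundary component with the $T$-flux on $\mathcal{H}$ and the $N$-flux on $\{r=r_0\}$. In the interior $r$ is a time function, since $h=\Delta/r^2<0$, so $\partial_{r_\ast}=h\,\partial_r$ is future-directed timelike and $S$ is future-directed timelike throughout $\mathcal{R}$. Writing $J^S_\mu = \mathbf{T}[\psi]_{\mu\nu}S^\nu$ and $K^S = \nabla^\mu J^S_\mu = \mathbf{T}[\psi]^{\mu\nu}\nabla_\mu S_\nu$, and viewing $\mathcal{H}$ as the past and $\{r=r_0\}$ as the future boundary of $\mathcal{R}$, the energy identity reads
\begin{align}
\int_{\{r=r_0\}} J^S_\mu n^\mu \, \d\mathrm{vol} + \int_{\mathcal{R}} K^S \, \d\mathrm{vol} = \int_{\mathcal{H}} J^S_\mu n^\mu \, \d\mathrm{vol},
\end{align}
with all normals future-directed. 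Thus, as soon as $K^S \le 0$ on $\mathcal{R}$, we obtain $\int_{\mathcal{H}} J^S_\mu n^\mu \le \int_{\{r=r_0\}} J^S_\mu n^\mu$, and the lemma follows once the two $S$-fluxes are identified with the $T$- and $N$-fluxes.

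I would establish three ingredients. \emph{First}, the sign of the bulk term: computing the deformation tensor $\tfrac12\mathcal{L}_S g$ and contracting with $\mathbf{T}[\psi]$ yields a quadratic form in $(\partial_u\psi,\partial_v\psi)$ and the angular derivatives of $\psi$ whose coefficients are explicit functions of $r$; near $r=r_+$ the leading contributions have a fixed sign, and by enlarging $r_{red}$ (i.e.\ taking the slab thin and close to the event horizon) one arranges $K^S \le 0$ on all of $(r_{red},r_+)$. This current and sign are the interior analogue of the one used in \cite{franzen2016boundedness}. \emph{Second}, on the horizon the $S$- and $T$-fluxes agree up to a constant: in coordinates regular at $\mathcal{H}_A$ one has $\partial_u = \kappa_+ U_\Hp\,\partial_{U_\Hp}$, which vanishes on $\{U_\Hp=0\}$, so $S\restriction_{\Hp_A} = r_+^{-2}\partial_v = r_+^{-2}T\restriction_{\Hp_A}$ and hence $J^S\restriction_{\Hp_A} = r_+^{-2}J^T\restriction_{\Hp_A}$; symmetrically $S\restriction_{\Hp_B} = r_+^{-2}\partial_u = -r_+^{-2}T\restriction_{\Hp_B}$, so the total coercive $S$-flux on $\mathcal{H}$ is comparable to $\int_{\mathcal{H}} J^T_\mu n^\mu$. \emph{Third}, on the spacelike slice $\{r=r_0\}$ both $S$ and $N$ are future-directed timelike, so by the dominant energy condition $J^S_\mu n^\mu$ and $J^N_\mu n^\mu$ are each coercive quadratic forms in the full gradient of $\psi$; they are therefore pointwise comparable with constants depending only on $M,Q,r_0$, uniformly in $t$ by staticity, giving $\int_{\{r=r_0\}} J^S_\mu n^\mu \lesssim \int_{\{r=r_0\}} J^N_\mu n^\mu$.

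Assembling these, I would chain
\begin{align}
\int_{\mathcal{H}} J^T_\mu n^\mu \,\d\mathrm{vol} \lesssim \int_{\mathcal{H}} J^S_\mu n^\mu \,\d\mathrm{vol} \le \int_{\{r=r_0\}} J^S_\mu n^\mu \,\d\mathrm{vol} \lesssim \int_{\{r=r_0\}} J^N_\mu n^\mu \,\d\mathrm{vol}.
\end{align}
The one point requiring care is that $\mathcal{R}$ approaches the two copies of $i^+$, so the energy identity above must be obtained as a limit of the identity on a truncation of $\mathcal{R}$ cut off away from $i^+$. The contribution of the truncating surfaces is controlled by the $N$-energy there, which decays exponentially toward $i^+$ by \cref{lem:exponentialdecay}; combined with the third ingredient this shows these contributions vanish in the limit. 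Since $\psi\restriction_{\mathcal{H}}$ is compactly supported, all fluxes appearing are finite.

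The main obstacle is the first ingredient: the explicit computation of $K^S = \mathbf{T}[\psi]^{\mu\nu}\nabla_\mu S_\nu$ and the verification that its sign is favourable (non-positive) for $r$ sufficiently close to $r_+$. This is where the specific weight $r^{-2}$ and the near-horizon structure of $\Delta$ enter, and where the choice of $r_{red}$ is pinned down; the remaining steps (coercivity on the spacelike slice, the horizon limits, and the $i^+$ truncation) are comparatively soft.
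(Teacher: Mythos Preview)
Your proposal is correct and follows essentially the same approach as the paper: the same vector field $S = r^{-2}\partial_{r_\ast}$, the same slab $\mathcal{R}=\{r_0\le r\le r_+\}$, the same sign observation for $K^S$ (after possibly enlarging $r_{red}$, citing the analogous current in \cite{franzen2016boundedness}), and the same pointwise comparability $J^S\sim J^T$ on $\mathcal{H}$ and $J^S\sim J^N$ on $\{r=r_0\}$. You in fact supply more detail than the paper does, namely the explicit identification $S\restriction_{\mathcal{H}_A}=r_+^{-2}T$ and the $i^+$-truncation argument via \cref{lem:exponentialdecay}, both of which are implicit in the paper's proof.
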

\paragraph{\textbf{Analytic properties of the potential and the scattering coefficients}}
In the following we would like to summarize analytic properties of the potential $V_\ell(r)$ and $u_1$,$u_2$, $v_1$ and $v_2$  as functions of $\omega$. This is similar to parts of \cite{hartle1982crossing}. 

First, however we will show the the exponential decay of the potential $V_\ell$ as $r_\ast \to \pm \infty$.  
\begin{lemma}\label{lem:asymptoticspotential}
	We have \begin{align}
	|\Delta(r_\ast )| \lesssim e^{2k_+ r_\ast} \text{ for } r_\ast \leq 0
	\end{align}
	and 
	\begin{align}\label{eq:decayindelta}
	|\Delta(r_\ast)|\lesssim e^{2k_- r_\ast} \text{ for } r_\ast \geq 0.
	\end{align}
	Moreover, we have 
	\begin{align}\label{eq:delta1}
	|	V_\ell(r_\ast)|, |V_\ell^\prime(r_\ast)|, |V_\ell^{\prime\prime}(r_\ast)| \lesssim  (1+ \ell(\ell+1) ) e^{2k_+ r_\ast} \text{ for } r_\ast \leq 0
	\end{align}
	and
	\begin{align}\label{eq:delta2}
	|V_\ell(r_\ast)|, |V_\ell^\prime(r_\ast)|, |V_\ell^{\prime\prime}(r_\ast)| \lesssim  (1+ \ell(\ell+1) ) e^{2k_- r_\ast} \text{ for } r_\ast \geq 0.
	\end{align}
	\begin{proof}
		Note that 
		\begin{align}
		r_+ - r = \tilde C \left( r - r_-\right)^{\frac{k_-}{k_+}} e^{-2k_+ r } e^{2k_+ r_\ast}
		\end{align}
		for a constant $\tilde C$ only depending on the black hole parameters.
		Thus, for $r_\ast \leq 0 $, we have \begin{align}r_+ - r(r_\ast)  = f(r_\ast) e^{2k_+ r_\ast}\label{eq:fofr}\end{align}
		for a smooth function $f(r_\ast)$, which is uniformly bounded below and above for $r_\ast \leq 0$. Moreover, we have $f'(r_\ast),f''(r_\ast) \to 0$ exponentially fast as $r_\ast \to -\infty$. The estimates \eqref{eq:delta1} and \eqref{eq:delta2} are now straightforward applications of the chain rule and the fact that $\frac{\d r}{\d r_\ast} = \frac{\Delta}{r^2}$ and $\Delta = (r-r_-) (r - r_+)$. 
	\end{proof}
\end{lemma}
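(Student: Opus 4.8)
The plan is to read off the asymptotics of $r$ as a function of $r_\ast$ directly from the explicit expression \eqref{defn:rstar} and then propagate these through the algebraic formulas \eqref{eq:h} and \eqref{eq:potential}. First I would establish the decay of $\Delta$. Recall that on the interior $\Delta = (r-r_+)(r-r_-) < 0$, so $r_\ast$ is strictly decreasing in $r$, with $r_\ast \to -\infty$ as $r \to r_+^-$ and $r_\ast \to +\infty$ as $r \to r_-^+$. Since the normalization in \eqref{defn:rstar} fixes $r_\ast = 0$ at $r = \tfrac{r_++r_-}{2}$, the set $\{r_\ast \le 0\}$ corresponds to $r \in [\tfrac{r_++r_-}{2}, r_+)$, bounded away from $r_-$, and $\{r_\ast \ge 0\}$ to $r \in (r_-, \tfrac{r_++r_-}{2}]$, bounded away from $r_+$. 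Multiplying \eqref{defn:rstar} by $2\kappa_+$ and exponentiating (using $r - r_+ < 0 < r - r_-$, with $\kappa_\pm$ the surface gravities from \eqref{eq:surface}) gives
\[
r_+ - r(r_\ast) = f(r_\ast)\, e^{2\kappa_+ r_\ast}, \qquad f(r_\ast) := e^{-2\kappa_+ C}\, e^{-2\kappa_+ r(r_\ast)}\,(r(r_\ast)-r_-)^{-\kappa_+/\kappa_-},
\]
and since on $\{r_\ast \le 0\}$ the value $r(r_\ast)$ stays bounded and bounded away from $r_-$, the prefactor $f$ is bounded above and below by positive constants depending only on $M,Q$. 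As $r-r_-$ is likewise bounded there, $|\Delta| = (r_+-r)(r-r_-) \lesssim e^{2\kappa_+ r_\ast}$ for $r_\ast \le 0$. The symmetric computation, multiplying \eqref{defn:rstar} by $2\kappa_-$, yields $r(r_\ast) - r_- = g(r_\ast)\, e^{2\kappa_- r_\ast}$ with $g$ bounded above and below on $\{r_\ast \ge 0\}$, whence \eqref{eq:decayindelta}.

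For the potential and its derivatives I would write $V_\ell = \Delta\, W$, where $W(r) := \tfrac{r(r_++r_-)-2r_+r_-}{r^3} + \tfrac{\ell(\ell+1)}{r^4}$ from \eqref{eq:potential}. Since $0 < r_- \le r \le r_+$ keeps $r$ away from the origin, $W$ together with its first two $r$-derivatives $W', W''$ is bounded on $[r_-,r_+]$ by $\lesssim 1 + \ell(\ell+1)$, the $\ell$-dependence entering only through the $\ell(\ell+1)\,r^{-k}$ terms. The zeroth-order bounds \eqref{eq:delta1}, \eqref{eq:delta2} then follow at once from the $\Delta$ estimates, since $|V_\ell| \le |\Delta|\,|W| \lesssim (1+\ell(\ell+1))\, e^{2\kappa_\pm r_\ast}$. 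For the derivatives the key observation is that $\tfrac{\d}{\d r_\ast} = h\,\tfrac{\d}{\d r}$ with $h = \Delta/r^2$, so that each $r_\ast$-derivative contributes a factor $h \sim \Delta$. Concretely $V_\ell' = h(\Delta' W + \Delta W')$ and $V_\ell'' = h\big(h'(\Delta'W+\Delta W') + h(\Delta'' W + 2\Delta' W' + \Delta W'')\big)$, where $\Delta' = 2r - (r_++r_-)$, $\Delta'' = 2$, and $h'=\tfrac{\d h}{\d r}$ are all bounded on $[r_-,r_+]$. In every term the number of $h$ (equivalently $\Delta$) factors is at least one, so by the $\Delta$ bound each term is $\lesssim |\Delta|(1+\ell(\ell+1)) \lesssim (1+\ell(\ell+1))\,e^{2\kappa_\pm r_\ast}$, giving \eqref{eq:delta1}--\eqref{eq:delta2} for $V_\ell'$ and $V_\ell''$ as well.

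The one genuinely delicate point --- and the step I would treat most carefully --- is the derivative bookkeeping: differentiating $V_\ell = \Delta W$ in $r$ produces the term $\Delta' W$ which carries no $\Delta$ factor and is therefore \emph{not} small, so a naive estimate would suggest $V_\ell'$ decays no faster than $O(1+\ell(\ell+1))$. This is precisely why one must differentiate in $r_\ast$ rather than $r$: the chain-rule factor $h = \Delta/r^2$ accompanying each $r_\ast$-derivative exactly restores the exponential factor, and it is this guarantee that the count of $\Delta$-factors never drops below one that keeps the bound at $e^{2\kappa_\pm r_\ast}$ under repeated differentiation. Everything else is a routine application of the chain rule together with the uniform boundedness of $r$, $\Delta'$, $\Delta''$, $h$, $h'$ and of the bracket $W$ and its derivatives on the compact interval $[r_-,r_+]$, where the only $\ell$-dependence is the explicit factor $1+\ell(\ell+1)$.
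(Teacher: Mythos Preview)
Your proof is correct and follows essentially the same route as the paper's: you extract $r_+-r = f(r_\ast)e^{2\kappa_+ r_\ast}$ (and the symmetric statement at $r_-$) directly from \eqref{defn:rstar}, then push the chain rule through $V_\ell = \Delta W$ using $\tfrac{\d}{\d r_\ast} = h\,\tfrac{\d}{\d r}$. The paper's proof is terser---it records that $f,f',f''$ behave well and declares the rest a ``straightforward application of the chain rule''---whereas you actually carry out that chain-rule bookkeeping and make explicit the key point that every $r_\ast$-derivative carries at least one factor of $h\sim\Delta$, which is exactly what preserves the exponential decay.
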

\begin{prop}
	The potential $V_\ell$ can be expanded as\begin{align}
		V_\ell(r_\ast) = \sum_{m\in \mathbb N} C_m e^{2\kappa_+ m r_\ast},
	\end{align}
	where $|C_m |\lesssim_\ell e^{-\sigma m}$ for a $\sigma >0$. 
\end{prop}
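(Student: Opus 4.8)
The plan is to establish the claimed Fourier-type expansion of the potential $V_\ell$ near the event horizon $r_\ast = -\infty$ by combining the explicit algebraic form of $V_\ell$ with the analytic dependence of $r$ on the Regge--Wheeler coordinate $r_\ast$. Recall from \eqref{eq:potential} that $V_\ell = \Delta\left(\frac{r(r_++r_-)-2r_+r_-}{r^3} + \frac{\ell(\ell+1)}{r^4}\right)$, so $V_\ell$ is a rational function of $r$ with denominator a power of $r$, multiplied by $\Delta = (r-r_+)(r-r_-)$. Since $r$ stays in a compact subinterval of $(r_-,r_+)$ bounded away from $r_-$ for $r_\ast \leq 0$, the only factor that degenerates as $r_\ast \to -\infty$ is $(r_+ - r)$, which by \cref{lem:asymptoticspotential} and \eqref{eq:fofr} satisfies $r_+ - r(r_\ast) = f(r_\ast) e^{2\kappa_+ r_\ast}$ for a smooth $f$ that is uniformly bounded above and below on $(-\infty,0]$ and whose derivatives decay exponentially.

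First I would make the substitution $\zeta := e^{2\kappa_+ r_\ast}$ and show that, as a function of $\zeta$ in a neighborhood of $\zeta = 0$, $V_\ell$ extends holomorphically. Concretely, from the defining relation $\d r_\ast = \frac{r^2}{\Delta}\,\d r$ one sees near $r = r_+$ that $r$ is an analytic function of $\zeta = e^{2\kappa_+ r_\ast}$: integrating $\d r_\ast$ gives $r_\ast = r + \frac{1}{2\kappa_+}\log(r_+-r) + \frac{1}{2\kappa_-}\log(r-r_-) + C$, so $e^{2\kappa_+ r_\ast} = (r_+-r)\,e^{2\kappa_+ r}(r-r_-)^{\kappa_+/\kappa_-}e^{2\kappa_+ C}$, and the right-hand side, as a function of $r$, is analytic and has a simple zero at $r = r_+$ with nonvanishing derivative there. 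By the analytic inverse function theorem, $r = r(\zeta)$ is analytic near $\zeta = 0$ with $r(0) = r_+$. Substituting this into the rational expression for $V_\ell$ — whose denominators $r^3, r^4$ are nonvanishing near $r = r_+$ and whose numerator factor $\Delta = (r-r_+)(r-r_-)$ vanishes to first order at $r = r_+$ — shows $V_\ell(r(\zeta))$ is analytic in $\zeta$ near $0$ and vanishes at $\zeta = 0$. Hence it has a convergent Taylor expansion $V_\ell = \sum_{m \geq 1} C_m \zeta^m = \sum_{m \in \mathbb{N}} C_m e^{2\kappa_+ m r_\ast}$, which is the asserted expansion. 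The dependence of the $C_m$ on $\ell$ is at most linear (through the $\ell(\ell+1)$ coefficient), so constants hidden in $\lesssim_\ell$ absorb it cleanly.

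Next I would extract the bound $|C_m| \lesssim_\ell e^{-\sigma m}$ for some $\sigma > 0$. This is immediate from the Cauchy estimates: since $V_\ell(r(\zeta))$ is holomorphic on a disc $|\zeta| < \rho$ for some fixed $\rho > 0$ (depending only on the black hole parameters, as the radius of convergence of $r(\zeta)$ and the location of the poles $r = 0$ are determined by $M, Q$), we get $|C_m| \leq \rho^{-m} \sup_{|\zeta| = \rho/2} |V_\ell(r(\zeta))| \lesssim_\ell (2/\rho)^m$. Wait — this is the wrong direction; I actually want decay, so I take any $\rho' < \rho$ and note $|C_m| \leq (\rho')^{-m}\sup_{|\zeta|=\rho'}|V_\ell| \lesssim_\ell (\rho')^{-m}$, which is exponential decay in $m$ once we set $\sigma$ with $e^{-\sigma} = 1/\rho'' < 1$ for a suitable $\rho'' > 1$; more precisely, choosing the contour radius $\rho'$ strictly less than the radius of convergence and using that $\sup_{|\zeta|=\rho'}|V_\ell|$ is finite and grows at most linearly in $\ell$ gives $|C_m| \lesssim_\ell e^{-\sigma m}$ with $\sigma = \log \rho' > 0$ provided $\rho' > 1$; since the radius of convergence $\rho$ may be verified to exceed $1$ (the nearest singularity of $r(\zeta)$ in the $\zeta$-plane, coming from either $r \to r_-$ or $r \to 0$, sits at a definite distance $> 1$ from the origin in suitably normalized coordinates — or, if not, one rescales $\zeta$ at the outset), this yields the claim.

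The main obstacle I expect is the bookkeeping around the radius of convergence: one must verify that the analytic function $r(\zeta)$ genuinely has radius of convergence allowing $\sigma > 0$, i.e., that no singularity of $r(\zeta)$ or pole of the rational expression for $V_\ell$ crowds in toward $|\zeta| = 1$. This is handled by locating the obstructions — these are the values of $\zeta$ corresponding to $r = r_-$ (where the $(r-r_-)^{\kappa_+/\kappa_-}$ factor branches) and $r = 0$ (poles of $r^{-3}, r^{-4}$) — and checking they lie at $|\zeta| \geq c$ for a constant $c$ depending only on $M, Q$; rescaling $r_\ast$ or equivalently the base point $C$ in \eqref{defn:rstar} if necessary, one can always arrange $c > 1$ so that $\sigma := \log c > 0$. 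All other steps (the inverse function theorem argument, the Cauchy estimate, the linear-in-$\ell$ control) are routine. I would therefore structure the proof as: (1) change variables to $\zeta = e^{2\kappa_+ r_\ast}$ and invoke analyticity of $r(\zeta)$ near $\zeta = 0$; (2) conclude $V_\ell(r(\zeta))$ is holomorphic near $0$ with a zero at the origin, giving the series; (3) apply Cauchy estimates on a fixed circle to obtain the exponential decay of the coefficients, tracking the (at most linear) $\ell$-dependence.
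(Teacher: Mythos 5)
Your proposal follows essentially the same route as the paper's proof: pass to the variable $z = e^{2\kappa_+ r_\ast}$, use the analytic inverse function theorem to see that $r$ (and hence the rational expression for $V_\ell$, which vanishes to first order at $r=r_+$) is analytic in $z$ near $z=0$, read off the Taylor series, and obtain the coefficient decay from a Cauchy estimate. The one delicate point you explicitly flag --- that exponential decay of the $C_m$ requires the radius of convergence in $z$ to exceed $1$, which you arrange by adjusting the normalization --- is handled in the paper by the equivalent device of shifting $r_\ast \mapsto r_\ast - \rho$ and comparing the two coefficient sequences, so your treatment matches the paper's both in substance and in its reliance on that normalization step.
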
 
\begin{proof}
Define the variable 
\begin{align}z(r):= e^{2\kappa_+ r_\ast(r)} = C e^{2\kappa_+ r} (r_+ - r) (r - r_-)^{\frac{\kappa_+}{\kappa_-}},\end{align}
where $C >0$ is such that $z(\frac{r_+ + r_-}{2}) = 1$. From the inverse function theorem it follows that $V_\ell(z) = V_\ell(r(z))$ can be analytically continued in a neighborhood of $z=0$ and thus, there exists a Taylor expansion around $z=0$ such that
\begin{align}
	V_\ell(z) = \sum_{n=1}^\infty C_m z^m.
\end{align}
Hence,
\begin{align}\label{eq:expansion}
	V_\ell(r_\ast) = \sum_{n=1}^\infty C_m e^{2\kappa_+ m r_\ast },
\end{align}
where 
\begin{align}\nonumber
	C_1 = \left.\frac{\d V_\ell}{\d z} \right\vert_{z=0} &=  \left. \frac{\d V_\ell}{\d r}\right\vert_{r=r_+} \left. \frac{\d r}{\d z}\right\vert_{z=0} \\ & = \frac{ r_+ -  r_- }{r_+^4} \left( r_+^2 (r_+ -3 r_-) + \ell(\ell+1) \right).
	\label{eq:cmV}
\end{align}
Note that the coefficients $C_m$ decay exponentially fast in $m$. To see this, remark that we can re-define $\tilde r_\ast := r_\ast -  \rho$ for some constant $\rho>0$. Similarly to \eqref{eq:expansion}, we expand $V_\ell$ as 
\begin{align}
	V_\ell = \sum_{m=1}^\infty D_m e^{2\kappa_+ m \tilde r_\ast}
\end{align}
which shows $C_m = D_m e^{-2\kappa_+m \rho}$. By analyticity we have $|D_m| \leq |\tilde C|^{m+1}$ for some $\tilde C >0$ and thus,
\begin{align}\label{eq:boundonV}
	|C_m| \lesssim_\ell e^{-\sigma m}
\end{align}
for a fixed $\sigma >0$. 
\end{proof}
\begin{prop}\label{prop:rtboundedcomplex}
Let $\ell\in \mathbb{N}$ be fixed. Then, 
\begin{align}
	\sup_{\{  |\operatorname{Re}(\omega) |> 1  \} } |\mathfrak{R} (\omega,\ell)| + |\mathfrak{T} (\omega,\ell) |\lesssim_\ell 1.\label{eq:boundonrt}
\end{align}
Moreover, $\mathfrak T(\omega,\ell)$ has a pole of order one at $\omega = i\kappa_+$ given that $\ell(\ell+1) \neq r_+^2 ( r_+ - 3 r_-)$.
\begin{proof}
	Recall, that $u_1$ is the unique solution to \begin{align}
		u_1(r_\ast) = e^{i\omega r_\ast} + \int_{-\infty}^{r_\ast} \frac{\sin(\omega ( r_\ast - y))}{\omega} V(y) u_1(y) \d y.
	\end{align}
	In \cite{hartle1982crossing} it is shown that the Volterra iteration has the form
	\begin{align}
		u_1 (r_\ast) = e^{i\omega r_\ast}\left( 1 + \sum_{n=1}^\infty u_1^{(n)}(r_\ast) \right),
	\end{align}
	where
	\begin{align}\label{eq:volterracomplex}
		u_1^{(n)} (r_\ast) = \sum_{\stackrel{m_n\dots m_1 \in \mathbb N}{m_n> \dots > m_1}} C_{m_n - m_{n-1}} C_{m_{n-1} - m_{n-2}} \dots C_{m_1} d_{m_n} \dots d_{m_1} e^{2\kappa_+ m_n r_\ast }
	\end{align}
	with $d_m = -(4 m \kappa_+ ( m\kappa_+ + i \omega))^{-1}$. 
	Note that in view of the bound in \eqref{eq:boundonV} one can check that the Volterra iteration for $u_1$ converges on $\omega\in\mathbb C \setminus \{ im\kappa_+: m \in \mathbb N\}$ and moreover, 
	\begin{align}&\sup_{\{ |\operatorname{Re}(\omega) |> 1  \} } | u_1 (r_\ast= 0) |  \lesssim_\ell 1,\\
	&\sup_{\{ |\operatorname{Re}(\omega) |> 1  \} } | u_1^\prime (r_\ast= 0) |  \lesssim_\ell |\omega| .
	\end{align}
	Analogously, we have that $v_1$ is analytic on $\omega \in \mathbb C \setminus \{ im\kappa_-:m\in \mathbb N\}$ and $v_2$ is analytic on $\omega\in \mathbb C \setminus \{ -im\kappa_-:m\in \mathbb N\}$. Moreover,
	\begin{align}
&\sup_{\{ |\operatorname{Re}(\omega) |> 1  \} } | v_1 (r_\ast= 0) |  \lesssim_\ell 1,\\
&\sup_{\{ |\operatorname{Re}(\omega) |> 1  \} } | v_1^\prime (r_\ast= 0) |  \lesssim_\ell |\omega| .
	\end{align}
	and 
	\begin{align}
	&\sup_{\{ |\operatorname{Re}(\omega) |> 1  \} } | v_2 (r_\ast= 0) |  \lesssim_\ell 1,\\
	&\sup_{\{ |\operatorname{Re}(\omega) |> 1  \} } | v_2^\prime (r_\ast= 0) |  \lesssim_\ell |\omega| .
	\end{align}
	This finally shows \eqref{eq:boundonrt} in view of the definition of the transmission and reflection coefficients $\mathfrak T $ and $\mathfrak R$ using Wronskians, cf. \cref{defn:TandR}.
	
	Now, we prove that $\mathfrak T(\omega,\ell)$ has a pole of order one at $\omega = i \kappa_+$ assuming that $\ell(\ell+1) \neq r_+^2 ( r_+ - 3 r_-)$. First note that
	\begin{align}u_1^{(1)} (r_\ast)= \sum_{m_1 \in \mathbb N} C_{m_1} d_{m_1} e^{2\kappa_+ m_1 r_\ast}\end{align}
	has a pole of order one at $\omega = i \kappa_+$ since $C_1\neq 0$, see \eqref{eq:cmV}.  Since for $n\neq 1$ there is no term of the form $e^{2\kappa r_\ast}$ in \eqref{eq:volterracomplex} as $m_n \geq n$, the pole at $\omega = i\kappa_+$ cannot be canceled by the other terms and must occur in $u_1$. Moreover, this pole of $u_1$ at $\omega = i \kappa_+$ is not of higher order that one since $d_1$ does not occur at higher powers than one in the Volterra iteration. This implies that $\mathfrak T(\omega,\ell)$ has a pole of order one at $\omega= i\kappa_+$. 
\end{proof}
\end{prop}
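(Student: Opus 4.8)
The plan is to exploit the explicit Volterra--iteration representation of $u_1$ in the exponentials $e^{2\kappa_+ m r_\ast}$, which is available because of the expansion $V_\ell(r_\ast)=\sum_{m\in\mathbb N}C_m e^{2\kappa_+ m r_\ast}$ with $|C_m|\lesssim_\ell e^{-\sigma m}$ established just above (see \eqref{eq:expansion}--\eqref{eq:boundonV}). Substituting this expansion together with the zeroth iterate $e^{i\omega r_\ast}$ into the defining Volterra equation \eqref{eq:constructionu1}, and using the elementary integral $\int_{-\infty}^{r_\ast}\frac{\sin(\omega(r_\ast-y))}{\omega}e^{(2\kappa_+ m+i\omega)y}\,\d y=\frac{e^{i\omega r_\ast}e^{2\kappa_+ m r_\ast}}{4\kappa_+ m(\kappa_+ m+i\omega)}$, one obtains after matching coefficients of the linearly independent exponentials the series $u_1(r_\ast)=e^{i\omega r_\ast}\bigl(1+\sum_{n\geq 1}u_1^{(n)}(r_\ast)\bigr)$ with $u_1^{(n)}$ as in \eqref{eq:volterracomplex} and $d_m=-(4m\kappa_+(m\kappa_++i\omega))^{-1}$. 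The same construction applies to $v_1,v_2$ with $\kappa_-$ in place of $\kappa_+$.

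For the boundedness statement \eqref{eq:boundonrt} the key observation is that in $\{|\operatorname{Re}(\omega)|>1\}$ the resolvent factors are bounded uniformly in $\omega$: writing $\omega=\omega_R+i\omega_I$ gives $m\kappa_++i\omega=(m\kappa_+-\omega_I)+i\omega_R$, whence $|m\kappa_++i\omega|\geq|\omega_R|>1$ and $|d_m|\leq(4\kappa_+ m)^{-1}$. Since the indices in $u_1^{(n)}$ satisfy $m_n>\dots>m_1\geq 1$, hence $m_j\geq j$, the product of resolvent factors obeys $\prod_{j=1}^n|d_{m_j}|\leq(4\kappa_+)^{-n}(n!)^{-1}$. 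Combining this with $|C_j|\leq\tilde C_\ell e^{-\sigma j}$ and summing the $C$-factors as a geometric series in the increments $k_j=m_j-m_{j-1}\geq 1$ yields $|u_1^{(n)}(0)|\leq\frac{1}{n!}\bigl(\frac{\tilde C_\ell}{4\kappa_+}\frac{e^{-\sigma}}{1-e^{-\sigma}}\bigr)^n$, so $\sum_n|u_1^{(n)}(0)|\leq e^{B_\ell}$ with $B_\ell$ depending only on $\ell$. This gives $|u_1(0)|\lesssim_\ell 1$ and, after differentiating the series (the prefactor $i\omega$ producing the single power of $|\omega|$ while $2\kappa_+ m_n d_{m_n}$ stays bounded), $|u_1'(0)|\lesssim_\ell|\omega|$; the analogous bounds hold for $v_1,v_2$. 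Inserting these into the Wronskian formulas $\mathfrak T=\mathfrak W(u_1,v_2)/(-2i\omega)$ and $\mathfrak R=\mathfrak W(u_1,v_1)/(2i\omega)$ from \cref{defn:TandR}, evaluated at $r_\ast=0$, the two powers of $|\omega|$ in the numerator cancel the $\omega$ in the denominator, giving \eqref{eq:boundonrt}.

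For the pole statement, note that $m\kappa_++i\omega=(m-1)\kappa_+$ at $\omega=i\kappa_+$, so among all the $d_m$ only $d_1$ develops a simple pole there. Since the indices $m_1<\dots<m_n$ in \eqref{eq:volterracomplex} are strictly increasing, $d_1$ occurs at most once in any summand, so the pole of $u_1$ at $\omega=i\kappa_+$ is at most simple. It is genuinely present: the coefficient of $e^{2\kappa_+ r_\ast}$ requires $m_n=1$, which forces $n=1$ and equals $C_1 d_1$; as $C_1\neq 0$ under the hypothesis $\ell(\ell+1)\neq r_+^2(r_+-3r_-)$ by \eqref{eq:cmV}, and the exponentials $e^{2\kappa_+ m r_\ast}$ are linearly independent, this simple pole cannot be cancelled by higher-order terms (all of which carry $m_n\geq n\geq 2$). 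Finally $v_2$ is holomorphic near $\omega=i\kappa_+$ (its singularities sit at $\{-im\kappa_-\}$, cf.\ \cref{rmk:holo}), so $\mathfrak T=\mathfrak W(u_1,v_2)/(-2i\omega)$ inherits a simple pole with residue $\mathfrak W(\operatorname{Res}_{i\kappa_+}u_1,\,v_2)/(2\kappa_+)$; the residue of $u_1$ is a nonzero solution decaying like $e^{\kappa_+ r_\ast}$ as $r_\ast\to-\infty$ and is thus linearly independent of $v_2$, making this Wronskian residue nonzero.

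The main obstacle will be proving uniform convergence of the Volterra series with $\omega$-independent constants throughout the unbounded region $\{|\operatorname{Re}(\omega)|>1\}$: unlike the real-frequency Volterra lemma \cref{lem:volterra}, one cannot here apply the $\sin$-kernel estimate directly and must instead extract decay from the arithmetic structure of the indices. The factorial gain $\prod_j|d_{m_j}|\leq(4\kappa_+)^{-n}(n!)^{-1}$, coming from $m_j\geq j$, is exactly what compensates the possibly large per-term constant $\tilde C_\ell$ and delivers the entire-function-type bound $e^{B_\ell}$; getting this combinatorial bookkeeping right, together with the separate verification of the non-vanishing of the Wronskian residue, are the two delicate points.
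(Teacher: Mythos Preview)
Your approach coincides with the paper's: the explicit Volterra series in the exponentials $e^{2\kappa_+ m r_\ast}$ (taken from \cite{hartle1982crossing}), the observation that $|m\kappa_++i\omega|\geq|\operatorname{Re}(\omega)|>1$ yields the uniform bound $|d_m|\leq(4\kappa_+ m)^{-1}$, and the evaluation of $\mathfrak T,\mathfrak R$ via Wronskians at $r_\ast=0$. Your treatment of the boundedness is in fact more explicit than the paper's: you spell out the factorial gain $\prod_j|d_{m_j}|\leq(4\kappa_+)^{-n}(n!)^{-1}$ coming from $m_j\geq j$ and the geometric summation over the increments $k_j$, whereas the paper simply asserts the bounds on $u_1(0)$ and $u_1'(0)$. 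This part is correct.

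For the pole statement, your identification of the simple pole of $u_1$ at $i\kappa_+$ matches the paper's argument exactly. You then go one step further and try to show that this pole is inherited by $\mathfrak T=\mathfrak W(u_1,v_2)/(-2i\omega)$, by arguing that $\operatorname{Res}_{i\kappa_+}u_1$ and $v_2|_{\omega=i\kappa_+}$ are linearly independent. Here your argument has a gap: you infer linear independence from the fact that the residue decays like $e^{\kappa_+ r_\ast}$ as $r_\ast\to-\infty$, but $v_2$ is characterised by its behaviour as $r_\ast\to+\infty$ (where it grows like $e^{\kappa_+ r_\ast}$), and nothing you have written rules out that $v_2|_{\omega=i\kappa_+}$ \emph{also} decays at $-\infty$ and is proportional to the residue. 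All that follows a priori from the simple pole of $u_1$ and the identity $u_1=\mathfrak T v_1+\mathfrak R v_2$ is that at least one of $\mathfrak T,\mathfrak R$ has a simple pole at $i\kappa_+$; singling out $\mathfrak T$ requires more. To be fair, the paper itself simply asserts ``this implies that $\mathfrak T$ has a pole of order one'' without justification, so you have at least recognised that a step is needed here; to close it you would have to control the residue's behaviour at $r_\ast\to+\infty$ (e.g.\ show it has a nonzero $v_1$-component there), or exhibit some other obstruction to $v_2|_{\omega=i\kappa_+}$ being the decaying solution at $-\infty$.
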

\paragraph{\textbf{Connectedness of the subextremal parameter range}}
\begin{prop}\label{prop:connectedness}
	Let the subextremal parameter space $\mathcal{P}_{\mathrm{se}}^{\Lambda >0}$  and $\mathcal{P}_{\mathrm{se}}^{\Lambda <0}$ be defined as in \eqref{defn:subextremal>0} and \eqref{defn:subextremal<0}, respectively. 
	Then, $\mathcal{P}_{\mathrm{se}}^{\Lambda >0}  \cap \{ Q>0\}$,  $\mathcal{P}_{\mathrm{se}}^{\Lambda <0}  \cap \{ Q>0\}$, $\mathcal{P}_{\mathrm{se}}^{\Lambda >0}  \cap \{ Q<0\}$ and $\mathcal{P}_{\mathrm{se}}^{\Lambda <0}  \cap \{ Q<0\}$ are path-connected. 
	\begin{proof}
	 The claim follows for $\mathcal{P}_{\mathrm{se}}^{\Lambda >0}  \cap \{ Q>0\} $ and $\mathcal{P}_{\mathrm{se}}^{\Lambda >0}  \cap \{ Q>0\} $ from the following continuous parametrizations
	\begin{align}\nonumber \mathcal{P}_{\mathrm{se}}^{\Lambda >0}  \cap \{ Q>0\} =& \Big\{ (M,Q,\Lambda)\in \mathbb{R} \times \mathbb{R} \times \mathbb{R} \colon \Lambda = 3 (r_+^2 + r_-^2 + r_c^2 + r_+ r_c + r_c r_- +r_+ r_-)^{-1}, \\ &6 M =  \Lambda (r_+ + r_-)(r_+ + r_c ) (r_- + r_c), Q = \left( \frac{\Lambda}{3} (r_+ + r_- + r_c)(r_- r_+ r_c) \right)^{\frac 12} \nonumber   \\ &\text{ for } 0 < r_- < r_+ < r_c \Big\} \label{eq:parametrization1}\end{align}
	and 
	\begin{align}\nonumber \mathcal{P}_{\mathrm{se}}^{\Lambda <0}&  \cap \{ Q>0\} = \Big\{ (M,Q,\Lambda)\in \mathbb{R} \times \mathbb{R} \times \mathbb{R} \colon  \Lambda = 3 \left( \frac{3}{4} \left({r_+ + r_-}\right)^2 - r_+ r_- - \xi_i \right)^{-1} , \\ & 6M = -  \Lambda \left(\frac{1}{4} \left({r_+ + r_-}\right)^2 + \xi_i -r_+ r_-\right) (r_+ + r_-),
	\nonumber  
	Q = \left( - \frac{ \Lambda}{3} r_+ r_- \left( \frac 34 (r_+ + r_-)^2 + \xi_i \right) \right)^{\frac 12},
	\\ & \text{ for } 0 < r_- < r_+ \text{ and } \xi_i > \left(\frac{3}{4}(r_+ + r_-)^2 - r_+ r_-\right)^{\frac 12}   \Big\} \label{eq:parametrization2}\end{align} in view of the fact  that  $\{ 0 < r_- < r_+ < r_c \}$ and $ \{0 < r_- < r_+ , \xi_i >( \frac{3}{4}(r_+ + r_-)^2 - r_+ r_-)^{\frac 12}   \} $  are path-connected as subsets of $\mathbb R^3$. In the following we will show \eqref{eq:parametrization1} and \eqref{eq:parametrization2}.
	
	First, in the case $\Lambda>0$, note that \eqref{eq:parametrization1} follows from comparing coefficients of \begin{align*}\frac{-3}{\Lambda} (r^2 - 2 Mr +Q^2 - \frac 13 \Lambda r^4) = (r-r_-) (r - r_+)(r-r_c) (r - r_0)\end{align*} for $r_0 < 0 < r_- < r_+ < r_c$. Indeed, we obtain $r_0 = - (r_- + r_+ + r_c)$ and \eqref{eq:parametrization1} can be deduced.
	
	In the case $\Lambda <0$, note that $\frac{-3}{\Lambda} (r^2 - 2 Mr +Q^2 - \frac 1 3 \Lambda r^4)$ only has two real roots $0<r_- < r_+$ such that we compare coefficients of \begin{align*}\frac{-3}{\Lambda} (r^2 - 2 Mr +Q^2 - \frac 1 3 \Lambda r^4) = (r - r_-) (r - r_+) (r - \xi) (r-\bar \xi)\end{align*} with $\xi = \xi_r +  i \xi_i$. We obtain $2 \xi_r = -(r_+ + r_-)$ and $\xi_i >\left(\frac{3}{4} (r_+ + r_-)^2 - r_+ r_-\right)^{\frac 12}$ to guarantee $\Lambda < 0$. Now, a direct computation shows \eqref{eq:parametrization2}.
	
	Completely analogously we can show path-connectedness for $\mathcal{P}_{\mathrm{se}}^{\Lambda >0}  \cap \{ Q<0\}$ and $\mathcal{P}_{\mathrm{se}}^{\Lambda <0}  \cap \{ Q<0\}$.
	\end{proof}
\end{prop}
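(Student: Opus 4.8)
The plan is to exhibit, for each of the four sets, an explicit continuous surjection from a manifestly path-connected subset of $\mathbb{R}^3$ (a domain of admissible root data) and then to invoke the fact that the continuous image of a path-connected set is path-connected. Since $\Delta(r) = r^2 - 2Mr - \tfrac13\Lambda r^4 + Q^2$ depends on $Q$ only through $Q^2$, the map $(M,Q,\Lambda)\mapsto(M,-Q,\Lambda)$ is a homeomorphism of $\mathcal{P}_{\mathrm{se}}$ carrying the $\{Q>0\}$ sets to the $\{Q<0\}$ sets and preserving the root structure of $\Delta$; hence it suffices to treat $\mathcal{P}_{\mathrm{se}}^{\Lambda>0}\cap\{Q>0\}$ and $\mathcal{P}_{\mathrm{se}}^{\Lambda<0}\cap\{Q>0\}$, and the two $Q<0$ cases follow by composition.

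The key observation is that the monic quartic $\tfrac{-3}{\Lambda}\Delta(r) = r^4 - \tfrac{3}{\Lambda}r^2 + \tfrac{6M}{\Lambda}r - \tfrac{3Q^2}{\Lambda}$ has vanishing cubic coefficient, so its four roots always sum to zero. In the case $\Lambda>0$, $Q>0$, subextremality prescribes the three positive simple roots $0<r_-<r_+<r_c$, and the trace condition forces the fourth root to be $r_0 = -(r_-+r_++r_c)<0$. Reading off Vieta's relations for the remaining elementary symmetric functions expresses $\Lambda$, $M$, $Q$ as the explicit real-analytic functions of $(r_-,r_+,r_c)$ recorded in \eqref{eq:parametrization1}. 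First I would check that this map lands in $\mathcal{P}_{\mathrm{se}}^{\Lambda>0}\cap\{Q>0\}$: the denominator in $\Lambda$ is a positive definite quadratic form in the $r$'s so $\Lambda>0$, while $M>0$ and $Q>0$ follow from positivity of the relevant root sums and products, and the construction guarantees $\tfrac{-3}{\Lambda}\Delta$ factors as $(r-r_-)(r-r_+)(r-r_c)(r-r_0)$ with the required order and simplicity. Surjectivity is the converse reading: from any subextremal triple one extracts its three positive roots, infers the fourth from the trace condition, and recovers $(M,Q,\Lambda)$ from the same relations. As $\{0<r_-<r_+<r_c\}$ is convex, hence path-connected, so is the image.

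In the case $\Lambda<0$, $Q>0$, the quartic has only the two prescribed positive roots $r_-<r_+$ together with a complex-conjugate pair $\xi,\bar\xi$; the vanishing cubic coefficient fixes $\operatorname{Re}\xi = -\tfrac12(r_-+r_+)$, leaving $\xi_i := \operatorname{Im}\xi$ as the only free datum. The requirement $\Lambda<0$ translates, via the coefficient of $r^2$, into the threshold $\xi_i > \big(\tfrac34(r_-+r_+)^2 - r_-r_+\big)^{1/2}$, and Vieta again produces the parametrization \eqref{eq:parametrization2} in terms of $(r_-,r_+,\xi_i)$, whose well-definedness and surjectivity I would verify exactly as before. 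The admissible domain is the strict epigraph $\{0<r_-<r_+,\ \xi_i > g(r_-,r_+)\}$ of the continuous positive function $g(r_-,r_+) = (\tfrac34(r_-+r_+)^2 - r_-r_+)^{1/2}$ over the convex base $\{0<r_-<r_+\}$; such a region is path-connected, since any two of its points can be joined by first raising $\xi_i$ to a large common height (exceeding $\max g$ on the compact base segment), translating horizontally across the convex base, and then lowering $\xi_i$.

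The bulk of the work, and the only place requiring care, is the coefficient-comparison step: one must confirm that the assignment from ordered admissible root data to $(M,Q,\Lambda)$ is simultaneously well defined into the target set and surjective onto it. Concretely, the potential obstacle is checking the sign and reality conditions uniformly over the whole root domain — that $\Lambda$, $M$, $Q$ emerge with the correct signs and $Q$ is real — together with confirming in the $\Lambda<0$ regime that the two extra roots are genuinely non-real precisely when $\xi_i$ exceeds the stated threshold, so that no stray real root violates subextremality. Once these algebraic identities are in hand (the symmetric-function identity $(a+b)(b+c)(c+a) = (a+b+c)(ab+bc+ca) - abc$ is what matches the linear term to the factor $(r_-+r_+)(r_++r_c)(r_-+r_c)$), path-connectedness follows at once from the convexity and epigraph structure of the two root domains.
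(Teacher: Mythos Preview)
Your proposal is correct and follows essentially the same approach as the paper: both parametrize the subextremal sets by root data via Vieta's formulas (exploiting the vanishing cubic coefficient to pin down the fourth root), and then transfer path-connectedness from the convex/epigraph root domains through the continuous parametrization maps \eqref{eq:parametrization1} and \eqref{eq:parametrization2}. Your write-up is in fact somewhat more explicit than the paper's about why the map is well defined and surjective, and about why the $\Lambda<0$ root domain is path-connected.
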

\printbibliography[heading=bibintoc]
\end{document}